\documentclass[11pt,a4paper,reqno,oneside]{amsart}

\usepackage{geometry}
\usepackage{orcidlink}
\usepackage[foot]{amsaddr}
\newcommand\blfootnote[1]{%
  \begingroup
  \renewcommand\thefootnote{}\footnote{#1}%
  \addtocounter{footnote}{-1}%
  \endgroup
}

\usepackage[english]{babel} 

\usepackage{stix2}

\usepackage{microtype} 

\usepackage{ragged2e} 

\usepackage[autostyle=true]{csquotes} 

\usepackage{graphicx} 
\graphicspath{{./figures/}}

\usepackage[listformat=empty]{caption} 
\usepackage{subcaption}

\usepackage[shortlabels]{enumitem} 

\usepackage{amsthm}

\usepackage[
backend=bibtex,
style=numeric-comp,
giveninits=true,
maxbibnames=299,
natbib=true,
doi=false,
isbn=false,
url=false,
sorting=none,
maxbibnames=8,
eprint=false
]
{biblatex}
  \renewbibmacro{in:}{}
\DeclareFieldFormat[misc]{title}{``#1''\isdot}
\DeclareFieldFormat[online]{title}{``#1''\isdot}
\DeclareFieldFormat{journaltitle}{#1\isdot}
\DeclareFieldFormat[article,periodical]{volume}{\mkbibbold{#1}}
\DeclareFieldFormat{pages}{#1}
 \AtEveryBibitem{%
 	\clearfield{month}}
    \AtEveryBibitem{%
 	\clearfield{day}}
      \AtEveryBibitem{%
 	 \clearfield{number}}
\AtEveryBibitem{%
 	 \clearfield{pagetotal}}
\renewbibmacro*{addendum+pubstate}{} 

\AtEveryBibitem{%
	\clearfield{number}}    \AtEveryBibitem{\clearfield{pagetotal}}    \AtEveryBibitem{\clearfield{volumes}}
\AtEveryBibitem{\clearfield{language}}

    \AtEveryBibitem{%
        \iffieldundef{doi}
        {
            \iffieldundef{issn}
            {}
            {\clearfield{isbn}}
        }
        {
            \clearfield{issn}%
            \clearfield{isbn}
        }%
    }

\usepackage{appendix}

\usepackage{mathtools} 
\usepackage{amsfonts} 

\usepackage{dsfont} 

\usepackage{nicefrac}
\usepackage{braket}

\usepackage{diffcoeff}

\usepackage{siunitx}
\usepackage{cancel}

\usepackage{xcolor} 
\definecolor{royalpurple}{rgb}{0.47, 0.32, 0.66}
\definecolor{dblue}  {RGB}{20,66,129}
\definecolor{ddblue} {RGB}{11,36,69}
\definecolor{jblue}  {RGB}{20,50,100}
\definecolor{cblue}{rgb}{0.16, 0.32, 0.75}
\definecolor{cred}{rgb}{0.7, 0.11, 0.11}

\usepackage{todonotes} 

\usepackage{hyperref} 
\hypersetup{%
    colorlinks = true,
	linkcolor=cred,
	citecolor=cblue,
	urlcolor=black,
}

\usepackage[capitalize]{cleveref}

\difdef { f } {}
{
  outer-Ldelim = \left . ,
  outer-Rdelim = \right |,
  sub-nudge = 0 mu
}

\NewDocumentCommand \dfl { E{^}{1} m } { \,\difc[#1]{#2}{{}} }
\difdef { f, s } { gd }
{ op-symbol = \delta }

\newcommand{\id}{\mathds{1}}

\DeclareMathOperator{\Real}{Re}
\DeclareMathOperator{\Imag}{Im}
\DeclareMathOperator{\mspan}{Span}

\DeclarePairedDelimiterX{\norm}[1]\lVert\rVert{
  \ifblank{#1}{\:\cdot\:}{#1}
}

\newcommand{\iu}{\mathrm{i}\mkern1mu}
\newcommand{\e}{\mathrm{e}}

\newcommand{\nnum}{\mathbb{N}}
\newcommand{\rnum}{\mathbb{R}}
\newcommand{\cnum}{\mathbb{C}}

\newcommand{\znum}{\mathbb{Z}}

\newcommand{\hilbert}{\mathcal{H}}

\newcommand{\domain}{\mathcal{D}}

\newcommand{\Jop}{\mathcal{J}}
\newcommand{\Jmax}{J_{\mathrm{max}}}
\newcommand{\Jmin}{J_{\mathrm{min}}}
\newcommand{\Top}{\mathcal{T}}
\newcommand{\Tmax}{T_{\mathrm{max}}}
\newcommand{\Tmin}{T_{\mathrm{min}}}
\newcommand{\brnum}{\overline{\rnum}}
\newcommand{\bcnum}{\overline{\cnum}}

\DeclareMathOperator{\Ai}{Ai}
\DeclareMathOperator{\Bi}{Bi}
\DeclareMathOperator{\arccosh}{arccosh}

\newcommand{\recessSol}{\mathrm{r}}
\newcommand{\domSol}{\mathrm{d}}
\newcommand{\qSol}{\mathrm{q}}

\numberwithin{equation}{section}

\theoremstyle{plain}
\newtheorem{theorem}{Theorem}[section]

\newtheorem{corollary}[theorem]{Corollary} 
\newtheorem{lemma}[theorem]{Lemma} 
\newtheorem{proposition}[theorem]{Proposition}

\theoremstyle{definition} 
\newtheorem{definition}[theorem]{Definition}

\theoremstyle{remark} 
\newtheorem{remark}[theorem]{Remark}

\theoremstyle{remark}

\theoremstyle{plain} 

\theoremstyle{plain} 

\theoremstyle{plain} 
\newtheorem{hypothesis}[theorem]{Hypothesis}

\AddToHook{env/definition/begin}{\crefalias{theorem}{definition}}
\AddToHook{env/lemma/begin}{\crefalias{theorem}{lemma}}
\AddToHook{env/proposition/begin}{\crefalias{theorem}{proposition}}
\AddToHook{env/corollary/begin}{\crefalias{theorem}{corollary}}
\AddToHook{env/remark/begin}{\crefalias{theorem}{remark}}
\AddToHook{env/example/begin}{\crefalias{theorem}{example}}
\AddToHook{env/question/begin}{\crefalias{theorem}{question}}
\AddToHook{env/conjecture/begin}{\crefalias{theorem}{conjecture}}
\AddToHook{env/hypothesis/begin}{\crefalias{theorem}{hypothesis}}

\makeatletter
\AddToHook{cmd/appendix/before}{\def\cref@section@alias{appendix}}
\makeatother

\usepackage{cancel}

\title[Essentially singular limits of Jacobi operators and applications to higher-order squeezing]{Essentially singular limit of Jacobi operators\\ and applications to higher-order squeezing}

\author{Felix Fischer\textsuperscript{\,1,*}\hspace{2pt}\orcidlink{0009-0001-3040-8184}\hspace{1pt}}

\author{Daniel Burgarth\textsuperscript{\,1}\hspace{2pt}\orcidlink{0000-0003-4063-1264}\hspace{1pt}}

\author{Davide Lonigro\textsuperscript{\,1}\hspace{2pt}\orcidlink{0000-0002-0792-8122}\hspace{1pt}}

\address{\footnotesize \textsuperscript{1}Institute of Theoretical Physics, Friedrich-Alexander-Universität Erlangen-Nürnberg, Staudtstraße 7, 91058 Erlangen, Germany}

\renewcommand*{\thefootnote}{\fnsymbol{footnote}}

\begin{document}
\begin{abstract}
   We study a family of Jacobi operators in which the diagonal entries are multiplied by a coupling parameter $\lambda\geq0$. Under suitable conditions, the operator is self-adjoint for every $\lambda>0$, while the formal limit at $\lambda=0$ is a symmetric Jacobi operator admitting a one-parameter family of self-adjoint extensions. A central ingredient of our analysis is the derivation of uniform bounds for square-summable generalized eigenvectors in the small-$\lambda$ regime, which combines discrete WKB methods with Airy-function asymptotics. Using these estimates, we analyze the limiting behavior $\lambda\to0$ in the strong resolvent sense, proving that for every sequence $\lambda_j\to0$ one can extract a subsequence along which the corresponding Jacobi operators converge to some self-adjoint extension of the limiting operator; conversely, every such extension can be obtained in this way. We call this behavior an essentially singular limit, by analogy with essential singularities in complex analysis.

As an application, we study higher-order squeezing operators arising in quantum optics. Using the connection with Jacobi operators, we show that when the relative strength of the free-field term tends to zero, different self-adjoint extensions of the squeezing operator are selected along different sequences. In particular, this limit does not single out a physically distinguished self-adjoint extension, but instead identifies a distinguished subclass of extensions compatible with the underlying symmetry.
\end{abstract}

\maketitle
\thispagestyle{empty}

\footnotetext{*Corresponding author: \href{mailto:felix.o.fischer@fau.de}{\texttt{felix.o.fischer@fau.de}}.}

\blfootnote{2020 \textit{Mathematics Subject Classification}. 47B36, 47B25, 81Q10, 81Q12, 46N50.}

\vspace{-0.9cm}

\noindent \small \textbf{Keywords}: Jacobi operators; Self-adjoint extensions; Strong resolvent convergence; Turning point analysis; Higher-order squeezing.\normalsize

 \vspace{-0.35cm}

\section{Introduction}\label{sec:intro}
The relation between symmetric operators on Hilbert spaces and their self-adjoint extensions is a central topic in operator theory and spectral analysis. A symmetric operator that fails to be self-adjoint can admit a family of self-adjoint extensions with different spectral and dynamical properties. Understanding how such extensions arise and whether they can be selected by natural limiting procedures is therefore an important problem in operator theory. Questions of this type play a prominent role in quantum mechanics, where self-adjointness is required for the generators of dynamics, and different extensions may correspond to different boundary conditions or interactions.

Situations in which such questions arise naturally occur when studying limits of parameter-dependent operators. A family of operators depending on a coupling parameter $\lambda$ may be self-adjoint for all nonzero values of the parameter, while the formal limit obtained by removing the perturbation yields a symmetric operator with nontrivial deficiency indices. The behavior of the resolvent in this limit is therefore of particular interest, as it determines whether a specific self-adjoint extension is selected or whether different extensions arise along different limiting sequences.

In this work we analyze this phenomenon for a class of Jacobi operators. Such operators play an important role in spectral theory, appearing for instance in the study of orthogonal polynomials, discrete Schrödinger operators, and representations of bosonic operators in Fock space. We consider a family of Jacobi operators in which the diagonal entries are multiplied by a parameter $\lambda>0$,
\begin{equation}
    \Jop(\lambda) = \begin{pmatrix}
        \lambda f_0 & a_0 & 0 & 0 & 0 & \dots \\
        a_0 & \lambda f_1 & a_1 & 0 & 0 &  \dots \\
        0 & a_1 & \lambda f_2 & a_2 & 0 & \dots \\
        0 & 0 & a_2 & \lambda f_3 & a_3 & \dots \\
        \vdots & \vdots & \vdots & \vdots & \vdots & \ddots
    \end{pmatrix}.
\end{equation}
Under suitable assumptions, the resulting operator on the Hilbert space $\ell^2(\nnum)$, defined on the subspace of compactly supported sequences, is essentially self-adjoint for every nonzero value of $\lambda$, whereas the formal limit obtained by setting $\lambda=0$ yields a symmetric Jacobi operator with deficiency indices $(1,1)$, and therefore a one-parameter family of self-adjoint extensions.

Our main result (\cref{thm:main-result}) concerns the behavior of this family in the limit $\lambda \to 0$. We show that the strong resolvent limit depends on the way in which the parameter tends to zero. More precisely, for every sequence $\lambda_j \to 0$ one can extract a subsequence along which the corresponding operators converge in the strong resolvent sense to a self-adjoint extension of the limiting operator. Conversely, every self-adjoint extension of the limiting operator arises as a strong resolvent limit along a suitable sequence $\lambda_j \to 0$. In this sense, this limit parametrizes the full family of self-adjoint extensions. We refer to this phenomenon as an essentially singular limit, in analogy with the notion of essential singularities in complex analysis.

A useful geometric picture of this phenomenon is shown in \cref{fig:eigenvalue-limit-b}. For each $\lambda>0$, one can associate to the Jacobi operator $\Jop(\lambda)$ a complex quantity encoding its boundary behavior at infinity. As $\lambda \to 0$, these values do not converge to a single distinguished limit. Instead, they move along a closed curve associated with the limiting non-self-adjoint operator. Different sequences $\lambda_j \to 0$ approach different points on this curve, and our main result shows that every point can arise in this way. This picture is formalized later in terms of Weyl $m$-functions (\cref{def:green}) and the corresponding limit circle of the limiting Jacobi operator.

\begin{figure}[ht]
    \centering
    \includegraphics[width=0.6\linewidth]{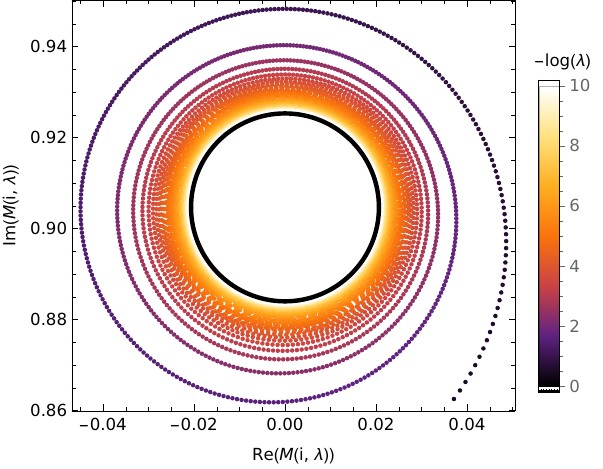}
    \caption{
Numerical plot of the values of the Weyl $m$-function $M(z,\lambda)$ at $z=\iu$ associated with the Jacobi operator $\Jop(\lambda)$ for different values of $\lambda>0$, for the choice
$a_n = 3^{-3/2}\sqrt{(3n+1,3)}$, $f_n = n^3$, where $(\cdot,\cdot)$ denotes the Pochhammer symbol, cf. \cref{eq:def-pochhammer}.
As $\lambda \to 0$, the points spiral around the limit circle of the limiting operator, shown in black.
}
\label{fig:eigenvalue-limit-b}
\end{figure}

The result yields an explicit construction through which the full family of self-adjoint extensions of a symmetric operator emerges in a small-coupling limit. This raises natural questions about whether similar mechanisms exist for more general classes of operators. A key technical ingredient of our analysis is the study of resolvent convergence in the limit. We first establish weak convergence of generalized eigenvectors for suitable sequences of the coupling parameter. The main difficulty then lies in upgrading this weak convergence to strong resolvent convergence; this step relies crucially on the specific structure of Jacobi operators. The argument is highly nontrivial and does not appear to extend directly beyond the Jacobi setting. One may expect that analogous phenomena could arise for continuous models, such as Sturm--Liouville operators, where many of the underlying concepts admit direct counterparts. However, even for symmetric operators with deficiency indices $(1,1)$, it is currently unclear whether an analogous result should hold in greater generality.

\subsection{Previous results}\label{sec:literature}
Jacobi operators and their generalizations are ubiquitous in the literature, appearing in areas such as solid-state physics~\cite{zhang-andersonlocalizationblock-2024,thouless-bandwidthsquasiperiodictightbinding-1983,jitomirskaya-metalinsulatortransitionalmost-1999,bellisard-spectralproperties-1990}, light--matter interaction~\cite{tur-jaynescummingsmodel-2000,tur-jaynescummingsmodelrotating-2002,monvel-oscillatorybehaviorlarge-2018,charif-perturbationseriesjacobi-2021,nagel-higherpowersqueezed-1997}, and the moment problem~\cite{schmudgen-momentproblem-2017,akhiezer-classicalmomentproblem-2020,simon-classicalmomentproblem-1998}, among many others~\cite{izaac-computationalquantummechanics-2018,simon-operatorssingularcontinuous-1995,eisert-supersonicquantumcommunication-2009}. They can be viewed as discrete counterparts of Sturm--Liouville operators, and many results admit direct analogues in that setting; see, for instance, \cite{schmudgen-unboundedselfadjointoperators-2012,behrndt-boundaryvalueproblems-2020,everitt-weylsworksingular-2008,reed-mmmp2-fourier-1975,zettl-sturmliouvilletheory-2005,levitan-sturmliouvilledirac-1991}. General introductions to Jacobi operators can be found in \cite{teschl-jacobioperatorscompletely-1999,schmudgen-unboundedselfadjointoperators-2012,berezanskii-expansioneigenfunctionsselfadjoint-1968}, while non-symmetric generalizations are discussed in \cite{eichinger-weylmatrixperspective-2025,bernhardbeckermann-complexjacobimatrices-2001,allahverdiev-extensionsdilationsfunctional-2005}.

Necessary and sufficient conditions for essential self-adjointness of Jacobi and block Jacobi operators have been extensively studied; see, e.g., \cite{gesztesy-jacobioperator$$11$$-2024,yafaev-analyticscatteringtheory-2018,yafaev-asymptoticbehaviororthogonal-2020,yafaev-asymptoticbehaviororthogonal-2021,yafaev-selfadjointjacobioperators-2021,yafaev-spectralanalysisjacobi-2022,yafaev-spectraltheoryjacobi-2024,swiderski-spectralpropertiesunbounded-2016,swiderski-periodicperturbationsunbounded-2017,berg-indeterminatejacobioperators-2025,welstead-boundaryconditionsinfinity-1982,malcolmbrown-kreinfriedrichsextensions-2005,budyka-deficiencyindicesblock-2024,swiderski-spectralpropertiesblock-2018,braeutigam-deficiencynumbersoperators-2016}. A large body of work is devoted to the spectral and resolvent analysis of both bounded \cite{janas-jacobimatricespowerlike-1999,janas-decayboundseigenfunctions-2009,dombrowski-resultsabsolutecontinuity-2017,gesztesy-mfunctionsinversespectral-1997,damanik-uniformspectralproperties-2000,last-eigenfunctionstransfermatrices-1999} and unbounded \cite{hinton-spectralanalysissecond-1978,pruckner-densityspectrumjacobi-2018,stovicek-infinitejacobimatrices-2019,eckhardt-singularweyltitchmarshkodairatheory-2013,teschl-traceformulasinverse-1998,gesztesy-commutationmethodsjacobi-1996,clark-spectralanalysisselfadjoint-1996,geronimo-spectrainfinitedimensionaljacobi-1988,swiderski-asymptoticzerosdistribution-2025,swiderski-periodicperturbationsunbounded-2017,swiderski-periodicperturbationsunbounded-2017a,swiderski-periodicperturbationsunbounded-2018,ammann-relativeoscillationtheory-2012,teschl-oscillationtheoryrenormalized-1996,aptekarev-measuresorthogonalpolynomials-2016} Jacobi operators. 
Perturbation theory for essentially self-adjoint Jacobi operators is well understood; see, for example, \cite{remling-absolutelycontinuousspectrum-2011,janas-spectralpropertiesjacobi-2003,damanik-analytictheorymatrix-2008,damanik-perturbationsorthogonalpolynomials-2010,kaluzhny-preservationabsolutelycontinuous-2010,judge-eigenvaluesperturbedperiodic-2016,judge-spectralresultsperturbed-2018,lakaev-thresholdvirtualstates-2026,lukic-generalizedprufervariables-2016,simon-szegostheoremits-2011,harrat-asymptoticexpansionlarge-2020}. For the particular case of compact perturbations, see \cite{rio-inverseproblemsjacobi-2013,webb-spectrajacobioperators-2020,iantchenko-periodicjacobioperator-2012,yafaev-spectralanalysisjacobi-2022}. However, none of the works mentioned above considers a limiting operator that fails to be self-adjoint.

Similarly, the general theory of analytic perturbations (see, e.g., \cite{reed-mmmp4-operators-2005,kato-perturbationtheorylinear-1995} for an introduction) deals with small perturbations of a self-adjoint operator. The situation is less well understood when the limiting operator is not self-adjoint. In the semibounded case, perturbations typically converge to a unique self-adjoint extension of the limiting operator; in this context, the Kato--Robinson theorem~\cite{kato-perturbationtheorylinear-1995,oliveira-intermediatespectraltheory-2009} plays a central role. Self-adjoint extensions can also be parametrized via unbounded finite-rank perturbations, which are usually treated within the framework of singular perturbation theory~\cite{albeverio-singularperturbationsdifferential-2000,bush-singularboundaryconditions-2023,frymark-spectralpropertiessingular-2023}. By contrast, in our setting the limiting operator is not semibounded, and the diagonal perturbation has full (infinite) rank.

Related phenomena appear in the renormalization of singular potentials~\cite{case-singularpotentials-1950,beane-singularpotentialslimit-2001,bawin-singularinversesquare-2003a}; see also \cite{borg-pauliapproximationsselfadjoint-2003,tamura-resolventconvergencenorm-2003} for the Aharonov--Bohm effect. In these cases, singularities at the origin lead to Schrödinger or Dirac operators that are not self-adjoint. Self-adjointness can be restored by introducing a cutoff at a finite radius, and the self-adjoint extensions of the original operator are then obtained by removing the cutoff. Crucially, additional parameters of the regularization must be tuned in order to recover different extensions.

The result obtained in this work is qualitatively reminiscent of Picard's Great Theorem in complex analysis~\cite[Theorem~11.3.2]{simon-basiccomplexanalysis-2015}, a refinement of the Casorati--Weierstrass theorem. There, an analytic function $f(\lambda)$ with an essential singularity at $\lambda=0$ attains, in any neighborhood of the origin, all complex values with at most one exception; in particular, for every $z\in\cnum$ there exists a sequence $\lambda_j\to0$ such that $f(\lambda_j)\to z$. In a similar spirit, our main result shows that, as $\lambda\to0$, the family of operators $J(\lambda)$ approaches all self-adjoint extensions of the limiting operator along suitable sequences of the parameter. In this sense, the limit $\lambda\to0$ behaves analogously to an essential singularity, in that it allows one to approximate arbitrarily prescribed self-adjoint extensions along suitable sequences.

A key technical ingredient of our proof is the derivation of bounds on generalized eigenvectors of the Jacobi operators under consideration. A variety of such bounds for self-adjoint operators can be found in the literature on the spectral theory of Jacobi operators cited above; for works specifically devoted to this topic, see \cite{janas-asymptoticbehaviorgeneralized-2009,janas-decayboundseigenfunctions-2009,miguel-matrixvaluedschrodinger-2026}. Closely related is the study of asymptotics for linear recurrence relations and difference equations. The classical theory, initiated by Birkhoff~\cite{birkhoff-analytictheorysingular-1933,birkhoff-formaltheoryirregular-1930,wimp-resurrectingasymptoticslinear-1985}, was later refined and systematically developed in a series of works by Wong and Wang~\cite{wong-asymptoticexpansionssecondorder-1992,wong-asymptoticexpansionssecondorder-1992a,wong-recentadvancesasymptotic-2022,wang-asymptoticexpansionssecondorder-2003,wang-lineardifferenceequations-,wang-uniformasymptoticexpansion-2002}. Finally, of particular relevance to the present work are results on discrete WKB methods and turning point theory~\cite{fedotov-complexwkbmethod-2019,geronimo-wkbliouvillegreenanalysis-1992,geronimo-wkbturningpoint-2009}, which we adapt and extend to our setting.

\subsection{Connection with higher-order squeezing}
\label{sec:intro_squeezing}
The original motivation for our study comes from quantum optics. One of its central concepts is the squeezing of light, as introduced by Walls~\cite{walls-squeezedstateslight-1983}. Under the dynamics generated by the second-order Hamiltonian
\begin{equation}
    A_2 = (a^\dagger)^2 + a^2 \, ,
\end{equation}
coherent states evolve into squeezed states, in which the uncertainty of one field quadrature is reduced at the expense of increased uncertainty in the conjugate one.
On the experimental side, such squeezed states are routinely realized in optical parametric oscillators~\cite{wu-generationsqueezedstates-1986,vahlbruch-observationsqueezedlight-2008} and superconducting circuits~\cite{eichler-observationtwomodesqueezing-2011,vaartjes-strongmicrowavesqueezing-2024}, and they constitute a key resource for surpassing the standard quantum limit. In particular, they enable enhanced measurement precision in gravitational-wave detection~\cite{eberle-quantumenhancementzeroarea-2010,tse-quantumenhancedadvancedligo-2019,mcculler-frequencydependentsqueezingadvanced-2020,ganapathy-broadbandquantumenhancement-2023}, as well as the generation of nonclassical states relevant for bosonic quantum error correction and quantum information processing~\cite{milburn-quantumteleportationsqueezed-1999,braunstein-squeezingirreducibleresource-2005,braunstein-quantuminformationcontinuous-2005,gottesman-encodingqubitoscillator-2001,andersen-continuousvariablequantuminformation-2010,schlegel-quantumerrorcorrection-2022,lu-recentprogresscoherent-2023,zeng-quantumerrorcorrection-2025,delgrosso-controlledsqueezegatesuperconducting-2025}.

While conventional (quadrature) squeezing is by now a well-established and controllable resource, there is growing interest in moving beyond Gaussian operations toward genuinely nonlinear, non-Gaussian regimes~\cite{cai-bosonicquantumerror-2021,puri-engineeringquantumstates-2017,mirrahimi-dynamicallyprotectedcatqubits-2014,gordillo-hachuel-quantummetrologicaladvantage-2026}. In particular, non-Gaussian gates are necessary for universal bosonic quantum computation~\cite{lloyd-quantumcomputationcontinuous-1999}. A natural class of candidates for such interactions is given by
\begin{equation}
\label{eq:higher-order}
A_k = (a^\dagger)^k + a^k ,
\end{equation}
with $k\geq3$. These operators were introduced by Braunstein and McLachlan in the 1980s~\cite{braunstein-generalizedsqueezing-1987} as higher-order generalizations of the squeezing interaction.
Beyond their intrinsic conceptual interest, these operators have been the subject of a longstanding debate concerning the existence and uniqueness of the dynamics they generate; see, e.g.,~\cite{fisher-impossibilitynaivelygeneralizing-1984,lo-multiquantumjaynescummingsmodel-1998,nagel-higherpowersqueezed-1997,lee-exactsolutionsfamily-2011,gardas-multiphotonrabimodel-2013,lo-commentmultiphotonrabi-2014,gardas-replycommentmultiphoton-2014,gardas-initialstatesqubit-2013,lo-commentinitialstates-2014,zhang-solvingtwomodesqueezed-2013,lo-commentsolvingtwomode-2014,gorska-squeezingarbitraryorder-2014,zhang-2modekphotonquantum-2017,braak-$k$photonquantumrabi-2025,ayyash-dispersiveregimemultiphoton-2025,fischer-selfadjointrealizationshigherorder-2025,ashhab-propertiesdynamicsgeneralized-2025,gordillo-hachuel-commentpropertiesdynamics-2026,ashhab-finitedimensionalapproximationsgeneralized-2026,ashhab-replycommentproperties-2026,ashhab-fractionalsqueezingspectra-2026}. 
In parallel, there has been recent experimental progress in the realization of higher-order photon processes~\cite{corona-thirdorderspontaneousparametric-2011,chang-observationthreephotonspontaneous-2020,menard-emissionphotonmultiplets-2022,boutin-effecthigherordernonlinearities-2017,gregory-foursixphotonstimulated-2026,bencheikh-demonstratingquantumproperties-2022,eriksson-universalcontrolbosonic-2024,bazavan-squeezingtrisqueezingquadsqueezing-2026}. 
Of particular relevance to the present work is the recent experimental realization of third- and fourth-order squeezing reported in \cite{bazavan-squeezingtrisqueezingquadsqueezing-2026}. 
There, trapped ions subject to spin-dependent forces in a magnetic trap are used to engineer effective higher-order bosonic interactions. 
By applying suitable pulse sequences, the resulting dynamics are argued, via Floquet--Magnus techniques~\cite{blanes-magnusexpansionits-2009,bukov-universalhighfrequencybehavior-2015}, to approximate Hamiltonians of the form~\eqref{eq:higher-order}. 
In this context, a rigorous operator-theoretic understanding of higher-order squeezing Hamiltonians becomes essential.

In previous work~\cite{ashhab-finitedimensionalapproximationsgeneralized-2026,fischer-selfadjointrealizationshigherorder-2025}, we analyzed these operators, which are naturally defined on the finite linear span of Fock states. Using the Birkhoff--Trjitzinsky theory of asymptotic expansions for recurrence relations, we showed that they admit families of self-adjoint extensions whose dimension depends on the order of squeezing. In particular, a choice of self-adjoint extension of $A_k$ is required in order to obtain a well-defined unitary dynamics.
We further showed that finite-dimensional truncations of the corresponding Fock space dynamics (i.e., restrictions to states with finitely many photon excitations) reproduce the dynamics associated with two distinguished self-adjoint extensions, depending on the truncation scheme. Moreover, adding a free-field term which is polynomial in the number operator $a^\dagger a$ of degree $h>k$, for example
\begin{equation}
\label{eq:general-higher-order}
A_{k,h}(K) = (a^\dagger)^k + K (a^\dagger a)^h + a^k,
\end{equation}
with $K>0$, yields an essentially self-adjoint operator for all values of $K$, thereby acting as a regularization mechanism.
This naturally raises the question of whether a physically distinguished self-adjoint realization of higher-order squeezing emerges in the regime where the strength of the free-field term becomes small compared to the squeezing interaction, that is, in the limit $K\to0$. 

In the present work, we address this question by expressing higher-order squeezing operators as direct sums of Jacobi operators. This connection not only allows us to revisit the parametrization of their self-adjoint extensions, yielding a new and more direct proof that avoids the use of Birkhoff--Trjitzinsky theory, but also enables a detailed analysis of the small-coupling limit $K \to 0$. We show that, in this limit, different self-adjoint extensions of the squeezing operator are selected along different sequences of the coupling parameter (\cref{prop:squeezing-main}). In particular, the limiting procedure selects only those extensions that are compatible with the natural decomposition of the operator into independent sectors. From a physical perspective, these are exactly the extensions that preserve the $2\pi/k$-rotational symmetry of the unperturbed operator $A_{k,h}(0)$. In this sense, the small-coupling limit does not merely select a self-adjoint realization, but rather selects a symmetry-preserving subclass of all possible extensions.

\subsection{Outline}

The paper is organized as follows. In \cref{sec:main}, we introduce the notation used throughout the paper and state our main results: \cref{thm:main-result}, concerning the limiting behavior of the family of Jacobi operators as $\lambda \to 0$, and \cref{prop:squeezing-main}, which applies this result to higher-order squeezing operators. In \cref{sec:proof}, we recall relevant concepts from the general theory of Jacobi operators, establish spectral properties of the operator family under consideration, and prove the main result, \cref{thm:main-result}. The proof of \cref{thm:main-result} relies on a key technical input concerning the asymptotics of generalized eigenvectors of Jacobi operators, \cref{thm:main-asymptotics}. Since this constitutes the most technically involved part of the analysis, it is developed independently in \cref{sec:asymptotics-of-eigenvectors}. Finally, in \cref{sec:proof-squeezing}, we prove \cref{prop:squeezing-main}. In \cref{app:airy,app:turning-point}, we recall and prove several properties of Airy functions and notions from turning point theory that are needed in the analysis of generalized eigenvector asymptotics.

\section{Setting and main results}\label{sec:main}

\subsection{Jacobi operators with vanishing diagonal perturbations}\label{sec:main_2}

Throughout this work, we consider operators acting on sequences of complex numbers $c = (c_n)_{n \in \nnum}$, and we denote by $\ell(\nnum)$ the vector space of all such sequences. Furthermore, we introduce the space of square-summable sequences
\begin{equation}
    \ell^2(\nnum) = \left\{ u \in \ell(\nnum) \, : \,  \sum_{n=0}^\infty |u_n|^2 < \infty\,  \right\},
\end{equation}
which is a Hilbert space with scalar product and associated norm
\begin{equation}
    \braket{u,v}=\sum_{n\in\mathbb{N}}u_n^*v_n,\qquad \|u\|=\sqrt{\braket{u,u}},
\end{equation}
and the space of sequences with compact support
\begin{equation}
    \ell_0(\nnum) = \{ u \in \ell(\nnum) \, : \, \exists N \in \nnum \text{ s.t. } u_n = 0 \, \forall n \geq N\,  \}\, ,
\end{equation}
which is a dense subspace of $\ell^2(\nnum)$. Equivalently, $\ell_0(\nnum)=\operatorname{Span}\left((e_n)_{n\in\mathbb{N}}\right)$, where the vectors $(e_n)_{n\in\mathbb{N}}$ defined by $(e_n)_m = \delta_{nm}$ form a complete orthonormal set of $\ell^2(\nnum)$.

In the present paper we consider a one-parameter family $(\Jop(\lambda))_{\lambda \geq 0}$ of Jacobi operators on $\ell^2(\nnum)$, cf.~\cref{def:jacobi-lambda}. The action of Jacobi operators can be represented with infinite tridiagonal matrices. In our case, we set
\begin{equation}
    \Jop(\lambda) = \begin{pmatrix}
        \lambda f_0 & a_0 & 0 & 0 & 0 & \dots \\
        a_0 & \lambda f_1 & a_1 & 0 & 0 &  \dots \\
        0 & a_1 & \lambda f_2 & a_2 & 0 & \dots \\
        0 & 0 & a_2 & \lambda f_3 & a_3 & \dots \\
        \vdots & \vdots & \vdots & \vdots & \vdots & \ddots
    \end{pmatrix},
\end{equation}
where $(a_n)_{n\in\nnum}$ and $(f_n)_{n\in\nnum}$ are suitable real-valued sequences, cf.~\cref{hyp:conditions-an-fn}, and $\lambda \geq 0$.

Our goal is to consider situations in which the limiting behavior of this family as $\lambda \to 0$ is singular in the sense that $\Jop(\lambda)$ is essentially self-adjoint for $\lambda>0$, but the limiting operator at $\lambda=0$ admits multiple self-adjoint extensions. To this end, we impose suitable conditions on the sequences $(a_n)_{n\in\nnum}$ and $(f_n)_{n\in\nnum}$.
We start by recalling the total variation of a sequence:
\begin{definition}
    \label{def:total-variation}
    Let $x=(x_n)_{n\in \nnum} \in \ell(\nnum)$ be a sequence and let $m,n \in \nnum$ with $m\le n$.
    The \textit{total variation} of $x$ on the interval $[m,n]$ is defined by
    \begin{equation}
        \mathcal{V}_{m,n}(x) = \sum_{j = m}^{n} |x_j-x_{j-1}| \, .
    \end{equation}
    For convenience we write $\mathcal{V}_{n}(x) = \mathcal{V}_{1,n}(x)$ and $\mathcal{V}(x) = \mathcal{V}_{1,\infty}(x)$.
\end{definition}
We adopt in this paper the usual big-$O$ and small-$o$ notation; see, e.g., \cite[§4]{olver-asymptoticsspecialfunctions-2010}. 
Given two real-valued sequences $(x_n)_{n\in\nnum}$ and $(y_n)_{n\in\nnum}$, we write $x_n=O(y_n)$ if there exist constants $N\in\nnum$ and $C\ge 0$ such that $|x_n|\le C\,|y_n|$ for all $n\ge N$, and $x_n=o(y_n)$ if $\lim_{n\to\infty}|x_n|/|y_n|=0$.

Throughout this paper, we impose the following conditions on the sequences $(a_n)_{n \in \nnum}$ and $(f_n)_{n \in \nnum}$:
\begin{hypothesis}
\label{hyp:conditions-an-fn}
The sequences $(a_n)_{n \in \nnum}$ and $(f_n)_{n \in \nnum}$ satisfy the following conditions:
\begin{enumerate}[(i)]
    \item $a_n > 0$ and $f_n \geq 0$ for all $n \in \nnum$;
    \item $a_n = n^\alpha\!\left(1+\frac{c_a}{n}+O(1/n^2)\right)$ as $n\to\infty$ for some $\alpha > \frac{4}{3}$ and $c_a \in \rnum$;
    \item $f_n = n^\beta\!\left(1+\frac{c_f}{n}+O(1/n^2)\right)$ as $n\to\infty$ for some $\beta > \alpha$ and $c_f \in \rnum$;
    \item \label{item:condition-monotonic}
    There exists $C \in \rnum$ such that
    \begin{equation}
        \mathcal{V}_n\!\bigl((f_j/a_j)_{j \in \nnum}\bigr)
        \leq
        \frac{f_n}{a_n} + C
        \qquad \text{for all } n \in \nnum ,
    \end{equation}
    where $\mathcal{V}_n(\cdot)$ denotes the total variation (cf.~\cref{def:total-variation}).
\end{enumerate}
\end{hypothesis}
\begin{remark}
    Condition \labelcref{item:condition-monotonic} is fulfilled if $f_n/a_n$ is monotonically increasing for all $n$.
\end{remark}
We now introduce the family of Jacobi operators under analysis in a precise form.
\begin{definition}
    \label{def:jacobi-lambda}
    Let $(a_n)_{n \in \nnum},(f_n)_{n \in \nnum} \in \ell(\nnum)$ be two sequences fulfilling \cref{hyp:conditions-an-fn}.
    For all $\lambda \geq 0$, $\Jop(\lambda):\ell(\nnum)\rightarrow\ell(\nnum)$ is the operator defined by
   \begin{align}
    (\Jop(\lambda) u)_n
    & =
    a_n u_{n+1}
    +
    \lambda f_n u_n
    +
    a_{n-1}u_{n-1},
    \qquad n \geq 1\, ,
    \\
    (\Jop(\lambda) u)_0
    & =
    a_0 u_1
    +
    \lambda f_0 u_0 \, .
\end{align}
    Furthermore, we define the operators $\Jmin(\lambda),J(\lambda)$, and $\Jmax(\lambda)$ on $\ell^2(\nnum)$: $\Jmin(\lambda)$ and $\Jmax(\lambda)$ are the restrictions of $\Jop(\lambda)$ respectively to
    \begin{equation}
        \domain(\Jmin(\lambda))=\ell_0(\nnum),\qquad\domain(\Jmax(\lambda))=\{u\in\ell^2(\nnum):\Jop(\lambda) u\in\ell^2(\nnum)\},
    \end{equation}
    and $J(\lambda)=\overline{\Jmin(\lambda)}$ is the closure of $\Jmin(\lambda)$.
\end{definition}
Note that, here and in the following, we use calligraphic letters such as $\Jop(\lambda)$ to denote the Jacobi operator acting on the full space of sequences $\ell(\nnum)$, i.e. defined purely by its formal action. By contrast, straight letters such as $J(\lambda)$ denote its realization as a (generally unbounded) operator on a dense subspace of the Hilbert space $\ell^2(\nnum)$, obtained by restricting $\Jop(\lambda)$ to an appropriate domain. Also note that the closure $J(\lambda)$ of $\Jmin(\lambda)$ always exists as $\Jmin(\lambda)$ is symmetric.

We begin with the following observation:
\begin{proposition}[cf.~\cref{prop:j0-limit-circle,prop:spectrum-jlambda}]
    \label{prop:self-adjointness}
    $J(\lambda)$ is self-adjoint if and only if $\lambda > 0$.
    In particular, $J(0)$ admits a one-parameter family of self-adjoint extensions $(J_t)_{t \in \brnum}$.
\end{proposition}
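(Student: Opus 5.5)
The plan is to split the statement into the two regimes $\lambda=0$ and $\lambda>0$ and use the Weyl limit-point/limit-circle alternative for Jacobi operators. $J(\lambda)$ is self-adjoint iff $\Jmax(\lambda)=J(\lambda)$, which happens iff the equation $\Jop(\lambda)u=zu$ (with $z\in\cnum\setminus\rnum$) is in the limit-point case at infinity, i.e.\ not all solutions lie in $\ell^2(\nnum)$. In the limit-circle case the deficiency indices are $(1,1)$, since $\Jop$ is real and a solution is fixed by $u_0$. The extensions are then parametrized by $t\in\brnum$ via boundary conditions $W_\infty(u, p+tq)=0$, where $W$ is the Wronskian and $p,q$ are the two polynomial solutions at $z=0$.

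For $\lambda=0$ we solve $\Jop(0)u=0$ explicitly, since the recurrence is $a_nu_{n+1}=-a_{n-1}u_{n-1}$. This gives $u_{2m}=(-1)^m\prod_{j=1}^m a_{2j-2}/a_{2j-1}$ for the even solution, and a similar product for the odd one. With $a_n=n^\alpha(1+c_a/n+O(n^{-2}))$ we have $\log(a_{2j-2}/a_{2j-1})=-\alpha/(2j)+O(j^{-2})$, so $|u_{2m}|\asymp m^{-\alpha/2}$. Similarly, the odd solution satisfies $|u_{2m+1}|\asymp m^{-\alpha/2}$. Both are in $\ell^2$ exactly when $\alpha>1$, which holds because $\alpha>4/3$. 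By the standard fact that if all solutions at one $z$ are $\ell^2$ then the same holds for all $z$ (a Gronwall-type variation-of-constants argument, needing only $\sum 1/a_n<\infty$ or a direct reference), $J(0)$ is in the limit-circle case. It is therefore not self-adjoint, with deficiency indices $(1,1)$, giving the family $(J_t)_{t\in\brnum}$.

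For $\lambda>0$ we must show the limit-point case. The cleanest route is probably not to estimate solutions directly, but to use a Carleman/Berezanskii-type criterion adapted to dominant diagonals. Heuristically, for large $n$ we have $\lambda f_n\gg a_n+a_{n-1}$ since $\beta>\alpha$, so $\Jop(\lambda)$ is a relatively small off-diagonal perturbation of the diagonal operator $\lambda F$, which is self-adjoint on its maximal domain. Concretely, I would split $J(\lambda)=\lambda F+A$, where $A$ is the off-diagonal part. I would then show $\|Au\|\le \epsilon\|\lambda Fu\|+C_\epsilon\|u\|$ for $u\in\ell_0$. Since $a_n/f_n\to0$, the bound holds with arbitrarily small $\epsilon$ after separating finitely many indices, which contribute a bounded part. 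Kato--Rellich then makes $\lambda F+A$ self-adjoint on $\domain(F)$ and essentially self-adjoint on the core $\ell_0$. This uses only $\beta>\alpha$ and $f_n\to\infty$, so hypothesis (iv) and the Airy machinery are not needed here. Alternatively, the argument can cite \cref{prop:spectrum-jlambda}, which gives discrete spectrum and therefore self-adjointness.

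The main obstacle is the $\lambda=0$ branch: we must rigorously transfer square-summability from $z=0$ to nonreal $z$. I would do this by writing solutions at $z$ as $u=c_1p+c_2q+z\sum_{k\le n}G(n,k)u_k$ with the Green kernel $G(n,k)=p_nq_k-q_np_k$, and then bounding $\sum_k|u_k|(|p_k|+|q_k|)$ by a discrete Gronwall inequality. Since $p,q\in\ell^2$, this yields $|u_n|\lesssim|p_n|+|q_n|$ and hence $u\in\ell^2$. This is the standard invariance of limit-circle type under $z$.
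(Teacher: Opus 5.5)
Your proof is correct. For $\lambda>0$ it is the paper's argument (\cref{lem:relatively-compact}, \cref{prop:spectrum-jlambda}): since $\beta>\alpha$, the off-diagonal part is infinitesimally $F$-bounded, and Kato--Rellich gives self-adjointness on $\domain(F)$ with $\ell_0(\nnum)$ as a core.

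For $\lambda=0$ you take a genuinely different route. The paper verifies the hypotheses of Świderski's Turán-determinant criterion (\cref{thm:swiderski}). This means checking finite total variation of $(a_n^{-1})$ and $(a_{n-1}/a_n)$ (\cref{lem:an-variation}), failure of the Carleman condition, and definiteness of $\mathcal{F}(z)$. You instead use that the diagonal vanishes identically at $\lambda=0$. The $z=0$ recurrence then splits by parity into two first-order recurrences, and solving it explicitly gives $|P^0_n(0)|,|Q^0_n(0)|\asymp n^{-\alpha/2}$. You then transfer square-summability to all $z$ via the Weyl alternative. You could simply cite \cref{prop:weyl-alternative} here; your variation-of-constants/Gronwall sketch is its standard proof and needs only $p,q\in\ell^2(\nnum)$.

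What your route buys:
\begin{itemize}
\item It is more elementary and self-contained.
\item It shows that $\alpha>1$ is the sharp threshold for the limit-circle case at $\lambda=0$, matching the paper's remark that this part holds for $\alpha>1$.
\item The Gronwall bound $|u_n|\le C(|p_n|+|q_n|)$, locally uniform in $z$, already reproduces the $\lambda=0$ decay estimate of \cref{prop:limit-circle-asymptotics}.
\end{itemize}

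What the paper's route buys: it does not depend on the vanishing diagonal. More importantly, it sets up the Turán-determinant machinery reused in \cref{sec:turan} to get bounds uniform in small $\lambda>0$. Your perturbative Gronwall argument cannot provide these. Treating $\lambda f_n u_n$ as a perturbation puts $\lambda\sum_{k\le N_2(\lambda)}k^{\beta-\alpha}\sim\lambda^{-1/\delta}$ into the exponent, which blows up as $\lambda\to0$.
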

Above and in the following, $\brnum=\rnum\cup\{\infty\}$. We refer to \cref{prop:jacobi-boundary-triplets} for the explicit parametrization of all self-adjoint extensions $J_t$ of $J(0)$. Further properties of the operators $J(\lambda)$ and $J_t$ are introduced and discussed in \cref{sec:proof}.
\begin{remark}
    While our assumptions include $\alpha > \nicefrac{4}{3}$, \cref{prop:self-adjointness} (in particular, the fact that $J(0)$ admits a one-parameter family of self-adjoint extensions) remains valid under the weaker condition $\alpha>1$. The stronger assumption $\alpha > \nicefrac{4}{3}$ is, however, essential for the proof of the main result, \cref{thm:main-result}; see \cref{rem:asymptotics,sec:asymptotics-discussion} for further discussion.
\end{remark}
Therefore, $J(\lambda)$ is self-adjoint for $\lambda>0$ and generates the unitary dynamics $U_\lambda(t)=\e^{-\iu J(\lambda)t}$, whereas the limit operator $J(0)$ admits multiple self-adjoint extensions. This raises the question of the behavior of $J(\lambda)$, its resolvents, and its dynamics as $\lambda\to0$. The limit is singular in several respects: $J(0)$ admits multiple self-adjoint extensions, their domains $\domain(J_t)$ strictly contain $\domain(J(\lambda))$ (cf.~\cref{sec:proof}), and the resulting operators $J_t$ are unbounded from below. 

These features prevent the direct application of standard perturbative techniques. In particular, analytic perturbation theory does not apply due to the discontinuous change of domains (see, e.g., \cite[Chapter~XII]{reed-mmmp4-operators-2005}), while the lack of semiboundedness excludes the use of the Kato--Robinson theorem and related results (see, e.g., \cite[Theorem~10.4.2]{oliveira-intermediatespectraltheory-2009}). The relation to other perturbative results is discussed in \cref{sec:literature}. To the best of our knowledge, the limit $\lambda \to 0$ therefore lies beyond the reach of standard techniques.

We now present the main result of the paper:

\pagebreak
\begin{theorem}
    \label{thm:main-result}
    The following statements hold:
    \begin{enumerate}[(i)]
        \item \label{item:main-result-some-extension}
        Let $(\lambda_j)_{j \in \nnum}$ be a sequence with $\lambda_j>0$ for all $j \in \nnum$ and $\lambda_j \to 0$. 
        Then there exist a subsequence $(\lambda_{j_k})_{k \in \nnum}$ and a parameter $t \in \brnum$ such that $J(\lambda_{j_k})$ converges to $J_t$ in the strong resolvent sense.        
        \item \label{item:main-result-all-extensions}
        Let $t \in \brnum$. 
        Then there exists a sequence $(\lambda_j)_{j \in \nnum}$ with $\lambda_j>0$ for all $j \in \nnum$ and $\lambda_j \to 0$ such that $J(\lambda_j)$ converges to $J_t$ in the strong resolvent sense.
    \end{enumerate}
    Moreover, in both cases the eigenvalues of the corresponding operators converge.
\end{theorem}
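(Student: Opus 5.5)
The plan is to reduce everything to the behaviour of Weyl $m$-functions and of the square-summable generalized eigenvectors. For $\lambda>0$ and $z\in\cnum\setminus\rnum$, let $\psi(z,\lambda)\in\ell^2(\nnum)$ be the (unique up to scaling) solution of $\Jop(\lambda)\psi=z\psi$ for $n\ge1$. This is the Weyl solution; it exists by self-adjointness, \cref{prop:self-adjointness}. Define $M(z,\lambda)=-\psi_1/(a_0\psi_0)$ (or the paper's normalization). For $J(0)$, which is in the limit circle case, the extensions $J_t$ correspond to $m$-functions $m_t(z)$ lying on the Weyl circle $\partial D(z)$. Since the resolvent of a Jacobi operator is determined by $m(z)$ together with the polynomial solutions $p(z),q(z)$, via $G(z)_{mn}=p_{\min}(z)\,\psi_{\max}(z)$, strong resolvent convergence $J(\lambda_j)\to J_t$ at one nonreal $z$ (hence all) follows from two facts. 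First, $M(z,\lambda_j)\to m_t(z)$. Second, the normalized Weyl solutions $\psi(z,\lambda_j)$ converge in $\ell^2$ to $q(z)+m_t(z)p(z)$. Indeed, if the first holds, the entries of the resolvent converge pointwise, and uniform $\ell^2$ bounds upgrade this to weak, hence (for resolvents of self-adjoint operators) strong convergence.

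\textbf{Step 1 (uniform bounds).} Using \cref{thm:main-asymptotics} (the WKB/Airy analysis), obtain bounds on $\psi(z,\lambda)$ that are uniform in small $\lambda$. In the region $n\ll n_*(\lambda)\sim\lambda^{-1/(\beta-\alpha)}$, where $\lambda f_n<2a_n$, the solution behaves like a combination of the $\lambda=0$ solutions, which are $\ell^2$-small in the tail since $\sum 1/a_n<\infty$. Near the turning point it is described by Airy functions, and beyond it it decays. The hypothesis $\alpha>4/3$ is what makes the $\ell^2$ mass in the Airy window vanish as $\lambda\to0$. Consequently the $\ell^2$ tails of the normalized $\psi(z,\lambda)$ beyond any $N(\lambda)\to\infty$ tend to zero uniformly. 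In particular $M(z,\lambda)$ stays bounded, and every limit point lies in $\partial D(z)$: the limit is a solution of the $\lambda=0$ equation whose Wronskian identity (Green's formula) becomes a limit-circle boundary condition, because the boundary term $\Imag$ flux carried past the turning point vanishes.

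\textbf{Step 2, part (i).} By compactness of the closed disk, extract a subsequence along which $M(i,\lambda_{j_k})\to m\in\partial D(i)$, so that $m=m_t(i)$ for a unique $t\in\brnum$. Componentwise convergence of $\psi$ is immediate, since the entries are polynomials in $M$. Combined with the uniform tail bounds, this gives $\ell^2$ convergence to $q+m_t p$. This is the Weyl solution of $J_t$, because it satisfies the boundary condition at infinity defining $J_t$ (\cref{prop:jacobi-boundary-triplets}), and that condition is expressed by a limit of Wronskians which converge. We also need consistency across $z$: the same $t$ must work for all nonreal $z$. This follows because strong resolvent convergence at a single point $z_0$ suffices for self-adjoint operators.

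\textbf{Step 3, part (ii).} The main obstacle is showing that the asymptotic phase covers the whole circle. From the WKB connection formula, $M(i,\lambda)$ should be given, up to $o(1)$, by a Möbius image of $e^{\iu\theta(\lambda)}$. The phase $\theta(\lambda)=\sum_{n<n_*}\arccos(\lambda f_n/2a_n)+\text{const}$ depends continuously on $\lambda$ and tends to infinity, roughly like $n_*(\lambda)$, as $\lambda\to0$. Here the total variation condition (iv) controls the WKB error terms. Hence $M(i,\cdot)$ winds infinitely often around $\partial D(i)$, as in \cref{fig:eigenvalue-limit-b}, so by continuity every $m_t(i)$ is a limit along some $\lambda_j\to0$. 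Step 2 then gives convergence to $J_t$. I expect the uniform control of the phase (continuity and divergence, with errors $o(1)$ uniformly) to be the hardest part.

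\textbf{Eigenvalues.} All the operators have discrete spectrum (\cref{prop:spectrum-jlambda}), and the eigenvalues are the poles of $M$. The resolvents converge in norm, which is obtained from the strong convergence plus a compactness argument using the uniform tail bounds (the resolvents are Hilbert–Schmidt with uniformly small tails). Norm resolvent convergence then yields convergence of the eigenvalues.
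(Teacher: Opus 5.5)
Your part (i) is sound. It differs from the paper only in how the limit is shown to be self-adjoint. You place $\lim M$ on the Weyl circle by passing to the limit in $\Imag M(z,\lambda)=\Imag z\,\|M(z,\lambda)P^\lambda(z)+Q^\lambda(z)\|^2$, which is legitimate once \cref{thm:main-asymptotics} gives $\ell^2$ convergence of the Weyl solutions. The paper instead allows $t\in\bcnum$, proves strong convergence at $z$ and $z^\ast$ by dominated convergence, and invokes Trotter--Kato.

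The genuine gap is part (ii). Your argument rests entirely on the assertion that $M(\iu,\lambda)$ is, up to $o(1)$, a M\"obius image of $\e^{\iu\theta(\lambda)}$ with $\theta$ continuous and divergent. Nothing available proves this.
\begin{itemize}
\item \cref{thm:main-asymptotics} is a pure upper bound and carries no phase information.
\item \cref{prop:turning-point} only controls the window $K_N(\lambda)$, with relative error $O(h^{1/3})$.
\item An $o(1)$ connection formula would also need Liouville--Green asymptotics, not merely Tur\'an-determinant bounds, uniform in $\lambda$ over the oscillatory region $N_0\le n\le N_2(\lambda)$, whose length diverges. It would further need a phase matching through the turning point.
\end{itemize}
You flag this as the hardest step, but it is the whole content of (ii), so (ii) is unproved as written.

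The missing idea is that no asymptotics are needed.
\begin{itemize}
\item By Hellmann--Feynman, $\frac{d}{d\lambda}E^{(l)}(\lambda)=\langle\psi_l(\lambda),F\psi_l(\lambda)\rangle\ge0$. So each analytic, non-crossing branch has a limit in $\rnum\cup\{-\infty\}$ as $\lambda\downarrow0$.
\item Combining (i), the fact that strong resolvent convergence makes every point of $\sigma(J_t)$ a limit of eigenvalues, and unboundedness from below of $J_t$, one shows this limit is $-\infty$ for every $l$.
\item Since $J(\lambda)$ is unbounded above, for any $E\in\sigma(J_t)$ the intermediate value theorem yields $\lambda_j\downarrow0$ with $E\in\sigma(J(\lambda_j))$.
\item By (i), a subsequence converges to some $J_{\tilde t}$, and one gets $E\in\sigma(J_{\tilde t})$. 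Disjointness of the spectra of distinct self-adjoint extensions then forces $\tilde t=t$.
\end{itemize}

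The eigenvalue statement is also not justified by your norm-resolvent argument. Uniformly small Hilbert--Schmidt tails would require bounding $\sum_{m>N}\|(J(\lambda)-z)^{-1}e_m\|^2$ uniformly in $\lambda$, i.e.\ the entries $P^\lambda_m(z)\,u^{\lambda,z}_n$ for large $m$. But for $\lambda>0$ and $\Imag z\neq0$ the sequence $P^\lambda(z)$ is not square-summable and grows past the turning point, and \cref{thm:main-asymptotics} says nothing about it.

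One direction, that every $E\in\sigma(J_t)$ is a limit of eigenvalues, is automatic from strong resolvent convergence. The converse is exactly what gives $E\in\sigma(J_{\tilde t})$ in the argument for (ii) above. To prove it:
\begin{itemize}
\item apply \cref{thm:main-asymptotics} to the eigenvectors $P^{\lambda_j}(E_j)$ themselves, which are square-summable;
\item obtain $P^{\lambda_j}(E_j)\to P^0(E)$ in $\ell^2$ by dominated convergence;
\item conclude using strong convergence of the spectral projections onto an interval $(E-\epsilon,E+\epsilon)$ that misses $\sigma(J_t)$ if $E\notin\sigma(J_t)$.
\end{itemize}
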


We refer to \cref{sec:proof} for the proof of \cref{thm:main-result}.
A key technical result needed for the proof, which may also be of independent interest, is the following bound on square-summable solutions of linear recurrence relations:
\begin{theorem}
    \label{thm:main-asymptotics}
    Suppose that \cref{hyp:conditions-an-fn} holds.
    Let $\Omega \subset \cnum$ be a compact set and, for every $z \in \Omega$, let $u^{\lambda,z}$ be a square-summable solution of the linear recurrence relation 
    \begin{equation}
        a_n u_{n+1} + (\lambda f_n -z)u_n + a_{n-1} u_{n-1}= 0 \quad \forall n \geq 1 \,.
    \end{equation}
    Then, there exist $\Lambda > 0$ and $C > 0$ such that, for all $0 \leq \lambda < \Lambda$ and $z \in \Omega$,
    \begin{equation}
        |u^{\lambda,z}_n| \leq C \frac{|u^{\lambda,z}_0|+|u^{\lambda,z}_1|}{n^{\alpha/2-1/6}} \quad \forall n \geq 1\, .
    \end{equation}
\end{theorem}
The theorem is restated in \cref{sec:proof} (cf. \cref{prop:asymptotics}) in the context of Jacobi operators, and proven in \cref{sec:asymptotics-of-eigenvectors}.

\subsection{Application to higher-order squeezing operators}
\label{sec:main_squeezing}
As outlined in \cref{sec:intro_squeezing}, our original motivation for proving \cref{thm:main-result} is the study of higher-order squeezing operators. In this context, we consider a single bosonic mode, represented by the Hilbert space $L^2(\rnum)$ of square-integrable functions on the real line, together with an orthonormal basis $(\phi_n)_{n \in \nnum}$. We interpret the states $\phi_n$ as Fock states, corresponding to exactly $n$ excitations of the bosonic field, and define the standard annihilation and creation operators $a$ and $a^\dagger$ on $\mspan(\phi_n)_{n \in \nnum}$ by
\begin{align}
    a \phi_0 & = 0\, , \\
    a \phi_n & = \sqrt{n}\,\phi_{n-1} \quad n \geq 1 \, , \\
    a^\dagger \phi_n & = \sqrt{n+1}\,\phi_{n+1} \quad n \in \nnum \, .
\end{align}
In particular, $(\phi_n)_{n \in \nnum}$ forms an eigenbasis of the number operator $a^\dagger a$, with $a^\dagger a \phi_n = n \phi_n$ for all $n \in \nnum$. It follows that, for every $k,h\in\nnum$ and $K\geq0$, the operators
\begin{equation}
    \label{eq:def-higher-order}
     A_k = (a^\dagger)^k + a^k \,, \qquad A_{k,h}(K)=A_k+K(a^\dagger a)^h,
\end{equation}
cf.~\cref{eq:higher-order,eq:general-higher-order}, are well-defined on $\mspan(\phi_n)_{n \in \nnum}$.
In the following, we write $A_k$ and $A_{k,h}(K)$ for the closure of the operators in \cref{eq:def-higher-order} defined on  $\mspan(\phi_n)_{n \in \nnum}$.

In previous work \cite{fischer-selfadjointrealizationshigherorder-2025,ashhab-finitedimensionalapproximationsgeneralized-2026} it was shown that $A_k$ is not essentially self-adjoint for $k \geq 3$. On the other hand, the regularized Hamiltonian $A_{k,h}(K)$, which includes the additional free-field term $K (a^\dagger a)^h$, becomes self-adjoint provided that $2h > k$ and $K > 0$. This naturally leads to the question of whether an arbitrarily weak free-field contribution selects a distinguished self-adjoint extension of $A_k$ in the limit $K \to 0$.

In the following, we address this question by applying \cref{thm:main-result} to the family of higher-order squeezing operators, showing how different self-adjoint extensions arise as limiting regimes of vanishing coupling. We begin by expressing $A_{k,h}(K)$ as a direct sum of Jacobi operators and identifying the conditions on the parameters $k,h,K$ under which the resulting Jacobi components satisfy \cref{hyp:conditions-an-fn}.
To this end, we recall the definition of the Pochhammer symbol $(x,s)$ for $x,s \in \nnum$:
\begin{equation}
    \label{eq:def-pochhammer}
    (x,s) = \prod_{i =  1}^s (x+i-1) = x (x+1) \dots (x+s-1) \, .
\end{equation}
\begin{proposition}
    \label{lem:squeezing-decomposition}
    Let $k\geq 3$, $2h>k$ and $K \geq 0$.
    Then, 
    \begin{equation}
        A_{k,h}(K) = k^{k/2} \bigoplus_{m=0}^{k-1} A^{(m)}_{k,h}(k^{h-k/2} K) \, , 
    \end{equation}
    where each $A^{(m)}_{k,h}(\lambda)$ is a closed Jacobi operator defined by the two sequences
    \begin{align}
        a^{(k,m)}_n & = k^{-k/2} \sqrt{(m+nk+1,k)} \, , \\
        f^{(k,h,m)}_n  & =k^{-h}(m+nk)^h
    \end{align}
    as in \cref{def:jacobi-lambda}.
    Moreover, the sequences $(a^{(k,m)}_n)_{n \in \nnum}$ and $(f^{(k,h,m)}_n)_{n \in \nnum}$ obey \cref{hyp:conditions-an-fn}.
\end{proposition}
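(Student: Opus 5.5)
The plan is to decompose the Fock space into the $k$ invariant sectors $\hilbert_m=\overline{\mspan}(\phi_{m+nk})_{n\in\nnum}$, $m=0,\dots,k-1$, compute the matrix of $A_{k,h}(K)$ in each sector, and then check \cref{hyp:conditions-an-fn} by elementary asymptotics.

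\emph{Invariant sectors.} Since $(a^\dagger)^k$ raises the photon number by $k$, $a^k$ lowers it by $k$, and $(a^\dagger a)^h$ is diagonal, the operator $A_{k,h}(K)$ maps $\mspan(\phi_{m+nk})_{n}$ into itself. This gives $L^2(\rnum)=\bigoplus_{m=0}^{k-1}\hilbert_m$. Let $U_m:\ell^2(\nnum)\to\hilbert_m$ be the unitary with $e_n\mapsto\phi_{m+nk}$.

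\emph{Matrix elements.} From the definitions of $a,a^\dagger$:
\begin{equation}
(a^\dagger)^k\phi_{m+nk}=\sqrt{(m+nk+1,k)}\,\phi_{m+(n+1)k},\qquad a^k\phi_{m+(n+1)k}=\sqrt{(m+nk+1,k)}\,\phi_{m+nk},
\end{equation}
and $(a^\dagger a)^h\phi_{m+nk}=(m+nk)^h\phi_{m+nk}$. Hence, on finitely supported sequences, $U_m^{-1}A_{k,h}(K)U_m$ acts as the Jacobi operator with off-diagonal entries
\begin{equation}
\sqrt{(m+nk+1,k)}=k^{k/2}a^{(k,m)}_n
\end{equation}
and diagonal entries
\begin{equation}
K(m+nk)^h=k^{k/2}\bigl(k^{h-k/2}K\bigr)f^{(k,h,m)}_n .
\end{equation}
So $A_{k,h}(K)$ restricted to $\mspan(\phi_n)$ equals $k^{k/2}\bigoplus_m U_m \Jmin^{(m)}(k^{h-k/2}K)U_m^{-1}$.

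\emph{Closures.} The closure of a finite orthogonal direct sum of closable operators is the direct sum of their closures: a sequence converges in graph norm if and only if each of its components does. This gives the stated identity, with each $A^{(m)}_{k,h}(\lambda)=J(\lambda)$ for the given sequences. This step is routine.

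\emph{Checking \cref{hyp:conditions-an-fn}.} Positivity is immediate, with $f_0=0$ allowed when $m=0$.

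For (ii), write
\begin{equation}
a^{(k,m)}_n=n^{k/2}\prod_{i=1}^k\Bigl(1+\tfrac{m+i}{nk}\Bigr)^{1/2}=n^{k/2}\Bigl(1+\tfrac{c_a}{n}+O(n^{-2})\Bigr),
\end{equation}
with $c_a=\sum_{i=1}^k\frac{m+i}{2k}=\frac m2+\frac{k+1}4$. Thus $\alpha=k/2\ge 3/2>4/3$.

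For (iii), write
\begin{equation}
f^{(k,h,m)}_n=n^h\bigl(1+\tfrac{m}{kn}\bigr)^h=n^h\bigl(1+\tfrac{hm}{kn}+O(n^{-2})\bigr),
\end{equation}
with $\beta=h>k/2=\alpha$ because $2h>k$.

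For (iv), let $g_n=f_n/a_n$. Its explicit form is $g_n=k^{k/2-h}(m+nk)^h\prod_{i=1}^k(m+nk+i)^{-1/2}$, which is $n^{h-k/2}$ times a smooth function of $1/n$ tending to $1$. Hence $g_{n+1}-g_n=(h-k/2)n^{h-k/2-1}(1+O(1/n))>0$ for $n\ge N$. One can also see this by viewing $g$ as a function of a real variable whose logarithmic derivative is $h/(x+m/k)-\tfrac12\sum_i 1/(x+(m+i)/k)$, which is positive for large $x$. Then for $n\ge N$,
\begin{equation}
\mathcal V_n(g)=\mathcal V_N(g)+g_n-g_N,
\end{equation}
so (iv) holds with $C=\max_{n\le N}\mathcal V_n(g)+|g_N|$.

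The only point requiring any care is this eventual monotonicity. I expect it to follow cleanly from the asymptotic expansion, since $h-k/2>0$ strictly. If desired, one can even verify monotonicity for all $n\ge1$ by comparing each factor $(m+nk)^h$ against $\prod_i(m+nk+i)^{1/2}$, but eventual monotonicity suffices.
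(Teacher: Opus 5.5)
Your proposal is correct and follows the paper's proof essentially step by step: the same decomposition into residue classes mod $k$, the same rescaling by $k^{-k/2}$ with $\lambda=k^{h-k/2}K$, the same argument that the closure of a direct sum is the direct sum of closures, and the same expansions giving $\alpha=k/2>4/3$ and $\beta=h>\alpha$. The only difference is in (iv). The paper asserts strict monotonicity of $f_n/a_n$ for all $n$ through a loosely justified ratio comparison, whose formula is also undefined at $r=0$. Your logarithmic-derivative computation makes this step rigorous, and in fact globally, since $\frac{h}{x+m/k}>\frac{k}{2}\cdot\frac{1}{x+m/k}>\frac12\sum_{i=1}^k\frac{1}{x+(m+i)/k}$ for $x>0$. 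Your explicit constant $C$ also shows that eventual monotonicity already suffices.
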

The statement is proven in \cref{sec:proof-squeezing}. 
The above decomposition reduces the study of higher-order squeezing operators to the Jacobi setting developed in \cref{sec:main_2}. 
In particular, combining \cref{lem:squeezing-decomposition} with \cref{prop:self-adjointness}, we obtain the following immediate consequence:
\begin{corollary}
    Let $A_{k,h}(K)$ be defined as in \cref{eq:general-higher-order} for $K \geq 0$.
    Suppose that $k \geq 3$ and $2h > k$.
    Then $A_{k,h}(K)$ is self-adjoint if and only if $K > 0$.
    In particular, for $K = 0$, $A_{k,h}(0) = A_k$ admits a family of self-adjoint extensions, parametrized by $V \in U(k)$.
\end{corollary}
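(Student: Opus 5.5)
The plan is to derive the corollary directly from \cref{lem:squeezing-decomposition} together with \cref{prop:self-adjointness}, using two standard facts about orthogonal direct sums of closed operators. First, a direct sum $\bigoplus_{m} T_m$ of finitely many closed densely defined operators is self-adjoint if and only if each summand $T_m$ is self-adjoint. This holds because $(\bigoplus_m T_m)^* = \bigoplus_m T_m^*$, and the domains split componentwise. Second, multiplying by the positive constant $k^{k/2}$ preserves self-adjointness. By \cref{lem:squeezing-decomposition}, $A_{k,h}(K)$ is unitarily equal (the identification of $L^2(\rnum)$ with $\bigoplus_{m=0}^{k-1}\ell^2(\nnum)$ via $\phi_{m+nk}\mapsto e_n$ in the $m$-th summand) to $k^{k/2}\bigoplus_m A^{(m)}_{k,h}(k^{h-k/2}K)$. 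Each $A^{(m)}_{k,h}(\lambda)$ is the closure $J(\lambda)$ of \cref{def:jacobi-lambda} built from sequences satisfying \cref{hyp:conditions-an-fn}.

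For the self-adjointness dichotomy, I set $\lambda = k^{h-k/2}K$ and note that $\lambda>0$ if and only if $K>0$. If $K>0$, \cref{prop:self-adjointness} shows that every summand $J^{(m)}(\lambda)$ is self-adjoint, and hence so is $A_{k,h}(K)$. If $K=0$, \cref{prop:self-adjointness} shows that each $J^{(m)}(0)$ is not self-adjoint, and therefore neither is the sum. It also gives $A_{k,h}(0)=A_k$, since the free-field term vanishes identically on the span of Fock states.

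For the parametrization at $K=0$, I compute deficiency indices. By \cref{prop:self-adjointness}, each $J^{(m)}(0)$ has a one-parameter family of extensions, which means deficiency indices $(1,1)$. Deficiency subspaces of a direct sum are the direct sums of the deficiency subspaces, so $A_k$ has deficiency indices $(k,k)$. Von Neumann's theory then parametrizes all self-adjoint extensions by the unitaries between the two $k$-dimensional deficiency spaces. After fixing orthonormal bases, these are the matrices $V\in U(k)$. The positive factor $k^{k/2}$ rescales the deficiency spaces at $\pm\iu$ to those at $\pm\iu k^{-k/2}$, so the count is unchanged; alternatively, one can apply von Neumann's theorem directly at $\pm \iu$ to $A_k$ itself.

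The only step needing some care is the first one: checking that the unitary identification intertwines the closures and not merely the operators on finite spans. This follows because the unitary maps $\mspan(\phi_n)$ onto $\bigoplus_m \ell_0(\nnum)$, and closures commute with unitary equivalence and with finite direct sums. Everything else is standard bookkeeping, so there is no real obstacle.
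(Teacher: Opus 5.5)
Your proposal is correct and follows the paper's approach: the paper states the corollary as an immediate consequence of combining the direct-sum decomposition in \cref{lem:squeezing-decomposition} with the self-adjointness dichotomy of \cref{prop:self-adjointness} for each Jacobi block. Your additional bookkeeping is sound and spells out what ``immediate'' leaves implicit: adjoints and deficiency spaces of finite direct sums split componentwise, which gives deficiency indices $(k,k)$ and hence a $U(k)$-family of extensions via von Neumann's theorem.
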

This result was already established in \cite{fischer-selfadjointrealizationshigherorder-2025}, where it was obtained via a detailed asymptotic analysis based on the classical Birkhoff–Trjitzinsky theory. The approach developed there provides a complete and precise description of the deficiency structure in this setting.
In contrast, the Jacobi operator point of view adopted in the present work leads to a more direct argument.  Beyond its conceptual simplicity, this formulation is also more flexible, especially in view of possible extensions to related bosonic models, where similar ideas can be applied.
    
Of particular interest are self-adjoint extensions of $A_{k,h}(0)$ which decompose into a direct sum of self-adjoint extensions of the Jacobi operators $A^{(m)}_{k,h}(0)$.
We write
\begin{equation}
    \label{eq:diagonal-extensions}
    A_{k}^{(t_0,\dots,t_{k-1})} = k^{k/2} \bigoplus_{m = 0}^{k-1} A^{(m,t_m)}_k \, , 
\end{equation}
where $A^{(m,t_m)}_k$ is a self-adjoint extension of $A^{(m)}_{k,h}(0)$, parametrized by $t_m \in \brnum$ as in \cref{prop:jacobi-boundary-triplets}.
Interestingly, self-adjoint extensions of the form \cref{eq:diagonal-extensions} are exactly those which respect the $\nicefrac{2\pi}{k}$-rotational symmetry of the original operator $A_{k,h}(0)$, cf.~\cite[Theorem~3.5]{ibort-selfadjointextensionssymmetries-2015}.

For $2h > k$ and $K > 0$, the operator $A_{k,h}(K)$ can be viewed as a regularization of the higher-order squeezing operator $A_k$. As discussed above, it is not a priori clear how this regularization behaves in the limit $K \to 0$. Using \cref{thm:main-result}, we show that all possible limiting operators of $A_{k,h}(K)$ as $K \to 0$ are precisely of the form in \cref{eq:diagonal-extensions}:

\begin{theorem}
\label{prop:squeezing-main}
Suppose that $k \geq 3$ and $2h > k$, and let $A_{k,h}(K)$ be defined as in \cref{eq:general-higher-order}. Furthermore, let $A_{k}^{(t_0,\dots,t_{k-1})}$ denote the self-adjoint extensions of $A_{k,h}(0)$, parametrized by $(t_0,\dots,t_{k-1}) \in \brnum^k$ as in \cref{eq:diagonal-extensions}. Then the following holds:
\begin{enumerate}[(i)]
    \item \label{item:squeezing-main-i}
    Let $(K_j)_{j \in \nnum}$ be a sequence with $K_j > 0$ for all $j \in \nnum$ and $K_j \to 0$. Then there exist a subsequence $(K_{j_\ell})_{\ell \in \nnum}$ and parameters $(t_0,\dots,t_{k-1}) \in \brnum^k$ such that $A_{k,h}(K_{j_\ell})$ converges to $A_{k}^{(t_0,\dots,t_{k-1})}$ in the strong resolvent sense.

    \item \label{item:squeezing-main-ii}
    Let $m \in \{0,\dots,k-1\}$ and $t_m \in \brnum$. Then there exist parameters $(t_0,\dots,\cancel{t_m},\dots,t_{k-1}) \in \brnum^{k-1}$ and a sequence $(K_j)_{j \in \nnum}$ with $K_j > 0$ and $K_j \to 0$ such that $A_{k,h}(K_j)$ converges to $A_{k}^{(t_0,\dots,t_{k-1})}$ in the strong resolvent sense.
\end{enumerate}
\end{theorem}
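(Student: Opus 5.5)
The plan is to reduce everything to \cref{thm:main-result} via the decomposition of \cref{lem:squeezing-decomposition}. Write $\lambda = k^{h-k/2}K$, so that
\[
A_{k,h}(K) = k^{k/2}\bigoplus_{m=0}^{k-1} A^{(m)}_{k,h}(\lambda).
\]
Each summand $A^{(m)}_{k,h}(\lambda)$ is a Jacobi operator $J^{(m)}(\lambda)$ whose sequences satisfy \cref{hyp:conditions-an-fn}. The map $K\mapsto\lambda$ is a fixed positive rescaling, so $K_j\to0$ with $K_j>0$ if and only if $\lambda_j\to0$ with $\lambda_j>0$.

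The basic observation is that strong resolvent convergence of a finite orthogonal direct sum is equivalent to strong resolvent convergence of each summand. Indeed,
\[
(A_{k,h}(K)-z)^{-1} = \bigoplus_m k^{-k/2}\bigl(A^{(m)}_{k,h}(\lambda)-z k^{-k/2}\bigr)^{-1}.
\]
The decomposition $L^2(\rnum)=\bigoplus_m \overline{\mspan}(\phi_{m+nk})_n$ is independent of $K$. Also, multiplying by the positive constant $k^{k/2}$ preserves strong resolvent convergence, since one only rescales $z$ and non-real $z$ stay non-real. The same decomposition applied at $K=0$ identifies $A_k^{(t_0,\dots,t_{k-1})}$ with $k^{k/2}\bigoplus_m A^{(m,t_m)}_k$, where $A^{(m,t_m)}_k = J^{(m)}_{t_m}$ in the notation of \cref{prop:jacobi-boundary-triplets}.

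For (i), proceed by a diagonal extraction in $k$ stages. Given $K_j\to0$, set $\lambda_j=k^{h-k/2}K_j$.
\begin{itemize}
\item Apply \cref{thm:main-result}\ref{item:main-result-some-extension} to the $m=0$ family. This gives a subsequence and $t_0\in\brnum$ with $J^{(0)}(\lambda_{j})\to J^{(0)}_{t_0}$ in the strong resolvent sense along it.
\item Apply the same statement to the $m=1$ family along that subsequence, obtaining a further subsequence and $t_1$.
\item Continue in this way through $m=k-1$.
\end{itemize}
Convergence is inherited by further subsequences, so along the final subsequence all $k$ components converge. The direct sum then converges to $A_k^{(t_0,\dots,t_{k-1})}$ by the observation above.

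For (ii), fix $m$ and $t_m$. By \cref{thm:main-result}\ref{item:main-result-all-extensions}, there is a sequence $\lambda_j\to0$ with $J^{(m)}(\lambda_j)\to J^{(m)}_{t_m}$. Set $K_j=k^{k/2-h}\lambda_j$. Now run the extraction argument of (i) on the remaining $k-1$ components along this sequence. It yields a subsequence and parameters $t_{m'}$ for all $m'\neq m$ such that those components converge. The $m$-th component still converges to $J^{(m)}_{t_m}$ along the subsequence, and the direct-sum observation finishes the proof.

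There is no real analytic obstacle here, since everything deep is contained in \cref{thm:main-result}. The step requiring the most care is the bookkeeping of the direct sum. One must check that the closures commute with the orthogonal decomposition: the core $\mspan(\phi_n)$ splits compatibly into the cores $\ell_0$ of each sector, and this is presumably part of \cref{lem:squeezing-decomposition}. One must also check that the limit extension is exactly the symmetric one of \cref{eq:diagonal-extensions}, with the same parametrization convention $t_m$ as in \cref{prop:jacobi-boundary-triplets}. Finally, the resolvent identity with the rescaled spectral parameter $z k^{-k/2}$ should be written out explicitly.
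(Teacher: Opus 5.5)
Your proposal is correct and follows the paper's proof essentially step for step. The paper also rescales $\lambda = k^{h-k/2}K$ and uses the block decomposition of \cref{lem:squeezing-decomposition}. For (i) it runs a finite diagonal extraction over $m=0,\dots,k-1$ using part (i) of \cref{thm:main-result}; for (ii) it starts from part (ii) of \cref{thm:main-result} on the $m$-th block and then extracts on the remaining blocks; in both cases it reassembles the limit through the direct-sum resolvent identity with rescaled spectral parameter $k^{-k/2}z$, exactly as you do.
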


The proof is given in \cref{sec:proof-squeezing}. 
This result clarifies the notion of physical regularization for higher-order squeezing operators by showing that the self-adjoint operators obtained in the limit $K \to 0$ form a genuinely non-unique family. In particular, the small-coupling limit does not select a canonical extension of the unregularized operator, but rather parametrizes a family of extensions of the form~\eqref{eq:diagonal-extensions}, which are thus compatible with the $\nicefrac{2\pi}{k}$-rotational symmetry. This answers a question first raised in \cite{fischer-selfadjointrealizationshigherorder-2025,ashhab-finitedimensionalapproximationsgeneralized-2026} on whether physically motivated regularizations might single out a preferred realization, and shows that, within this class of perturbations, no such selection principle emerges.

We illustrate our results with numerical simulations. In quantum optics, squeezing of the vacuum state $\phi_0$ is of particular interest. For $K > 0$, it follows from the decomposition in \cref{lem:squeezing-decomposition} that only the operator $A^{(0)}_{k,h}(K)$ governs the dynamics of the vacuum state. We therefore write, with slight abuse of notation,
\begin{equation}
    \label{eq:squeezed-plus-kerr-vacuum}
    \Psi_{k,h}(K,T) = \e^{-\iu k^{k/2} A^{(0)}_{k,h}(k^{h-k/2}K)T} \phi_0
\end{equation}
for $K,T>0$.

We are interested in the limiting dynamics of the vacuum state as $K \to 0$. Since all possible limiting extensions of $A_{k,h}(0)$ are of the form in \cref{eq:diagonal-extensions}, this leads naturally to the family of higher-order squeezed vacuum states
\begin{equation}
    \label{eq:sqeezed-vaccum}
    \Phi_k(t,T) = \e^{-\iu k^{k/2} A^{(0,t)}_k T} \phi_0 \, .
\end{equation}
Our main result on higher-order squeezing operators, \cref{prop:squeezing-main}, implies that for every $t \in \brnum$ there exists a sequence $K_j \to 0$, with $K_j > 0$, such that
\begin{equation}
    \lim_{j \to \infty} \Psi_{k,h}(K_j,T) = \Phi_k(t,T)
\end{equation}
for all $T>0$.

We illustrate this result numerically for the representative choice $k=h=3$ and $T=1$. 
To this end, we truncate the involved Hamiltonians to finite dimension $n$ and evaluate the corresponding time-evolution operators numerically. 
For $\lambda>0$, these finite-dimensional approximations converge strongly to the exact dynamics as $n\to\infty$, and therefore provide a faithful approximation scheme~\cite{fischer2025wrong,ashhab-finitedimensionalapproximationsgeneralized-2026}. 
In the singular case $\lambda=0$, the limiting dynamics depend on the parity of the truncation dimension: for even truncations $n=2r$, the dynamics converge to those generated by the extension corresponding to $t=0$, whereas odd truncations $n=2r+1$ converge to the extension corresponding to $t=\infty$, cf.~\cite[Theorem~1]{ashhab-finitedimensionalapproximationsgeneralized-2026}.

\begin{figure}[ht]
    \centering
    \includegraphics[width=0.95\linewidth]{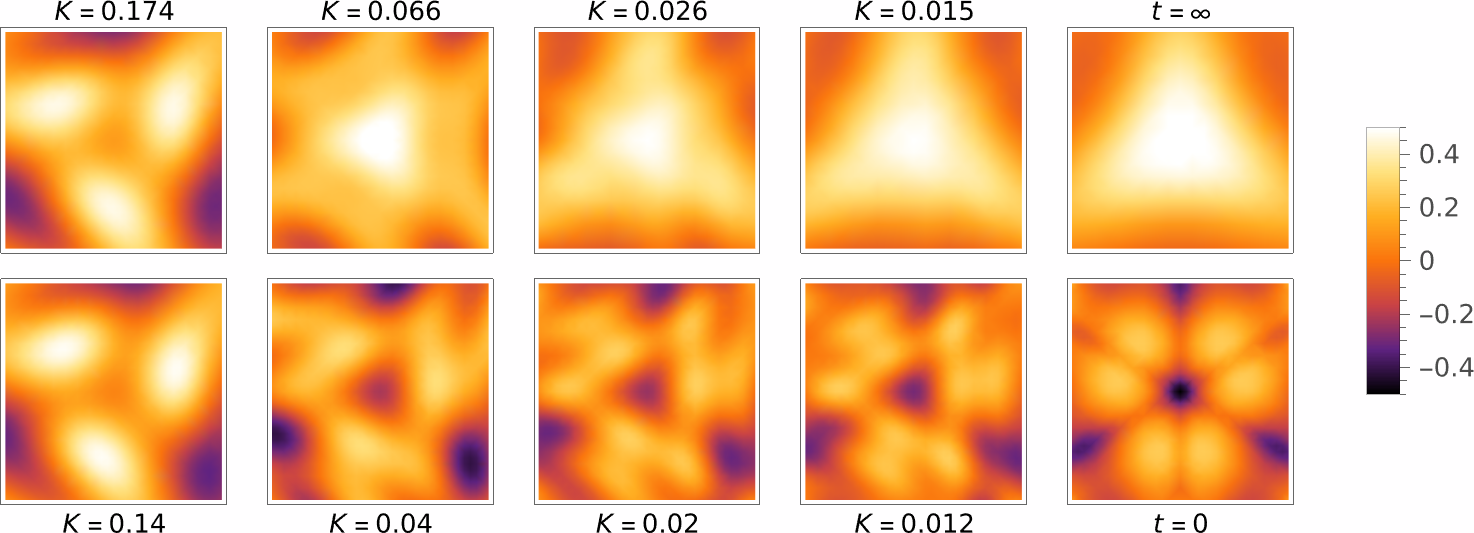}
    \caption{
Wigner functions of $\Psi_{3,3}(K,T)$ (cf.~\cref{eq:squeezed-plus-kerr-vacuum}) for different values of $K$ and fixed time $T=1$, together with the limiting states $\Phi_3(0,T)$ and $\Phi_3(\infty,T)$ (cf.~\cref{eq:sqeezed-vaccum}) corresponding to the two self-adjoint extensions with parameters $t=0$ and $t=\infty$, respectively. Different subsequences converge to different limiting states.
    }
    \label{fig:wigner-conv}
\end{figure}

Following the proof of \cref{thm:main-result}, we numerically construct a sequence $(K_j)_{j\in\nnum}\subset\rnum_+$ such that the eigenvalue $0\in\sigma(A^{(0,\infty)}_3)$ is approximately contained in the spectrum of $A^{(0)}_{3,3}(k^{h-k/2} K_j)$ for each $j$. This yields strong dynamical convergence of $A^{(0)}_{3,3}(k^{h-k/2} K_j)$ towards the extension $A^{(0,\infty)}_3$, and in particular
\begin{equation}
    \lim_{j\to\infty}\Psi_{3,3}(K_j,1)
    =
    \Phi_3(\infty,1)\, .
\end{equation}
We repeat the same construction for the extension $A^{(0,0)}_3$, obtaining a sequence $(K_j)_{j\in\nnum}\subset\rnum_+$ such that
\begin{equation}
    \lim_{j\to\infty}\Psi_{3,3}(K_j,1)
    =
    \Phi_3(0,1)\, .
\end{equation}
The resulting convergence is illustrated in \cref{fig:wigner-conv}, where the states are represented by their Wigner functions (see e.g.~\cite[11.8]{mandel-opticalcoherencequantum-1995} for an introduction). The figure shows that different subsequences converge to different limiting states. Moreover, the states $\Psi_{3,3}(K_j,1)$ become increasingly sensitive to small variations of $K_j$ as $K_j\to0$.

\section{Proof of \texorpdfstring{\cref{thm:main-result}}{the main theorem}}\label{sec:proof}

In this section we prove the main result of this work, \cref{thm:main-result}. The argument relies crucially on spectral properties of Jacobi operators, which we collect here. In addition to \cref{prop:self-adjointness}, we will in particular use the following facts:
\begin{itemize}
\item[(a)] The spectra of distinct self-adjoint extensions $J_t$ of $J(0)$ are discrete, mutually disjoint, and together cover the whole real line (\cref{prop:jacobi-limit-circle-spectrum});
\item[(b)] The eigenvalues of the operators $J(\lambda)$, $\lambda>0$, depend analytically on $\lambda$ (\cref{cor:eigen-analytic}) and diverge to $-\infty$ as $\lambda \to 0$ (\cref{prop:limit-eigenvalues});
\item[(c)] The square-summable generalized eigenvectors of the operators $J(\lambda)$ are uniformly bounded in $\lambda$ by a square-summable sequence (\cref{prop:asymptotics}).
\end{itemize}

The proof proceeds in two steps, corresponding to the two parts of \cref{thm:main-result}.
We first establish \cref{thm:main-result}\labelcref{item:main-result-some-extension}. 
To this end, we show that for every sequence of positive real numbers $(\lambda_j)_{j\in\nnum}$ one can extract a subsequence along which the generalized eigenvectors of $J(\lambda_j)$ converge weakly to those of a self-adjoint extension of $J(0)$. 
The key step is then to upgrade this weak convergence to strong resolvent convergence. Property (c) plays a crucial role in this step, which constitutes the most technically demanding part of the argument.

We then prove \cref{thm:main-result}\labelcref{item:main-result-all-extensions} using the following strategy:
\begin{enumerate}
\item Given $t \in \brnum$ and $E \in \sigma(J_t)$, we construct a sequence $(\lambda_j)_{j \in \nnum}$ with $\lambda_j>0$ and $\lambda_j \to 0$ such that $E \in \sigma(J(\lambda_j))$ for all $j \in \nnum$;
\item By \cref{thm:main-result}\labelcref{item:main-result-some-extension}, there exist a subsequence $(\lambda_{j_k})_{k \in \nnum}$ and a parameter $\tilde t \in \overline{\rnum}$ such that $J(\lambda_{j_k})$ converges to $J_{\tilde t}$ in the strong resolvent sense. Consequently, the spectra of $J(\lambda_{j_k})$ converge to the spectrum of $J_{\tilde t}$, implying $E \in \sigma(J_{\tilde t})$;
\item Since the spectra of the self-adjoint extensions of $J(0)$ are disjoint, this implies $t=\tilde t$, and the subsequence $(\lambda_{j_k})_{k \in \nnum}$ is the desired sequence.
\end{enumerate}

The remainder of this section is organized as follows. In \cref{sec:preliminaries}, we recall the necessary properties of Jacobi operators. In \cref{sec:spectral-properties}, we analyze the spectral properties of the operators $J(\lambda)$ and $J_t$. Finally, \cref{sec:main-theorem-proof} contains the proof of \cref{thm:main-result} following the strategy outlined above. The proof relies on a uniform bound on the square-summable generalized eigenvectors of $J(\lambda)$ (\cref{prop:asymptotics}), whose quite technical proof is postponed to  \cref{sec:asymptotics-of-eigenvectors}.

\subsection{Preliminaries}
\label{sec:preliminaries}

In this section, we collect notation and recall general properties of Jacobi operators that will be used throughout the remainder of the analysis. At this stage, we do not restrict to the specific class of operators introduced in \cref{sec:main_2}, nor do we assume that \cref{hyp:conditions-an-fn} holds.
\begin{definition}
    \label{def:jacobi}
    Let $(a_n)_{n \in \nnum},(b_n)_{n \in \nnum} \in \ell(\nnum)$ be two sequences with $a_n > 0$ and $b_n \in \rnum$ for all $n\in \nnum$.
    The Jacobi operator $\Top:\ell(\nnum)\rightarrow\ell(\nnum)$ is the operator defined by
    \begin{align}
    (\Top u)_n
    & =
    a_n u_{n+1}
    +
    b_n u_n
    +
    a_{n-1}u_{n-1},
    \qquad n \geq 1\, ,
    \\
    (\Top u)_0
    & =
    a_0 u_1
    +
    b_0 u_0 \, .
\end{align}
    Furthermore, we define the operators $\Tmin,T$, and $\Tmax$ on $\ell^2(\nnum)$: $\Tmin$ and $\Tmax$ are the restrictions of $\Top$ respectively to
    \begin{equation}
        \domain(\Tmin)=\ell_0(\nnum),\qquad\domain(\Tmax)=\{u\in\ell^2(\nnum):\Top u\in\ell^2(\nnum)\},
    \end{equation}
    and $T=\overline{\Tmin}$ is the closure of $\Tmin$.
\end{definition}
As before, we use calligraphic letters to denote Jacobi operators acting on the space of all sequences $\ell(\nnum)$, and straight letters for their realizations as (possibly unbounded) operators on the Hilbert space $\ell^2(\nnum)$.

Given $z\in\cnum$, we seek sequences $u \in \ell(\nnum)$ satisfying
\begin{equation}\label{eq:generalized_eigenvalue_0}
    \Top u = z u ,
\end{equation}
that is,
\begin{align}
    \label{eq:generalized-eigenvalue}
    a_n u_{n+1} + b_n u_n + a_{n-1} u_{n-1} &= z u_n , 
    \qquad n \geq 1,\\
\label{eq:generalized-eigenvalue_initial}
    a_0 u_1 + b_0 u_0 &= z u_0 .
\end{align}
In the literature, solutions of \cref{eq:generalized-eigenvalue} for some $z\in\cnum$ are commonly referred to as \textit{generalized eigenvectors} of $\Top$, with $z$ the corresponding \textit{generalized eigenvalue}. We adopt this convention in the following. Note, however, that a solution of \cref{eq:generalized-eigenvalue,eq:generalized-eigenvalue_initial} need not belong to $\ell^2(\nnum)$, and defines an actual eigenvector of a Hilbert space realization of $\Top$ only if $u\in\ell^2(\nnum)$.

By the theory of linear recurrence relations, \cref{eq:generalized-eigenvalue} admits a two-dimensional space of solutions in $\ell(\nnum)$. Equivalently, for any prescribed initial values $u_0,u_1 \in \cnum$, there exists a unique solution $u \in \ell(\nnum)$ satisfying these initial conditions \cite[Theorem~2.7]{elaydi-introductiondifferenceequations-2005}. We now introduce a pair of linearly independent solutions of \cref{eq:generalized-eigenvalue}, obtained from specific choices of initial data, which play a central role in the theory of Jacobi operators \cite[Chapter~16]{schmudgen-unboundedselfadjointoperators-2012}:
\begin{definition}
    \label{def:orthogonal-polynomials}
    Let $z\in\cnum$. The \textit{orthogonal polynomials of first and second kind} are the unique solutions $P(z)=(P_n(z))_{n\in\nnum},Q(z)=(Q_n(z))_{n\in\nnum}\in\ell(\nnum)$ of \cref{eq:generalized-eigenvalue} respectively corresponding to the initial conditions
    \begin{alignat}{2}
        P_0(z) & = 1\, , & \quad  P_1(z)& = (z-b_0)/a_0\, ,  \\
        Q_0(z) & = 0 & \quad  Q_1(z) & = 1/a_0 \, .
    \end{alignat} 
\end{definition}
From the recurrence relation~\eqref{eq:generalized-eigenvalue} it follows that $P_n(z)$ and $Q_{n+1}(z)$ are polynomials of degree $n$ in $z$ \cite[Section~16.1-2]{schmudgen-unboundedselfadjointoperators-2012}.
In particular, as $P(z)$ also satisfies the initial condition~\eqref{eq:generalized-eigenvalue_initial}, we get
\begin{equation}
    \Top P(z)=zP(z),
\end{equation}
i.e. $P(z)$ is the (up to scalar multiples) unique solution of \cref{eq:generalized_eigenvalue_0} with eigenvalue $z \in \cnum$.
On the other hand, $Q(z)$ does not satisfy \cref{eq:generalized-eigenvalue_initial}, and
\begin{equation}
    \Top Q(z)=zQ(z)+e_0,
\end{equation}
where $(e_n)_m=\delta_{nm}$, cf.~\cref{sec:main_2}. Moreover, since $\Top$ has real coefficients,
\begin{equation}
    \label{eq:orthogonal-poly-conjugate}
    P_n(z^\ast) = P_n(z)^\ast\, , \quad Q_n(z^\ast)=Q_n(z)^\ast \quad \forall n \in \nnum.
\end{equation}
Another useful tool is the \textit{Wronskian} (also called the \textit{Casoratian}, see \cite[Section~2.2]{elaydi-introductiondifferenceequations-2005}), which has a direct analogue in the theory of Sturm--Liouville operators \cite[Chapter~15]{schmudgen-unboundedselfadjointoperators-2012}. 
We follow the definition given in \cite[Eq.~1.21]{teschl-jacobioperatorscompletely-1999}.
\begin{definition}
    \label{def:wronksian}
    Let $\Top$ be a Jacobi operator defined as in \cref{def:jacobi}.
    For $u,v \in \ell(\nnum)$ the (modified) \textit{Wronskian} is given by
    \begin{equation}
        W_n[u,v] = a_n\left( u_n v_{n+1} - u_{n+1} v_n \right) \,.
    \end{equation}
\end{definition}
It has the following important properties:
\begin{lemma}
    \label{lem:wronskian}
    Let $W_n[\cdot,\cdot]$ be the Wronskian as defined in \cref{def:wronksian}, and let $u,v,w$ be three solutions of \cref{eq:generalized-eigenvalue}.
    Then the following holds:
    \begin{enumerate}[(i)]
        \item $W_n[u,v]= W_0[u,v]$ for all $n \in \nnum$, and we write $W[u,v] = W_n[u,v]$;
        \item $W[u,v] = 0$ if and only if $u$ and $v$ are linearly dependent;
        \item Suppose that $u,v$ are linearly independent. Then
        \begin{equation}
            w = \frac{W[w,v]}{W[u,v]} u - \frac{W[w,u]}{W[u,v]} v \, ;
        \end{equation}
        \item $W[P(z),Q(z)] = 1$;
    \end{enumerate}
\end{lemma}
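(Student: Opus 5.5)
The plan is a direct computation from the recurrence \cref{eq:generalized-eigenvalue}, using only that $u,v,w$ satisfy it for $n\geq1$ and that $a_n>0$.

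For (i), fix $n\geq1$ and multiply the recurrence for $u$ at index $n$ by $v_n$, and the recurrence for $v$ at index $n$ by $u_n$. Subtracting the two, the terms $(b_n-z)u_nv_n$ cancel. What remains is
\begin{equation}
a_n(u_{n+1}v_n-u_nv_{n+1})+a_{n-1}(u_{n-1}v_n-u_nv_{n-1})=0,
\end{equation}
that is, $-W_n[u,v]+W_{n-1}[u,v]=0$. Induction on $n$ then gives $W_n=W_0$ for all $n$.

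For (ii), if $u=cv$ (or $v=0$) the Wronskian clearly vanishes. Conversely, suppose $W[u,v]=0$. Then $W_0=a_0(u_0v_1-u_1v_0)=0$ with $a_0>0$, so the vectors $(u_0,u_1)$ and $(v_0,v_1)$ are linearly dependent in $\cnum^2$. A solution of the recurrence is uniquely determined by its initial data $(u_0,u_1)$ (\cite[Theorem~2.7]{elaydi-introductiondifferenceequations-2005}), so the same linear relation holds for the full sequences.

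For (iii), note that the Wronskian is bilinear and antisymmetric, with $W[u,u]=0$. Let $\tilde w$ denote the right-hand side. It is a linear combination of solutions, hence itself a solution. Using bilinearity,
\begin{equation}
W[\tilde w,v]=\tfrac{W[w,v]}{W[u,v]}W[u,v]=W[w,v],\qquad W[\tilde w,u]=-\tfrac{W[w,u]}{W[u,v]}W[v,u]=W[w,u].
\end{equation}
Hence $W[w-\tilde w,u]=W[w-\tilde w,v]=0$. Set $d=w-\tilde w$; it is a solution. By (ii), if $d\neq0$ it would be proportional to both $u$ and $v$, forcing $u$ and $v$ to be dependent, a contradiction. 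So $d=0$. To avoid invoking (ii) in this form, one can argue directly instead: $W_0[d,u]=W_0[d,v]=0$ says the vector $(d_0,d_1)$ is orthogonal, in the sense of the $2\times2$ determinant, to the two independent vectors $(u_0,u_1)$ and $(v_0,v_1)$. This forces $(d_0,d_1)=0$, and hence $d=0$.

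For (iv), evaluate at $n=0$: $W_0=a_0(P_0Q_1-P_1Q_0)=a_0\cdot1\cdot a_0^{-1}-0=1$.

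No step is a real obstacle. The only care needed is in (i): the recurrence holds only for $n\geq1$, so the induction must start from the identity $W_1=W_0$.
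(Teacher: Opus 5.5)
Your proof is correct. The paper does not prove (i)--(iii) itself but cites \cite{teschl-jacobioperatorscompletely-1999} for them, and for (iv) it does exactly your computation $W_0[P(z),Q(z)]=a_0(P_0Q_1-P_1Q_0)=1$. Your arguments for (i)--(iii) are the standard computations behind that citation: the identity $W_n=W_{n-1}$ for $n\ge 1$ obtained from the recurrence, uniqueness of solutions given initial data, and bilinearity and antisymmetry of $W$. They make the lemma self-contained at no extra cost.
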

\begin{proof}
    The first three statements follow from \cite[7]{teschl-jacobioperatorscompletely-1999}.
    The last statement follows directly from the definition of the orthogonal polynomials (\cref{def:orthogonal-polynomials}):
    \begin{equation}
        W[P(z),Q(z)] = W_0[P(z),Q(z)] = a_0 \left(P_0(z) Q_1(z) - P_1(z) Q_0(z) \right) = 1.\qedhere
    \end{equation}
\end{proof}
We return to the operator-theoretic properties of $T$ and recall a fundamental criterion for the essential self-adjointness of the operator $T:\domain(T)\subset\ell^2(\nnum)\to\ell^2(\nnum)$ associated with $\Top$; see, e.g., \cite[Proposition~6.5, Corollary~6.7, Theorem~6.16]{schmudgen-momentproblem-2017} (cf.\ also \cref{prop:weyl-alternative}):
\begin{proposition}
    \label{prop:jacobi-esssa}
    Let $\Top$ be a Jacobi operator, and $T$ the corresponding operator on $\ell^2(\nnum)$ from \cref{def:jacobi}.
    Then $T^\ast= \Tmax$, and the deficiency indices of $T$ are either $(0,0)$ or $(1,1)$.
    In particular, $T$ is essentially self-adjoint if and only if $P(z) \notin \ell^2(\nnum)$ for some, and then all, $z \in \cnum \setminus \rnum$, and is otherwise not essentially self-adjoint.
\end{proposition}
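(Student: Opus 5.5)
The proof proceeds in three steps: identify $T^\ast$ with $\Tmax$, bound the deficiency indices by the dimension of the solution space of the recurrence, and characterize which case occurs via $P(z)$.

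\textbf{Step 1: $T^\ast=\Tmax$.} Since $T=\overline{\Tmin}$, we have $T^\ast=\Tmin^\ast$. For $v\in\ell^2(\nnum)$ and $u=e_n$, the operator $\Top$ is formally symmetric with real coefficients. A direct computation gives $\braket{v,\Tmin e_n}=(\Top v)_n^\ast$. Hence $v\in\domain(\Tmin^\ast)$ if and only if there is $w\in\ell^2(\nnum)$ with $w_n=(\Top v)_n$ for all $n$, that is, if and only if $\Top v\in\ell^2(\nnum)$. In that case $\Tmin^\ast v=\Top v$. This yields $T^\ast=\Tmax$.

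\textbf{Step 2: deficiency indices are $(0,0)$ or $(1,1)$.} Fix $z\in\cnum\setminus\rnum$. By Step 1, $\ker(T^\ast-z)$ consists of the $u\in\ell^2(\nnum)$ with $\Top u=zu$, i.e.\ solutions of both \cref{eq:generalized-eigenvalue} and \cref{eq:generalized-eigenvalue_initial}. Such a solution is determined by $u_0$, so it must be $u=u_0P(z)$. Hence
\[
\ker(T^\ast-z)=\mspan(P(z))\cap\ell^2(\nnum),
\]
which has dimension $0$ or $1$. By \cref{eq:orthogonal-poly-conjugate}, $P(z^\ast)=P(z)^\ast$, so $P(z)\in\ell^2(\nnum)$ if and only if $P(z^\ast)\in\ell^2(\nnum)$. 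Therefore the two deficiency indices coincide. Von Neumann's theorem then says $T$ is essentially self-adjoint exactly when $P(z)\notin\ell^2(\nnum)$.

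\textbf{Step 3: independence of $z$ (the main obstacle).} It remains to show the condition does not depend on $z\in\cnum\setminus\rnum$. The abstract route is standard: the deficiency index $\dim\ker(T^\ast-z)$ is constant on each connected component of the field of regularity of $T$. The upper and lower half-planes are such components, and by conjugation symmetry they give the same value.

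Alternatively, one can argue directly with a variation-of-constants formula. Suppose $P(z_0)$ and $Q(z_0)$ are both in $\ell^2(\nnum)$, which follows from Weyl's limit-circle alternative once $P(z_0)\in\ell^2$. Writing $P(z)$ via \cref{lem:wronskian} as a perturbation $(z-z_0)\sum_k(\dots)$ built from $P(z_0),Q(z_0)$, a Gronwall-type bound shows $P(z)\in\ell^2(\nnum)$ for all $z$.

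I would simply cite the constancy of deficiency indices on half-planes, which makes the argument short.
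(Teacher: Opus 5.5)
Your proof is correct. The paper gives no argument of its own here; it only cites Schm\"udgen's monograph \cite[Proposition~6.5, Corollary~6.7, Theorem~6.16]{schmudgen-momentproblem-2017}. What you wrote is a self-contained version of the standard operator-theoretic argument behind that citation. $T^\ast=\Tmin^\ast=\Tmax$ follows from $\braket{v,\Tmin e_n}=(\Top v)_n^\ast$. Since $a_0\neq0$, every solution of $\Top u=zu$ is determined by $u_0$, which gives $\ker(T^\ast-z)=\mspan(P(z))\cap\ell^2(\nnum)$. Finally, $P(z^\ast)=P(z)^\ast$ equates the two indices.

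Step~3 is not really an obstacle, and your Step~2 already contains it. Fix $z=x+\iu y$ with $y\neq0$ and set $S=(T-x)/y$. Then $S$ is closed and symmetric, $S^\ast=(T^\ast-x)/y$, and $S$ is self-adjoint if and only if $T$ is. Moreover
\[
\ker(S^\ast-\iu)=\ker(T^\ast-z),\qquad \ker(S^\ast+\iu)=\ker(T^\ast-z^\ast).
\]
Von Neumann's criterion applied to $S$ therefore gives, for each nonreal $z$ separately, that $T$ is self-adjoint if and only if $P(z)\notin\ell^2(\nnum)$. The left-hand side does not involve $z$, so the ``for some, and then all'' clause follows at once. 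This is exactly the justification needed when you invoke von Neumann at an arbitrary fixed $z$ in Step~2. It also avoids both the constancy of defect numbers on half-planes and the Gronwall estimate.

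If you keep the Gronwall route, do not appeal to the paper's Weyl alternative in its ``for one (and hence all) $z$'' form, since that already contains the claim. At the fixed nonreal $z_0$ you only need one square-summable solution independent of $P(z_0)$, for instance
\[
(\tilde T-z_0)^{-1}e_0=m(z_0)P(z_0)+Q(z_0)
\]
for a self-adjoint extension $\tilde T$, which exists because Step~2 shows the deficiency indices are equal.
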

In the context of Jacobi operators, the first case is called the \emph{limit point} case, while the second is referred to as the \emph{limit circle} case; we refer to \cref{rem:terminology} for an explanation of this terminology.
A well-known sufficient (but not necessary) condition for essential self-adjointness is the following one, see \cite[24]{akhiezer-classicalmomentproblem-2020}:

\begin{proposition}[Carleman condition]\label{prop:carleman}
$T$ is essentially self-adjoint whenever the following condition holds:
\begin{equation}
    \label{eq:carleman}
    \sum_{n=0}^\infty \frac{1}{a_n} = \infty. \,
\end{equation}
\end{proposition}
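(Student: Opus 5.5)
We need to prove the Carleman condition: if $\sum_n 1/a_n=\infty$, then $T$ is essentially self-adjoint. By \cref{prop:jacobi-esssa} it suffices to show that for some (equivalently every) $z\in\cnum\setminus\rnum$ the first-kind solution $P(z)$ is not square-summable. The plan is to argue by contradiction using the Wronskian identity $W[P(z),Q(z)]=1$ from \cref{lem:wronskian}. We first show that in the limit circle case both $P(z)$ and $Q(z)$ are square-summable, and then show that this forces $\sum_n 1/a_n<\infty$.

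\textbf{Step 1: both solutions lie in $\ell^2(\nnum)$.} Assume $P(z)\in\ell^2(\nnum)$ for some $z\notin\rnum$. By \cref{prop:jacobi-esssa}, $T$ then has deficiency indices $(1,1)$, so $T^\ast=\Tmax$ is a proper extension of $T$. We need $Q(z)\in\ell^2(\nnum)$ as well, and there are two natural routes.

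\emph{Direct route.} Pick a self-adjoint extension $\tilde T$ of $T$ and set $g=(\tilde T-z)^{-1}e_0\in\domain(\tilde T)\subset\domain(\Tmax)$. Then $(\Top-z)g=e_0$. The recurrence \cref{eq:generalized-eigenvalue} for $n\ge1$ shows that $g$ solves the homogeneous equation there, so $g=\alpha P(z)+\beta Q(z)$. Since $(\Top-z)P(z)=0$ and $(\Top-z)Q(z)=e_0$, we get $\beta=1$, hence $Q(z)=g-\alpha P(z)\in\ell^2(\nnum)$.

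\emph{Alternative route.} Use the classical fact that in the limit circle case every solution of \cref{eq:generalized-eigenvalue} is in $\ell^2(\nnum)$ for every $z\in\cnum$, via variation of constants (Weyl alternative, \cref{prop:weyl-alternative}).

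\textbf{Step 2: bound $1/a_n$ by a summable sequence.} Write $P_n=P_n(z)$ and $Q_n=Q_n(z)$. The Wronskian identity gives, for every $n$,
\begin{equation}
1=a_n\bigl(P_nQ_{n+1}-P_{n+1}Q_n\bigr).
\end{equation}
Dividing by $a_n$ and applying the elementary inequality $|xy|\le\tfrac12(|x|^2+|y|^2)$,
\begin{equation}
\frac{1}{a_n}\le |P_n||Q_{n+1}|+|P_{n+1}||Q_n|\le \tfrac12\bigl(|P_n|^2+|Q_{n+1}|^2+|P_{n+1}|^2+|Q_n|^2\bigr).
\end{equation}
Summing over $n$ yields
\begin{equation}
\sum_n \frac{1}{a_n}\le \|P(z)\|^2+\|Q(z)\|^2<\infty,
\end{equation}
which contradicts \cref{eq:carleman}. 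Hence $P(z)\notin\ell^2(\nnum)$, and \cref{prop:jacobi-esssa} gives essential self-adjointness of $T$.

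\textbf{Main obstacle.} The only nontrivial step is Step 1, establishing $Q(z)\in\ell^2(\nnum)$. The direct route above handles it, provided one checks carefully that $g$ satisfies the boundary equation at $n=0$ with the inhomogeneity $e_0$. This holds because $\tilde T\subset\Tmax$ acts as $\Top$, so $(\Top-z)g=(\tilde T-z)g=e_0$ componentwise.
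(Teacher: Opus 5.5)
Your proof is correct. The paper itself does not prove this proposition; it quotes Carleman's criterion from Akhiezer. What you give is the standard Wronskian proof of that criterion, made self-contained using only \cref{prop:jacobi-esssa} and \cref{lem:wronskian}.

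Your direct route in Step 1 reproduces the content of the paper's \cref{lem:weyl-solution}: $(\tilde T-z)^{-1}e_0=m(z)P(z)+Q(z)$ is square-summable. You obtain it more elementarily, from $(\Top-z)g=e_0$ and the initial data of $P(z)$ and $Q(z)$, without the resolvent formula the paper imports in \cref{lem:green}. The appeal to deficiency indices $(1,1)$ is not actually needed there, since the real operator $T$ always has a self-adjoint extension.

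Summing $1/a_n=P_nQ_{n+1}-P_{n+1}Q_n$ then gives $\sum_n 1/a_n\le\|P(z)\|^2+\|Q(z)\|^2$, exactly as you wrote, which completes the contradiction.
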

The following result provides a characterization of the deficiency index dichotomy (\cref{prop:jacobi-esssa}) in terms of square-summable solutions of the generalized eigenvalue equation, and is known as the Weyl alternative \cite[Lemmas~2.15--2.16]{teschl-jacobioperatorscompletely-1999}:
\begin{proposition}[Weyl alternative]
    \label{prop:weyl-alternative}
    Let $\Top$ be a Jacobi operator. 
    Then exactly one of the following holds:
    \begin{enumerate}[(i)]
        \item For one (and hence all) $z \in \cnum$, all solutions of \cref{eq:generalized-eigenvalue} are square-summable. In this case $\Top$ is limit circle, i.e. $T$ is not essentially self-adjoint.
        \item \label{item:weyl-alternative-limit-point}
        For one (and hence all) $z \in \cnum$, there exists, up to scalar multiples, a unique square-summable solution of \cref{eq:generalized-eigenvalue}. In this case $\Top$ is limit point, i.e. $T$ is essentially self-adjoint.
    \end{enumerate}
\end{proposition}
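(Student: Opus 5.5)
The plan is to reduce everything to three ingredients: the constancy of the Wronskian (\cref{lem:wronskian}), a discrete variation-of-constants formula, and the characterization in \cref{prop:jacobi-esssa}. The solution space of \cref{eq:generalized-eigenvalue} is two-dimensional for each $z$, so at a given $z$ there are only three possibilities: no nonzero square-summable solution, exactly one up to scalars, or all solutions square-summable. The work is to show that the third possibility is independent of $z$, and that at non-real $z$ the first one cannot occur.

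\emph{Step 1: existence at non-real $z$.} Since $\Top$ has real coefficients, $T$ commutes with complex conjugation. Hence it has equal deficiency indices and admits a self-adjoint extension $\tilde T$ with $T\subset\tilde T\subset\Tmax$. For $z\in\cnum\setminus\rnum$ set $u=(\tilde T-z)^{-1}e_0\in\ell^2(\nnum)$. Then $\Top u - z u=e_0$, so $u$ is a nonzero square-summable solution of \cref{eq:generalized-eigenvalue} for $n\geq 1$. It fails \cref{eq:generalized-eigenvalue_initial}, so $u$ is not a multiple of $P(z)$.

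\emph{Step 2: the limit circle property propagates from one $z_0$ to every $z$.} This is the main step. Assume $P(z_0),Q(z_0)\in\ell^2(\nnum)$ and let $u$ solve \cref{eq:generalized-eigenvalue} at $z$. Rewrite the equation as $\Top u - z_0u=(z-z_0)u$ for $n\ge1$ and apply variation of constants with the fundamental system $P(z_0),Q(z_0)$, using $W[P,Q]=1$. This gives
\begin{equation}
u_n = c_1P_n(z_0)+c_2Q_n(z_0)+(z-z_0)\sum_{j=1}^{n-1}\bigl(P_j(z_0)Q_n(z_0)-Q_j(z_0)P_n(z_0)\bigr)u_j .
\end{equation}
The exact index conventions will be checked against \cref{def:wronksian}. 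Choose $N$ with $\sum_{j\ge N}(|P_j(z_0)|^2+|Q_j(z_0)|^2)=\varepsilon$ and $2|z-z_0|\varepsilon<1/2$. Apply Cauchy--Schwarz to the sum restricted to $j\in[N,n)$, and absorb the finitely many terms with $j<N$ into the constants. This yields $\|u\|_{\ell^2([N,M])}\le C+\tfrac12\|u\|_{\ell^2([N,M])}$ with $C$ independent of $M$. Letting $M\to\infty$ gives $u\in\ell^2(\nnum)$. Hence if all solutions are square-summable at one $z_0\in\cnum$, they are at every $z$.

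\emph{Step 3: the dichotomy and identification with self-adjointness.} If all solutions are square-summable for some $z$, Step 2 extends this to all $z$. In particular $P(z)\in\ell^2(\nnum)$ for non-real $z$, so $T$ is not essentially self-adjoint by \cref{prop:jacobi-esssa}, which is case (i). Otherwise, by Step 2 no $z$ has all solutions square-summable. For non-real $z$, Step 1 gives exactly one square-summable solution up to scalars. That solution is not proportional to $P(z)$, so $P(z)\notin\ell^2(\nnum)$, and \cref{prop:jacobi-esssa} gives essential self-adjointness, which is case (ii).

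One point needs care. For real $z$ in the limit point case, Step 2 only yields \emph{at most} one square-summable solution, and existence can genuinely fail, for instance at points of continuous spectrum. I would therefore state the existence part of (ii) for $z\in\cnum\setminus\rnum$, as in \cite{teschl-jacobioperatorscompletely-1999}, and check that later uses in the paper only require this.

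The main obstacle is Step 2: getting the variation-of-constants kernel with the correct indices and making the tail-absorption uniform in the truncation $M$. Everything else follows from standard abstract arguments.
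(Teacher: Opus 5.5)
Your proof is correct. The paper gives no argument of its own here; it only cites \cite[Lemmas~2.15--2.16]{teschl-jacobioperatorscompletely-1999}. Your three steps are essentially the classical proof behind that citation. The Weyl solution $(\tilde T-z)^{-1}e_0$ gives existence at non-real $z$. Variation of constants with respect to $P(z_0),Q(z_0)$, plus tail absorption, shows that ``all solutions are square-summable'' does not depend on $z$. Finally, \cref{prop:jacobi-esssa} (i.e.\ $T^\ast=\Tmax$ and $\ker(\Tmax-z)=\cnum P(z)\cap\ell^2(\nnum)$) converts this into a statement about essential self-adjointness. The index conventions you were unsure about are fine. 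The kernel $P_j(z_0)Q_n(z_0)-Q_j(z_0)P_n(z_0)$ vanishes at $n=j$ and equals $1/a_j$ at $n=j+1$ because $W[P,Q]=1$. That is exactly what makes your formula solve the inhomogeneous equation for every $n\ge1$.

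Your caveat about real $z$ is justified, and it corrects the statement as written. The phrase ``and hence all $z\in\cnum$'' in (ii) is false for general Jacobi operators. Take $a_n=1$, $b_n=0$, which is bounded and hence limit point, and $z=2\cos\theta\in(-2,2)$: no nonzero solution $A\e^{\iu n\theta}+B\e^{-\iu n\theta}$ is square-summable. What you prove is the correct form: exactly one square-summable solution for $z\in\cnum\setminus\rnum$, and at most one for real $z$. This is all the paper needs. For $\lambda>0$ the spectrum of $J(\lambda)$ is purely discrete (\cref{prop:spectrum-jlambda}), so your Step~1 with $\tilde T=J(\lambda)$ also works at every real $z\notin\sigma(J(\lambda))$. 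At eigenvalues, $P^\lambda(z)$ itself is square-summable. So in the paper's setting a square-summable solution exists for every $z\in\cnum$, and \cref{prop:asymptotics} assumes its existence in any case.
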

Finally, we recall a result by Świderski~\cite{swiderski-spectralpropertiesblock-2018}, which will be used throughout this paper. The original statement is formulated in the more general setting of block Jacobi matrices; here we present the version adapted to our framework.
\begin{theorem}[{\cite[Theorem~2, Theorem~3]{swiderski-spectralpropertiesblock-2018}}]
    \label{thm:swiderski}
    Let $(a_n)_{n \in \nnum}$ and $(b_n)_{n \in \nnum}$ be two sequences with $a_n > 0$ and $b_n \in \rnum$ for all $n \in \nnum$.
    Furthermore, suppose the following:
    \begin{enumerate}[(i)]
        \item $\mathcal{V}(a_n^{-1}) + \mathcal{V}(a_n^{-1}b_n) + \mathcal{V}(a_n^{-1}a_{n-1})< \infty$, where $\mathcal{V}$ is the total variation (cf.~\cref{def:total-variation});
        \item $\sum_{n=0}^\infty a_n^{-1} < \infty$, i.e. the Carleman condition \cref{eq:carleman} does not hold;
        \item The limits 
        \begin{equation}
        \tau = \lim_{n \to \infty} a_n^{-1},\quad q=\lim_{n \to \infty}a_n^{-1}b_n,\quad r = \lim_{n \to \infty}a_n^{-1} a_{n-1}, \quad c = \lim_{n \to \infty} a_n/|a_n|
        \end{equation}
        exist and are finite;
        \item There exists $z \in \rnum$ such that the quadratic form associated with the matrix
        \begin{equation}
            \mathcal{F}(z) = \Real \left( \begin{pmatrix}
                0 & -c \\ c & 0
            \end{pmatrix}
            \begin{pmatrix}
                0 & 1 \\ r & z\tau-q
            \end{pmatrix}\right)\, , 
        \end{equation}
        where $\Real A = \frac{1}{2}(A+A^\ast)$, is either strictly positive or strictly negative definite.
    \end{enumerate}
    Then, the deficiency indices of the Jacobi operator $T$ associated with $(a_n)_{n \in \nnum}$ and $(b_n)_{n \in \nnum}$,  are $(1,1)$.
\end{theorem}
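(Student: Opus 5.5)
The statement is a specialisation of Świderski's result, so the plan is to reproduce his argument in the scalar setting. By \cref{prop:jacobi-esssa} and the Weyl alternative (\cref{prop:weyl-alternative}), it suffices to fix one real $z$, namely the one from (iv), and show that \emph{every} real solution $u$ of \cref{eq:generalized-eigenvalue} is square-summable. Complex solutions are then covered by linearity.

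\textbf{Step 1 (transfer matrices).} Write the recurrence as $\vec u_{n+1}=B_n\vec u_n$ with $\vec u_n=(u_{n-1},u_n)^T$ and
\[
B_n=\begin{pmatrix}0&1\\-a_{n-1}/a_n&(z-b_n)/a_n\end{pmatrix}.
\]
By (iii), $B_n\to\mathcal{X}=\begin{pmatrix}0&1\\-r&z\tau-q\end{pmatrix}$. Note that the entries of the $\mathcal{F}(z)$ matrix in (iv) carry the sign convention $r$ rather than $-r$; I will simply use $\mathcal{X}_n:=$ the analogous matrix built from $a_{n-1}/a_n$ and $(z-b_n)/a_n$. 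By (i), the entries of $\mathcal{X}_n$ have bounded total variation and converge to those of the limit matrix.

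\textbf{Step 2 (Turán-type quadratic form).} Set $E=\begin{pmatrix}0&-1\\1&0\end{pmatrix}$ (with $c=1$ since $a_n>0$) and define
\[
S_n=a_n\,\langle \Real(E\mathcal{X}_n)\vec u_n,\vec u_n\rangle .
\]
Since $\Real(E\mathcal{X}_n)\to\mathcal{F}(z)$ and $\mathcal{F}(z)$ is definite by (iv), there are $N$ and $\delta>0$ such that, after possibly flipping the sign, $S_n\ge \delta\, a_n\|\vec u_n\|^2$ for all $n\ge N$. This forces $S_n>0$ for any nontrivial solution, since $\vec u_n\neq0$ for a nontrivial solution.

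\textbf{Step 3 (almost monotonicity).} This is the main computation. I would express $S_{n+1}-S_n$ in terms of $\vec u_n$ alone, using $\vec u_{n+1}=B_n\vec u_n$. The key algebraic identity to aim for is that $a_{n+1}B_n^T\Real(E\mathcal{X}_{n})B_n$ coincides with $a_n\Real(E\mathcal{X}_n)$ up to a remainder. That remainder should be controlled by
\[
\|a_{n+1}\mathcal{X}_{n+1}-a_{n+1}\mathcal{X}_n\|/a_n ,
\]
that is, by the variations of $a_n^{-1}$, $a_n^{-1}b_n$ and $a_n^{-1}a_{n-1}$. The mechanism is the same as for the Wronskian: the antisymmetric part of $E$ is exactly preserved because $\det B_n=a_{n-1}/a_n$. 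I expect this identity, and in particular tracking the factors $a_n$ versus $a_{n+1}$ so that only the three variations in (i) appear, to be the main obstacle.

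The target estimate is
\[
|S_{n+1}-S_n|\le C\bigl(|\Delta a_n^{-1}|+|\Delta(a_n^{-1}b_n)|+|\Delta(a_n^{-1}a_{n-1})|\bigr)\,a_n\|\vec u_n\|^2 .
\]
Combined with Step 2, this gives $S_{n+1}\le(1+C'\varepsilon_n)S_n$ with $\sum_n\varepsilon_n<\infty$. Hence $S_n\le S_N\prod_{k\ge N}(1+C'\varepsilon_k)=:M<\infty$.

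\textbf{Step 4 (conclusion).} From Steps 2 and 3,
\[
|u_n|^2\le\|\vec u_n\|^2\le M/(\delta a_n),
\]
which is summable by (ii). Hence every solution lies in $\ell^2(\nnum)$. The Weyl alternative then gives the limit circle case, and \cref{prop:jacobi-esssa} gives deficiency indices $(1,1)$.
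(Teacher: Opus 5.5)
Your proposal is correct and is essentially \'{S}widerski's Tur\'an-determinant argument, which the paper invokes by citation and reproduces in $\lambda$-uniform form in \cref{lem:uniformly-positive,lem:turan-diff,lem:turan-upperbound} (your identity $B_n^{T}EB_n=\det(B_n)\,E$ is what underlies the explicit formula for $\tilde{C}^{\lambda,z}_n$ there). On the sign you flagged: with $+r$ as literally written, $\mathcal{F}(z)$ has diagonal entries $-r\le 0$ and $1$, so (iv) can never hold and the statement is vacuous; the intended condition is the $-r$ version, i.e.\ $(z\tau-q)^2<4r$, which is the one that gives $\mathcal{F}(z)=\id$ in \cref{prop:j0-limit-circle}, is what you actually use, and also yields $r>0$, which you need (but do not state) to keep $a_{n+1}/a_n$ bounded in Step~3.
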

\begin{remark}
\cite[Theorem~25]{swiderski-spectralpropertiesblock-2018} additionally provides bounds on the generalized eigenvectors of $T$.
    In \cref{sec:turan}, we closely follow the proof of Świderski to prove an auxiliary lemma.
\end{remark}
\noindent Further sufficient or necessary conditions for essential self-adjointness (or the lack thereof) can be found in the literature, see e.g. \cite{schmudgen-momentproblem-2017}.

In the limit circle case, all self-adjoint extensions of $T$ can be parametrized explicitly. 
Recalling from \cref{prop:jacobi-esssa} that the deficiency indices of $T$ are either $(0,0)$ or $(1,1)$, it follows that in the limit circle case the deficiency spaces of $T$ are given by \cite[Lemma~16.8]{schmudgen-unboundedselfadjointoperators-2012}
\begin{equation}
    \mathcal{N}_{\pm \iu}(T) = \ker(\Tmax \mp \iu) = \cnum P(\pm \iu)\, .
\end{equation}
By von Neumann's extension theory, this yields a one-parameter family of self-adjoint extensions of $T$ parametrized by $U(1)$ \cite[Theorem~13.10]{schmudgen-unboundedselfadjointoperators-2012}. 

For Jacobi operators, however, it is more convenient to work with the equivalent parametrization in terms of boundary triplets; see \cite[Chapter~14]{schmudgen-unboundedselfadjointoperators-2012} for an overview. We shall therefore adopt this framework throughout the paper and recall the relevant construction next. Starting from the identity
\begin{equation}\label{eq:domain_tmax}
    \domain(\Tmax) = \domain(T) + \cnum Q(0) + \cnum P(0)\, ,
\end{equation}
we now introduce a boundary triplet for $\Tmax$:
\begin{proposition}[{\cite[Corollary~16.27]{schmudgen-unboundedselfadjointoperators-2012} and \cite{allahverdiev-extensionsdilationsfunctional-2005}}]
    \label{prop:jacobi-boundary-triplets}
    Let $\Top$ be in the limit circle case, and define the maps 
    $\Gamma_0,\Gamma_1:\domain(\Tmax)\to\cnum$ by
    \begin{equation}
        \Gamma_0(\psi)=c_0, 
        \qquad 
        \Gamma_1(\psi)=-c_1,
    \end{equation}
    where $\psi\in\domain(\Tmax)$ is written (cf.~\cref{eq:domain_tmax}) as
    \begin{equation}
        \psi=\psi_0+c_0Q(0)+c_1P(0)
    \end{equation}
    with $\psi_0\in\domain(T)$ and $c_0,c_1\in\cnum$. 
    Then $(\cnum,\Gamma_0,\Gamma_1)$ is a boundary triplet for $\Tmax$. Moreover, all maximally accretive or dissipative extensions of $T$ are parametrized by $t\in\bcnum=\cnum\cup\{\infty\}$ via
    \begin{equation}
        \domain(T_t)
        =
        \domain(T)
        +\cnum\bigl(Q(0)+tP(0)\bigr),
        \qquad
        T_t\psi=\Tmax\psi
        \quad
        \forall\,\psi\in\domain(T_t).
    \end{equation}
    The self-adjoint extensions correspond precisely to $t\in\overline{\rnum}$. Finally, these extensions are \emph{disjoint} in the sense that if $t_1\neq t_2$, then
    \begin{equation}
        T_{t_1}\cap T_{t_2}=T.
    \end{equation}
\end{proposition}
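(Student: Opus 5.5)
The statement is cited from \cite[Corollary~16.27]{schmudgen-unboundedselfadjointoperators-2012}. I would nevertheless verify it directly through the abstract Green identity, which reduces everything to a computation with Wronskians at infinity. The plan has four steps.

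\textbf{Step 1: the Green identity on $\domain(\Tmax)$.} For $u,v\in\domain(\Tmax)$, summation by parts gives
\begin{equation}
\braket{\Tmax u,v}-\braket{u,\Tmax v}=\lim_{n\to\infty}W_n[\bar u,v],
\end{equation}
and the limit exists because both inner products are finite. If either $u$ or $v$ lies in $\domain(T)$, the limit vanishes. This holds for $\ell_0(\nnum)$ trivially, and extends to the closure because $T^\ast=\Tmax$ (\cref{prop:jacobi-esssa}).

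\textbf{Step 2: the boundary form in the coordinates $(c_0,c_1)$.} In the limit circle case, $P(0)$ and $Q(0)$ are square-summable solutions by \cref{prop:weyl-alternative}. They satisfy $\Top P(0)=0$ and $\Top Q(0)=e_0$, so both lie in $\domain(\Tmax)$.

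The decomposition \cref{eq:domain_tmax} is direct, since $\dim\domain(\Tmax)/\domain(T)=2$ for deficiency indices $(1,1)$. Hence the coefficients $c_0,c_1$ are well defined and $\Gamma_0,\Gamma_1$ are linear.

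Write $u=u_0+c_0Q(0)+c_1P(0)$ and $v=v_0+d_0Q(0)+d_1P(0)$. By Step 1 only the finite-dimensional parts contribute. Both $P(0)$ and $Q(0)$ solve the recurrence for $n\ge1$ with real coefficients, so \cref{lem:wronskian} gives
\begin{equation}
W[P(0),Q(0)]=1,\qquad W[P(0),P(0)]=W[Q(0),Q(0)]=0 .
\end{equation}
Therefore
\begin{equation}
\lim_n W_n[\bar u,v]=\bar c_1 d_0-\bar c_0 d_1=\overline{\Gamma_1u}\,\Gamma_0v-\overline{\Gamma_0u}\,\Gamma_1v .
\end{equation}
This is exactly the abstract Green identity $\braket{\Tmax u,v}-\braket{u,\Tmax v}=\braket{\Gamma_1u,\Gamma_0v}-\braket{\Gamma_0u,\Gamma_1v}$, up to the sign convention, which I would check carefully.

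\textbf{Step 3: surjectivity.} The map $(\Gamma_0,\Gamma_1):\domain(\Tmax)\to\cnum^2$ is onto, since $Q(0)\mapsto(1,0)$ and $P(0)\mapsto(0,-1)$. Together with Step 2, this shows that $(\cnum,\Gamma_0,\Gamma_1)$ is a boundary triplet.

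\textbf{Step 4: the extensions and their disjointness.} Apply the general boundary triplet theorem \cite[Chapter~14]{schmudgen-unboundedselfadjointoperators-2012}. It states that the extensions of $T$ inside $\Tmax$ that are maximally dissipative or accretive (respectively self-adjoint) correspond to closed relations $\Theta\subset\cnum^2$ that are maximally dissipative or accretive (respectively self-adjoint).

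In $\cnum$, every such relation is either the graph of some $t\in\cnum$ or the vertical line $\{0\}\times\cnum$, which we label $t=\infty$. The self-adjoint relations are exactly those with $t\in\brnum$.

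To identify the domains, the condition $\Gamma_1\psi=-t\,\Gamma_0\psi$ reads $c_1=tc_0$. This gives $\domain(T_t)=\domain(T)+\cnum(Q(0)+tP(0))$, and for $t=\infty$ it gives $\domain(T)+\cnum P(0)$. The sign of $\Gamma_1$ must be matched to the parametrization; this is bookkeeping only.

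Disjointness follows because for $t_1\ne t_2$ the vectors $Q(0)+t_1P(0)$ and $Q(0)+t_2P(0)$ are linearly independent modulo $\domain(T)$. Hence $\domain(T_{t_1})\cap\domain(T_{t_2})=\domain(T)$.

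\textbf{Main obstacle.} I expect the delicate point to be Step 1, specifically justifying that $\lim_n W_n[\bar u,v]=0$ when $u\in\domain(T)$. The route I would take: $W_\infty[\bar u,v]$ is continuous in $u$ with respect to the graph norm (via the Green identity itself), and it vanishes on $\ell_0(\nnum)$, so it vanishes on the closure.
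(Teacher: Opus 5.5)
Your proof is correct. It is essentially the argument the paper delegates to its references: the paper gives no proof of its own beyond citing Schmüdgen (Cor.~16.27, which rests on the boundary-triplet theory of Ch.~14) and Allahverdiev, plus the remark that $T_t$ corresponds to the relation $B=-t$. Writing it out through the Wronskian at infinity has the merit of showing where the minus sign in $\Gamma_1$ and the combination $Q(0)+tP(0)$ come from. Three points need tightening.

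\emph{Signs.} Summation by parts with $W_n[u,v]=a_n(u_nv_{n+1}-u_{n+1}v_n)$ and $a_{-1}=0$ gives $\sum_{n=0}^{N}\bigl(\overline{(\Top u)_n}\,v_n-\overline{u_n}\,(\Top v)_n\bigr)=-W_N[\bar u,v]$. Hence $\braket{\Tmax u,v}-\braket{u,\Tmax v}=-\lim_n W_n[\bar u,v]$, the opposite of your Step~1.

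Your value $\lim_nW_n[\bar u,v]=\bar c_1d_0-\bar c_0d_1$ is right. But with $\Gamma_1=-c_1$ one has $\overline{\Gamma_1u}\,\Gamma_0v-\overline{\Gamma_0u}\,\Gamma_1v=\bar c_0d_1-\bar c_1d_0$, so the last equality in your Step~2 is also off by a sign. The two slips cancel. The boundary form equals $\bar c_0d_1-\bar c_1d_0$, which is exactly the abstract Green identity. This is what forces $\Gamma_1=-c_1$ rather than $+c_1$, so no convention is left to choose.

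\emph{All extensions.} The boundary-triplet theorem only classifies operators $S$ with $T\subseteq S\subseteq\Tmax$. To claim that \emph{all} maximal dissipative or accumulative extensions are of the form $T_t$, you must note that any dissipative or accumulative extension $S$ of the symmetric operator $T$ is automatically a restriction of $T^\ast=\Tmax$.

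This takes one line. For $x\in\domain(S)$, $y\in\domain(T)$ and $\lambda\in\cnum$, the quantity $\Imag\braket{x+\lambda y,S(x+\lambda y)}$ equals $\Imag\braket{x,Sx}+\Imag\bigl[\lambda\bigl(\braket{x,Ty}-\braket{Sx,y}\bigr)\bigr]$. Semidefiniteness for all $\lambda$ forces $\braket{Sx,y}=\braket{x,Ty}$, i.e.\ $x\in\domain(T^\ast)$ and $T^\ast x=Sx$.

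\emph{Minor.} What you call the main obstacle is not one. For $u\in\domain(T)$ and $v\in\domain(\Tmax)=\domain(T^\ast)$, the identity $\braket{Tu,v}=\braket{u,T^\ast v}$ is the definition of the adjoint, as you already observe in Step~1.

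You also need not take the decomposition $\domain(\Tmax)=\domain(T)+\cnum Q(0)+\cnum P(0)$ on faith. If $\alpha Q(0)+\beta P(0)\in\domain(T)$, pair it in the boundary form with $P(0)$ and with $Q(0)$; using $W[P(0),Q(0)]=1$ gives $\alpha=\beta=0$. Von Neumann's count $\dim\domain(\Tmax)/\domain(T)=2$ then yields both the decomposition and its directness.
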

\begin{remark}
    Self-adjoint extensions are discussed in \cite[Corollary~16.27]{schmudgen-unboundedselfadjointoperators-2012}, while maximally accretive or dissipative extensions of Jacobi operators are treated in \cite{allahverdiev-extensionsdilationsfunctional-2005}. For a general reference on boundary triplets, see \cite[Proposition~14.7]{schmudgen-unboundedselfadjointoperators-2012}. Note that, in the latter reference, the extension $T_t$ corresponds to the linear relation $B=-t$.

Furthermore, the maps $\Gamma_0$ and $\Gamma_1$ can also be defined in terms of the Wronskian of $\Top$, cf.\ \cite{allahverdiev-extensionsdilationsfunctional-2005}.
\end{remark}

We now turn to resolvents of Jacobi operators. Throughout the paper, we denote the spectrum and resolvent of an operator $A$ on a Hilbert space by $\sigma(A)$ and $\rho(A)$, respectively. 
To treat the limit point and limit circle cases in a unified way, we fix an operator $\tilde{T}$ as follows: in the limit point case we set $\tilde{T}=T$, while in the limit circle case we let $\tilde{T}=T_t$ be an arbitrary maximally accretive or dissipative extension of $T$, corresponding to some fixed parameter $t\in\bcnum$. From now on, this choice of $\tilde{T}$ remains fixed throughout the discussion.

\begin{definition}
    \label{def:green}
    Let $\tilde{T}$ be a maximally accretive or dissipative extension of $T$, and $n,m\in\nnum$.
    The \textit{Green function} of $\tilde{T}$ is the function $G_{nm}:\rho(\tilde{T})\rightarrow\cnum$ given by
    \begin{equation}
        G_{nm}(z) = \braket{e_n,(\tilde{T}-z)^{-1}e_m} \, ,
    \end{equation}
    and $m(z) = G_{00}(z)$ is the \textit{Weyl $m$-function} of $\tilde{T}$.
\end{definition}
\begin{remark}
    The Weyl $m$-function $m(z)$ is closely related to the Weyl function appearing in the general theory of boundary triplets; more precisely, it can be identified with a particular realization of the abstract Weyl function corresponding to a specific choice of boundary triplet, see \cite[Definition~14.4]{schmudgen-unboundedselfadjointoperators-2012}. It also plays an important role in the theory of Sturm--Liouville operators; see, e.g., \cite[p.~353]{schmudgen-unboundedselfadjointoperators-2012}.
\end{remark}
Due to the special structure of Jacobi operators, the Green function can be explicitly written in terms of the Weyl $m$-function and the orthogonal polynomials of first and second kind, see \cref{def:orthogonal-polynomials}:
\begin{lemma}
    \label{lem:green}
    Let $\tilde{T}$ be a maximally accretive or dissipative extension of $T$. For every $z \in \rho(\tilde{T})$,
        \begin{equation}
            G_{nm}(z) = \begin{cases}
            (m(z)P_n(z)+Q_n(z)) P_m(z) & n \geq m \\
            (m(z)P_m(z)+Q_m(z)) P_n(z) & n \leq m 
        \end{cases}\, .
    \end{equation}
\end{lemma}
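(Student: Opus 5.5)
The plan is to identify the column vector $g^{(m)}(z)=(\tilde T-z)^{-1}e_m$ explicitly, using the facts about generalized eigenvectors collected above. We then read off $G_{nm}(z)=\braket{e_n,g^{(m)}(z)}$. The case $n\le m$ follows from the case $n\ge m$ by symmetry of the formula, so it suffices to determine the full sequence $g^{(m)}$.

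\emph{Step 1: the Weyl solution.} First treat $m=0$. Set $g=g^{(0)}(z)$, so that $g\in\domain(\tilde T)\subset\domain(\Tmax)$ and $\Top g - zg = e_0$. The components $n\ge1$ of this equation say that $g$ solves \cref{eq:generalized-eigenvalue}. The component $n=0$ reads $a_0g_1+(b_0-z)g_0=1$. Hence $g-g_0P(z)$ solves \cref{eq:generalized-eigenvalue} with initial value $0$ at $n=0$ and $a_0$ times its value at $n=1$ equal to $1$. By uniqueness for given initial data, $g-g_0P(z)=Q(z)$. Since $g_0=\braket{e_0,g}=m(z)$, this gives
\begin{equation}
    g^{(0)}(z)=m(z)P(z)+Q(z)=:\psi(z).
\end{equation}
In particular $\psi(z)\in\domain(\tilde T)$ and $\psi(z)\in\ell^2(\nnum)$.

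\emph{Step 2: general $m$.} Define the candidate
\begin{equation}
    v_n=\begin{cases}\psi_n(z)P_m(z), & n\ge m,\\ \psi_m(z)P_n(z), & n\le m.\end{cases}
\end{equation}
The two expressions agree at $n=m$. Away from $n=m$, both pieces solve the recurrence, so $(\Top-z)v$ vanishes for $n\ne m$. At $n=0$ the boundary condition holds automatically when $m>0$, because there $v$ is proportional to $P(z)$. At $n=m$ we compute
\begin{equation}
    a_m\psi_{m+1}P_m+(b_m-z)\psi_mP_m+a_{m-1}\psi_mP_{m-1}.
\end{equation}
Using the recurrence for $\psi$ at index $m$, this equals $a_{m-1}(\psi_mP_{m-1}-\psi_{m-1}P_m)=W[P,\psi]$. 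By \cref{lem:wronskian}, $W[P,\psi]=mW[P,P]+W[P,Q]=1$. Thus $(\Top-z)v=e_m$.

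\emph{Step 3: $v$ lies in the correct domain.} For $n\ge m$, $v$ coincides with $P_m(z)\psi(z)$, so $v$ differs from $P_m(z)\psi(z)$ only by a finitely supported sequence. That sequence lies in $\ell_0(\nnum)\subset\domain(T)\subset\domain(\tilde T)$. Since $\psi(z)\in\domain(\tilde T)$ by Step 1, we get $v\in\domain(\tilde T)$. As $z\in\rho(\tilde T)$, the solution of $(\tilde T-z)v=e_m$ in $\domain(\tilde T)$ is unique, so $v=g^{(m)}(z)$. Taking the $n$-th component gives the claimed formula.

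The main subtlety is Step 3, namely verifying membership in $\domain(\tilde T)$ rather than merely in $\ell^2$. In the limit circle case every solution is square-summable, so membership in $\ell^2$ alone does not identify the extension. Deriving $\psi$ from the resolvent itself in Step 1, instead of from an $\ell^2$ condition, handles this uniformly for the limit point case and for all extensions $T_t$.
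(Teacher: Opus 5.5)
Your proof is correct, but it is organized differently from the paper's. The paper imports the explicit resolvent kernel $((\tilde T-z)^{-1}v)_n = W_0[P(z),u(z)]^{-1}\bigl(u_n\sum_{k\le n}P_kv_k + P_n\sum_{k>n}u_kv_k\bigr)$ from Yafaev, with $u(z)=m(z)P(z)+Q(z)$. It normalizes this via $W[P,Q]=1$ and reads off $G_{nm}$. Only afterwards, in \cref{lem:weyl-solution}, does it deduce $(\tilde T-z)^{-1}e_0=m(z)P(z)+Q(z)$ from the Green function.

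You reverse this order. You first derive $(\tilde T-z)^{-1}e_0=m(z)P(z)+Q(z)$ directly from $(\Top-z)g=e_0$ and uniqueness for given initial data. You then verify, via the Wronskian jump at $n=m$, that the candidate column solves $(\Top-z)v=e_m$. You conclude by domain membership and injectivity of $\tilde T-z$.

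Your route uses only $\ell_0(\nnum)\subset\domain(\tilde T)$, $\tilde T\subset\Tmax$ and $z\in\rho(\tilde T)$. It therefore covers the maximally accretive or dissipative extensions $T_t$, $t\in\bcnum\setminus\brnum$, on the same footing as the self-adjoint ones. It also proves, rather than imports, that the solution entering the kernel is $m(z)P(z)+Q(z)$. The paper's route buys brevity at the cost of leaving both points to the citation.

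A few cosmetic points:
\begin{itemize}
\item The opening remark that the case $n\le m$ follows ``by symmetry'' is unnecessary, since Steps 2--3 determine the whole column anyway.
\item The computation at $n=m$ requires $m\ge1$; the case $m=0$ is exactly Step 1.
\item Write $m(z)W[P,P]$ rather than $mW[P,P]$, to avoid the clash with the index $m$.
\end{itemize}
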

\begin{proof}
    Let $z \in \rho(\tilde{T})$ and $v \in \ell^2(\nnum)$.
    By \cite[Eq.~2.27]{yafaev-selfadjointjacobioperators-2021}, 
    \begin{equation}
        ((\tilde{T}-z)^{-1} v)_n = \frac{1}{W_0[P(z),u(z)]} \left( u_n(z) \sum_{m = 0}^n P_m(z) v_m + P_n(z) \sum_{m = n+1}^\infty u_m(z) v_m \right)\, ,
    \end{equation}
    where $u(z) = m(z)P(z) + Q(z)$.
    Using \cref{lem:wronskian}, $W_0[P(z),u(z)] = W_0[P(z),Q(z)] = 1$, and the above expression is equivalent to
    \begin{equation}
    ((\tilde{T}-z)^{-1} v)_n = \sum_{m = 0}^\infty G_{nm}(z) v_m \, \quad 
        G_{nm}(z) = \begin{cases}
            u_n(z) P_m(z) & n \geq m \\
            u_m(z) P_n(z) & n \leq m 
        \end{cases}\, .
    \end{equation}
    The claim follows using $u(z) = m(z)P(z) + Q(z)$.
\end{proof}
The sequence $m(z)P(z) + Q(z)$ has some additional useful properties:
\begin{lemma}
    \label{lem:weyl-solution}
    Let $\tilde{T}$ be a maximally accretive or dissipative extension of $T$, and $z \in \rho(\tilde{T})$.
    Then, $u(z) = m(z) P(z) + Q(z)$ is a solution of \cref{eq:generalized-eigenvalue}.
    Moreover, $u(z) \in \domain(\tilde{T})$, and in particular $u(z) \in \ell^2(\nnum)$.
\end{lemma}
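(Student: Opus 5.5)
The plan is to read everything off the resolvent applied to $e_0$. Fix $z\in\rho(\tilde T)$ and set $g=(\tilde T-z)^{-1}e_0\in\domain(\tilde T)\subset\ell^2(\nnum)$. By \cref{lem:green}, taking $m=0$ and $n\ge0$, and using $P_0(z)=1$, we get
\begin{equation}
    g_n=G_{n0}(z)=m(z)P_n(z)+Q_n(z)=u_n(z).
\end{equation}
So $u(z)=g$ lies in $\domain(\tilde T)$, and in particular $u(z)\in\ell^2(\nnum)$. This gives the second claim almost directly.

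For the first claim, note that $\tilde T\subset\Tmax$ in both the limit point case ($\tilde T=T\subset T^\ast=\Tmax$, by \cref{prop:jacobi-esssa}) and the limit circle case (by construction in \cref{prop:jacobi-boundary-triplets}). Hence $\Top g=\tilde T g$, and from $(\tilde T-z)g=e_0$ we obtain $\Top u(z)=z\,u(z)+e_0$.

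This gives the recurrence for every $n\ge1$: $a_nu_{n+1}+b_nu_n+a_{n-1}u_{n-1}=zu_n$, since $(e_0)_n=0$ there. That is exactly \cref{eq:generalized-eigenvalue}. Alternatively one can argue by linearity, because $P(z)$ and $Q(z)$ both solve \cref{eq:generalized-eigenvalue} for $n\ge1$ by \cref{def:orthogonal-polynomials}, so any combination $m(z)P(z)+Q(z)$ does too.

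There is no real obstacle. The only point that needs care is justifying the formula $G_{n0}=u_nP_0$, which relies on the resolvent representation used in the proof of \cref{lem:green}. Strictly, that representation already presupposes that $u(z)$ is the solution selected by $\tilde T$. To avoid any circularity, I would rely only on the identity $g_n=G_{n0}(z)$ together with the explicit expression in \cref{lem:green}, which is taken as established.
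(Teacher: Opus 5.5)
Your proposal is correct and follows the paper's proof. Like the paper, you use \cref{lem:green} to identify $(\tilde T - z)^{-1}e_0$ with $m(z)P(z)+Q(z)$, which gives membership in $\domain(\tilde T)$, and you obtain the recurrence by linearity from $P(z)$ and $Q(z)$ (the paper's argument). Your additional route via $\tilde T\subset\Tmax$, giving $\Top u(z)=z\,u(z)+e_0$, is also valid.
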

\begin{proof}
    $P(z)$ and $Q(z)$ are solutions of \cref{eq:generalized-eigenvalue}, and thus by linearity the same follows for $m(z) P(z) + Q(z)$.
    Furthermore, by \cref{lem:green},
    \begin{equation}
        (\tilde{T}-z)^{-1} e_0 = m(z) P(z) + Q(z)\, , 
    \end{equation}
    and therefore $m(z) P(z) + Q(z) \in \domain(\tilde{T})$.
\end{proof}
\begin{remark}
    If $\Top$ is limit point, $T$ is self-adjoint, and thus has no other maximally accretive or dissipative extensions.
    Hence, $u(z)$ is the unique square-summable solution of \cref{eq:generalized-eigenvalue} described in \cref{prop:weyl-alternative}.
    It is also sometimes called the Weyl solution~\cite{eckhardt-singularweyltitchmarshkodairatheory-2013}.
\end{remark}
The Weyl $m$-function $m(z)$ plays a crucial role in the spectral analysis of Jacobi operators.
For self-adjoint extensions $T$, $m(z)$ is a Nevanlinna (or Herglotz) function \cite[Proposition~16.18]{schmudgen-unboundedselfadjointoperators-2012} and coincides with the Stieltjes transform of the spectral measure of $T$. Hence the entire spectral information of $T$ is encoded in $m(z)$.

In the non-essentially self-adjoint case, we denote by $m(z,t)$ the Weyl $m$-function of the maximally accretive or dissipative extension $T_t$, and by $G_{nm}(z,t)$ its Green function.
The function $m(z,t)$ can be expressed explicitly in terms of the orthogonal polynomials:
\begin{lemma}
    \label{lem:weyl-m-flt}
    Let $\Top$ be limit circle, $P(z),Q(z)$ the associated orthogonal polynomials of first and second kind, and $m(z,t)$ be the Weyl $m$-function of the maximally accretive or dissipative extension $T_t$ of $T$, parametrized by $t \in \bcnum$ as in \cref{prop:jacobi-boundary-triplets}.
    Then, for fixed $z \in \cnum$, $m(z,t)$ is a fractional linear (Möbius) transform in $t \in \bcnum$ explicitly given by
    \begin{equation}\label{eq:weyl_fractional}
        m(z,t) = -\frac{A(z,0) + C(z,0) t}{B(z,0) + D(z,0) t} \, , 
    \end{equation}
    where $A(z,w),B(z,w),C(z,w),D(z,w)$ are entire Nevanlinna functions in $z,w \in \cnum$ defined by
    \begin{alignat}{2}
        A(z,w) & = (z-w)\sum_{n=0}^\infty Q_n(z)Q_n(w)\, , \quad & B(z,w) & = -1 + (z-w) \sum_{n=0}^\infty P_n(z) Q_n(w) \, , \\
        C(z,w) & = -1 + (z-w) \sum_{n=0}^\infty Q_n(z) P_n(w) \, , \quad & D(z,w) & = (z-w) \sum_{n=0}^\infty P_n(z) P_n(w) \, .
    \end{alignat}
    Moreover, $m(z^\ast,t^\ast) = m(z,t)$ holds for all $z \in \cnum$ and $t \in \bcnum$.
    
    Conversely, the set of points where the right-hand side of \cref{eq:weyl_fractional} is well-defined coincides with the resolvent set of $T_t$, that is, $z\in\sigma(T_t)$ if and only if $B(z,0)+D(z,0)t=0$, with $B(z,0),D(z,0)$ as defined above.
\end{lemma}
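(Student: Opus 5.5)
The plan is to compute the boundary values $\Gamma_0,\Gamma_1$ of \cref{prop:jacobi-boundary-triplets} for solutions of \cref{eq:generalized-eigenvalue} by means of \enquote{Wronskians at infinity}. Then the condition $u(z)\in\domain(T_t)$ from \cref{lem:weyl-solution} becomes a linear equation for $m(z,t)$.

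\textbf{Step 1: boundary maps as Wronskian limits.} For $\psi,\phi\in\domain(\Tmax)$ I would first show that $W_\infty[\psi,\phi]:=\lim_{N\to\infty}W_N[\psi,\phi]$ exists. This follows from the discrete Green identity
\begin{equation*}
\sum_{n=0}^N\bigl((\Top\psi)_n\phi_n-\psi_n(\Top\phi)_n\bigr)=-W_N[\psi,\phi],
\end{equation*}
valid for all sequences (with the convention $a_{-1}=0$), whose left-hand side converges since $\psi,\phi,\Top\psi,\Top\phi\in\ell^2(\nnum)$. For $\psi_0\in\domain(T)$ one has $W_\infty[\psi_0,\phi]=0$. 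This holds for $\psi_0\in\ell_0(\nnum)$ trivially, and extends to $\domain(T)$ by closedness, because $\psi\mapsto W_\infty[\psi,\phi]$ is continuous in the graph norm. In the limit circle case $P(0),Q(0)\in\domain(\Tmax)$. Writing $\psi=\psi_0+c_0Q(0)+c_1P(0)$ and using $W[P(0),Q(0)]=1$ from \cref{lem:wronskian}, I obtain
\begin{equation*}
\Gamma_0(\psi)=c_0=-W_\infty[\psi,P(0)],\qquad -\Gamma_1(\psi)=c_1=W_\infty[\psi,Q(0)].
\end{equation*}
The domain $\domain(T_t)$ is then characterised by $c_1=tc_0$, i.e.\ $W_\infty[\psi,Q(0)]+t\,W_\infty[\psi,P(0)]=0$ (for $t=\infty$ the condition reads $c_0=0$).

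\textbf{Step 2: evaluating the Wronskians.} Next I apply the Green identity to solutions with $\Top u=zu+\delta\, e_0$ and $\Top v=wv+\delta'e_0$. This gives
\begin{equation*}
(z-w)\sum_{n=0}^N u_nv_n+\delta v_0-\delta' u_0=-W_N[u,v].
\end{equation*}
Since $\Top P=zP$, $\Top Q=zQ+e_0$, $P_0=1$ and $Q_0=0$, letting $N\to\infty$ yields
\begin{align*}
W_\infty[P(z),P(0)]&=-D(z,0), & W_\infty[P(z),Q(0)]&=1-(z-0)\textstyle\sum P_n(z)Q_n(0)=-B(z,0),\\
W_\infty[Q(z),P(0)]&=-1-z\textstyle\sum Q_n(z)P_n(0)=C(z,0)\ \text{(up to sign)}, & W_\infty[Q(z),Q(0)]&=-A(z,0),
\end{align*}
with signs to be checked carefully. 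Inserting $u(z)=m(z,t)P(z)+Q(z)$ into the condition of Step 1 gives
\begin{equation*}
m\,(B+Dt)+(A+Ct)=0,
\end{equation*}
which is the claimed Möbius formula \eqref{eq:weyl_fractional}.

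\textbf{Step 3: regularity and symmetry.} That $A,B,C,D$ are entire in $(z,w)$ follows from the standard fact that, in the limit circle case, $\sum_n(|P_n(z)|^2+|Q_n(z)|^2)$ is locally uniformly bounded in $z$ (e.g.\ via \cref{prop:weyl-alternative} and variation of constants). The Nevanlinna property I would quote from \cite{akhiezer-classicalmomentproblem-2020}. The symmetry $m(z^\ast,t^\ast)=m(z,t)^\ast$ follows from \eqref{eq:orthogonal-poly-conjugate} applied to the explicit formula, and equivalently from $T_t^\ast=T_{t^\ast}$.

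\textbf{Step 4: the spectral characterisation.} This is the step I expect to need the most care. If $B(z,0)+D(z,0)t=0$, then by Step 2 the vector $P(z)$, which satisfies the initial condition \eqref{eq:generalized-eigenvalue_initial}, fulfils the boundary condition. Hence it is an eigenvector of $T_t$ and $z\in\sigma(T_t)$. Conversely, if $B+Dt\neq0$, I define $m$ by \eqref{eq:weyl_fractional}, set $u=mP+Q\in\domain(T_t)$, and take the kernel $G_{nm}$ of \cref{lem:green}. Since $P(z),u\in\ell^2(\nnum)$, this kernel is Hilbert--Schmidt. A direct check shows that it maps $\ell^2(\nnum)$ into $\domain(T_t)$ and inverts $T_t-z$: the boundary condition holds because $W_\infty$ of the image reduces to that of $u$. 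Hence $z\in\rho(T_t)$. The main obstacle here is verifying that the image of the Hilbert--Schmidt operator actually satisfies the boundary condition defining $\domain(T_t)$, rather than merely solving the recurrence.
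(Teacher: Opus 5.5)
Your argument is correct in substance, but it takes a genuinely different route from the paper. The paper proves the lemma almost entirely by citation:
\begin{itemize}
\item the M\"obius formula is Schm\"udgen's Eq.~(16.36);
\item the fact that $A+Ct$ and $B+Dt$ have no common zeros is quoted from the same book;
\item discreteness comes from the Hilbert--Schmidt resolvent (Allahverdiev);
\item eigenvalues of $T_t$ are identified with zeros of $B+Dt$ via Allakhverdiev--Guseinov, Theorem~2.2.
\end{itemize}
You instead derive everything from the Lagrange identity. The boundary maps become $\Gamma_0=-W_\infty[\,\cdot\,,P(0)]$ and $-\Gamma_1=W_\infty[\,\cdot\,,Q(0)]$, and the four Wronskians at infinity are expressed through $A,B,C,D$. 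The spectral characterization is then proved by hand: $P(z)$ is an eigenvector when $B+Dt=0$, and the kernel of \cref{lem:green} gives a bounded inverse otherwise. This is self-contained, treats all $t\in\bcnum$ at once, and does not need the no-common-zeros fact. Above all, it makes every sign checkable.

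That is exactly where your hedge ``up to sign'' hides a real discrepancy. Your value $W_\infty[Q(z),P(0)]=-1-z\sum_n Q_n(z)P_n(0)$ is correct. But it is not $\pm C(z,0)$ for $C$ as printed, since $\pm C(z,0)=\pm\bigl(-1+z\sum_nQ_n(z)P_n(0)\bigr)$. Carried through, your computation gives $m(B+Dt)+(A+\hat C t)=0$ with $\hat C(z,w)=1+(z-w)\sum_nQ_n(z)P_n(w)$. This is the standard Nevanlinna entry, the one for which $AD-B\hat C=1$. The $-1$ in the statement is a typo, which you can confirm at $z=0$. For finite $t$, $u=Q(0)+tP(0)\in\domain(T_t)$ solves $\Top u=e_0$, so $m(0,t)=u_0=t$, whereas the printed formula gives $-t$. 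Likewise the symmetry must read $m(z^\ast,t^\ast)=m(z,t)^\ast$, as you wrote, and as it is used in the proof of \cref{prop:strong-conv}. So your proof establishes the corrected lemma, which is what the rest of the paper needs; state it with $+1$ in $C$.

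Two small points of order. First, in Step~2 you may divide by $B+Dt$ only after Step~4 has shown $B+Dt\neq0$ on $\rho(T_t)$. Alternatively, read off $m(z,t)=(Re_0)_0=u_0$ directly from the operator $R$ with kernel $G_{nm}$, which makes Step~2 redundant. Second, for $z\in\rho(T_t)$ you also need injectivity of $T_t-z$. This is immediate: any kernel element is a multiple of $P(z)$, and $P(z)\notin\domain(T_t)$ when $B+Dt\neq0$.

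The obstacle you flagged disappears if you avoid Wronskians of the image altogether. For finitely supported $v$, $Rv$ is a finitely supported vector plus $\bigl(\sum_kP_k(z)v_k\bigr)u$. Hence $Rv\in\domain(T_t)$ and $(T_t-z)Rv=v$, using $W[P(z),u]=W[P(z),Q(z)]=1$. Since $R$ is bounded and $T_t$ is closed, this extends to all $v\in\ell^2(\nnum)$.
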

\begin{proof}
    The properties of the Nevanlinna functions follow from \cite[Lemma~16.23]{schmudgen-unboundedselfadjointoperators-2012}, and the expression of the Weyl $m$-function from \cite[Eq.~16.36]{schmudgen-unboundedselfadjointoperators-2012}.
    Using \cref{eq:orthogonal-poly-conjugate}, we obtain $A(z^\ast,0) = A(z,0)^\ast$, $B(z^\ast,0) = B(z,0)^\ast$, $C(z^\ast,0) = C(z,0)^\ast$, $D(z^\ast,0) = D(z,0)^\ast$ for all $z \in \cnum$, and $m(z^\ast,t^\ast) = m(z,t)$ follows.

    For all $t \in \bcnum$ the resolvent of $T_t$ is Hilbert--Schmidt~\cite[Corollary~3.1]{allahverdiev-spectralproblemsjacobi-}, and hence the spectrum of $T_t$ is purely discrete.
    Since the coefficients $A(z,0),B(z,0),C(z,0),D(z,0)$ are entire functions, and $A(z,0) + C(z,0) t$ and $B(z,0) + D(z,0) t$ have no common zeros \cite[383]{schmudgen-unboundedselfadjointoperators-2012}, the poles of $m(z,t)$ are exactly the solutions $z \in \cnum$ of the equation
    \begin{equation}
        B(z,0) + D(z,0) t = 0 \,.
    \end{equation}
    But by \cite[Theorem~2.2]{allakhverdiev-spectraltheorydissipative-1990}, these $z \in \cnum$ are exactly the eigenvalues of $T_t$.
\end{proof}
\begin{remark}\label{rem:terminology}
For any fixed $z\in\cnum$ with $\Imag z>0$, the values of $m(z,t)$ as $t$ ranges over $\overline{\rnum}$ lie on a circle in $\cnum_+$, which motivates the terminology ``limit circle''. In contrast, in the essentially self-adjoint case there is only one closed extension of $\Tmin$, and hence only a single Weyl $m$-function. In this sense the circle degenerates to a single point, which is consistent with the complementary terminology ``limit point''.
\end{remark}

\subsection{Spectral properties of the operators \texorpdfstring{$J(\lambda)$}{ }}
\label{sec:spectral-properties}

We now turn to the family of Jacobi operators introduced in \cref{sec:main_2}, see \cref{def:jacobi-lambda}. Throughout this section we assume \cref{hyp:conditions-an-fn}. We analyze separately the operators $J(0)$ and $J(\lambda)$ for $\lambda>0$, establishing \cref{prop:self-adjointness} together with additional properties needed for the proof of the main theorem.

We start by analyzing the operator $J(0)$, and show that it is limit circle using \cite[Theorem~3]{swiderski-periodicperturbationsunbounded-2018}, cf.~\cref{thm:swiderski}.
The following property will be useful throughout this paper:
\begin{lemma}
    \label{lem:an-variation}
    The following properties hold:
    \begin{equation}
        \mathcal{V}((a_n^{-1})_{n\in\nnum}) < \infty \, , \quad \mathcal{V}((a_{n-1}/a_n)_{n\in\nnum}) < \infty \, ,
    \end{equation}
    with $\mathcal{V}(\cdot)$ being the total variation (\cref{def:total-variation}). Furthermore, the Carleman condition \cref{eq:carleman} is not satisfied.
\end{lemma}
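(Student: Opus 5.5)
The plan is to work directly from the asymptotic form $a_n = n^\alpha(1+c_a/n+O(1/n^2))$ in \cref{hyp:conditions-an-fn}(ii). This yields all three claims by elementary estimates on the differences of consecutive terms: a summable bound on each increment gives finite total variation.

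\emph{Carleman condition.} The expansion implies $a_n \geq \tfrac12 n^\alpha$ for $n\ge N$. Since $a_n>0$ by (i), the finitely many remaining terms are finite. Hence $\sum_n a_n^{-1} \le C + 2\sum_{n\ge N} n^{-\alpha}<\infty$ because $\alpha>\tfrac43>1$. So \cref{eq:carleman} fails.

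\emph{Variation of $a_n^{-1}$.} Write $a_n^{-1} = n^{-\alpha}(1 - c_a/n + O(1/n^2))$ for large $n$. Then
\begin{equation}
|a_n^{-1}-a_{n-1}^{-1}| \le |n^{-\alpha}-(n-1)^{-\alpha}| + O(n^{-\alpha-1}) = O(n^{-\alpha-1}),
\end{equation}
using the mean value theorem for $x\mapsto x^{-\alpha}$. This increment is summable. The finitely many initial terms contribute a finite amount, so $\mathcal V((a_n^{-1})_{n\in\nnum})<\infty$.

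\emph{Variation of $a_{n-1}/a_n$.} Compute
\begin{equation}
\frac{a_{n-1}}{a_n} = \Bigl(1-\frac1n\Bigr)^{\alpha}\frac{1+c_a/(n-1)+O(n^{-2})}{1+c_a/n+O(n^{-2})} = 1-\frac{\alpha}{n}+O(n^{-2}),
\end{equation}
because $c_a/(n-1)-c_a/n=O(n^{-2})$. Consecutive differences of $1-\alpha/n$ are $O(n^{-2})$. The remainder terms, however, are only known to be $O(n^{-2})$ individually, not through their differences. Their increments are therefore each bounded by $O(n^{-2})+O((n-1)^{-2})$, which is still summable. Hence $\mathcal V((a_{n-1}/a_n)_{n\in\nnum})<\infty$.

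\emph{Main obstacle.} The one point needing care is exactly this: the $O(1/n^2)$ remainder gives no control on differences. The argument must therefore bound the increment of the remainder crudely by the sum of two $O(n^{-2})$ terms, which suffices only because $n^{-2}$ is summable. Everything else is routine Taylor expansion.
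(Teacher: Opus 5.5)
Your proof is correct and takes essentially the same route as the paper. Both expand termwise from \cref{hyp:conditions-an-fn}(ii), bound the $O(n^{-2})$ remainders crudely term by term so that the increments are $O(n^{-\alpha-1})$ and $O(n^{-2})$ (hence summable), and compare the Carleman series with $\sum n^{-\alpha}$, using $\alpha>1$. The only cosmetic difference is that the paper treats the first variation through $a_{n-1}^{-1}-a_n^{-1}=(a_n-a_{n-1})/(a_na_{n-1})$ instead of expanding $a_n^{-1}$ directly.
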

\begin{proof}
   A simple calculation yields, for $n \geq 1$,
    \begin{align}
        \frac{1}{a_{n-1}} - \frac{1}{a_n} & = \frac{a_n-a_{n-1}}{a_n a_{n-1}} = \frac{n^\alpha(1+O(1/n))-(n-1)^\alpha(1+O(1/n))}{a_n a_{n-1}} \\
        & = \frac{n^\alpha}{a_n a_{n-1}} \left(1+O(1/n)-(1-1/n)^\alpha(1+O(1/n))\right) \\
        & = \frac{n^\alpha}{a_n a_{n-1}} \left(1+O(1/n)-(1-O(1/n))(1+O(1/n))\right) \\
        & = \frac{n^\alpha }{a_n a_{n-1}}O(1/n)\, ,
    \end{align}
    where we used \cref{hyp:conditions-an-fn} and $(1+x)^\alpha = 1+\alpha x + o(x)$ for $x\to0$. But, again by \cref{hyp:conditions-an-fn},
    \begin{equation}
        a_na_{n-1}=n^{2\alpha}\left(1+O(1/n)\right),
    \end{equation}
    whence
    \begin{equation}
         \frac{1}{a_{n-1}} - \frac{1}{a_n}=\frac{1}{n^\alpha}O(1/n),
    \end{equation}
    that is, there exists $C>0$ such that 
    \begin{equation}
        \left|\frac{1}{a_{n-1}} - \frac{1}{a_n}\right| \leq \frac{C}{n^\alpha n} \,, 
    \end{equation}
    and as $\alpha > 0$, $\mathcal{V}((a_n^{-1})_{n\in\nnum}) < \infty$.
    
    For the second statement, let $n \geq 2$ and consider
    \begin{align}
        \frac{a_{n-1}}{a_n} & = \frac{(n-1)^\alpha}{n^\alpha}\frac{1+c_a/(n-1)+O(1/n^2)}{1+c_a/n+O(1/n^2)} \\
        & = \left(1-\frac{1}{n}\right)^\alpha \left(1+c_a/(n-1)+O(1/n^2)\right)\left(1-c_a/n+O(1/n^2)\right) \\
        & = \left(1-\frac{1}{n}\right)^\alpha \left(1+\frac{c_a}{n(n-1)}+O(1/n^2) \right) \\
        & = \left(1-\frac{1}{n}\right)^\alpha \left(1+O(1/n^2) \right) \, .
    \end{align}
    Thus we obtain
    \begin{align}
        \frac{a_{n-1}}{a_n} - \frac{a_{n-2}}{a_{n-1}}&  = \left(1-\frac{1}{n}\right)^\alpha (1+O(1/n^2)) - \left(1-\frac{1}{n-1}\right)^\alpha (1+O(1/n^2)) \\
        & = \left(1-\frac{\alpha}{n}+O(1/n^2)\right)(1+O(1/n^2)) - \left(1-\frac{\alpha}{n-1}+O(1/n^2)\right) (1+O(1/n^2)) \\
        & = \frac{\alpha}{n(n-1)} + O(1/n^2) \, , 
    \end{align}
    and $\mathcal{V}(a_{n-1}/a_n) < \infty$ follows.
    Finally, as $a_n = n^\alpha(1+O(1/n))$, 
    \begin{equation}
        \sum_{n=0}^\infty \frac{1}{a_n} < \infty
    \end{equation}
    if and only if $\alpha > 1$; by \cref{hyp:conditions-an-fn}, $\alpha>\nicefrac{4}{3}>1$ and the proof is complete.
\end{proof}
\begin{proposition}
    \label{prop:j0-limit-circle}
 The operator $J(0)$ is not self-adjoint and has deficiency indices $(1,1)$; therefore, it admits a one-parameter family of maximally accretive (respectively, dissipative) extensions $(J_t)_{t \in \bcnum}$. In particular, the self-adjoint extensions are parametrized by $t \in \brnum$.
\end{proposition}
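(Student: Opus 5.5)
The plan is to avoid \cref{thm:swiderski} altogether, because it does not seem to apply directly here. With $b_n=0$, \cref{hyp:conditions-an-fn} gives the limits $\tau=0$, $q=0$, $r=1$ and $c=1$. Then $\mathcal{F}(z)=\Real\begin{pmatrix}-1&0\\0&1\end{pmatrix}$, which is indefinite for every $z$. Instead I would use the Weyl alternative (\cref{prop:weyl-alternative}) at the single point $z=0$. There the recurrence decouples and can be solved explicitly, so it suffices to show that \emph{both} solutions of $\Jop(0)u=0$ (in the sense of \cref{eq:generalized-eigenvalue}) are square-summable. By \cref{prop:weyl-alternative} this places $\Jop(0)$ in the limit circle case for all $z\in\cnum$.

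For $z=0$ and $\lambda=0$ the recurrence reads $a_nu_{n+1}+a_{n-1}u_{n-1}=0$ for $n\geq1$, that is, $u_{n+1}=-\frac{a_{n-1}}{a_n}u_{n-1}$. The even- and odd-indexed entries therefore evolve independently:
\begin{equation}
    u_{2k}=(-1)^k u_0\prod_{j=1}^{k}\frac{a_{2j-2}}{a_{2j-1}},\qquad u_{2k+1}=(-1)^k u_1\prod_{j=1}^{k}\frac{a_{2j-1}}{a_{2j}} .
\end{equation}
The basis solutions $P(0)$ and $Q(0)$ correspond to $(u_0,u_1)=(1,0)$ and $(0,1/a_0)$, so it suffices to bound these two products. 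I would take logarithms and use the expansion already established in the proof of \cref{lem:an-variation}, namely $\frac{a_{n-1}}{a_n}=(1-\tfrac1n)^\alpha(1+O(1/n^2))=1-\frac{\alpha}{n}+O(1/n^2)$. Summing over every second index gives
\begin{equation}
    \sum_{j\le k}\log\frac{a_{2j-2}}{a_{2j-1}}=-\frac{\alpha}{2}\log k+O(1),
\end{equation}
since the $O(1/n^2)$ errors are summable. Hence $|P_n(0)|,|Q_n(0)|\le C n^{-\alpha/2}$. These are square-summable exactly when $\alpha>1$, which holds because $\alpha>\nicefrac43$.

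Having found two linearly independent square-summable solutions at $z=0$, I conclude the following. By \cref{prop:weyl-alternative}, $\Jop(0)$ is limit circle. By \cref{prop:jacobi-esssa}, $J(0)$ is not self-adjoint and has deficiency indices $(1,1)$. \cref{prop:jacobi-boundary-triplets} then gives the family $(J_t)_{t\in\bcnum}$ of maximally accretive or dissipative extensions, with the self-adjoint ones exactly those with $t\in\brnum$.

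The only step needing care is the product asymptotics, specifically that the error terms in $\log(a_{n-1}/a_n)$ are summable. This follows from the $O(1/n^2)$ remainder in \cref{hyp:conditions-an-fn}(ii). A cruder bound of the form $a_{n-1}/a_n=1-\alpha/n+O(1/n)$ would not be enough, so I would make sure to carry the second-order term.
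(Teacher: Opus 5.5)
Your proof is correct, and it takes a genuinely different route from the paper.

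\paragraph{Comparison with the paper.} The paper verifies the hypotheses of \cref{thm:swiderski}: finite variation of $a_n^{-1}$ and $a_{n-1}/a_n$ (from \cref{lem:an-variation}), failure of the Carleman condition, and definiteness of $\mathcal{F}$. You instead use that the diagonal vanishes at $\lambda=0$. At $z=0$ the recurrence then splits into two explicit product recursions. The cancellation of the $c_a/n$ terms in $a_{n-1}/a_n$ is exactly what makes the $O(1/n^2)$ errors summable, so $|P_{2k}(0)|,|Q_{2k+1}(0)|\asymp k^{-\alpha/2}$. Applying \cref{prop:weyl-alternative} at the real point $z=0$ is legitimate, since that proposition is stated for one and hence all $z\in\cnum$.

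\paragraph{What each route buys.} Yours is elementary and self-contained, and it gives the explicit profile of $P(0)$ and $Q(0)$. However, it depends entirely on $b_n\equiv0$ and gives no direct quantitative information for $z\neq0$. The Tur\'an-determinant mechanism behind \cref{thm:swiderski} tolerates a nonvanishing diagonal with $b_n/a_n$ of bounded variation. The paper also reuses it in \cref{sec:turan} and \cref{prop:limit-circle-asymptotics} to obtain bounds $|u_n|\lesssim n^{-\alpha/2}$ that are uniform for $z$ in compacts and for $\lambda$ on $n\le N_2(\lambda)$. The later analysis needs exactly that uniformity, and an explicit $z=0$ solution cannot supply it.

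\paragraph{On your reason for bypassing \cref{thm:swiderski}.} You computed $\mathcal{F}$ correctly from the displayed formula, but that formula contains a sign slip; it is not a real obstruction. With $c=1$ (as always for positive $a_n$), the displayed matrix $\begin{pmatrix}0&1\\ r&z\tau-q\end{pmatrix}$ gives $\det\mathcal{F}(z)=-r-|z\tau-q|^2/4$. Since $r\ge0$, this is never positive, so the theorem as transcribed could never apply to any Jacobi matrix. The transfer matrix acting on $(u_{n-1},u_n)^T$ is $\begin{pmatrix}0&1\\ -a_{n-1}/a_n&(z-b_n)/a_n\end{pmatrix}$, so the entry should be $-r$. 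With that correction $\mathcal{F}(z)$ is the identity, as the paper asserts. This is also consistent with $\Real(ET^{0,z}_n)$ tending to the identity in \cref{sec:turan}. The paper's route is therefore sound.
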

\begin{proof}
    We show that the conditions of \cref{thm:swiderski} hold.
    By \cref{lem:an-variation}, $\mathcal{V}((a_n^{-1})_{n\in\nnum}) < \infty$ and  $\mathcal{V}((a_{n-1}/a_n)_{n\in\nnum}) < \infty$.
    Furthermore, $\mathcal{V}((\lambda f_n/a_n)_{n \in \nnum}) = 0$ as $\lambda = 0$. 
    Finally, 
    \begin{alignat}{2}
        \tau & = \lim_{n \to \infty} a_n^{-1}  = 0 \, & \quad  r   & = \lim_{n \to \infty} a_{n-1}a_n^{-1} = 1 \\
        q & = \lim_{n \to \infty} \lambda f_n a_n^{-1}  = 0 &\quad  c & = \lim_{n \to \infty} |a_n| a_n^{-1} = 1 \, ,
    \end{alignat}
    and the matrix $\mathcal{F}(z)$ in \cref{thm:swiderski} simplifies to
    \begin{equation}
        \mathcal{F}(z) = \begin{pmatrix}
            1 & 0 \\ 0 & 1
        \end{pmatrix}\, ,
    \end{equation}
    which is positive definite for all $z \in \cnum$.
    Furthermore, the Carleman condition does not hold as $\alpha > 1$, cf.~\cref{lem:an-variation}, and all assumptions of \cref{thm:swiderski} are satisfied, which proves that the operator is not essentially self-adjoint. \cref{prop:jacobi-boundary-triplets} concludes the proof.
\end{proof}
The importance of the matrix $\mathcal{F}(z)$ will become apparent in \cref{sec:turan}, where we will obtain explicit asymptotics of generalized eigenvectors.

Hereafter we focus on the self-adjoint extensions of $J(0)$. We collect here some properties of the spectra of the operators $J_t$.
\begin{proposition}
    \label{lem:jt-unbounded-from-below}
   All self-adjoint extensions $J_t$ of $J(0)$ are unbounded from below.
\end{proposition}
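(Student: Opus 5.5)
The key observation is that $J(0)$ has zero diagonal, so the chiral (sublattice) symmetry $S=\operatorname{diag}((-1)^n)$ satisfies $S\Jop(0)S=-\Jop(0)$. The plan is therefore: first show that every $J_t$ is unbounded above, and then use the symmetry $S$ to transfer this to unboundedness from below. The transfer will need some care, because $S$ maps $J_t$ to $-J_{t'}$ for a possibly different parameter $t'$.

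\textbf{Step 1: the symmetry on the extensions.} The unitary $S$ preserves $\ell_0(\nnum)$ and satisfies $SJ(0)S=-J(0)$. Consequently $S J_t S$ is a self-adjoint extension of $-J(0)$, that is, $-SJ_tS$ is a self-adjoint extension of $J(0)$, so $-SJ_tS=J_{t'}$ for some $t'\in\brnum$. To identify $t'$, I will use the recursion for the orthogonal polynomials at $z=0$ with $b_n=0$. This gives $P_{2n+1}(0)=0$ and $Q_{2n}(0)=0$, hence $SP(0)=P(0)$ and $SQ(0)=-Q(0)$. Therefore $S(Q(0)+tP(0))=-(Q(0)-tP(0))$, and since $S$ also maps $\domain(J(0))$ onto itself, we get $S\domain(J_t)=\domain(J_{-t})$ and hence $-SJ_tS=J_{-t}$. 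It follows that $\sigma(J_{-t})=-\sigma(J_t)$. In particular, $J_t$ is unbounded from below if and only if $J_{-t}$ is unbounded from above.

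\textbf{Step 2: unbounded from above.} It remains to show that each $J_t$ is unbounded above, since applying this to $J_{-t}$ then finishes the proof. By \cref{lem:weyl-m-flt} the spectrum of $J_t$ is discrete and consists of the real zeros of the function $B(z,0)+D(z,0)t$. Suppose $J_t$ were bounded above by some $M$. Then $J_t$ and $J_{-t}$ would have bounded-above, respectively bounded-below, discrete spectra. I would exclude this with a quadratic-form argument rather than function theory: take test vectors $u\in\ell_0(\nnum)$ supported on a long block $[N,N+L]$, for instance $u_n=\sin(\pi(n-N)/L)$. For such vectors $\langle u,J(0)u\rangle\approx 2a_N\|u\|^2\cos(\pi/L)\to+\infty$ as $N\to\infty$, because $a_n\to\infty$. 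Since $J(0)\subset J_t$, this gives $\sup_{\|u\|=1}\langle u,J_tu\rangle=+\infty$, so $J_t$ is unbounded above. Replacing $u$ by $Su$ gives quadratic form values tending to $-\infty$ directly.

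This observation makes Step 1 unnecessary: since $J(0)\subset J_t$ and $J(0)$ is unbounded below via the vectors $Su$, every self-adjoint extension has $\inf\sigma(J_t)\le \inf_{u\in\ell_0,\|u\|=1}\langle u,J(0)u\rangle=-\infty$. The final proof is therefore just this form computation, using $\langle Su,J(0)Su\rangle=-\langle u,J(0)u\rangle$. The only point that needs checking is the estimate $\sum_{n} 2a_n u_nu_{n+1}\ge 2\min_{[N,N+L]}a_n\sum_n u_nu_{n+1}$, which holds because $u_n\ge0$ on the block, together with $\sum_n u_nu_{n+1}\ge c\|u\|^2$ with $c>0$ for a fixed $L$. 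That $a_n\to\infty$ follows from \cref{hyp:conditions-an-fn}, since $\alpha>0$. I expect no real obstacle.
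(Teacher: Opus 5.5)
Your final argument is correct and is essentially the paper's proof. The paper also shows that $J(0)$ itself is unbounded from below, using the alternating-sign test vectors $\psi^N_n=(-1)^n/\sqrt{N}$ for $n\le N-1$ (that is, $S$ applied to a flat block starting at the origin), and then passes to every extension $J_t\supset J(0)$. Your variant, with blocks of fixed length $L\ge 3$ moved off to infinity, needs only $a_n\to\infty$, whereas the paper's final estimate $-a_{N-2}/N$ uses $\alpha>1$; and your Step~1, which correctly gives $-SJ_tS=J_{-t}$, is indeed not needed.
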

\begin{proof}
   It suffices to prove that $J(0)$ itself is unbounded from below.
    To this end, let $N \in \nnum$ and define $\psi^N \in \ell_0(\nnum)\subset \domain(J(0))$ as
    \begin{equation}
        \psi^N_n = \begin{cases}
            \frac{(-1)^n}{\sqrt{N}} & n \leq N -1 \\
            0 & n \geq N 
        \end{cases}\, .
    \end{equation}
    Clearly,
    \begin{equation}
        \norm{\psi^N}^2 = \sum_{n = 0}^{N-1} \frac{1}{N} = 1 \, .
    \end{equation}
    Furthermore,
    \begin{equation}  
        (J(0) \psi^N)_n =a_n \psi^N_{n+1} + a_{n-1} \psi^N_{n-1} = \frac{1}{\sqrt{N}} \begin{cases}     
            -a_0 & n = 0 \\
            (-1)^{n+1}a_n + (-1)^{n-1}a_{n-1} & 1 \leq n \leq N-2 \\
            (-1)^{n-1}a_{n-1} & N-1 \leq n \leq N \\
            0 & n > N
        \end{cases} \, , 
    \end{equation}
    and hence
    \begin{equation}
        \frac{\braket{\psi^N,J(0)\psi^N}}{\norm{\psi^N}^2}  = -\frac{1}{N} \left(a_0  +\sum_{n=1}^{N-2} (a_n+a_{n-1}) + a_{N-2}\right) \leq -\frac{a_{N-2}}{N} \, . 
    \end{equation}
    Using \cref{hyp:conditions-an-fn}, $\lim_{N \to \infty} -\frac{a_{N-2}}{N} = -\infty$, and the statement follows.
\end{proof}

We will denote by $m(z,t)$ the Weyl $m$-function of the operator $J_t$, see \cref{def:green}. 
\begin{proposition}
    \label{prop:jacobi-limit-circle-spectrum}
    All self-adjoint extensions $J_t$ of $J(0)$ have purely discrete spectrum with isolated eigenvalues of multiplicity one.
    The eigenvalues are exactly the poles of $m(z,t)$.
    Furthermore, for each $E \in \rnum$ there exists exactly one $t \in \brnum$ such that $E \in \sigma(J_t)$.
\end{proposition}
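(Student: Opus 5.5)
We rely on \cref{lem:weyl-m-flt} applied to $\Top=\Jop(0)$, which is limit circle by \cref{prop:j0-limit-circle}. That lemma already gives that each $J_t$, $t\in\brnum$, has purely discrete spectrum: the resolvent is Hilbert--Schmidt. It also gives that $E\in\sigma(J_t)$ if and only if $B(E,0)+D(E,0)t=0$, with the convention that for $t=\infty$ the condition reads $D(E,0)=0$. Since $A(z,0)+C(z,0)t$ and $B(z,0)+D(z,0)t$ have no common zeros, these eigenvalues are exactly the poles of $m(z,t)$. Isolatedness follows from discreteness. Alternatively, the zeros of the entire function $z\mapsto B(z,0)+D(z,0)t$, which is not identically zero, are isolated.

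For simplicity of the eigenvalues, we argue directly on the recurrence. If $J_t\psi=E\psi$ with $\psi\in\domain(J_t)\subset\domain(J_{\max})$, then $\psi$ solves \cref{eq:generalized-eigenvalue} together with the initial condition \cref{eq:generalized-eigenvalue_initial}. It is therefore a scalar multiple of $P(E)$, since a solution is fixed by $u_0$ and \cref{eq:generalized-eigenvalue_initial} fixes $u_1$ in terms of $u_0$. Hence every eigenspace is one-dimensional, and $\ker(J_{\max}-E)=\cnum P(E)$, which is nontrivial because in the limit circle case $P(E)\in\ell^2$ by \cref{prop:weyl-alternative}.

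For the last claim, fix $E\in\rnum$. Then $E\in\sigma(J_t)$ if and only if $P(E)\in\domain(J_t)$. The plan is to compute the boundary values $\Gamma_0(P(E))$ and $\Gamma_1(P(E))$ of \cref{prop:jacobi-boundary-triplets}. By the definition of the triplet, $P(E)\in\domain(J_t)$ exactly when $(\Gamma_0,\Gamma_1)(P(E))$ is proportional to $(1,-t)$, and for $t=\infty$ this means $\Gamma_0=0$. Because $P(E)$ is real, and $Q(0),P(0)$ are real, the pair $(c_0,c_1)$ can be taken real: decompose and take real parts, using that $\domain(J(0))$ is invariant under conjugation. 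The pair is also not $(0,0)$. Indeed, otherwise $P(E)\in\domain(J(0))$ would be an eigenvector of the symmetric operator $J(0)$, hence of every $J_t$, contradicting the disjointness of the extensions $J_{t_1}\cap J_{t_2}=J(0)$. Equivalently, $B(E,0)$ and $D(E,0)$ do not vanish simultaneously.

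Given a nonzero real pair, there is exactly one $t\in\brnum$ with $c_1=tc_0$, namely $t=c_1/c_0$, or $t=\infty$ when $c_0=0$. Equivalently, in the language of \cref{lem:weyl-m-flt}, $B(E,0),D(E,0)$ are real by \cref{eq:orthogonal-poly-conjugate} and not both zero, so $t=-B(E,0)/D(E,0)\in\brnum$ is unique. The main obstacle is justifying that $B(E,0)$ and $D(E,0)$ have no common real zero. We plan to settle it through the identity $D(z,0)=z\sum_n P_n(z)P_n(0)$ together with the Wronskian relations from \cite[Lemma~16.23]{schmudgen-unboundedselfadjointoperators-2012}, namely $A D-BC=1$. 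This identity immediately excludes common zeros of $B$ and $D$.
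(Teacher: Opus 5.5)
Your proof is correct. The paper gives no argument of its own here: it simply cites Schmüdgen's Corollary~16.30, with a footnote that the last claim follows from that proof. You instead assemble the result from three ingredients:
\begin{enumerate}[(i)]
    \item \cref{lem:weyl-m-flt}, which supplies the Hilbert--Schmidt resolvent and the characterization $E\in\sigma(J_t)\iff B(E,0)+tD(E,0)=0$;
    \item uniqueness of solutions of the recurrence with the boundary condition at $n=0$, which gives simplicity of the eigenvalues;
    \item the determinant identity $AD-BC=1$, which shows that the real numbers $B(E,0),D(E,0)$ never vanish together and hence forces a unique $t=-B(E,0)/D(E,0)\in\brnum$.
\end{enumerate}
This is essentially the mechanism behind the cited corollary, so your version mainly buys self-containedness relative to lemmas already in the paper. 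When you invoke $AD-BC=1$, use Schmüdgen's normalization $C(z,w)=1+(z-w)\sum_n Q_n(z)P_n(w)$. With the $-1$ printed in \cref{lem:weyl-m-flt} one gets $AD-BC=1+2B$ instead. Only $B$ and $D$ enter your argument, so nothing changes.

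You should delete one sentence: the claim that $(c_0,c_1)=(0,0)$ would contradict the disjointness $J_{t_1}\cap J_{t_2}=J(0)$ is not a valid argument. Suppose $P(E)$ were in $\domain(J(0))$. Then the pair $(P(E),EP(E))$ would lie in the graph of $J(0)$, hence in the graph of every $J_t$. That is entirely consistent with the disjointness relation, so no contradiction arises. What you actually need is that $J(0)$ has no eigenvalues.

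Your $AD-BC=1$ route already supplies this, because $(c_0,c_1)=(0,0)$ is equivalent to $B(E,0)=D(E,0)=0$. There is also a direct Wronskian argument. For $u\in\domain(J(0))$ and $v\in\domain(\Jmax)$, the Green identity gives $\lim_n W_n[\bar u,v]=0$. On the other hand, in the limit circle case $Q(E)\in\domain(\Jmax)$, and $W_n[P(E),Q(E)]=1$ for all $n$. Hence $P(E)\notin\domain(J(0))$.
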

\begin{proof}
    This is \cite[Coroll.~16.30]{schmudgen-unboundedselfadjointoperators-2012}.\footnote{
The last statement does not appear explicitly in the formulation of \cite[Corollary~16.30]{schmudgen-unboundedselfadjointoperators-2012}, but follows directly from its proof.}
\end{proof}
This concludes the analysis of $J(0)$ and its extensions.\medskip

We now turn to the operators $J(\lambda)$ for $\lambda>0$. Our strategy is to view the minimal operator $\Jmin(\lambda)$, defined on $\ell_0(\nnum)$ (cf.~\cref{def:jacobi-lambda}), as a perturbation of $\Jmin(0)$ by decomposing it as $\Jmin(\lambda)=\Jmin(0)+\lambda F$, where $F$ is the Jacobi operator determined by the diagonal entries of $\Jmin(\lambda)$. We will exploit this decomposition to analyze the spectral properties of $J(\lambda)$ in terms of those of $F$. The rationale behind this approach lies in the scaling assumptions in \cref{hyp:conditions-an-fn}: since $a_n\sim n^\alpha$ and $f_n\sim n^\beta$ with $\beta>\alpha$, the diagonal part grows asymptotically faster than the off-diagonal coefficients, suggesting that $\Jmin(0)$ behaves as a lower-order perturbation of $F$ as long as $\lambda>0$. This heuristic will be made precise below.

\begin{definition}\label{def:diagonal-term}
    $F:\domain(F)\subset\ell^2(\nnum)\rightarrow\ell^2(\nnum)$ is the unique closed operator satisfying $Fe_n=f_ne_n$.
\end{definition}

We gather here some properties of this operator:

\begin{proposition}\label{prop:diagonal-term}
    $F$ is a nonnegative self-adjoint operator and has a core in $\ell_0(\nnum)$. Besides, its spectrum $\sigma(F)$ satisfies
    \begin{equation}\label{eq:spectrum_F}
        \sigma(F)=\{f_n\}_{n\in\mathbb{N}}
    \end{equation}
    and is purely discrete, that is, its essential spectrum is empty: $\sigma_{\rm ess}(F)=\emptyset$.
\end{proposition}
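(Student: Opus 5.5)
Since $F$ is diagonal in the orthonormal basis $(e_n)_{n\in\nnum}$, I would identify it with the maximal multiplication operator by the sequence $(f_n)_{n\in\nnum}$ and read off all claims from that.

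First, I define $\domain(F_{\max})=\{u\in\ell^2(\nnum):\sum_n f_n^2|u_n|^2<\infty\}$ with $(F_{\max}u)_n=f_nu_n$. This operator is closed: if $u^{(k)}\to u$ and $F_{\max}u^{(k)}\to v$, then coordinatewise $v_n=f_nu_n$, so $u\in\domain(F_{\max})$ and $v=F_{\max}u$. It is also self-adjoint. Symmetry is immediate since $f_n\in\rnum$. For the reverse inclusion, take $w\in\domain(F_{\max}^\ast)$ and test against $e_n$; this gives $(F_{\max}^\ast w)_n=f_nw_n$, so $(f_nw_n)\in\ell^2(\nnum)$ and $w\in\domain(F_{\max})$. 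Nonnegativity follows from $\braket{u,F_{\max}u}=\sum_n f_n|u_n|^2\ge0$, using $f_n\ge0$ from \cref{hyp:conditions-an-fn}.

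Next, $\ell_0(\nnum)$ is a core. For $u\in\domain(F_{\max})$, the truncations $u^{(N)}=\sum_{n<N}u_ne_n$ satisfy $u^{(N)}\to u$ and $F_{\max}u^{(N)}\to F_{\max}u$, since both are tails of convergent series. Hence the closure of $F_{\max}|_{\ell_0(\nnum)}$ equals $F_{\max}$. This closed extension of the operator determined on $\ell_0(\nnum)$ by $Fe_n=f_ne_n$ is exactly the $F$ of \cref{def:diagonal-term}, so $F=F_{\max}$.

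For the spectrum, each $f_n$ is an eigenvalue with eigenvector $e_n$. For $z\notin\overline{\{f_n\}}$ we have $\inf_n|f_n-z|>0$, so $(F-z)^{-1}$ is the bounded multiplication by $(f_n-z)^{-1}$; therefore $\sigma(F)=\overline{\{f_n\}}$. By \cref{hyp:conditions-an-fn}(iii), $f_n=n^\beta(1+o(1))\to\infty$, so every bounded interval contains only finitely many $f_n$. Consequently the set $\{f_n\}$ is closed, which gives \cref{eq:spectrum_F}.

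Each point of this set is isolated, and its eigenspace $\operatorname{span}\{e_m:f_m=f_n\}$ is finite-dimensional because only finitely many $m$ satisfy $f_m=f_n$. Hence every spectral point is a discrete eigenvalue and $\sigma_{\rm ess}(F)=\emptyset$.

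I do not expect a real obstacle. The only points needing care are the finite-multiplicity argument, which is handled by the growth $f_n\to\infty$, and checking that \cref{def:diagonal-term} indeed refers to the maximal multiplication operator. The latter is settled by the core argument above.
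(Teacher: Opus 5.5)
Your proof is correct and follows the same route as the paper: $F$ is diagonal in the complete eigenbasis $(e_n)_{n\in\nnum}$ with nonnegative eigenvalues $f_n$, and $f_n\to\infty$ rules out finite accumulation points, so the spectrum is purely discrete. You also spell out the self-adjointness, the core property of $\ell_0(\nnum)$, the closedness of $\{f_n\}_{n\in\nnum}$ and the finite multiplicity of each eigenvalue, which the paper's short proof leaves implicit.
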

\begin{proof}
    By definition, $(e_n)_{n\in\nnum}$ is a complete eigenbasis of $\ell^2(\nnum)$, the corresponding eigenvalues $\{f_n\}_{n\in\nnum}$ being nonnegative. The only remaining statement to prove is $\sigma_{\rm ess}(F)=\emptyset$. By \cref{hyp:conditions-an-fn}, $f_n = n^\beta(1+O(1/n))$, and therefore $\lim_{n \to \infty} f_n = \infty$.
   Hence, the eigenvalues $f_n$ diverge to $+\infty$ and therefore have no finite accumulation point. Since $(e_n)_{n\in\nnum}$ is a complete eigenbasis, it follows that the spectrum is purely discrete and $\sigma_{\rm ess}(F)=\emptyset$.
\end{proof}

\begin{lemma}
    \label{lem:relatively-compact}
    $J(0)$ is infinitesimally relatively bounded with respect to $F$, and relatively compact with respect to $F^k$ for some $k \in \nnum$.
\end{lemma}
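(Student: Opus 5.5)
The plan is to compare $J(0)$ with powers of $F$ directly on the core $\ell_0(\nnum)$, using only the growth rates $a_n\sim n^\alpha$ and $f_n\sim n^\beta$ with $\beta>\alpha$ from \cref{hyp:conditions-an-fn}.

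\textbf{Infinitesimal relative boundedness.} For $u\in\ell_0(\nnum)$ we have
\begin{equation}
\norm{J(0)u}^2\le 2\sum_n a_n^2|u_{n+1}|^2+2\sum_n a_{n-1}^2|u_{n-1}|^2\le 4\sum_n \max(a_{n-1},a_n)^2|u_n|^2,
\end{equation}
with $a_{-1}:=0$. Let $g_n:=\max(a_{n-1},a_n)$, so that $g_n=O(n^\alpha)$. Since $f_n\sim n^\beta$ with $\beta>\alpha$, we get $g_n/f_n\to0$.

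Given $\varepsilon>0$, choose $N$ such that $2g_n\le\varepsilon f_n$ for all $n\ge N$, and set $C_\varepsilon:=2\max_{n<N}g_n$. This gives
\begin{equation}
\norm{J(0)u}\le\varepsilon\norm{Fu}+C_\varepsilon\norm{u}
\end{equation}
on $\ell_0(\nnum)$. The same estimate also shows that $\ell_0(\nnum)$ is a core for $F$ (\cref{prop:diagonal-term}).

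To pass to the full domain $\domain(F)$, take $u\in\domain(F)$ and truncations $u^{(N)}\in\ell_0(\nnum)$ with $u^{(N)}\to u$ and $Fu^{(N)}\to Fu$. The estimate makes $J(0)u^{(N)}$ Cauchy. Since $J(0)$ is closed, this gives $u\in\domain(J(0))$ and the same bound for $u$. Hence $J(0)$ is infinitesimally $F$-bounded.

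\textbf{Relative compactness with respect to $F^k$.} Take any $k\ge1$; indeed $k=1$ already works. The task is to show that $J(0)(F^k+1)^{-1}$ is compact.

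Split $J(0)=D+S$ into its two off-diagonal parts, the lower shift weighted by $a_{n-1}$ and the upper shift weighted by $a_n$. Each part composed with the diagonal operator $(F^k+1)^{-1}$ is a weighted shift. Its weights are $a_n/(f_{n+1}^k+1)$ and $a_{n-1}/(f_{n-1}^k+1)$ respectively, up to the index conventions.

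These weights tend to $0$ because $\alpha<k\beta$. A weighted shift whose weights tend to zero is the norm limit of its finite-rank truncations, and therefore is compact. Hence $J(0)(F^k+1)^{-1}$ is compact on $\ell^2(\nnum)$. The fact that $\domain(F^k)\subset\domain(J(0))$ follows from the first part, since $\domain(F^k)\subset\domain(F)$.

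\textbf{Main obstacle.} No real obstacle is expected. The only care needed is the domain bookkeeping: the relevant operator is the closure $J(0)$ rather than $\Jmin(0)$, so the bound has to be extended from $\ell_0(\nnum)$ to $\domain(F)$ by the closedness argument above. The statement is phrased with some $k$ presumably to leave room for later use, where higher powers might give Hilbert--Schmidt or trace-class properties. For the statement as given, $k=1$ suffices.
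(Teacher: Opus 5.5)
Your proof is correct. The relative-boundedness half follows the paper, but the compactness half takes a genuinely different and sharper route.

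For relative boundedness, both you and the paper compare $\max(a_{n-1},a_n)$ with $f_n$ on $\ell_0(\nnum)$ and then extend to $\domain(F)$. You keep the constant from $|x+y|^2\le 2|x|^2+2|y|^2$, which the paper's displayed bound drops. You also spell out the closedness argument that the paper compresses into ``$\ell_0(\nnum)$ is a core for both''. One wording slip: your estimate does not show that $\ell_0(\nnum)$ is a core for $F$. That comes from $F$ being diagonal (\cref{prop:diagonal-term}), which is what your truncation step actually uses.

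For compactness, the paper takes $k$ with $k(\beta-\alpha)>1$ and sets $T=J(0)(F^k+\iu)^{-1}$. It bounds $\|T-T_j\|^2$ by replacing every $|c_n|^2$ with $\|\psi\|^2$, i.e.\ by the tail $\sum_{n\ge j}\|Te_n\|^2$. This is in effect a Hilbert--Schmidt estimate: it needs the squared weights to be summable, and that is what forces $k$ to be large. You instead use that a weighted shift satisfies $\|W-WP_N\|=\sup_{n>N}|w_n|$, since images of distinct basis vectors are orthogonal. This only needs the weights $a_n/(f_{n+1}^k+1)$ and $a_{n-1}/(f_{n-1}^k+1)$ to tend to zero, so $k=1$ already works.

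Your version gives the stronger conclusion. Compactness of $J(0)(F+1)^{-1}$ implies compactness of $J(0)(F^k+1)^{-1}$ for all $k\ge 1$, but the converse fails in general. It is also exactly the hypothesis of the basic relatively-compact-perturbation form of Weyl's theorem for $J(\lambda)=\lambda F+J(0)$. The paper's route only adds the unused information that $J(0)(F^k+\iu)^{-1}$ is Hilbert--Schmidt for large $k$.
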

\begin{proof}
    We begin by proving infinitesimal relative boundedness of $J(0)$ with respect to $F$. Let $\epsilon > 0$.
    By \cref{hyp:conditions-an-fn}, and as $\beta>\alpha$, there exists $N_0 \in \nnum$ and $0<C<\epsilon^2/2$ such that $a_n \leq C f_n$ for all $n \geq N_0$.
    Let 
    \begin{equation}
        b^2 = \max_{n \leq N_0} (a_n^2+a_{n-1}^2 ) \, .
    \end{equation}
    Take $\psi\in\ell_0(\nnum)$, that is, $\psi = \sum_{n=0}^N c_n e_n$ for some $N \in \nnum$. We can assume $N \geq N_0$ without loss of generality.  Then,
    \begin{align}
        \norm{J(0)\psi}^2 & \leq \sum_{n=0}^N |c_n|^2 (a_n^2+a_{n-1}^2 ) \\
        & = \sum_{n=0}^{N_0} |c_n|^2 (a_n^2+a_{n-1}^2 ) + \sum_{n=N_0+1}^N |c_n|^2 (a_n^2+a_{n-1}^2 ) \\
        & \leq b^2 \sum_{n=0}^N |c_n|^2 + 2 C \sum_{n = 0}^N f_n^2 |c_n|^2 \\
        & = b^2 \norm{\psi}^2 + 2 C \norm{F \psi}^2 \, \\
        & < b^2 \norm{\psi}^2 + \epsilon^2 \norm{F \psi}^2. \,
    \end{align}
    As $\ell_0(\nnum)$ is a core for both $J(0)$ and $F$ (\cref{prop:diagonal-term}), this inequality proves $\domain(J(0))\supset\domain(F)$ and infinitesimal $F$-boundedness of $J(0)$.

    We turn to relative compactness.
    Let $k \in \nnum$ such that $k(\beta - \alpha) > 1$.
    We show that the operator $T=J(0) (F^k+\iu)^{-1}$ is compact, i.e. $J(0)$ is relatively compact with respect to $F^k$, see e.g. \cite[Theorem~VI.12]{reed-mmmp1-funkana-1980}. $T$ is bounded since $(F^k+\iu)^{-1}$ is well-defined and bounded (since $F^k$ is self-adjoint) and has values in $\domain(F^k)\subset\domain(F)\subset\domain(J(0))$. To show that it is compact, we construct the following approximating sequence of finite-rank operators: 
    \begin{equation}
        T_j = \sum_{i = 0}^j \braket{e_i, T \cdot}e_i \, .
    \end{equation}
    Given $\psi = \sum_{n = 0}^\infty c_n e_n$, we obtain
    \begin{align}
        T\psi - T_j \psi & = \sum_{n=0}^\infty (a_n e_{n+1} + a_{n-1}e_{n-1}) \frac{c_n}{f_n^k + \iu} - \sum_{n=0}^{j-1}(a_n e_{n+1} + a_{n-1}e_{n-1}) \frac{c_n}{f_n^k + \iu}\\
        & - a_{j-1}\frac{c_j}{f_j^k+\iu} e_{j-1} - a_j \frac{c_{j+1}}{f_{j+1}^k+\iu}e_j \\
        & = \sum_{n=j}^\infty (a_n e_{n+1} + a_{n-1}e_{n-1}) \frac{c_n}{f_n^k + \iu} - a_{j-1}\frac{c_j}{f_j^k+\iu} e_{j-1} - a_j \frac{c_{j+1}}{f_{j+1}^k+\iu}e_j \, .
    \end{align}
    As $f_n = n^\beta(1+O(1/n))$ and $a_n = n^\alpha(1+O(1/n))$, there exists a constant $C \in \rnum$ such that 
    \begin{equation}
        \left|\frac{a_n}{f_n^k+\iu}\right| \leq \frac{C}{n^{k(\beta-\alpha)}} \, .
    \end{equation}
    Hence, 
    \begin{align}
        \norm{T \psi-T_j \psi}^2 & \leq \sum_{n = j}^\infty \frac{a_n^2+a_{n-1}^2}{|f_n^k+\iu|^2}|c_n|^2 + \frac{a_{j-1}^2}{|f_j^k+\iu|^2}|c_j|^2 + \frac{a_j^2}{|f_{j+1}^k+\iu|^2}|c_{j+1}|^2 \\
        & \leq \sum_{n = j}^\infty \frac{2 C^2}{n^{2k(\beta-\alpha)}}|c_n|^2 + \frac{2C^2}{j^{2k(\beta-\alpha)}}|c_j|^2\,\\
        & \leq \left( \sum_{n = j}^\infty \frac{2 C^2}{n^{2k(\beta-\alpha)}} + \frac{2C^2}{j^{2k(\beta-\alpha)}}\right)\|\psi\|^2,\,
    \end{align}
    where we used $|c_n|^2\leq\|\psi\|^2=\sum_{n'}|c_{n'}|^2$ for all $n\in\nnum$. As $k(\beta-\alpha)>1$, we also have $2k(\beta-\alpha)>1$, therefore the series converges and vanishes as $j\to\infty$, whence
    \begin{equation}
        \lim_{j \to \infty} \norm{T-T_j}^2 = 0\,,
    \end{equation}
    which proves compactness and finishes the proof.
\end{proof}
We will hereafter denote by $P^\lambda(z),Q^\lambda(z)$ the orthogonal polynomials of first and second kind associated with the Jacobi operator $J(\lambda)$, see \cref{def:jacobi}.
\begin{proposition}
    \label{prop:spectrum-jlambda}
    For every $\lambda>0$, $J(\lambda)$ is an unbounded self-adjoint operator with domain $\domain(J(\lambda)) = \domain(F)$, is bounded from below, and has purely discrete spectrum consisting of eigenvalues with multiplicity one.
    Besides, given any eigenvalue $E$ of $J(\lambda)$, the corresponding eigenvector is given by the orthogonal polynomial of first kind $P^\lambda(E)$.
\end{proposition}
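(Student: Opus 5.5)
The plan is to treat $\Jmin(\lambda)=\Jmin(0)+\lambda F$ on $\ell_0(\nnum)$ as a perturbation of the diagonal operator $\lambda F$ and use Kato--Rellich. Fix $\lambda>0$. By \cref{prop:diagonal-term}, $\lambda F$ is self-adjoint, nonnegative, and $\ell_0(\nnum)$ is a core for it. By \cref{lem:relatively-compact}, $J(0)$ is infinitesimally $F$-bounded, hence infinitesimally $\lambda F$-bounded; the constant $\epsilon$ simply gets rescaled by $\lambda$. The estimate in that lemma is proved on $\ell_0(\nnum)$ and is symmetric with respect to $\ell_0(\nnum)$.

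Kato--Rellich then shows that $\lambda F+J(0)$, defined on $\domain(F)$, is self-adjoint, bounded from below, and essentially self-adjoint on every core of $F$, in particular on $\ell_0(\nnum)$. Its restriction to $\ell_0(\nnum)$ is $\Jmin(\lambda)$, so $J(\lambda)=\overline{\Jmin(\lambda)}$ equals this operator, and $\domain(J(\lambda))=\domain(F)$. Unboundedness follows since $\braket{e_n,J(\lambda)e_n}=\lambda f_n\to\infty$.

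For discreteness of the spectrum, I would use the fact that $\lambda F$ has compact resolvent (\cref{prop:diagonal-term}, $f_n\to\infty$). The second resolvent identity gives
\[
(J(\lambda)-z)^{-1}=(\lambda F-z)^{-1}\bigl(1+J(0)(\lambda F-z)^{-1}\bigr)^{-1}.
\]
For $|\Imag z|$ large, relative boundedness with bound $<1$ makes $\|J(0)(\lambda F-z)^{-1}\|<1$, so the second factor is bounded. A product of a compact and a bounded operator is compact, so $J(\lambda)$ has compact resolvent and purely discrete spectrum. This is where the relative-boundedness estimate does real work, so I would make sure the bound $\|J(0)(\lambda F-\iu\mu)^{-1}\|\le \epsilon+b/\mu$ is spelled out.

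For simplicity and the form of the eigenvectors, let $E$ be an eigenvalue with eigenvector $u\in\domain(J(\lambda))\subset\ell^2(\nnum)$. Then $\Jop(\lambda)u=Eu$ as sequences, so $u$ satisfies \cref{eq:generalized-eigenvalue} together with the initial condition \cref{eq:generalized-eigenvalue_initial}. The recurrence with $a_n>0$ determines $u$ from $u_0$. If $u_0=0$, the initial condition forces $u_1=0$ and hence $u=0$. So $u=u_0P^\lambda(E)$, which gives multiplicity one and shows the eigenvector is $P^\lambda(E)$. No serious obstacle is expected; the only care needed is checking the core and domain statements in Kato--Rellich.
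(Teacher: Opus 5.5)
Your proof is correct, and the Kato--Rellich step and the eigenvector argument are the same as the paper's. You are in fact more careful than the paper in two places. First, you use that Kato--Rellich gives essential self-adjointness on the core $\ell_0(\nnum)$ of $F$, which is exactly what identifies $\overline{\Jmin(\lambda)}$ with $\lambda F+J(0)$ on $\domain(F)$. Second, you give an explicit reason for unboundedness, namely $\braket{e_n,J(\lambda)e_n}=\lambda f_n\to\infty$.

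The genuine difference is the discreteness step. The paper uses the second half of \cref{lem:relatively-compact}, relative compactness of $J(0)$ with respect to $F^k$ for $k(\beta-\alpha)>1$. Combined with Weyl's essential spectrum theorem, this gives $\sigma_{\rm ess}(J(\lambda))=\sigma_{\rm ess}(\lambda F)=\emptyset$. You instead factor $J(\lambda)-z=(1+K)(\lambda F-z)$ on $\domain(F)$ with $K=J(0)(\lambda F-z)^{-1}$. For $z=\iu\mu$ with $\mu$ large, the infinitesimal bound gives $\|K\|\le\epsilon+b/\mu<1$, so the resolvent is a compact operator times a bounded one.

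What each route buys:
\begin{itemize}
\item Your route needs only the relative-boundedness half of \cref{lem:relatively-compact}. It needs no version of Weyl's theorem adapted to relative compactness with respect to a power of $F$, so it is more elementary and self-contained.
\item The paper's route, phrased as ``the essential spectra coincide'', would remain informative if the diagonal part had nonempty essential spectrum. For the claim at hand nothing is lost by your approach, since for self-adjoint operators a compact resolvent is equivalent to empty essential spectrum.
\end{itemize}
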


\begin{proof}
    By \cref{lem:relatively-compact}, $J(0)$ is infinitesimally relatively bounded with respect to $F$, and therefore with respect to $\lambda F$ for every $\lambda$. As $\lambda F$ is self-adjoint and nonnegative (\cref{prop:diagonal-term}), the Kato--Rellich theorem \cite[Theorem~X.12]{reed-mmmp2-fourier-1975} implies that $J(\lambda)$ is self-adjoint on $\domain(J(\lambda)) = \domain(\lambda F) = \domain(F)$ and bounded from below.
    
    Furthermore, again by \cref{lem:relatively-compact}, $J(0)$ is relatively compact with respect to $F^k$ for some $k \in \nnum$, and hence with respect to $(\lambda F)^k$.
    Using Weyl's essential spectrum theorem \cite[Theorem~XIII.14, Corollary 3]{reed-mmmp4-operators-2005}, this implies that the essential spectra of $J(\lambda)$ and $\lambda F$ are equal; by \cref{prop:diagonal-term},
    \begin{equation}
        \sigma_{ess}(J(\lambda)) = \sigma_{ess}(\lambda F) =  \sigma_{ess}( F)=\emptyset \, .
    \end{equation}
    Finally, let $E \in \sigma(J(\lambda))$. 
    Then, $E$ is an isolated eigenvalue of $J(\lambda)$, and the associated eigenvector $\psi^E$ satisfies
    \begin{equation}
        J(\lambda) \psi^E = \Jop(\lambda) \psi^E = E \psi^E \, , 
    \end{equation}
    which is the generalized eigenvalue equation, \cref{eq:generalized-eigenvalue}, for $\Jop(\lambda)$ with $z = E$ and boundary condition $a_0 \psi^E_1 + \lambda f_0 \psi^E_0=E \psi^E_0$.
    For a given boundary condition, there exists a unique solution (up to multiplicative factors) of the recurrence relation~\eqref{eq:generalized-eigenvalue}, which is precisely the orthogonal polynomial of first kind $P^\lambda(E)$.
\end{proof}
\begin{remark}
    The fact that all eigenvalues have multiplicity one could also be proven with a more abstract argument: $e_0$ is a cyclic vector for $J(\lambda)$ \cite[366]{schmudgen-unboundedselfadjointoperators-2012}, and hence the spectrum of $J(\lambda)$ is simple, cf.~\cite[Corollary~5.19]{schmudgen-unboundedselfadjointoperators-2012}.
\end{remark}

\begin{remark}\label{rem:lambda_can_be_complex}
    The definition $J(\lambda)=J(0)+\lambda F$ extends naturally to complex values of the parameter $\lambda$. In this case, $\lambda F$ is understood as a (generally non-symmetric) operator obtained by scalar multiplication of $F$ by $\lambda \in \mathbb{C}$. 
    Arguing as in the proof of \cref{prop:spectrum-jlambda}, and applying the Kato--Rellich theorem to the perturbation $\lambda F$, one finds that for every $\lambda \in \mathbb{C}\setminus\{0\}$ the operator $J(\lambda)$ is closed and satisfies $\domain(J(\lambda)) = \domain(F)$.
    In particular, the family $(J(\lambda))_{\lambda\in\mathbb{C}\setminus\{0\}}$ is well-defined as a family of closed operators with parameter-independent domain.
\end{remark}
Having established that the spectrum of $J(\lambda)$ is bounded from below and consists of simple eigenvalues, we proceed with determining their properties.
\begin{definition}
    \label{def:eigenvalues-ej}
    For $\lambda> 0$, we denote by $E^{(j)}(\lambda)$ the $j$th eigenvalue of $J(\lambda)$, starting from the bottom of the spectrum.
\end{definition}
We begin by showing that the eigenvalues $E_j(\lambda)$ are analytic in $\lambda$ and do not cross (\cref{cor:eigen-analytic}), and are monotonically decreasing as $\lambda\to0$ (\cref{lem:eigenvalues_monotonous}). To prove these results, we will use analytic perturbation theory; see, for instance, \cite[Chapter~XII.2]{reed-mmmp4-operators-2005} for an overview.
\begin{definition}
    Let $R \subset \cnum$ be a connected domain in the complex plane, and $(T(r))_{r\in R}$ a family of closed operators with non-empty resolvent set.
    $(T(r))_{r \in R}$ is 
    \begin{enumerate}[(i)]
        \item an \textit{analytic family in the sense of Kato} if, for all $r_0 \in R$, there exists $z_0 \in \rho(T(r_0))$ and a neighborhood of $r_0$ such that $z_0 \in \rho(T(r))$ for all $r$ near $r_0$, and that $r\mapsto(T(r)-z_0)^{-1}$ is an analytic operator-valued function of $r$ near $r_0$;
        \item an \textit{analytic family of type (A)} if $\domain(T(r)) = \domain$ is independent of $r \in R$, and if, for all $\psi \in \domain$, $r\mapsto T(r)\psi$ is an analytic vector-valued function of $r$.
    \end{enumerate}
\end{definition}
If $(T(r))_{r\in R}$ is an analytic family of type (A), it is analytic in the sense of Kato~\cite[16]{reed-mmmp4-operators-2005}.
The following lemma shows that $(J(\lambda))_{\lambda > 0}$ can be embedded in such a family. To this end, note (cf.~\cref{rem:lambda_can_be_complex}) that $J(\lambda)$ admits a natural extension to complex nonzero values of $\lambda$, and is a closed operator with $\domain(J(\lambda))=\domain(F)$.
\begin{lemma}
    \label{lem:analytic-lambda}
    There exists an open set $R \supset \rnum_+$ such that $(J(\lambda))_{\lambda \in R}$ is an analytic family of type A.
    In particular, it is an analytic family in the sense of Kato.
\end{lemma}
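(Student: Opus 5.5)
Our plan is to take $R=\{\lambda\in\cnum:\Real\lambda>0\}$, or more generously $\cnum\setminus\{0\}$ as in \cref{rem:lambda_can_be_complex}, and to verify the two defining properties of an analytic family of type (A) directly.

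\emph{Constant domain and closedness.} We first show that $J(\lambda)=J(0)+\lambda F$ is closed on $\domain(F)$ for every $\lambda\in R$. By \cref{lem:relatively-compact}, $J(0)$ is infinitesimally $F$-bounded, hence infinitesimally $\lambda F$-bounded for $\lambda\neq0$, since $\norm{F\psi}=|\lambda|^{-1}\norm{\lambda F\psi}$. The operator $\lambda F$ is closed on $\domain(F)$, being normal with spectrum $\{\lambda f_n\}_{n\in\nnum}$. The closedness part of the Kato--Rellich argument (stability of closedness under relatively bounded perturbations with bound $<1$, cf.~\cite[Theorem~X.12]{reed-mmmp2-fourier-1975} and Kato's general version) then gives that $J(0)+\lambda F$ is closed on $\domain(F)$. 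This operator coincides with $J(\lambda)$ as defined via the closure of $\Jmin(\lambda)$, since $\ell_0(\nnum)$ is a core for $F$ and the graph norms are equivalent. Hence $\domain(J(\lambda))=\domain(F)$ is independent of $\lambda$.

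\emph{Analyticity in $\lambda$.} For fixed $\psi\in\domain(F)$, we have $\lambda\mapsto J(\lambda)\psi=J(0)\psi+\lambda F\psi$, which is affine in $\lambda$ and hence entire as a vector-valued function.

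\emph{Nonempty resolvent set.} This is the only genuinely non-formal point, and we handle it locally around $\lambda_0\in R$. For real $\lambda_0>0$ it is immediate, since $J(\lambda_0)$ is self-adjoint by \cref{prop:spectrum-jlambda}. For complex $\lambda_0$, one can note that $\lambda F$ has compact resolvent. Moreover, $J(0)$ is relatively bounded with bound $0$, so for $z$ far from the sector containing $\{\lambda f_n\}$ we get $\norm{J(0)(\lambda F-z)^{-1}}<1$: split into the finitely many low modes, handled by the constant $b$, and the high modes, handled by the small relative bound. A Neumann series then gives $z\in\rho(J(\lambda))$. It suffices to have this for $\lambda$ in a neighbourhood of $\rnum_+$, which is all the statement asks, so we may shrink $R$ to a thin sector around $\rnum_+$ where this is clean.

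Finally, the statement that a type (A) family is analytic in the sense of Kato is quoted from \cite[16]{reed-mmmp4-operators-2005}. The main obstacle is only the bookkeeping to confirm that the closure of $\Jmin(\lambda)$ agrees with $J(0)+\lambda F$ on $\domain(F)$ for complex $\lambda$; this follows from the equivalence of graph norms on the core $\ell_0(\nnum)$.
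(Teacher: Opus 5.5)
Your proof is correct, but it is organized differently from the paper's.

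\textbf{The paper's route.} The paper argues locally. It fixes $\lambda_0>0$ and writes $J(\lambda)=J(\lambda_0)+(\lambda-\lambda_0)F$ on $\domain(F)$. Since $\domain(F)=\domain(J(\lambda_0))$, $F$ is $J(\lambda_0)$-bounded. It then invokes the standard result of Reed--Simon, Section XII.2: a closed operator with nonempty resolvent set, perturbed by $\beta$ times a relatively bounded operator, is of type (A) for small $|\beta|$. The set $R$ is the union of the neighbourhoods so obtained. Because the unperturbed operator there is the self-adjoint $J(\lambda_0)$, closedness and a nonempty resolvent set for nearby complex $\lambda$ come out of that black box. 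The price is that $R$ is left unspecified.

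\textbf{Your route.} You take $\lambda F$ as the unperturbed operator and treat $J(0)$ as a perturbation of relative bound zero, uniformly in $\lambda\neq0$. You then verify the defining properties by hand. This gives an explicit $R$. It also directly justifies the claim in the remark on complex $\lambda$ that $J(\lambda)$ is closed on $\domain(F)$, which the paper's proof uses only implicitly.

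\textbf{No need to shrink $R$.} Since $F\geq0$, take $z=-t\lambda$ with $t>0$. Then $\norm{(\lambda F-z)^{-1}}\leq 1/(|\lambda|t)$ and $\norm{F(\lambda F-z)^{-1}}\leq 1/|\lambda|$. Hence $\norm{J(0)(\lambda F-z)^{-1}}\leq \epsilon/|\lambda|+b_\epsilon/(|\lambda|t)<1$ once $\epsilon<|\lambda|/2$ and $t$ is large. So your Neumann-series step already works on all of $\cnum\setminus\{0\}$, and the thin-sector restriction is unnecessary.

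\textbf{Reference.} For the closedness step, the appropriate reference is Kato's stability theorem for closed operators (Theorem IV.1.1 in Kato's book), not Reed--Simon X.12, which is the self-adjoint version.
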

\begin{proof}
   Fix $\lambda_0>0$. For $\lambda$ in a (complex) neighborhood of $\lambda_0$ we can write
    \begin{equation}
        J(\lambda)=J(\lambda_0)+(\lambda-\lambda_0)F
        \quad \text{on } \domain(F).
    \end{equation}
    As $\domain(F) = \domain(J(\lambda_0))$, $F$ is $J(\lambda_0)$-bounded~\cite[Lemma~6.2]{teschl-mathematicalmethodsquantum-2009}, and it follows from standard perturbation results (see, e.g., \cite[Section~XII.2]{reed-mmmp4-operators-2005}) that there exists a neighborhood $O_{\lambda_0}\subset\cnum$ of $\lambda_0$ such that $(J(\lambda))_{\lambda\in O_{\lambda_0}}$ is an analytic family of type (A).
    Since this holds for every $\lambda_0>0$, the union
    \begin{equation}
        R=\bigcup_{\lambda_0>0} O_{\lambda_0}
    \end{equation}
    is an open set containing $\rnum_+$, and $(J(\lambda))_{\lambda\in R}$ is an analytic family of type (A). By \cite[Theorem~XII.9]{reed-mmmp4-operators-2005}, it is also an analytic family in the sense of Kato.
\end{proof}

\begin{corollary}
    \label{cor:eigen-analytic}
    The functions $\mathbb{R}_+\ni\lambda\mapsto E^{(j)}(\lambda)\in\mathbb{R}$ are real-analytic and do not cross, that is, $E^{(j)}(\lambda)=E^{(k)}(\lambda)$ for some $\lambda>0$ implies $j=k$. Moreover, the associated eigenvectors can be chosen to depend analytically on $\lambda$.
\end{corollary}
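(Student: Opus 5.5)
We will deduce the corollary from \cref{lem:analytic-lambda} together with the Kato--Rellich theorem on analytic perturbation of isolated eigenvalues; see \cite[Theorem~XII.8 and Theorem~XII.13]{reed-mmmp4-operators-2005}. By \cref{lem:analytic-lambda}, $(J(\lambda))_{\lambda\in R}$ is an analytic family of type (A) on an open set $R\supset\rnum_+$. For real $\lambda>0$ each $J(\lambda)$ is self-adjoint (\cref{prop:spectrum-jlambda}), so the family is self-adjoint on $\rnum_+$ in the sense that $J(\lambda)^\ast=J(\lambda^\ast)$; this holds because $J(\lambda)^\ast=J(0)^\ast$-type adjoint of $J(0)+\lambda F$ on the common domain $\domain(F)$ equals $J(0)+\lambda^\ast F$, using that $J(0)$ is $F$-bounded with bound $0$.

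Fix $\lambda_0>0$ and an eigenvalue $E_0$ of $J(\lambda_0)$. By \cref{prop:spectrum-jlambda}, $E_0$ is isolated and of multiplicity one. The Kato--Rellich theorem, or the Riesz projection argument $P(\lambda)=-\frac{1}{2\pi\iu}\oint_{|z-E_0|=\varepsilon}(J(\lambda)-z)^{-1}\,dz$, then gives the following: for $\lambda$ near $\lambda_0$ there is exactly one eigenvalue $E(\lambda)$ of $J(\lambda)$ inside the circle, it is simple, and $E(\lambda)$ is analytic. Explicitly, $E(\lambda)=\operatorname{tr}(J(\lambda)P(\lambda))$. The associated eigenvector $P(\lambda)\psi_0$ is analytic and nonzero near $\lambda_0$, and normalisation preserves real analyticity. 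Alternatively, we could take $P^\lambda(E(\lambda))$, which is analytic in $(\lambda,E)$ because it is polynomial in $E$ and $\lambda$ for each $n$. In either case we obtain local analytic branches of eigenvalues and eigenvectors.

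The main step is to identify these local branches with the ordered functions $E^{(j)}(\lambda)$ and to show that distinct branches do not cross. Non-crossing at a single $\lambda$ is immediate, since every eigenvalue is simple (\cref{prop:spectrum-jlambda}), so $E^{(j)}(\lambda)=E^{(k)}(\lambda)$ forces $j=k$ by the definition of the ordering. For analyticity of $E^{(j)}$ we need the ordering to be locally stable. Because the spectrum is discrete and bounded from below, for fixed $\lambda_0$ we pick $\Lambda$ strictly between $E^{(j)}(\lambda_0)$ and $E^{(j+1)}(\lambda_0)$ with $\Lambda\in\rho(J(\lambda_0))$. We also choose a level below the bottom of the spectrum, uniformly for $\lambda$ near $\lambda_0$; this uses that the lower bound from Kato--Rellich depends continuously on $\lambda$. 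Norm-resolvent continuity, which holds for type-(A) families, then shows that the number of eigenvalues below $\Lambda$ is locally constant. Each of the first $j+1$ eigenvalues of $J(\lambda_0)$ continues as an analytic simple branch near $\lambda_0$, and by simplicity these branches stay separated for $\lambda$ close to $\lambda_0$. Hence $E^{(j)}(\lambda)$ coincides locally with one analytic branch. Since $\lambda_0$ was arbitrary, $E^{(j)}$ is real-analytic on $\rnum_+$, and the eigenvectors can be chosen analytic locally.

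The remaining subtlety, which we expect to be the main obstacle, is making the eigenvector choice global on $\rnum_+$. Locally analytic normalized eigenvectors are unique up to a phase $\pm1$ if we insist on real vectors, so patching them requires only a consistent sign. We avoid this issue altogether by using the canonical choice $P^\lambda(E^{(j)}(\lambda))$, which has $P_0=1$ and is therefore globally defined, nonvanishing, and analytic as a map into $\ell^2(\nnum)$. It agrees locally with a scalar analytic multiple of $P(\lambda)\psi_0$, namely the multiple $1/(P(\lambda)\psi_0)_0$, where the $0$-th entry is nonzero because eigenvectors satisfy $P_0\neq0$.
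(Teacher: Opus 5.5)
Your proposal is correct and takes essentially the paper's route: analytic perturbation theory for simple isolated eigenvalues \cite[Theorem~XII.8]{reed-mmmp4-operators-2005}, applied to the type-(A) family of \cref{lem:analytic-lambda}, with simplicity taken from \cref{prop:spectrum-jlambda}. It is more complete than the paper's argument, because you supply the step the paper leaves implicit: the ordered functions $E^{(j)}$ coincide locally with the analytic branches, since norm-resolvent continuity and a $\lambda$-uniform lower bound keep the number of eigenvalues below a spectral gap locally constant. You also give a canonical global eigenvector $P^\lambda(E^{(j)}(\lambda))$, which equals the Riesz-projected vector divided by its nonvanishing $0$th entry. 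Your identity $J(\lambda)^\ast=J(\lambda^\ast)$ is garbled as written and in any case unnecessary, since the branches are real for real $\lambda$ by self-adjointness of $J(\lambda)$. Likewise, the $\ell^2$-analyticity of $P^\lambda(E^{(j)}(\lambda))$ comes from that Riesz-projection identification, not from componentwise polynomiality alone.
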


\begin{proof}
    By \cref{prop:spectrum-jlambda}, all eigenvalues $E^{(j)}(\lambda)$ are simple for every $\lambda>0$. Since $(J(\lambda))_{\lambda\in R}$ is an analytic family in the sense of Kato (\cref{lem:analytic-lambda}), it follows from \cite[Theorem~XII.8]{reed-mmmp4-operators-2005} that the eigenvalues and the corresponding spectral projections depend analytically on $\lambda$. In particular, the eigenvalues are real-analytic on $\mathbb{R}_+$. Since the eigenvalues are simple for every $\lambda>0$, analytic perturbation theory implies that the analytic eigenvalue branches cannot intersect; hence crossings are absent. Finally, analyticity of the spectral projections implies that the corresponding eigenvectors can be chosen to depend analytically on $\lambda$.
\end{proof}
Additionally, we obtain that the eigenvalues are monotonically decreasing as $\lambda \to 0$:
\begin{lemma}\label{lem:eigenvalues_monotonous}
    For all $\lambda > 0$, 
    \begin{equation}\label{eq:hf}
        \diff{E^{(j)}(\lambda)}{\lambda} = \braket{\psi_j(\lambda), F \psi_j(\lambda)} \geq 0 \, , 
    \end{equation}
    where $\psi_j(\lambda)$ is the eigenvector associated with $E^{(j)}(\lambda)$.
\end{lemma}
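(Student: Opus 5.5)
The plan is the Hellmann--Feynman argument, carried out carefully for the unbounded perturbation $F$. Fix $\lambda_0>0$ and $j$. By \cref{cor:eigen-analytic}, $E^{(j)}$ is real-analytic near $\lambda_0$, and the eigenvector can be chosen analytic and normalized, $\psi_j(\lambda)$ with $\|\psi_j(\lambda)\|=1$. Here analyticity is as an $\ell^2$-valued function. We also need $\psi_j(\lambda)\in\domain(F)$, which holds because $\domain(J(\lambda))=\domain(F)$ by \cref{prop:spectrum-jlambda}.

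Rather than differentiating $\psi_j$ in the graph norm, I will use a difference-quotient argument that needs only $\ell^2$-continuity of $\psi_j$. For $\lambda$ near $\lambda_0$, since $J(\lambda)=J(\lambda_0)+(\lambda-\lambda_0)F$ on $\domain(F)$ and both operators are self-adjoint,
\begin{equation}
E^{(j)}(\lambda)\braket{\psi_j(\lambda_0),\psi_j(\lambda)}=\braket{\psi_j(\lambda_0),J(\lambda)\psi_j(\lambda)}=E^{(j)}(\lambda_0)\braket{\psi_j(\lambda_0),\psi_j(\lambda)}+(\lambda-\lambda_0)\braket{F\psi_j(\lambda_0),\psi_j(\lambda)}.
\end{equation}
The second equality moves $J(\lambda_0)$ onto $\psi_j(\lambda_0)$ by symmetry, and the $F$-term is written symmetrically in the same way. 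Dividing by $\lambda-\lambda_0$ gives
\begin{equation}
\frac{E^{(j)}(\lambda)-E^{(j)}(\lambda_0)}{\lambda-\lambda_0}\braket{\psi_j(\lambda_0),\psi_j(\lambda)}=\braket{F\psi_j(\lambda_0),\psi_j(\lambda)}.
\end{equation}
As $\lambda\to\lambda_0$, continuity of $\psi_j$ in $\ell^2$ makes the inner product on the left tend to $1$. The right-hand side tends to $\braket{F\psi_j(\lambda_0),\psi_j(\lambda_0)}$, because $F\psi_j(\lambda_0)$ is a fixed vector in $\ell^2$. This yields \cref{eq:hf}.

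Nonnegativity then follows from \cref{prop:diagonal-term}. Concretely, $F\geq0$ gives $\braket{\psi,F\psi}=\sum_n f_n|\psi_n|^2\geq0$, using $f_n\geq 0$ from \cref{hyp:conditions-an-fn}.

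The only delicate point is justifying that the eigenvector branch is continuous in $\ell^2$ and lies in $\domain(F)$. I will take this from the analyticity of the spectral projections in \cref{cor:eigen-analytic}, defining $\psi_j(\lambda)=P_j(\lambda)\psi_j(\lambda_0)/\|P_j(\lambda)\psi_j(\lambda_0)\|$, where $P_j(\lambda)$ is the spectral projection onto the $j$th eigenvalue. This avoids any need to differentiate inside $\domain(F)$.
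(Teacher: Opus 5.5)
Your proof is correct. It rests on the same Hellmann--Feynman idea as the paper, but you justify that identity yourself instead of citing it.

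The paper invokes the generalized Hellmann--Feynman theorem of Esteve et al. It supplies as inputs the analyticity of eigenvalues and eigenvectors coming from the type (A) family, together with $\frac{\mathrm{d}}{\mathrm{d}\lambda}J(\lambda)=F$ in the strong sense on the fixed domain $\domain(F)$.

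You derive the identity from scratch. Pairing $J(\lambda)\psi_j(\lambda)=E^{(j)}(\lambda)\psi_j(\lambda)$ against the fixed vector $\psi_j(\lambda_0)$ gives an exact formula for the difference quotient. This uses $J(\lambda)-J(\lambda_0)=(\lambda-\lambda_0)F$ on $\domain(F)$ and the symmetry of $J(\lambda_0)$ and $F$. As a result you only need $\ell^2$-continuity of a normalized eigenvector branch with values in $\domain(F)$. No derivative of $\psi_j$ is ever taken, neither in $\ell^2$ nor in graph norm. Differentiability of $E^{(j)}$ also becomes an output of the argument rather than an assumption.

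The paper's route is shorter and leaves the domain bookkeeping to the cited theorem. Yours is self-contained and needs weaker regularity of the eigenvectors (continuity instead of analyticity). It also shows why the unboundedness of $F$ is harmless: $\domain(J(\lambda))=\domain(F)$ does not depend on $\lambda$.
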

\begin{proof}
    By \cref{lem:analytic-lambda}, both the eigenvalues $E^{(j)}(\lambda)$ and the eigenvectors $\psi_j(\lambda)$ are analytic in $\lambda$.
    Furthermore, for all $\lambda > 0$, $\domain(J(\lambda)) = \domain(F)$, and 
    \begin{equation}
        \diff{J(\lambda)}{\lambda} = F 
    \end{equation}
    in the strong sense on  $\domain(F)$.
    Hence, we can apply the Hellmann--Feynman theorem \cite{esteve-generalizationhellmannfeynman-2010} to get the equality in~\eqref{eq:hf}. As $F$ is nonnegative, the claim follows.
\end{proof}

Let us quickly summarize the findings of this section.
For $\lambda = 0$, the self-adjoint extensions $J_t$ of the limit circle operator $\Jmin(0)$ have purely discrete spectra consisting of eigenvalues with multiplicity one, which are distinct for different $t$, cf.~\cref{prop:jacobi-limit-circle-spectrum}.
For $\lambda > 0$, the spectrum of each operator $J(\lambda)$ is also purely discrete, and its eigenvalues $E^{(j)}(\lambda)$ defined as in \cref{def:eigenvalues-ej} satisfy the following:
\begin{enumerate}[(i)]
    \item $E^{(j)}(\lambda)$ is of multiplicity $1$, cf.~\cref{prop:spectrum-jlambda};
    \item $E^{(j)}(\lambda)$ is analytic in $\lambda$, and for different $j$, the $E^{(j)}(\lambda)$ do not cross, cf.~\cref{cor:eigen-analytic}; 
    \item For all $j \in \nnum$, $E^{(j)}(\lambda)$ is monotonically decreasing as $\lambda\downarrow0$, cf.~\cref{lem:eigenvalues_monotonous}.
\end{enumerate}
In \cref{prop:limit-eigenvalues}, we will additionally prove that for all $j \in \nnum$ $\lim_{\lambda \to 0}E^{(j)}(\lambda) = -\infty$.
These properties will be essential in the proof of \cref{thm:main-result}.

\subsection{Proof of \texorpdfstring{\cref{thm:main-result}}{the main theorem}}
\label{sec:main-theorem-proof}

In this section, we prove \cref{thm:main-result}. Throughout the section, we assume \cref{hyp:conditions-an-fn}. To this end, we require bounds on solutions of the generalized eigenvalue equation, \cref{eq:generalized-eigenvalue}, that are uniform in $\lambda \geq 0$.
For $\Jop(\lambda)$, \cref{eq:generalized-eigenvalue} reads
\begin{equation}
    \label{eq:recall-generalized-eigenvalue-lambda}
    a_n u_{n+1} + (\lambda f_n -z)u_n + a_{n-1} u_{n-1}= 0 \quad \forall n \geq 1 \,.
\end{equation}
A crucial ingredient is \cref{thm:main-asymptotics}, a technical result concerning square-summable solutions of \cref{eq:recall-generalized-eigenvalue-lambda}, which is restated here. Due to its technical nature (see \cref{rem:asymptotics}), the proof is postponed to \cref{sec:asymptotics-of-eigenvectors}.
\begin{theorem}
    \label{prop:asymptotics}
    Let $\Omega \subset \cnum$ be a compact set and, for every $z \in \Omega$, let $u^{\lambda,z}$ be a square-summable solution of \cref{eq:recall-generalized-eigenvalue-lambda}.
    Then, there exist $\Lambda > 0$ and $C > 0$ such that, for all $0 \leq \lambda < \Lambda$ and $z \in \Omega$,
    \begin{equation}
        |u^{\lambda,z}_n| \leq C \frac{|u^{\lambda,z}_0|+|u^{\lambda,z}_1|}{n^{\alpha/2-1/6}} \quad \forall n \geq 1\, .
    \end{equation}
\end{theorem}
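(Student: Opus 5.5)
The plan is a discrete WKB (Liouville--Green) analysis of \cref{eq:recall-generalized-eigenvalue-lambda}, uniform in $\lambda\in[0,\Lambda)$ and $z\in\Omega$, split into three regions according to the sign of the discriminant of the local characteristic equation. Dividing by $a_n$ and writing $\mu_n=(\lambda f_n-z)/a_n$, the recurrence reads $u_{n+1}+\mu_n u_n+(a_{n-1}/a_n)u_{n-1}=0$. The local characteristic roots solve $w^2+\mu_n w+1\approx0$.

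\emph{Elliptic region} ($|\lambda f_n/a_n|<2$). The roots are close to the unit circle, so solutions oscillate with amplitude of order $a_n^{-1/2}\bigl(4-(\lambda f_n/a_n)^2\bigr)^{-1/4}$.

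\emph{Hyperbolic region} ($\lambda f_n/a_n>2$). One solution grows exponentially and one decays.

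\emph{Turning point}. The two regions are separated by the point $n_\ast(\lambda)$ where $\lambda f_n/a_n=2$. By \cref{hyp:conditions-an-fn}, $n_\ast\sim(2/\lambda)^{1/(\beta-\alpha)}\to\infty$ as $\lambda\to0$. For $\lambda=0$ there is no turning point, and the elliptic analysis alone gives the stronger bound $|u_n|\lesssim a_n^{-1/2}$, in the spirit of \cite[Theorem~25]{swiderski-spectralpropertiesblock-2018}.

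\textbf{Step 1: initial segment.} Fix $N_0$ large, independent of $\lambda$. For $n\le N_0$, the transfer matrices are continuous in $(\lambda,z)$ on the compact set $[0,\Lambda]\times\Omega$. Hence $|u_n|+|u_{n+1}|\le C(|u_0|+|u_1|)$ for all $n\le N_0$.

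\textbf{Step 2: elliptic region, $N_0\le n\le n_\ast-Kn_\ast^{1/3}$.} Introduce Prüfer/Liouville--Green variables: write $(u_n,u_{n+1})$ in the basis of approximate eigenvectors of the transfer matrix, normalised by $a_n^{-1/2}(4-\mu_n^2)^{-1/4}$. The change of the normalised vector from $n$ to $n+1$ is a product of near-unitary matrices. Its deviation from unitarity is controlled by three quantities:
\begin{itemize}
\item the total variation of $\mu_n$;
\item the variation of $a_{n-1}/a_n$ (\cref{lem:an-variation});
\item the summable term $|z|/a_n$.
\end{itemize}
Here \cref{hyp:conditions-an-fn}\labelcref{item:condition-monotonic} is essential. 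It gives $\lambda\mathcal V_n(f/a)\le\lambda f_n/a_n+\lambda C\le 2+\lambda C$ in this region, uniformly in $\lambda$. The only nonuniform contribution comes from the factor $(4-\mu_n^2)^{-1/4}$ near $n_\ast$. That variation is handled by a Grönwall-type estimate up to distance $Kn_\ast^{1/3}$ from the turning point. The outcome is
\[
|u_n|\le C(|u_0|+|u_1|)\,a_n^{-1/2}\bigl(4-\mu_n^2\bigr)^{-1/4}
\]
for every solution. Since $4-\mu_n^2\gtrsim (n_\ast-n)/n_\ast\ge K n_\ast^{-2/3}$, this gives $|u_n|\lesssim a_n^{-1/2}n_\ast^{1/6}\lesssim n^{-\alpha/2+1/6}$ for $n\sim n_\ast$. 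For $n\ll n_\ast$ the bound is even better.

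\textbf{Step 3: turning region, $|n-n_\ast|\le Kn_\ast^{1/3}$.} Substitute $u_n=(-1)^nv_n$. The recurrence then becomes a discrete analogue of $v''=s(n-n_\ast)v$, with slope $s\asymp 1/n_\ast$, so the natural scale is $h=s^{-1/3}\asymp n_\ast^{1/3}$. Using the turning-point results of \cref{app:airy,app:turning-point} (adapted from \cite{geronimo-wkbturningpoint-2009}), compare $v$ with $\Ai$ and $\Bi$ in the variable $(n-n_\ast)/h$, with errors uniform as $n_\ast\to\infty$.

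Square-summability forces the coefficient of the $\Bi$-type solution to vanish, so $u$ is the recessive, $\Ai$-type solution. Matching its amplitude with the elliptic bound at the left edge of the window shows $|u_n|\lesssim a_{n_\ast}^{-1/2}n_\ast^{1/6}$ throughout the window.

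\textbf{Step 4: hyperbolic region, $n\ge n_\ast+Kn_\ast^{1/3}$.} For real $z$, $|\mu_n|>2$ with $\mu_n>0$. A discrete convexity argument for $v_n$ then applies: the square-summable solution cannot increase without blowing up, so $|v_n|$ is nonincreasing. For complex $z\in\Omega$, the imaginary part is $O(1/a_n)$, which is summable and is absorbed by a perturbative Grönwall estimate. Combining the three regions gives
\[
|u_n|\le C(|u_0|+|u_1|)\,n^{-\alpha/2+1/6}\qquad\text{for all } n\ge1 .
\]
This exponent is the one that is square-summable exactly when $\alpha>4/3$.

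\textbf{Main obstacle.} I expect Step 3, and its interface with Step 2, to be the hardest: obtaining discrete Airy approximations whose error terms are uniform as the turning point escapes to infinity. This needs the $O(1/n^2)$ remainders in \cref{hyp:conditions-an-fn}, so that $\mu_n$ is close to linear on the $n_\ast^{1/3}$ scale, and a careful choice of $K$ so that the two regions overlap with controlled errors. My first attempt would be a Volterra summation equation for $v_n-c\,\Ai\bigl((n-n_\ast)/h\bigr)$, with the standard Olver weight and modulus functions for $\Ai$ and $\Bi$.
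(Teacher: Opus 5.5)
Your decomposition differs from the paper's mainly in where the turning region begins. The paper stops its elementary Tur\'an-determinant estimate at $N_2(\lambda)$, where $\lambda f_n/a_n\le 1$, which is still a macroscopic distance from the turning point. It then covers the whole macroscopic window $[N_1(\lambda),N_4(\lambda)]$ with the Langer-uniform Airy approximation of Geronimo--Bruno--Van Assche, whose envelope $g(x,h)\,|\Ai(h^{-2/3}\xi(x,h))|$ has the correct shape up to and through the turning point. Finally it matches \emph{backwards}: the Wronskian at $N_4(\lambda)-1$, together with the Riccati tail bound, controls $(B/A)u^{\mathrm d}$, and a point $M\in[N_1,N_2]$ with $|u^{\mathrm r}_M|\gtrsim (Mh)^{-\alpha/2}h^{1/6}$ transfers the Tur\'an bound into the window. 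Your forward propagation across a window of bounded Airy length would avoid that matching, but it puts all the weight on Step~2, and Step~2 does not work as you justify it.

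The gap is in Step~2. Write $u_n=(-1)^nv_n$ and $\mu_n=2\cos\theta_n$. In your normalised Liouville--Green frame the diagonal first-order terms do cancel. However, the two modes remain coupled at every step by an off-diagonal term of size $|\theta_{n+1}-\theta_n|/(2\sin\theta_n)\approx|\mu_{n+1}-\mu_n|/(4-\mu_n^2)\approx 1/(4(n_*-n))$, not of size $|\mu_{n+1}-\mu_n|$. Hypothesis~(iv) does not control this. Its sum up to $n_*-Kn_*^{1/3}$ is $\tfrac14\log(n_*^{2/3}/K)$, and Gr\"onwall turns it into an extra factor $n_*^{1/6}K^{-1/4}$ on top of the envelope $(4-\mu_n^2)^{-1/4}$. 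You would end up with $|u_n|\lesssim n^{-\alpha/2+1/3}$ near the turning point. That is square-summable only for $\alpha>5/3$, so it fails, for example, for $k=3$ ($\alpha=3/2$). A Tur\'an-type (discrete energy) functional does not help either: here it is essentially monotone and yields only the envelope $(4-\mu_n^2)^{-1/2}$, which is exactly why the paper stops that argument at $N_2(\lambda)$.

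The missing idea is cancellation. The coupling carries the phase $\exp(2\iu\sum_{k\le n}\theta_k)$. A second near-identity transformation, or equivalently summation by parts, leaves an error of size about $n_*^{1/2}(n_*-n)^{-3/2}$, i.e.\ $O(K^{-3/2})$ at the window edge; this is the discrete analogue of Olver's error-control function. Only this gives the sharp envelope. You should also state it for the normalised vector, i.e.\ bounds on both $v_n$ and $v_{n+1}-v_n$, the latter of relative size $\theta_n$, because that is the initial datum your Airy propagation in Step~3 needs.

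A smaller point concerns Step~4. Monotonicity alone gives $|u_n|\le|u_{n_*+Kn_*^{1/3}}|\sim n_*^{-\alpha/2+1/6}$, which is not $O(n^{-\alpha/2+1/6})$ for $n\gg n_*$. You also need a strict ratio bound once $\lambda f_n/a_n-2$ is bounded below; the paper obtains this from its Riccati invariant disc for $n\ge N_3(\lambda)$.
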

\begin{remark}
    \label{rem:asymptotics}
   For $\lambda = 0$, the result follows directly from \cite[Theorem~2]{swiderski-spectralpropertiesblock-2018}; see also \cref{thm:swiderski}. For $\lambda>0$, the proof is more involved and is deferred to \cref{sec:asymptotics-of-eigenvectors}. The origin and necessity of the term $\nicefrac{1}{6}$ are discussed in \cref{sec:asymptotics-discussion}, where we show that, within our proof strategy, the assumption $\alpha \geq \nicefrac{4}{3}$ is in fact necessary.
\end{remark}
Equipped with \cref{prop:asymptotics} and the results of the previous sections, we now turn to the proof of \cref{thm:main-result}. 
Recall that the operator $J(\lambda)$ is self-adjoint for $\lambda>0$, while $J(0)$ is not self-adjoint and admits a family $(J_t)_{t\in\bcnum}$ of maximally accretive or dissipative extensions, with $J_t$ being self-adjoint if and only if $t\in\brnum$. We adopt the following notation:
\begin{itemize}
    \item  For $\lambda>0$, we denote by $M(z,\lambda)$ and $G_{nm}(z,\lambda)$ the Weyl $m$-function and the Green function associated with the operator $J(\lambda)$ (see \cref{def:green}), that is,
    \begin{equation}\label{green_lambda}
        G_{nm}(z,\lambda) = \braket{e_n,(J(\lambda)-z)^{-1}e_m} \, ,\qquad M(z,\lambda)=G_{00}(z,\lambda).
    \end{equation}
    \item For $t\in\bcnum$, we denote by $m(z,t)$ and $g_{nm}(z,t)$ the Weyl $m$-function and the Green function associated with the operator $J_t$.
    We use lowercase letters to avoid confusion with $M(z,\lambda)$ and $G_{nm}(z,\lambda)$, that is,
    \begin{equation}\label{green_t}
        g_{nm}(z,t) = \braket{e_n,(J_t-z)^{-1}e_m} \, ,\qquad m(z,t)=g_{00}(z,t).
    \end{equation}
    \item For $\lambda\ge 0$, we denote by $P^\lambda(z)$ and $Q^\lambda(z)$ the orthogonal polynomials (see \cref{def:orthogonal-polynomials}) associated with $\Jop(\lambda)$, that is, the solutions of \cref{eq:recall-generalized-eigenvalue-lambda} with initial conditions
    \begin{alignat}{2}\label{eq:plambda-qlambda}
        P_0^{\lambda}(z) & = 1\, , & \quad  P_1^{\lambda}(z)& = (z-\lambda f_0)/a_0\, ,  \\
        Q_0^{\lambda}(z) & = 0 & \quad  Q_1^{\lambda}(z) & = 1/a_0 \, .
    \end{alignat} 
\end{itemize}
The following two results will be used to prove \cref{thm:main-result}~\ref{item:main-result-some-extension}.
\begin{lemma}
    \label{lem:pointwise-convergence}
    Let $(\lambda_j)_{j \in \nnum}\subset\rnum_+$ with $\lim_{j \to \infty} \lambda_j = 0$, and $(z_j)_{j \in \nnum}\subset\cnum$ with $\lim_{j \to \infty} z_j = z \in \cnum$.
    Then, for every $n \in \nnum$,
    \begin{align}
        \lim_{j \to \infty} P^{\lambda_j}_n(z_j) & = P^0_n(z) \,, \\
        \lim_{j \to \infty} Q^{\lambda_j}_n(z_j) & = Q^0_n(z) \,.
    \end{align}
\end{lemma}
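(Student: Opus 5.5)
The plan is an induction on $n$ that exploits the polynomial structure of the orthogonal polynomials. By \cref{def:orthogonal-polynomials} and the recurrence \cref{eq:recall-generalized-eigenvalue-lambda}, each $P^\lambda_n(z)$ and $Q^\lambda_n(z)$ is obtained from finitely many arithmetic operations on $z$, $\lambda$, the fixed numbers $f_0,\dots,f_{n-1}$, and $a_0,\dots,a_{n-1}$. Division occurs only by $a_m>0$. Hence, for fixed $n$, the maps $(\lambda,z)\mapsto P^\lambda_n(z)$ and $(\lambda,z)\mapsto Q^\lambda_n(z)$ are polynomials in the two variables $(\lambda,z)\in\cnum^2$, and in particular jointly continuous. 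The claim is then just continuity evaluated along the sequence $(\lambda_j,z_j)\to(0,z)$.

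To make this explicit, I would prove by strong induction on $n$ that $P^{\lambda_j}_n(z_j)\to P^0_n(z)$, and likewise for $Q$. For the base cases, $P^{\lambda_j}_0(z_j)=1=P^0_0(z)$ and $Q^{\lambda_j}_0=0=Q^0_0$. Also $P^{\lambda_j}_1(z_j)=(z_j-\lambda_jf_0)/a_0\to z/a_0=P^0_1(z)$, and $Q^{\lambda_j}_1=1/a_0$ for every $j$.

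For the inductive step with $n\ge1$, I would rewrite the recurrence as
\begin{equation*}
P^{\lambda_j}_{n+1}(z_j)=\frac{(z_j-\lambda_jf_n)P^{\lambda_j}_n(z_j)-a_{n-1}P^{\lambda_j}_{n-1}(z_j)}{a_n}.
\end{equation*}
By the induction hypothesis, and since $\lambda_jf_n\to0$ and $z_j\to z$, the right-hand side converges to $\bigl(zP^0_n(z)-a_{n-1}P^0_{n-1}(z)\bigr)/a_n=P^0_{n+1}(z)$. The identical argument applies to $Q$.

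No step is a real obstacle. The only points to keep in mind are that $n$ is fixed, so there is no uniformity issue in $n$, and that $a_n>0$ makes the division legitimate.
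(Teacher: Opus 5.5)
Your proposal is correct and follows essentially the same route as the paper: the paper also proves by induction via the three-term recurrence that $P^\lambda_n(z)$ and $Q^\lambda_n(z)$ are polynomials jointly in $(z,\lambda)$, so the limit follows from continuity. Your explicit sequential induction is a direct unpacking of that same argument.
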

\begin{proof}
   The orthogonal polynomials are, by definition, polynomial in $z$, and hence continuous in $z$. We now show by induction that they are also polynomial in $\lambda$. 
By definition, $P^\lambda_0$, $P^\lambda_1$, $Q^\lambda_0$, and $Q^\lambda_1$ are polynomial in $(z,\lambda)$, see \cref{eq:plambda-qlambda}.

Now let $n \geq 1$ and assume that, for all $k \leq n$, the functions $P^\lambda_k(z)$ and $Q^\lambda_k(z)$ are polynomial in $(z,\lambda)$. Using \cref{eq:recall-generalized-eigenvalue-lambda}, we obtain
\begin{align}
    P_{n+1}^\lambda(z) & = \frac{z-\lambda f_n}{a_n} P_n^\lambda(z) - \frac{a_{n-1}}{a_n} P_{n-1}^\lambda(z)\, , \\
    Q_{n+1}^\lambda(z) & = \frac{z-\lambda f_n}{a_n} Q_n^\lambda(z) - \frac{a_{n-1}}{a_n} Q_{n-1}^\lambda(z)\, .
\end{align}
By the induction hypothesis, it follows that $P^\lambda_{n+1}(z)$ and $Q^\lambda_{n+1}(z)$ are again polynomial in $(z,\lambda)$, which completes the induction.
\end{proof}
\begin{proposition}
    \label{prop:strong-conv}
    Let $z \in \cnum$ with $\Imag z \neq 0$, and $(\lambda_j)_{j \in \nnum}\subset\rnum_+$ such that $\lim_{j \to \infty}\lambda_j = 0$.
    Suppose that the Weyl $m$-functions of $J(\lambda_j)$ evaluated at $z$ converge, i.e. $\lim_{j \to \infty}M(z,\lambda_j) = p \in \cnum$. 
    Then there exists $t\in \brnum$ such that $p = m(z,t)$, and $J(\lambda_j)$ converges to the self-adjoint operator $J_t$ in the strong resolvent sense.
\end{proposition}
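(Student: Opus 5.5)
The plan is to pass to the limit in the Weyl solutions of $J(\lambda_j)$ in $\ell^2$-norm, identify the limit as the Weyl solution of a self-adjoint extension $J_t$, and then reconstruct the resolvents column by column.

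\textbf{Step 1: convergence of the Weyl solutions.} By \cref{lem:weyl-solution}, for each $j$ the sequence
\begin{equation}
u^{j} = M(z,\lambda_j)P^{\lambda_j}(z)+Q^{\lambda_j}(z)
\end{equation}
is the square-summable solution of \cref{eq:recall-generalized-eigenvalue-lambda}, and by \cref{lem:green} it equals $(J(\lambda_j)-z)^{-1}e_0$. Set $u=pP^0(z)+Q^0(z)$. By \cref{lem:pointwise-convergence} and the assumption $M(z,\lambda_j)\to p$, we have $u^j_n\to u_n$ for every $n$.

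Next apply \cref{prop:asymptotics} with $\Omega=\{z\}$. For $j$ large (so that $\lambda_j<\Lambda$) this gives
\begin{equation}
|u^j_n|\le C\,(|u^j_0|+|u^j_1|)\,n^{-(\alpha/2-1/6)}, \qquad n\ge 1.
\end{equation}
The prefactor is bounded uniformly in $j$, because $u^j_0=M(z,\lambda_j)$ and $u^j_1=M(z,\lambda_j)(z-\lambda_jf_0)/a_0+1/a_0$ both converge. Since $\alpha>\nicefrac{4}{3}$, we have $2(\alpha/2-1/6)=\alpha-1/3>1$, so the majorant is square-summable. Dominated convergence then yields $u\in\ell^2(\nnum)$ and $u^j\to u$ in $\ell^2(\nnum)$. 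This is where the uniform bound is indispensable.

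\textbf{Step 2: identifying the extension.} This is the step I expect to be the main obstacle. The limit $u$ satisfies $\Jop(0)u=zu+e_0$, so $u\in\domain(J_{\max}(0))$; what must be shown is that $u$ lies in $\domain(J_t)$ for some \emph{real} $t$.

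For each $j$, self-adjointness of $J(\lambda_j)$ gives
\begin{equation}
\langle e_0,u^j\rangle=\langle (J(\lambda_j)-z)u^j,u^j\rangle,
\end{equation}
hence $\Imag M(z,\lambda_j)=\Imag z\,\|u^j\|^2$. Passing to the limit using Step 1 gives $\Imag p=\Imag z\,\|u\|^2$.

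On the other hand, Green's identity for $J_{\max}(0)$ expresses
\begin{equation}
\langle u,J_{\max}(0)u\rangle-\langle J_{\max}(0)u,u\rangle
\end{equation}
as the boundary form $\lim_n W_n[\bar u,u]$. In the boundary triplet of \cref{prop:jacobi-boundary-triplets} this equals a multiple of $\Imag\bigl(\overline{\Gamma_0u}\,\Gamma_1u\bigr)$, while the left-hand side equals $2\iu\bigl(\Imag z\,\|u\|^2-\Imag p\bigr)$. Combining the two, $\Gamma_0u$ and $\Gamma_1u$ have a real ratio. Writing $u=\psi_0+c_0Q(0)+c_1P(0)$, this means $u\in\domain(J_t)$ with $t=c_1/c_0\in\rnum$, or $t=\infty$ if $c_0=0$; the sign conventions in $\Gamma_1$ need checking here.

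Then $(J_t-z)u=e_0$, so $u=(J_t-z)^{-1}e_0$, and taking the $0$th component gives $p=m(z,t)$.

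\textbf{Step 3: strong resolvent convergence.} By \cref{lem:green},
\begin{equation}
\bigl((J(\lambda_j)-z)^{-1}e_m\bigr)_n=
\begin{cases}
u^j_nP^{\lambda_j}_m(z), & n\ge m,\\[2pt]
u^j_mP^{\lambda_j}_n(z), & n<m,
\end{cases}
\end{equation}
and the same formula with $u$, $P^0$ describes $(J_t-z)^{-1}e_m$.

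The finitely many entries with $n<m$ converge by \cref{lem:pointwise-convergence}. The tail $n\ge m$ converges in $\ell^2$ by Step 1, since $P^{\lambda_j}_m(z)\to P^0_m(z)$. Hence
\begin{equation}
(J(\lambda_j)-z)^{-1}e_m\to(J_t-z)^{-1}e_m \quad\text{for every } m.
\end{equation}
The resolvents are uniformly bounded by $|\Imag z|^{-1}$, so by density of $\ell_0(\nnum)$ we get strong convergence at this fixed nonreal $z$.

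By \cite[Theorem~VIII.19]{reed-mmmp1-funkana-1980}, convergence at a single nonreal point suffices for strong resolvent convergence of self-adjoint operators. If needed, the point $\bar z$ can be treated via $m(\bar z,t)=\overline{m(z,t)}$ (\cref{lem:weyl-m-flt}) and adjoints.
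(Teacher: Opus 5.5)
Your proposal is correct, but it takes a different route from the paper at the decisive point: how you learn that the limiting extension is self-adjoint.

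The paper's route is as follows.
\begin{itemize}
\item It first uses the M\"obius representation of \cref{lem:weyl-m-flt} to pick the unique $t\in\bcnum$ with $m(z,t)=p$. A priori this $t$ may be non-real, so $J_t$ is only known to be maximally dissipative or accumulative.
\item It proves column-wise convergence of the resolvents, by dominated convergence with the bound of \cref{prop:asymptotics}, at both $z$ and $z^\ast$. The limits are $(J_t-z)^{-1}$ and $(J_{t^\ast}-z^\ast)^{-1}$.
\item It invokes the Trotter--Kato theorem to produce a self-adjoint strong resolvent limit $A$. Only at the end does $A=J_t=J_{t^\ast}$ force $t\in\brnum$.
\end{itemize}

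Your route is different.
\begin{itemize}
\item You use the uniform bound to upgrade convergence of the Weyl solutions to $\ell^2$-convergence.
\item You pass the identity $\Imag M(z,\lambda_j)=\Imag z\,\|u^j\|^2$ to the limit.
\item You read off from Green's identity for the boundary triplet that $\overline{\Gamma_0u}\,\Gamma_1u\in\rnum$. So $u$ satisfies a self-adjoint boundary condition and $p$ lies on the Weyl circle. This is the classical Weyl-circle characterization.
\end{itemize}

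What your route buys is that it dispenses with both the Nevanlinna-matrix formula and Trotter--Kato. It needs only the one-point criterion for strong resolvent convergence of self-adjoint operators that you cite. It also makes the role of the uniform bound very transparent. With mere pointwise convergence, Fatou's lemma would only give $\|u\|^2\le \Imag p/\Imag z$, i.e.\ $p$ in the closed Weyl disk. That still allows a non-self-adjoint limit, which is exactly the bootstrapping issue of \cref{rem:bootstrap}; norm convergence turns the inequality into equality. The paper's argument, in exchange, is more black-box and reuses the explicit parametrization it needs elsewhere anyway.

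One small correction: your aside about treating $z^\ast$ ``via adjoints'' would not work as stated, since taking adjoints only preserves weak convergence. It is also unnecessary. The one-point criterion already covers both half-planes, or you can rerun your argument with $M(z^\ast,\lambda_j)=\overline{M(z,\lambda_j)}\to\overline{p}$.
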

\begin{proof}
    By \cref{lem:weyl-m-flt}, for each $t \in \bcnum$ the function
\begin{equation}
z' \mapsto -\frac{A(z',0) + C(z',0)t}{B(z',0) + D(z',0)t},
\end{equation}
with $z'\mapsto A(z',0),B(z',0),C(z',0),D(z',0)$ being entire functions determined by the orthogonal polynomials $P^0_n(z),Q^0_n(z)$, defines a meromorphic function of $z'$ which is well-defined at $z'$ if and only if $z' \in \rho(J_t)$, in which case it coincides with the Weyl $m$-function $m(z,t)$. Since the above expression is a fractional linear transform in $t$, there exists a unique $t \in \bcnum$ such that
\begin{equation}
-\frac{A(z,0) + C(z,0)t}{B(z,0) + D(z,0)t} = p.
\end{equation}
By the characterization of the resolvent set, this implies $z \in \rho(J_t)$, and hence $m(z,t)=p$.

    We claim that $J(\lambda_j)$ converges, in the strong resolvent sense, to the operator $J_t$, and that, in particular, $t\in\brnum$. We begin by showing that the Green functions of $J(\lambda_j)$, evaluated at $z$, converge to the one of $J_t$. By \cref{lem:green},
    \begin{align}
        g_{nm}(z,t) & = \begin{cases}
            (m(z,t)P^0_n(z)+Q^0_n(z)) P^0_m(z) & n \geq m \\
            (m(z,t)P^0_m(z)+Q^0_m(z)) P^0_n(z) & n \leq m 
        \end{cases}\, , \\\label{eq:gnmzl}
        G_{nm}(z,\lambda) & = \begin{cases}
            (M(z,\lambda)P^\lambda_n(z)+Q^\lambda_n(z)) P^\lambda_m(z) & n \geq m \\
            (M(z,\lambda)P^\lambda_m(z)+Q^\lambda_m(z)) P^\lambda_n(z) & n \leq m 
        \end{cases}\, ,
    \end{align}
    whence, by \cref{lem:pointwise-convergence} and the assumption $\lim_{j \to \infty}M(z,\lambda_j) = p \in \cnum$,
    \begin{equation}
        \label{proofeq:pointwise}
        \lim_{j \to \infty} G_{nm}(z,\lambda_j) = g_{nm}(z,t) \quad \forall n,m \in \nnum \, .
    \end{equation}
    Inferring strong resolvent convergence from pointwise convergence of the Green functions is highly nontrivial, as we do not a priori know that the limiting operator $J_t$ is self-adjoint, cf. \cref{rem:bootstrap}.
    We will use \cref{proofeq:pointwise}, together with \cref{prop:asymptotics}, to show that, for all $\psi\in\ell^2(\nnum)$,
    \begin{equation}
        \lim_{j\to\infty}(J(\lambda_j)-z)^{-1}\psi=(J_t-z)^{-1}\psi\quad\text{and}\quad\lim_{j\to\infty}(J(\lambda_j)-z^*)^{-1}\psi=(J_{t^*}-z^*)^{-1}\psi.
    \end{equation}
    To this end, it suffices to prove that, for every $m\in\nnum$,
    \begin{equation}\label{eq:limits}
        \lim_{j\to\infty}(J(\lambda_j)-z)^{-1}e_m=(J_t-z)^{-1}e_m\quad\text{and}\quad\lim_{j\to\infty}(J(\lambda_j)-z^*)^{-1}e_m=(J_{t^*}-z^*)^{-1}e_m,
    \end{equation}
    i.e. $(J(\lambda_j)-z)^{-1}\to(J_t-z)^{-1}$ and $(J(\lambda_j)-z^*)^{-1}\to(J_t-z^*)^{-1}$ strongly on the dense subspace $\ell_0(\nnum)$. Indeed, since all operators $J(\lambda_j)$ are self-adjoint, their resolvents computed at $z$ are uniformly bounded:    \begin{equation}\label{eq:resolvents_bounded}
        \norm{(J(\lambda_j)-z)^{-1}}\leq\frac{1}{\Imag z},
    \end{equation}
    whence strong convergence on the dense subspace $\ell_0(\nnum)$ is sufficient for strong convergence on the whole $\ell^2(\nnum)$.

    To prove \cref{eq:limits}, note that
    \begin{align}
        \norm{(J(\lambda_j)-z)^{-1}e_m - (J_t-z)^{-1}e_m}^2 &=\sum_{n=0}^\infty\left|\Braket{e_n,(J(\lambda_j)-z)^{-1}e_m - (J_t-z)^{-1}e_m}\right|^2\nonumber\\
        & = \sum_{n = 0}^\infty |G_{nm}(z,\lambda_j) - g_{nm}(z,t)|^2.
    \end{align}
    By \cref{proofeq:pointwise}, each term inside the sum converges to zero as $\lambda_j\to0$.
    In order to infer $(J(\lambda_j)-z)^{-1}e_m\to (J_t-z)^{-1}e_m$, we will apply Lebesgue's dominated convergence theorem \cite[Theorem~1.16]{reed-mmmp1-funkana-1980}.
    To this purpose, we need to show that the quantity $|G_{nm}(z,\lambda_j) - g_{nm}(z,t)|^2$ is bounded by some summable sequence that does not depend on $j$. As
    \begin{equation}
        |G_{nm}(z,\lambda_j) - g_{nm}(z,t)|^2\leq2|G_{nm}(z,\lambda_j)|^2+2|g_{nm}(z,t)|^2,
    \end{equation}
    the second term being independent of $j$ and summable, it suffices to find such a $j$-independent bound for $|G_{nm}(z,\lambda_j)|^2$.
    
    To this end, fix $m \in \nnum$.
    Then, by \cref{eq:gnmzl}, for all $n > m$ we have
    \begin{align}
        \label{eq:gnm-uform}
        G_{nm}(z,\lambda_j) &=(M(z,\lambda_j)P^{\lambda_j}_n(z)+Q^{\lambda_j}_n(z)) P^{\lambda_j}_m(z)\nonumber\\&\equiv P^{\lambda_j}_m(z) u^{\lambda_j,z}_n \, , 
    \end{align}
    where we set $u^{\lambda_j,z} := M(z,\lambda_j) P^{\lambda_j}(z) + Q^{\lambda_j}(z)$. This is a square-summable solution of \cref{eq:recall-generalized-eigenvalue-lambda}, cf.~\cref{lem:weyl-solution}; therefore, by \cref{prop:asymptotics}, there exist $\Lambda > 0$ and $C_2>0$ such that, for all $\lambda_j < \Lambda$ and $n \geq 1$,
    \begin{equation}\label{eq:ineq1}
        |u^{\lambda_j,z}_n| \leq \frac{C_2}{n^{\alpha/2-1/6}}  \left(|u^{\lambda_j,z}_0|+|u^{\lambda_j,z}_1|\right). 
    \end{equation}
    As $P^{\lambda}_0(z)=1$ (\cref{eq:plambda-qlambda}), recalling \cref{eq:gnm-uform} and the definition of $G_{nm}(z,\lambda)$ (cf.~\cref{green_lambda}), we have
    \begin{align}
        u^{\lambda_j,z}_0 & = G_{00}(z,\lambda_j) =\braket{e_0,(J(\lambda_j)-z)^{-1}e_0} \\
        u^{\lambda_j,z}_1 & = G_{10}(z,\lambda_j) =\braket{e_1,(J(\lambda_j)-z)^{-1}e_0}
    \end{align}
    and therefore, by \cref{eq:resolvents_bounded},
    \begin{align}
    \label{eq:ineq2}
        \bigl|u^{\lambda_j,z}_0\bigr|& =\left|\braket{e_0,(J(\lambda_j)-z)^{-1}e_0}\right|\leq \norm{(J(\lambda_j)-z)^{-1}}\leq\frac{1}{\Imag z}\, , \\
        \bigl|u^{\lambda_j,z}_1\bigr|& =\left|\braket{e_1,(J(\lambda_j)-z)^{-1}e_0}\right|\leq \norm{(J(\lambda_j)-z)^{-1}}\leq\frac{1}{\Imag z} \, .
    \end{align} 
    Moreover, as $P^{\lambda_j}_n(z)$ is continuous in $\lambda_j$ and $\lim_{j \to \infty} \lambda_j = 0$, there exists $C_1 > 0$ such that, for all $j \in \nnum$,
    \begin{equation}
    |P^{\lambda_j}_m(z)| \leq C_1.
    \end{equation}    
    By \cref{eq:ineq1,eq:ineq2}, setting $C = \frac{2C_1C_2}{\Imag z}$, we therefore obtain
    \begin{equation}
        |G_{nm}(z,\lambda_j)| \leq \frac{C}{n^{\alpha/2-1/6}}
    \end{equation}
    for all $n > m$ and $\lambda < \Lambda$.

    Let us now consider $n \leq m$. Due to \cref{proofeq:pointwise}, there exists $C_3> 0$ such that
    \begin{equation}
        |G_{nm}(z,\lambda_j)| \leq C_3 \quad \forall n \leq m\, .
    \end{equation}
    Setting
    \begin{equation}
        x_n = \begin{cases}
            C_3 & n \leq m \\
            \frac{C}{n^{\alpha/2-1/6}} & n > m
        \end{cases}
    \end{equation}
    we thus obtain the following upper bound:
    \begin{equation}
        |G_{nm}(z,\lambda_j)| \leq x_n 
    \end{equation}
    for all $n \in \nnum$, which is uniform in $\lambda_j < \Lambda$.
    Furthermore, $(x_n)_{n \in \nnum} \in \ell^2(\nnum)$ since
    \begin{equation}
        \label{proofeq:strong-conv-dominating}
        \sum_{n = 0}^\infty x_n^2\leq \sum_{n = 0}^m C_3^2 + \sum_{n=m+1}^\infty \frac{C^2}{n^{\alpha-1/3}} < \infty \, ,
    \end{equation}
    as $\alpha-1/3 >1$ by \cref{hyp:conditions-an-fn}. Therefore, Lebesgue's dominated convergence theorem applies, and
    \begin{equation}
       \lim_{j \to \infty} \norm{(J(\lambda_j)-z)^{-1}e_m - (J_t-z)^{-1}e_m}^2 = \lim_{j \to \infty} \sum_{n = 0}^\infty |G_{nm}(z,\lambda_j) - g_{nm}(z,t)|^2 = 0 \,.
    \end{equation}

    We proved that $(J(\lambda_j)-z)^{-1}$ converges strongly to $(J_t-z)^{-1}$ on all vectors $(e_n)_{n\in\nnum}$, and therefore
    \begin{equation}\label{eq:sr1}
        \lim_{j \to \infty} (J(\lambda_j)-z)^{-1} \psi = (J_t-z)^{-1}\psi \quad \forall \psi \in \ell^2(\nnum) \,.
    \end{equation}   
    We have thus proven strong convergence of $(J(\lambda_j)-z)^{-1}$ to $(J_t - z)^{-1}\psi$ for fixed $\Imag z > 0$, i.e. in the upper half plane.
    As all $J(\lambda_j)$ are self-adjoint, we also obtain
    \begin{equation}
        \lim_{j \to \infty}M(z^\ast,\lambda_j) = \lim_{j \to \infty}M(z,\lambda_j)^\ast = p^\ast \,,
    \end{equation}
    and $p^\ast = m(z^\ast,t^\ast)$, cf.~\cref{lem:weyl-m-flt}.
    Repeating the above analysis for $z^\ast$ and $J_{t^\ast}$, we get
    \begin{equation}\label{eq:sr2}
        \lim_{j \to \infty} (J(\lambda_j)-z^\ast)^{-1} \psi = (J_{t^\ast}-z^\ast)^{-1} \psi \quad \forall \psi \in \ell^2(\nnum) \,.
    \end{equation}
    Hence, we have a sequence of resolvents of self-adjoint operators $J(\lambda_j)$ converging strongly at both $z$ and $z^*$. Moreover, the limits $(J_t-z)^{-1}$ and $(J_{t^*}-z^*)^{-1}$ are resolvents of closed operators and therefore have dense range. Thus, the assumptions of the Trotter--Kato theorem \cite[Theorem~VIII.22]{reed-mmmp1-funkana-1980} are satisfied.
    By the Trotter--Kato theorem \cite[Theorem~VIII.22]{reed-mmmp1-funkana-1980}, there exists a self-adjoint operator $A$ to which $J(\lambda_j)$ converges in the strong resolvent sense. In particular,
    \begin{equation}
        (J_t-z)^{-1}=(A-z)^{-1},\quad (J_{t^*}-z^\ast)^{-1}=(A-z^*)^{-1},
    \end{equation}
    implying $J_t=J_{t^*}=A$ and thus completing the proof.   
\end{proof}
    \begin{remark}
    \label{rem:bootstrap}
    \cref{prop:asymptotics} is essential to establish strong resolvent convergence, since the self-adjointness of the limiting operator $J_t$ is not known a priori. Indeed, the pointwise convergence of the Green functions $G_{nm}(z,\lambda_j) \to g_{nm}(z,t)$ yields convergence of the matrix elements of the resolvent, i.e. weak resolvent convergence. If one knew in advance that $J_t$ is self-adjoint (equivalently, that $t = t^*$), this would already suffice: for sequences of self-adjoint operators, weak and strong resolvent convergence coincide (see \cite[284]{reed-mmmp1-funkana-1980}).

    However, in our setting the self-adjointness of $J_t$ is \textit{itself} part of the conclusion. Thus, weak resolvent convergence alone is not sufficient, and one cannot upgrade it to strong resolvent convergence without additional input. This creates a bootstrapping problem: once strong resolvent convergence is established, the limit must be self-adjoint, but proving strong resolvent convergence appears to require this information beforehand. This issue is resolved by the explicit estimates provided in \cref{prop:asymptotics}, which allow us to prove strong resolvent convergence directly, without assuming self-adjointness of the limit.
\end{remark}

\begin{corollary}
    \label{cor:strongly-convergent-subsequence}
    Let $(\lambda_j)_{j \in \nnum}\subset\rnum_+$ such that $\lim_{j \to \infty}\lambda_j = 0$.
    Then there exists a subsequence $(\lambda_{j_k})_{k \in \nnum}$ and $t\in \brnum$ such that $J(\lambda_{j_k})$ converges to the self-adjoint operator $J_t$ in the strong resolvent sense.
\end{corollary}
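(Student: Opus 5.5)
The corollary follows from \cref{prop:strong-conv} once we produce a subsequence along which the Weyl $m$-functions converge at a single non-real point. Fix $z=\iu$. Since each $J(\lambda_j)$ is self-adjoint, \cref{eq:resolvents_bounded} gives
\begin{equation}
    |M(\iu,\lambda_j)| = \left|\braket{e_0,(J(\lambda_j)-\iu)^{-1}e_0}\right| \leq \norm{(J(\lambda_j)-\iu)^{-1}} \leq 1
\end{equation}
for all $j\in\nnum$. The sequence $(M(\iu,\lambda_j))_{j\in\nnum}$ is therefore bounded in $\cnum$. By the Bolzano--Weierstrass theorem it has a subsequence $(\lambda_{j_k})_{k\in\nnum}$ with $M(\iu,\lambda_{j_k})\to p$ for some $p\in\cnum$, and in fact $|p|\leq 1$.

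We then apply \cref{prop:strong-conv} with $z=\iu$ to the sequence $(\lambda_{j_k})_{k\in\nnum}$. It satisfies $\lambda_{j_k}>0$ and $\lambda_{j_k}\to0$, and the Weyl $m$-functions at $z$ converge to $p$. The proposition yields a parameter $t\in\bcnum$ with $p=m(\iu,t)$ and $J(\lambda_{j_k})\to J_t$ in the strong resolvent sense. Its proof also shows, via the Trotter--Kato theorem, that $J_t=J_{t^\ast}$ is self-adjoint, so $t\in\brnum$.

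No step here is a real obstacle; the substantive work, namely the dominated-convergence argument based on \cref{prop:asymptotics}, is already contained in \cref{prop:strong-conv}. The only point to check is that the limit $p$ is a finite complex number rather than $\infty$. This is guaranteed by the uniform resolvent bound above, and it is exactly what \cref{prop:strong-conv} requires.
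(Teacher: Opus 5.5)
Your proof is correct and follows the paper's argument: bound $|M(z,\lambda_j)|\leq 1/\Imag z$ via the resolvent estimate, extract a convergent subsequence, and invoke \cref{prop:strong-conv}. Your appeal to the Trotter--Kato step inside that proof is unnecessary, since \cref{prop:strong-conv} as stated already gives $t\in\brnum$ and the self-adjointness of $J_t$.
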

\begin{proof}
    As already shown in the proof of \cref{prop:strong-conv}, $|M(z,\lambda_j)| \leq \frac{1}{\Imag z}$, and hence the set $(M(z,\lambda_j))_{j \in \nnum}$ is compact.
    Thus, there exists a convergent subsequence $(M(z,\lambda_{j_k}))_{k \in \nnum}$.
    The statement follows by applying \cref{prop:strong-conv} to the sequence $(\lambda_{j_k})_{k \in \nnum}$.
\end{proof}

Additionally, we show that, if $J(\lambda_j)\to J_t$ in the strong resolvent sense, a real number is in the spectrum of $J_t$ if and only if it is the limit of a sequence of eigenvalues of the approximating operators. While one implication is always true for general self-adjoint operators, the converse will again hold as a direct consequence of \cref{prop:asymptotics}.
\begin{proposition}
    \label{prop:eigenvalue-convergence}
    Let $(\lambda_j)_{j \in \nnum}\subset\rnum_+$ such that $\lim_{j \to \infty}\lambda_j = 0$.
    Suppose further that there exists $t \in \brnum$ such that $J(\lambda_j) \to J_t$ in the strong resolvent sense.
    Then $E \in \sigma(J_t)$ if and only if there exists a sequence $(E_j)_{j\in\nnum} \subset \sigma(J(\lambda_j))$ with $\lim_{j \to \infty}E_j = E$.
\end{proposition}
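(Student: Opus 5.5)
The ``if'' direction is the general fact about strong resolvent convergence of self-adjoint operators: if $J(\lambda_j)\to J_t$ in the strong resolvent sense, then every point of $\sigma(J_t)$ is the limit of points of $\sigma(J(\lambda_j))$ \cite[Theorem~VIII.24]{reed-mmmp1-funkana-1980}. That standard statement is stated as ``limit point of spectra implies in the limit spectrum'' for the \emph{other} direction, so I will orient carefully. The general theorem says: if $E\in\sigma(J_t)$, then there exist $E_j\in\sigma(J(\lambda_j))$ with $E_j\to E$. This is the ``only if'' direction, and I will simply quote it. The converse is the nontrivial part. In general, strong resolvent limits can lose spectrum: a sequence of eigenvalues $E_j\to E$ need not produce $E\in\sigma(J_t)$, because the eigenvectors may escape to infinity. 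Here \cref{prop:asymptotics} will prevent exactly this escape.

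For the converse, let $E_j\in\sigma(J(\lambda_j))$ with $E_j\to E\in\rnum$. By \cref{prop:spectrum-jlambda}, the eigenvector is $P^{\lambda_j}(E_j)$, which is square-summable and solves \cref{eq:recall-generalized-eigenvalue-lambda} with $z=E_j$. I will apply \cref{prop:asymptotics} with the compact set $\Omega=\{E_j\}_j\cup\{E\}$. Discarding finitely many $j$ so that $\lambda_j<\Lambda$, this gives
\begin{equation}
|P^{\lambda_j}_n(E_j)|\le C\,\frac{1+|P^{\lambda_j}_1(E_j)|}{n^{\alpha/2-1/6}},\qquad n\ge1.
\end{equation}
By \cref{lem:pointwise-convergence}, $P^{\lambda_j}_1(E_j)\to P^0_1(E)$, so the numerator is uniformly bounded. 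Hence $|P^{\lambda_j}_n(E_j)|\le x_n$ with $x\in\ell^2(\nnum)$, since $\alpha-1/3>1$. Again by \cref{lem:pointwise-convergence}, $P^{\lambda_j}_n(E_j)\to P^0_n(E)$ for every $n$. Dominated convergence then yields $\psi_j:=P^{\lambda_j}(E_j)\to\psi:=P^0(E)$ in $\ell^2(\nnum)$, and $\psi\neq0$ because $\psi_0=1$.

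It remains to show $\psi\in\domain(J_t)$ with $J_t\psi=E\psi$. I will use the resolvent at a fixed $z$ with $\Imag z\ne0$. From $J(\lambda_j)\psi_j=E_j\psi_j$ we get $(J(\lambda_j)-z)^{-1}\psi_j=(E_j-z)^{-1}\psi_j$. The left side converges to $(J_t-z)^{-1}\psi$, using strong resolvent convergence together with the uniform bound $\|(J(\lambda_j)-z)^{-1}\|\le 1/|\Imag z|$ and the norm convergence $\psi_j\to\psi$. The right side converges to $(E-z)^{-1}\psi$. Therefore $(J_t-z)^{-1}\psi=(E-z)^{-1}\psi$, which means $\psi\in\domain(J_t)$ and $(J_t-E)\psi=0$. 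So $E$ is an eigenvalue of $J_t$.

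The main obstacle is the uniform $\ell^2$ domination of the eigenvectors. This is precisely where \cref{prop:asymptotics} is needed, applied on a compact set of real spectral parameters. I must check that its hypotheses allow real $z$ and varying $z_j\in\Omega$; as stated, they do.
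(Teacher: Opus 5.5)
Your proof is correct. The ``only if'' direction, and the central step of the converse, match the paper: you dominate $P^{\lambda_j}(E_j)$ uniformly in $\ell^2$ via \cref{prop:asymptotics}, then use dominated convergence to get $P^{\lambda_j}(E_j)\to P^0(E)$ in norm. Your choice $\Omega=\{E_j\}_j\cup\{E\}$ is slightly more careful than the paper's, which takes $\Omega=(E_j)_j$ and calls that set compact.

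Where you differ is the final step. The paper argues by contradiction. If $E\notin\sigma(J_t)$, it picks $\epsilon>0$ with $(E-\epsilon,E+\epsilon)\cap\sigma(J_t)=\emptyset$. Strong resolvent convergence then gives strong convergence of the spectral projections $\mathcal{P}^{\lambda_j}_{E-\epsilon,E+\epsilon}\to\mathcal{P}_{E-\epsilon,E+\epsilon}=0$ by \cite[Theorem~VIII.24(b)]{reed-mmmp1-funkana-1980}, which applies because the endpoints are not eigenvalues of $J_t$. Since these projections fix $P^{\lambda_j}(E_j)$ once $E_j\in(E-\epsilon,E+\epsilon)$, the paper obtains $P^0(E)=0$, which is absurd.

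You instead pass to the limit in $(J(\lambda_j)-z)^{-1}\psi_j=(E_j-z)^{-1}\psi_j$, using strong convergence at one nonreal $z$ together with the uniform bound $\|(J(\lambda_j)-z)^{-1}\|\le 1/|\Imag z|$. This route is more direct and avoids the spectral-projection theorem. It also proves more: $P^0(E)$ lies in $\domain(J_t)$ and is an eigenvector of $J_t$ for $E$, not merely $E\in\sigma(J_t)$. The paper's version never identifies the limiting eigenvector, but here that identification costs nothing.

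One cosmetic fix: your opening paragraph first calls the quoted fact the ``if'' direction and says the standard theorem is stated ``for the other direction''. Both are wrong. \cite[Theorem~VIII.24(a)]{reed-mmmp1-funkana-1980} says exactly that every $E\in\sigma(J_t)$ is a limit of points $E_j\in\sigma(J(\lambda_j))$, which is the ``only if'' direction you end up quoting. The converse fails in general, and that is why \cref{prop:asymptotics} is needed; rewrite the paragraph to say this directly.
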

\begin{proof}
    Let $E \in \sigma(J_t)$.
    Then, by strong resolvent convergence and \cite[Theorem~VIII.24]{reed-mmmp1-funkana-1980}, there exists a sequence $(E_j)_{j\in\nnum} \subset \sigma(J(\lambda_j))$ such that $\lim_{j \to \infty}E_j = E$. 
        
    Conversely, let $(E_j)_{j\in\nnum} \subset \sigma(J(\lambda_j))$ and assume there exists $E \in \rnum$ such that $\lim_{j \to \infty}E_j = E$.
    We claim $E \in \sigma(J_t)$.
    Given $E_j \in \sigma(J(\lambda_j))$, the associated eigenvector is given by $P^{\lambda_j}(E_j)$, cf.~\cref{prop:spectrum-jlambda}, and by \cref{lem:pointwise-convergence} and the assumptions, 
    \begin{equation}
        \label{proofeq:eigenvector-convergence-pointwise}
        \lim_{j \to \infty}P^{\lambda_j}_n(E_j) = P^0_n(E) \quad \forall n \in \nnum \,,
    \end{equation}
    that is, $P^{\lambda_j}(E_j)\to P^0(E)$ pointwise in $n$. We claim $P^{\lambda_j}(E_j)\to P^0(E)$ in the $\ell^2$ sense. To this end, note that the set $\Omega = (E_j)_{j \in \nnum}$ is compact as $\lim_{j \to \infty} E_j = E$.
    Each $P^{\lambda_j}(E_j)$ is a square-summable generalized eigenvector of $\Jop(\lambda_j)$, hence by \cref{prop:asymptotics} there exist $\Lambda > 0$ and $C_1>0$ such that, for all $\lambda_j < \Lambda$ and $n \in \nnum$,
    \begin{equation}
        \label{proofeq:spectral-conv-eigenvector-bound}
        |P_n^{\lambda_j}(E_j)| \leq \frac{C_1}{n^{\alpha/2-1/6}}\left(|P_0^{\lambda_j}(E_j)|+|P_1^{\lambda_j}(E_j)|\right) \, .
    \end{equation}
    By the definition of $P^\lambda(z)$ (cf.~\cref{eq:plambda-qlambda}), we obtain
    \begin{equation}
        P_0^{\lambda_j}(E_j) = 1 \, , \quad  P_1^{\lambda_j}(E_j) = \frac{E_j-\lambda f_0}{a_0},
    \end{equation}
    and, as the sets $\Omega = (E_j)_{j \in \nnum}$ and $(\lambda_j)_{j \in \nnum}$ are compact, there exists $C_2>0$ such that
    \begin{equation}
        \left(|P_0^{\lambda_j}(E_j)|+|P_1^{\lambda_j}(E_j)|\right) \leq C_2 
    \end{equation}
    for all $j \in \nnum$.
    Setting $C = C_1 C_2$ and combining this with \cref{proofeq:spectral-conv-eigenvector-bound}, we get
    \begin{equation}
         |P_n^{\lambda_j}(E_j)| \leq \frac{C}{n^{\alpha/2-1/6}} \, \quad \forall n \geq 1 \, .
    \end{equation}
    Using \cref{hyp:conditions-an-fn}, we have
    \begin{equation}
        \|P^{\lambda_j}(E_j)\|^2\leq 1+ \sum_{n = 1}^\infty \frac{C^2}{n^{\alpha-1/3}} < \infty\, ,
    \end{equation}
    i.e. the $\ell^2$ norms of $P^{\lambda_j}(E_j)$ are \emph{uniformly} bounded in $j$. Furthermore, by \cref{prop:jacobi-esssa}, 
    \begin{equation}
        \sum_{n = 0}^\infty |P^{0}_n(E)|^2 < \infty\, .
    \end{equation}
    Thus, we get 
    \begin{align}
        \norm{P^{\lambda_j}(E_j)-P^0(E)}^2 & = \sum_{n = 0}^\infty |P^{\lambda_j}_n(E_j)-P^{0}_n(E)|^2 \\
        & \leq \sum_{n = 0}^\infty 2|P^{\lambda_j}_n(E_j)|^2 +\sum_{n = 0}^\infty 2|P^{0}_n(E)|^2\\
        & \leq  2+\sum_{n = 1}^\infty \frac{2C^2}{n^{\alpha-1/3}} +\sum_{n = 0}^\infty 2|P^{0}_n(E)|^2.
    \end{align}
    As the last term is independent of $j$, Lebesgue's dominated convergence theorem applies. Therefore, by \cref{proofeq:eigenvector-convergence-pointwise},
    \begin{equation}        \label{proofeq:eigenvector-strong-conv}
        \lim_{j \to \infty} P^{\lambda_j}(E_j)=P^0(E)
    \end{equation}
in $\ell^2(\nnum)$, as claimed.

    We now consider the spectral projections of the involved operators.
    For $a<b$, let $\mathcal{P}_{a,b}$ be the spectral projection of $J_t$ onto the interval $(a,b)$, and $\mathcal{P}^\lambda_{a,b}$ the corresponding spectral projection of $J(\lambda)$.
    Assume $E \notin \sigma(J_t)$.
    Since the spectrum is closed, there exists $\epsilon > 0$ such that $(E-\epsilon,E+\epsilon) \cap \sigma(J_t) = \emptyset$, and therefore $\mathcal{P}_{E-\epsilon,E+\epsilon}=0$.
    As $J(\lambda_j) \to J_t$ in the strong resolvent sense, the corresponding spectral projections converge strongly \cite[Theorem~VIII.24]{reed-mmmp1-funkana-1980}, that is,
    \begin{equation}
        \lim_{j \to \infty}\mathcal{P}^{\lambda_j}_{E-\epsilon,E+\epsilon} \psi = \mathcal{P}_{E-\epsilon,E+\epsilon} \psi \quad \forall \psi \in \hilbert \, .
    \end{equation}
    Together with  \cref{proofeq:eigenvector-strong-conv}, this implies
    \begin{equation}\label{eq:stronglimits1}
        \lim_{j \to \infty} \mathcal{P}^{\lambda_j}_{E-\epsilon,E+\epsilon}P^{\lambda_j}(E_j) = \mathcal{P}_{E-\epsilon,E+\epsilon}P^0(E)\, .
    \end{equation}
    On the other hand, $P^{\lambda_j}(E_j)$ is an eigenvector of $J(\lambda_j)$, implying that $\mathcal{P}^{\lambda_j}_{a,b}P^{\lambda_j}(E_j)$ is either $P^{\lambda_j}(E_j)$ if $E_j\in(a,b)$, or $0$ otherwise. Since $E_j\to E$, we have $E_j\in(E-\epsilon,E+\epsilon)$ for sufficiently large $j$. Therefore,
    \begin{equation}\label{eq:stronglimits2}
        \lim_{j \to \infty} \mathcal{P}^{\lambda_j}_{E-\epsilon,E+\epsilon}P^{\lambda_j}(E_j) = \lim_{j\to\infty}P^{\lambda_j}(E_j)=P^0(E)\, .
    \end{equation}
    By \cref{eq:stronglimits1,eq:stronglimits2}, $\mathcal{P}_{E-\epsilon,E+\epsilon}P^0(E) = P^0(E) \neq 0$. This shows $\mathcal{P}_{E-\epsilon,E+\epsilon}\neq0$, leading to a contradiction. Therefore, $E \in \sigma(J_t)$.
\end{proof}
Using this characterization of the spectrum, we can now prove that the eigenvalues of $J(\lambda)$ diverge to $-\infty$ as $\lambda \to 0$ (cf.~\cref{fig:eigenvalue-limit}).
\begin{proposition}
    \label{prop:limit-eigenvalues}
    For $\lambda >0$ and $j \in \nnum$, let $E^{(j)}(\lambda)$ be the $j$th lowest eigenvalue of $J(\lambda)$ defined as in \cref{def:eigenvalues-ej}.
    Then, $\lim_{\lambda \to 0}E^{(j)}(\lambda) = -\infty$.
\end{proposition}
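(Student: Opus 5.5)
By \cref{lem:eigenvalues_monotonous}, each $E^{(j)}(\lambda)$ is nondecreasing in $\lambda$. The limit $L_j=\lim_{\lambda\downarrow 0}E^{(j)}(\lambda)$ therefore exists in $[-\infty,\infty)$, and $L_j\le E^{(j)}(\lambda)$ for every $\lambda>0$. I will argue by contradiction: suppose $L_j>-\infty$ for some $j$. Since $E^{(0)}(\lambda)\le E^{(1)}(\lambda)\le\dots\le E^{(j)}(\lambda)$ are ordered and monotone, every $L_i$ with $i\le j$ is then bounded below by $\inf_{\lambda\le 1}E^{(0)}(\lambda)$. It is enough to show that $L_0>-\infty$ is impossible: if $L_0=-\infty$ but $L_j>-\infty$, I would instead work with the first index $i$ such that $L_i>-\infty$, and the argument below adapts.

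The plan is to show that finite limits force the extension to be independent of the chosen subsequence, and that this contradicts \cref{thm:main-result} style behavior. Here is a cleaner route. Take any sequence $\lambda_k\to0$. By \cref{cor:strongly-convergent-subsequence}, after passing to a subsequence, $J(\lambda_k)\to J_t$ in the strong resolvent sense for some $t\in\brnum$. By \cref{prop:eigenvalue-convergence}, the finite limits $L_i$ then lie in $\sigma(J_t)$, because $E^{(i)}(\lambda_k)\to L_i$.

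The key additional input is a lower bound on the spectrum. By \cref{lem:jt-unbounded-from-below}, $J_t$ is unbounded below, so it has an eigenvalue $E<L_0$. By \cref{prop:eigenvalue-convergence} again, there are eigenvalues $E_k\in\sigma(J(\lambda_k))$ with $E_k\to E$. For large $k$ this gives $E_k<L_0\le E^{(0)}(\lambda_k)$, which contradicts the definition of $E^{(0)}(\lambda_k)$ as the lowest eigenvalue. Hence $L_0=-\infty$.

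For general $j$, I would repeat the argument with $j+1$ eigenvalues. Because $J_t$ is unbounded below and has discrete simple spectrum, it has at least $j+1$ distinct eigenvalues $\mu_0<\dots<\mu_j$ below $L_j$. Each $\mu_i$ is approximated by eigenvalues of $J(\lambda_k)$. For large $k$ these approximants are distinct, since the $\mu_i$ are separated, and they all lie below $E^{(j)}(\lambda_k)$. That would give $J(\lambda_k)$ at least $j+1$ eigenvalues strictly below its $(j+1)$-th eigenvalue, which is a contradiction. The only delicate point is that \cref{prop:eigenvalue-convergence} requires strong resolvent convergence along the sequence, and this is exactly what \cref{cor:strongly-convergent-subsequence} provides after extracting a subsequence. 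This suffices, because monotonicity means the limit along any subsequence equals $L_j$.
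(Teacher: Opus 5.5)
Your argument is correct and follows the paper's strategy. You extract a strong-resolvent-convergent subsequence via \cref{cor:strongly-convergent-subsequence}, use that $J_t$ is unbounded below, and approximate its eigenvalues by those of $J(\lambda_k)$ via \cref{prop:eigenvalue-convergence}, which contradicts the ordering of the $E^{(i)}(\lambda_k)$. The only difference is bookkeeping: the paper takes the smallest index with a finite limit and a single eigenvalue of $J_t$, forcing the approximants below $E^{(j-1)}(\lambda_k)\to-\infty$, whereas your count of $j+1$ separated eigenvalues below $L_j$ handles each $j$ directly, so the reduction to $L_0$ and the first-index remark in your opening paragraph can simply be dropped.
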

\begin{proof}
    By \cref{lem:eigenvalues_monotonous,cor:eigen-analytic}, each $E^{(j)}(\lambda)$ is an analytic functions in $\lambda>0$, monotonically decreasing with $\lambda \to 0$.
    Hence, for all $j \in \nnum$,
    \begin{equation}
        \lim_{\lambda \to 0}E^{(j)}(\lambda) \in \rnum \cup \{-\infty\} \, , 
    \end{equation}
    i.e. either $E^{(j)}(\lambda)$ converges to a real number or diverges strictly to $-\infty$ as $\lambda \to 0$.
    
    Assume that, for some $j \in\nnum$, $\lim_{\lambda \to 0} E^{(j)}(\lambda) \neq -\infty$.
    Then, as eigenvalues do not cross (\cref{cor:eigen-analytic}), the same holds true for all $E^{(l)}(\lambda)$ with $l \geq j$.
    Hence, we can assume without loss of generality that $j$ is the smallest integer such that $\lim_{\lambda \to 0} E^{(j)}(\lambda)=:E^{(j)} \neq -\infty$.
    In particular,
    \begin{equation}
        \label{proofeq:el-ej-spectral-bound}
         E^{(l)}(\lambda) \geq E^{(j)} \quad  \forall l \geq j\, , \quad \forall \lambda > 0     
    \end{equation}
    because of the monotonicity of $E^{(l)}(\lambda)$ and the non-crossing property.
    We show that the existence of such a $j$ leads to a contradiction.

    By \cref{cor:strongly-convergent-subsequence}, there exists a sequence $(\lambda_k)_{k \in \nnum}\subset\rnum_+$ with $\lim_{k \to \infty} \lambda_k= 0$, and $t \in \brnum$, such that $J(\lambda_k)$ converges to $J_t$ in the strong resolvent sense.
    As $J_t$ is unbounded from below, cf.~\cref{prop:spectrum-jlambda}, there exists $E \in \sigma(J_t)$ such that
    \begin{equation}\label{proofeq:el-ej-spectral-bound-2}
        E \leq E^{(j)} -1.
    \end{equation}
    By \cref{prop:eigenvalue-convergence}, there exists a sequence $(E_k)_{k\in\nnum} \subset\rnum$, with $E_k\in \sigma(J(\lambda_k))$, such that $\lim_{k \to \infty} E_k = E$. As each $E_k$ is an eigenvalue of $J(\lambda_k)$, we have
    \begin{equation}
        E_k = E^{(j_k)}(\lambda_k) \quad \forall k \in \nnum
    \end{equation}
    for some sequence $(j_k)_{k\in\nnum} \in \nnum$.

    By definition, for $l \geq j$ the sequences $E^{(l)}(\lambda_k)$ are bounded from below by $E^{(j)} \geq E+1$, and hence cannot be used to obtain the limit $E$.
    On the other hand, there only exists finitely many sequences $E^{(l)}(\lambda_k)$ with $0 \leq l \leq j-1$, and each of them diverges to $-\infty$.
    Thus, the limit $E$ can also not be obtained from these sequences, which leads to a contradiction.
    The following steps formalize this claim:

    We treat the cases $j=0$ and $j\geq1$ separately.
    If $j = 0$, $j_k \geq j$ for all $k \in \nnum$, and by \cref{proofeq:el-ej-spectral-bound,proofeq:el-ej-spectral-bound-2} we get
    \begin{equation}
        E^{(j)} -1 \geq E = \lim_{k \to \infty}E_k = \lim_{k \to \infty}E^{(j_k)}(\lambda_k) \geq E^{(j)}\, , 
    \end{equation}
    which is a contradiction.
    
    If $j\geq1$, using \cref{proofeq:el-ej-spectral-bound} and again $\lim_{k\to \infty}E^{(j_k)}(\lambda_k) = E \leq E^{(j)} -1<E^{(j)}$, it follows that, for large enough $k$, $E^{(j_k)}(\lambda_k) < E^{(j)}$.
    In the following, let $K \in \nnum$ such that $E^{(j_k)}(\lambda_k) < E^{(j)}$ holds for all $k \geq K$.
    By monotonicity of $E^{(j)}(\lambda)$, it follows that $E^{(j)} \leq E^{(j)}(\lambda)$ for all $\lambda > 0$; together with the previous inequality, this yields
    \begin{equation}
        E^{(j_k)}(\lambda_k) < E^{(j)}(\lambda_k) \quad \forall k \geq K\, .
    \end{equation}
    For all $k \geq K$ we hence get $j_k < j$ by definition, and, as both values are integers, $j_k \leq j-1$ follows for all $k \geq K$.
    Again, by definition, this implies
    \begin{equation}
    E^{(j_k)}(\lambda_k) \leq E^{(j-1)}(\lambda_k) \quad \forall k \geq K\, ,    
    \end{equation}
    and hence
    \begin{equation}
        \lim_{k \to \infty}E_k = \lim_{k \to \infty}E^{(j_k)}(\lambda_k) \leq \lim_{k \to \infty}E^{(j-1)}(\lambda_k) = -\infty \, , 
    \end{equation}
    follows by definition of $j$.
    But we assumed $\lim_{k \to \infty}E_k = E \in \rnum$, and thus obtain a contradiction.

    Therefore, the only possibility to obtain $E$ as the limit of some sequence $E^{(j_k)}(\lambda_k)$ with $\lambda_k \to 0$ is if there exists no $j \in \nnum$ such that $\lim_{\lambda \to 0}E^{(j)}(\lambda) = E^{(j)} \in \rnum$, and we get the desired claim.
\end{proof}
\begin{figure}[ht]
    \centering
    \includegraphics[width=0.6\linewidth]{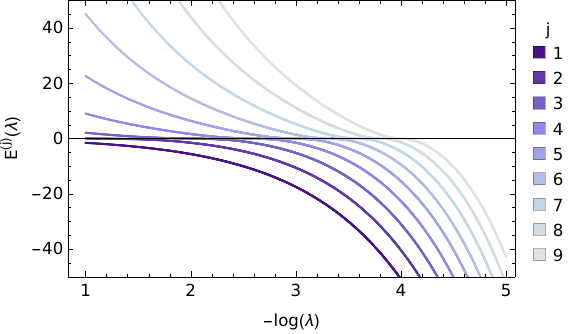}
    \caption{
    Logarithmic plot of the eigenvalues $E^{(j)}(\lambda)$ of the Jacobi operator $J(\lambda)$, defined as in \cref{def:jacobi-lambda}, with coefficients $a_n = 3^{-3/2}\sqrt{(3n+1,3)}$ and $f_n = n^3$, where $(\cdot,\cdot)$ denotes the Pochhammer symbol, cf. \cref{eq:def-pochhammer}.
    Note that this operator coincides with $A^{(0)}_{3,3}(\lambda)$ from \cref{lem:squeezing-decomposition}.
    Monotonicity, continuity and $\lim_{\lambda \to 0} E^{(j)}(\lambda) = -\infty$ are clearly visible.
    }
    \label{fig:eigenvalue-limit}
\end{figure}

Equipped with these results, we can finally prove the main result of this paper.
\begin{proof}[Proof of \cref{thm:main-result}]
    The first statement is \cref{cor:strongly-convergent-subsequence}.

    For the second statement, let $t\in\brnum$ and $E \in \sigma(J_t)$.
    For $l \in \nnum$ and $\lambda > 0$, let $E^{(l)}(\lambda)$ be the $l$th lowest eigenvalue of $J(\lambda)$ as in \cref{def:eigenvalues-ej}.
    By \cref{cor:eigen-analytic,prop:limit-eigenvalues}, $E^{(l)}(\lambda)$ is analytic in $\lambda>0$ and 
    \begin{equation}
    \label{proofeq:sequence-el-toneginf}
    \lim_{\lambda \to 0} E^{(l)}(\lambda) = -\infty\, .
    \end{equation}
    Moreover, for fixed $\lambda > 0$, $J(\lambda)$ is unbounded from above and hence there exists $l \in \nnum$ such that  $E^{(l)}(\lambda)$ is arbitrarily big.
    We proceed to construct a sequence $(\lambda_j)_{j\in\nnum}$ with $\lim_{j \to \infty} \lambda_j = 0$ such that $E \in \sigma(J(\lambda_j))$ for all $j \in \nnum$:
    To begin with, there exists $l_0 \in \nnum$ such that $E^{(l_0)}(1) > E$, and by \cref{proofeq:sequence-el-toneginf} there exists $0 < \lambda_0 < 1$ such that $E^{(l_0)}(\lambda_0) = E$.
    Next, there exists $l_1 \in \nnum$ such that $E^{(l_1)}(\lambda_0/2) > E$, and consequently there also exists $0 < \lambda_1 <\lambda_0/2$ such that 
    $E^{(l_1)}(\lambda_1) = E$.
    Iterating this process we construct a sequence $(\lambda_j)_{j \in \nnum}$ with $0 < \lambda_j \leq 2^{-j}$ and $E \in \sigma(J(\lambda_j))$ for all $j \in \nnum$.
    
    By \cref{thm:main-result}(i), there exists a subsequence $(\lambda_{j_k})_{k\in\nnum}$ and $\tilde{t} \in \brnum$ such that $J(\lambda_{j_k})$ converges to $J_{\tilde{t}}$ in the strong resolvent sense.
    In particular, $E\in \sigma(J(\lambda_{j_k}))$ for all $k \in \nnum$, and by \cref{prop:eigenvalue-convergence} $E \in \sigma(J_{\tilde{t}})$.
    By \cref{prop:jacobi-limit-circle-spectrum}, eigenvalues of different self-adjoint extensions are distinct, whence $\tilde{t} = t$. 
\end{proof}
The statement of the main theorem is visualized in \cref{fig:eigenvalue-limit-b} in \cref{sec:intro} (also cf.~\cref{rem:terminology}).
The Weyl $m$-functions $M(z,\lambda)$ uniquely represent the operator $J(\lambda)$.
As $\lambda \to 0$, the values of $M(z,\lambda)$ for fixed $z$ form a spiral, with the circle $m(z,t)_{t \in \brnum}$ of the limiting operator $J(0)$ being their accumulation set.
Visually, it is clear that, for all $t \in \brnum$, there exists a sequence $\lambda_j \to 0$ such that $M(z,\lambda_j) \to m(z,t)$.

\section{Asymptotics of generalized eigenvectors}
\label{sec:asymptotics-of-eigenvectors}

This section is devoted to the proof of \cref{prop:asymptotics} (cf. \cref{thm:main-asymptotics}), which we utilize in the proof of \cref{prop:strong-conv} to show that weak convergence of solutions of \cref{eq:recall-generalized-eigenvalue-lambda} as $\lambda\to0$ implies strong resolvent convergence. This entails proving the following bound for square-summable solutions $u^{\lambda,z}$ of \cref{eq:recall-generalized-eigenvalue-lambda}, with $z$ ranging in some compact set $\Omega$:
\begin{equation}\label{eq:thebound}
    |u^{\lambda,z}_n| \leq \frac{C}{n^{\alpha/2-1/6}}\left(|u^{\lambda,z}_0| +|u^{\lambda,z}_1|\right) \quad n \geq 1  \, , 
\end{equation}
where $C$ does not depend on $\lambda$ and $z$. \Cref{hyp:conditions-an-fn} will continue to be assumed throughout this section, and the compact set $\Omega$ will remain fixed.

\subsection{Heuristics and structure of the proof}\label{sec:asymptotics_overview}

We begin by providing an overview of the proof, focusing on the most technically demanding points.
Recall the generalized eigenvalue equation for $\Jop(\lambda)$, cf.~\cref{eq:generalized-eigenvalue}:
\begin{equation}
    \label{eq:generalized-eigenvalue-lambda}
    a_n u_{n+1} + (\lambda f_n -z)u_n +a_{n-1} u_{n-1}= 0 \quad \forall n \geq 1 \,,
\end{equation}
where the initial conditions $u_0,u_1$ remain arbitrary.
Dividing by $a_n$, we obtain
\begin{equation}
    u_{n+1} + \frac{\lambda f_n}{a_n}u_n + \frac{a_{n-1}}{a_n} u_{n-1} - \frac{z}{a_n} u_n = 0 \quad \forall n \geq 1 \, .
\end{equation}
For large enough $n$, we get $\frac{f_n}{a_n} \sim n^\delta$, where $\delta = \beta-\alpha$, $\frac{a_{n-1}}{a_n} \sim 1$, and $\frac{z}{a_n} \sim 0$.
Hence, the above equation is formally approximated by the following equation:
\begin{equation}\label{eq:difference_equation_asymptotic}
    u_{n+1} + \lambda n^\delta u_n + u_{n-1} = 0 \, .
\end{equation}
It will be useful to recast this equation in terms of a discrete variable $x$. The relevant scale of this problem is given by $x = nh$, where $h=\lambda^{1/\delta}$.
Setting $u_n = (-1)^n\psi(nh) = (-1)^n\psi(x)$ for some function $\psi$, \cref{eq:difference_equation_asymptotic} becomes
\begin{equation}
    \label{eq:difference-turning-point}
    \psi(x+h) -x^\delta \psi(x) + \psi(x-h) = 0 \, 
\end{equation}
for all $x = nh$ and $n \geq 1$.
Recalling the definition of the discrete Laplacian, we can identify \cref{eq:difference-turning-point} as the discrete analogue of the following differential equation:
\begin{equation}
    \label{eq:differential-turning-point}
    \Delta \psi(x) = (x^\delta-2)\psi(x) \, . 
\end{equation}
This is a second-order differential equation with a so-called \textit{turning point} at $x^\delta = 2$.
In the region $x^\delta<2$, all solutions of \cref{eq:differential-turning-point} are oscillating and hence bounded.
Instead, in the region $x^\delta>2$, the so-called \textit{recessive solution} of \cref{eq:differential-turning-point} decays exponentially, whereas all other solutions are \textit{dominant} and grow exponentially.
This is indeed reminiscent of the behavior of solutions of \cref{eq:generalized-eigenvalue-lambda}: from the Weyl alternative theorem (\cref{prop:weyl-alternative}) we know that one solution of \cref{eq:generalized-eigenvalue-lambda}, namely $u^{\lambda,z}$ is square-summable, whereas all others are not.

\begin{figure}[t]
    \centering
    \includegraphics[width=0.9\linewidth]{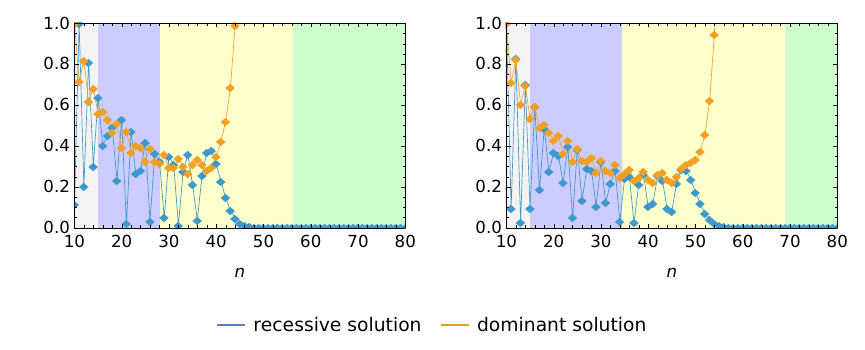}
    \caption{
        Numerical evaluation of the dominant and the recessive solution of \cref{eq:generalized-eigenvalue} for the family of Jacobi operators $J(\lambda)$, defined as in \cref{def:jacobi-lambda}, with coefficients $a_n = 4^{-1/2} \sqrt{(4n+1,4)}$ and $f_n = n^4$, where $(\cdot,\cdot)$ denotes the Pochhammer symbol, cf. \cref{eq:def-pochhammer}, for two different values of $\lambda$.
        The left plot corresponds to $\lambda = \nicefrac{1}{791}$, while the right plot corresponds to $\lambda=\nicefrac{1}{1191}$.
        The different background colors indicate the different regions described in the text and in \cref{def:asymptotics-regions}, i.e. to $n \leq N_0$ (gray), $n \in [N_0,N_2(\lambda)]$ (blue), $n \in [N_2(\lambda),N_3(\lambda)]$ (yellow), $n \geq N_3(\lambda)$ (green).
        In particular, the turning point moves clearly to higher values of $n$ as $\lambda$ gets smaller.
    }
    \label{fig:asymptotics-schematics}
\end{figure}
For $\delta = 1$, \cref{eq:differential-turning-point} is equivalent to the Airy equation, cf.~\cref{app:airy}, and admits two linearly independent solutions given by $\Ai(x+2)$ and $\Bi(x+2)$, where $\Ai$ and $\Bi$ are the Airy functions of first and second kind \cite[392]{olver-asymptoticsspecialfunctions-2010}.
For $\delta \neq 1$, solutions of \cref{eq:differential-turning-point} can be approximated by Airy functions via a suitable variable transformation $x\rightarrow\xi(x)$ corresponding to the so-called \textit{Langer transform}.
Standard references on rigorous error analysis for such equations are \cite[Chapter~11]{olver-asymptoticsspecialfunctions-2010} and \cite[Chapter~VIII]{wasow-asymptoticexpansionsordinary-2002}.

We expect analogous considerations to apply to the discrete case~\eqref{eq:difference-turning-point}. Fewer results are available in this setting. An explicit solution for $\delta =1$ is given in \cite{ehrhardt-solutionsdiscreteairy-2004}.
Wang and Wong \cite{wang-asymptoticexpansionssecondorder-2003,wang-lineardifferenceequations-} study the asymptotics of solutions of difference equations with a turning point; however, their work does not provide bounds on these solutions near the turning point. 
Geronimo et al.~\cite{geronimo-wkbturningpoint-2004,geronimo-wkbturningpoint-2009} and Fedotov and Klopp~\cite{fedotov-complexwkbmethod-2019} transfer results from the continuous case (with $x$ ranging in some compact set $K$) to the discrete case while keeping track of errors.\medskip

We will employ and adapt these results to our purposes. As we expect solutions of \cref{eq:difference-turning-point} to exhibit a turning point at $x^\delta=2$, our strategy is to divide the analysis into three overlapping regions $[0,x_2],[x_1,x_4],[x_3,\infty)$ (see \cref{fig:asymptotics-schematics}), with $x_2^\delta<2<x_4^\delta$, and diversify our bounding technique as follows:
\begin{enumerate}[(i)]
    \item For $x\in[0,x_2]$, we employ standard methods for difference equations, involving the use of \textit{Tur\'an determinants} (cf.~\cite{swiderski-spectralpropertiesblock-2018}), to bound all solutions of \cref{eq:difference-turning-point};
    \item For $x \in [x_1,x_4]$, we adapt the turning-point analysis by Geronimo et al. and Fedotov--Klopp to our case;
    \item For $x\in [x_3,\infty)$, we again use standard methods from the theory of difference equations, namely Poincaré's theorem, to obtain explicit asymptotics of the decaying solution, cf.~\cite[Theorem~8.35]{elaydi-introductiondifferenceequations-2005}.
\end{enumerate}
The overlap between the three regions will allow us to connect the bounds obtained in each regime.

We carry out the analysis both in terms of the rescaled variable $x$ and the original integer variable $n$. In particular, the boundary points $x_j$ correspond to suitable $\lambda$-dependent integers $N_j(\lambda)$.
The strategy of the proof is organized as follows:
\begin{itemize}
    \item In \cref{sec:asympotics-preliminaries}, we define the boundary points $N_j(\lambda)$ rigorously and establish the regularity properties required for the analysis, using the asymptotics of the coefficients $a_n$ and $f_n$. We also show that the regime of small $n$, where the asymptotic expansions do not accurately describe the coefficients, can be absorbed into a constant prefactor in the final bound; see \cref{lem:before-N0}.

    \item The oscillatory region $n \leq N_2(\lambda)$ is treated in \cref{sec:turan} using Tur\'an determinants. This yields a uniform bound for all solutions of \cref{eq:generalized-eigenvalue-lambda}; see \cref{prop:turan}.

    \item In \cref{sec:turning-point}, we analyze the intermediate region
    $N_1(\lambda) \leq n \leq N_4(\lambda)$
    by constructing approximate solutions of \cref{eq:generalized-eigenvalue-lambda} in terms of Airy functions. Explicit error bounds are derived in \cref{prop:turning-point}. These estimates, together with the properties of the Airy functions, are then used in \cref{prop:properties-turningpoint-solutions} to obtain bounds for the actual solutions.

    \item Finally, in \cref{sec:poincare}, we derive asymptotics for the recessive solutions of \cref{eq:generalized-eigenvalue-lambda} using Poincaré's theorem, and obtain explicit error bounds for $n \geq N_4(\lambda)$; see \cref{prop:asymptotics-tail}.
\end{itemize}
In \cref{sec:proof-asymptotics}, all of these ingredients are combined to prove the final bound~\eqref{eq:thebound}.

\subsection{Preliminary definitions and results}
\label{sec:asympotics-preliminaries}

We begin by making a concrete choice of the points $x_j$, and then of the integers $N_j(\lambda)$, defining the three regions. 
We remark, however, that our strategy is independent of the specific choice of $x_j$ (or, equivalently, $N_j(\lambda)$), as long as they are chosen in a suitable way with respect to the turning point $x^\delta=2$, that is, $x_2^\delta<2<x_4^\delta$.
\begin{definition}
    \label{def:asymptotics-regions}
    Set $\delta = \beta-\alpha$ and, for $\lambda>0$, $h = \lambda^{1/\delta}$.
    We define
    \begin{equation}
        x_1^\delta = \nicefrac{1}{2}+\nicefrac{1}{8} \, , \quad x_2^\delta = 1- \nicefrac{1}{8} \, , \quad x_3^\delta = 4+\nicefrac{1}{8} \, , \quad x_4^\delta = 5 - \nicefrac{1}{8} \, . 
    \end{equation}
    Moreover, we define the integers $N_j(\lambda)$ as follows: for $\lambda>0$,
\begin{equation}
    N_j(\lambda)=\begin{dcases}
        \left\lceil \frac{x_j}{h} \right\rceil,& j=1,3;\\ \left\lfloor \frac{x_j}{h} \right\rfloor,& j=2,4,
    \end{dcases}
\end{equation}
and $N_j(0) = \infty$.

    Finally, we set $K_x = [x_1,x_4]$ and $K_N(\lambda) = [N_1(\lambda),N_4(\lambda)]$.
\end{definition}
The offset $\nicefrac{1}{8}$ is necessary as $f_n/a_n \sim n^\delta$ only holds asymptotically, cf. \cref{lem:regularity}; any other sufficiently small value would suffice. 
The choice of the floor and ceiling functions is made in such a way that $\lambda^{1/\delta} K_N(\lambda) \subset K_x$ for all $\lambda > 0$.

\begin{lemma}\label{lem:regularity}
    Set $g(\lambda,n) = \lambda \frac{f_n}{a_n}$.
    Then there exist $N_0$ and $\Lambda > 0$ such that, for all $0\leq \lambda \leq \Lambda$, the following statements hold:
    \begin{enumerate}[(i)]
        \item For all $n \geq N_0$, $|1-\mu_n| \leq \nicefrac{1}{8}$, where $\mu_n = a_{n-1}/a_n$;
        \item For all $n \geq N_0$ and $z \in \Omega$, $|z/a_n| \leq \nicefrac{1}{8}$;
    \end{enumerate}
    additionally, for $0<\lambda<\Lambda$,
    \begin{enumerate}[(i),resume]
        \item $N_0 < N_1(\lambda)<N_2(\lambda) < N_3(\lambda) < N_4(\lambda)$;
        \item $g(\lambda,n) \geq \nicefrac{1}{2}$ for all $n \geq N_1(\lambda)$;
        \item $g(\lambda,n) \leq 1$ for all $N_0 \leq n \leq N_2(\lambda)$;
        \item $g(\lambda,n) \geq 4$ for all $n \geq N_3(\lambda)$;
        \item $g(\lambda,n) \leq 5$ for all $N_0 \leq n \leq N_4(\lambda)$.
    \end{enumerate}
\end{lemma}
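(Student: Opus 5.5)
The plan is to choose $N_0$ first, using only the coefficient asymptotics, and then choose $\Lambda$ small enough that the $\lambda$-dependent integers $N_j(\lambda)$ lie beyond $N_0$ and that $g(\lambda,n)$ follows the profile $(nh)^\delta$ closely on the relevant ranges.

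For (i) and (ii), I would use \cref{hyp:conditions-an-fn}. It gives $\mu_n=a_{n-1}/a_n=(1-1/n)^\alpha(1+O(1/n^2))\to1$, as computed in the proof of \cref{lem:an-variation}, and $a_n\to\infty$. Since $\Omega$ is compact, $\sup_{z\in\Omega}|z|<\infty$, so there is an $N_0$ with $|1-\mu_n|\le 1/8$ and $|z|/a_n\le 1/8$ for all $n\ge N_0$ and $z\in\Omega$. Neither condition involves $\lambda$. I would also enlarge $N_0$ so that
\begin{equation}
  \Bigl|\frac{f_n}{a_n n^\delta}-1\Bigr|\le \eta \qquad \text{for all } n\ge N_0,
\end{equation}
where $\eta>0$ is a small constant fixed below. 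This follows from $f_n/a_n=n^\delta(1+(c_f-c_a)/n+O(1/n^2))$.

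For (iii), note that $N_j(\lambda)\approx x_j\lambda^{-1/\delta}$ and $x_1<x_2<x_3<x_4$ are fixed. Hence $N_j(\lambda)\to\infty$ and the gaps $N_{j+1}(\lambda)-N_j(\lambda)\approx (x_{j+1}-x_j)/h$ tend to infinity as $\lambda\to0$. Choosing $\Lambda$ small enough gives the strict chain $N_0<N_1(\lambda)<\dots<N_4(\lambda)$.

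For (iv) and (vi), take $n\ge N_1(\lambda)>N_0$. Then $g(\lambda,n)\ge(1-\eta)(nh)^\delta\ge(1-\eta)x_1^\delta=(1-\eta)\cdot\tfrac58$, using $nh\ge N_1h\ge x_1$ from the ceiling. This is at least $1/2$ provided $\eta\le 1/5$. For (vi) I get $(1-\eta)(4+\tfrac18)\ge4$ in the same way, provided $\eta\le 1/33$.

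For (v) and (vii), take $N_0\le n\le N_2(\lambda)$. The floor gives $nh\le x_2$, so $g(\lambda,n)\le(1+\eta)\tfrac78\le1$ if $\eta\le1/7$. Similarly, $g(\lambda,n)\le(1+\eta)(5-\tfrac18)\le5$ if $\eta\le 1/39$. The restriction $n\ge N_0$ is exactly what makes the uniform ratio bound available here. I therefore fix $\eta=1/40$ at the outset.

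The only delicate point is the order of choices: $N_0$ must depend on $\eta$ and $\Omega$ only, and $\Lambda$ only on $N_0$. Monotonicity in $n$ is never needed, because the two-sided ratio bound is uniform for $n\ge N_0$. The case $\lambda=0$ in (i) and (ii) is trivial, since those conditions do not depend on $\lambda$.
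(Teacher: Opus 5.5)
Your proposal is correct and follows essentially the same route as the paper. You fix $N_0$ from $\mu_n\to1$, $a_n\to\infty$, compactness of $\Omega$, and a relative bound $|f_n/a_n-n^\delta|\le\epsilon n^\delta$. You then shrink $\Lambda$ so that $N_0<N_1(\lambda)$, and transfer the bounds through $x=nh$ and the floor/ceiling definitions of $N_j(\lambda)$.

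The differences are only in detail. Your explicit choice $\eta=1/40$ and your gap argument for the strict chain in (iii) are more careful than the paper. That argument needs $h<(x_2-x_1)/2$ and $h<(x_4-x_3)/2$, whereas the paper picks an abstract $\epsilon$ and simply asserts that the chain follows from the definition.
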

\begin{remark}
    \label{rem:lambda-decrease}
    Throughout \cref{sec:asymptotics-of-eigenvectors}, the symbol $\Lambda>0$ denotes a sufficiently small constant for which the corresponding estimates hold. 
    In particular, throughout the oscillatory regime (cf.~\cref{sec:turan}) and the tail region (cf.~\cref{sec:poincare}), it suffices to choose $\Lambda>0$ such that the above lemma holds. In \cref{sec:turning-point}, which deals with the turning-point analysis of the difference equation, the precise value of $\Lambda$ may vary and is implicitly allowed to decrease from one statement to the next.

    Consequently, in the proof of \cref{prop:asymptotics}, $\Lambda>0$ is chosen sufficiently small so that both \cref{lem:regularity} and the main result of the turning-point analysis, \cref{prop:properties-turningpoint-solutions}, hold.
\end{remark}
\begin{proof}
    By $\lim_{n \to \infty}a_{n-1}/a_n = 1$, the compactness of $\Omega$, and $\lim_{n \to \infty}a_n^{-1} = 0$, there exists $\tilde N_0$ such that, for all $n\geq\tilde N_0$, the first two statements hold.
    
    We now treat the remaining cases for $\lambda > 0$.
    Let $\epsilon>0$ be small enough so that
    \begin{align}
        \label{proofeq:regularity-epsilon}
        x_1^\delta(1-\epsilon) \geq \nicefrac{1}{2} \, , \quad x_2^\delta(1+\epsilon) \leq 1\, , \quad x_3^\delta(1-\epsilon) \geq 4 \,, \quad x_4^\delta(1+\epsilon) \leq 5.
    \end{align}
    Per assumption (\cref{hyp:conditions-an-fn}), $f_n/a_n \sim n^\delta$, and hence there exists $N_0 \geq \tilde N_{0}(\Omega)$ such that
    \begin{equation}
        \left|\frac{f_n}{a_n}-n^\delta\right| \leq \epsilon n^\delta
    \end{equation}
    for all $n \geq N_0$.
    Therefore, for all $\lambda > 0$ and $x = \lambda^{1/\delta} n$, we have
    \begin{equation}
        \label{proofeq:regularity-expansion}
        \left|g(\lambda,n) - x^\delta\right| = \lambda \left|\frac{f_n}{a_n}-n^\delta\right| \leq \lambda  \epsilon n^\delta = \epsilon x^\delta
    \end{equation}
    for all $n \geq N_0$.
    
    By definition, $N_1(\lambda) = \left\lceil(x_1/ \lambda)^{1/\delta}\right\rceil$, and hence there exists $\Lambda > 0$ such that $N_0<N_1(\lambda)$ holds for all $\lambda < \Lambda$.
    The inequality for the $N_j(\lambda)$ follows per definition.

    For $0 < \lambda < \Lambda$ and $n \geq N_1(\lambda)$ we obtain, using \cref{proofeq:regularity-expansion,proofeq:regularity-epsilon} and $x = \lambda^{1/\delta}n \geq x_1$,
    \begin{equation}
        g(\lambda,n) \geq x^\delta -  \left|g(\lambda,n) - x^\delta\right| \geq x^\delta(1-\epsilon) \geq x_1^\delta(1-\epsilon) \geq \nicefrac{1}{2} \, .
    \end{equation}
    Similarly, for $N_0 \leq n \leq N_2(\lambda)$ and $x = \lambda^{1/\delta}n \leq x_2$
    \begin{equation}
        g(\lambda,n) \leq x^\delta + \left|g(\lambda,n) - x^\delta\right| \leq x^\delta(1+\epsilon) \leq x_2^\delta(1+\epsilon)  \leq 1 \, .
    \end{equation}
    The last two statements follow analogously.
\end{proof}
We state two further auxiliary results needed throughout the following analysis.
We begin by noting that the second-order recurrence relation~\eqref{eq:generalized-eigenvalue-lambda} can be rewritten as a first-order matrix recurrence equation:
\begin{lemma}
    \label{lem:transfer-matrix}
    A sequence $u \in \ell(\nnum)$ obeys \cref{eq:generalized-eigenvalue-lambda} if and only if it obeys
    \begin{equation}
        \begin{pmatrix}
        u_{n+1} \\ u_n 
    \end{pmatrix} = T^{\lambda,z}_n \begin{pmatrix}
        u_n \\ u_{n-1}
    \end{pmatrix} \, , \quad T^{\lambda,z}_n = \begin{pmatrix}
        -\lambda f_n/a_n+\frac{z}{a_n} & -\frac{a_{n-1}}{a_n} \\ 1 & 0 
    \end{pmatrix} \quad \forall n \geq 1\, , 
    \end{equation}
    where $T^{\lambda,z}_n$ is the so-called transfer matrix.
\end{lemma}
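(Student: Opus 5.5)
The statement to prove is \cref{lem:transfer-matrix}. It is an elementary reformulation, so the plan is a direct two-way verification, written out for a fixed $n\ge 1$.

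\emph{From the recurrence to the matrix form.} Suppose $u$ satisfies \cref{eq:generalized-eigenvalue-lambda}. Since $a_n>0$ by \cref{hyp:conditions-an-fn}, we may divide by $a_n$ and obtain
\begin{equation}
    u_{n+1} = \Bigl(-\frac{\lambda f_n}{a_n}+\frac{z}{a_n}\Bigr)u_n - \frac{a_{n-1}}{a_n}u_{n-1}.
\end{equation}
This is exactly the first component of the product $T^{\lambda,z}_n(u_n,u_{n-1})^{\mathsf T}$. The second component of that product is $1\cdot u_n+0\cdot u_{n-1}=u_n$, which holds trivially. Hence the matrix identity holds for every $n\ge1$.

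\emph{From the matrix form to the recurrence.} Conversely, suppose the matrix identity holds for all $n\ge1$. Its second row is a tautology. Its first row is the displayed equation above, and multiplying it by $a_n$ returns \cref{eq:generalized-eigenvalue-lambda}.

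Both directions are therefore immediate. The only point that needs care is the division by $a_n$, which is legitimate because $a_n>0$.
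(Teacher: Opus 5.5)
Your proposal is correct: dividing \cref{eq:generalized-eigenvalue-lambda} by $a_n>0$ gives exactly the first row of $T^{\lambda,z}_n$, the second row is trivial, and the argument reverses. The paper states this lemma without proof, treating it as the immediate verification you carried out, so your approach is essentially the intended one.
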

Within this representation, we can easily obtain the following:
\begin{lemma}
    \label{lem:before-N0}
    Let $\Lambda$, $N_0$ be as given in \cref{lem:regularity,def:asymptotics-regions}.
    Then there exists $C_0>0$ such that, for all $0\leq \lambda \leq \Lambda$, $z \in \Omega$, all solutions $u$ of \cref{eq:generalized-eigenvalue-lambda}, and $n \leq N_0$, the following estimate holds:
    \begin{equation}
        |u_{n+1}|^2 + |u_n|^2 \leq C_0 (|u_0|^2+|u_1|^2)
    \end{equation}
\end{lemma}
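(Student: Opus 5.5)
The idea is to use the transfer matrix representation of \cref{lem:transfer-matrix} and bound the entries of $T^{\lambda,z}_n$ uniformly over the finite index range $1\le n\le N_0$, over $\lambda\in[0,\Lambda]$, and over $z\in\Omega$.

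For each fixed $n\ge 1$, the entries of $T^{\lambda,z}_n$ are
\begin{equation}
    -\lambda f_n/a_n + z/a_n,\qquad -a_{n-1}/a_n,\qquad 1,\qquad 0 .
\end{equation}
Their operator norm therefore satisfies
\begin{equation}
    \norm{T^{\lambda,z}_n}\le \Lambda f_n/a_n + \sup_{z\in\Omega}|z|/a_n + a_{n-1}/a_n + 1 =: \tau_n ,
\end{equation}
where the sup is finite because $\Omega$ is compact, and $a_n>0$ by \cref{hyp:conditions-an-fn}. Crucially, $\tau_n$ does not depend on $\lambda\in[0,\Lambda]$, on $z\in\Omega$, or on the solution $u$.

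Iterating the recurrence of \cref{lem:transfer-matrix} gives
\begin{equation}
    \begin{pmatrix} u_{n+1}\\ u_n\end{pmatrix}
    = T^{\lambda,z}_n\cdots T^{\lambda,z}_1\begin{pmatrix} u_1\\ u_0\end{pmatrix}, \qquad n\ge 1 .
\end{equation}
Taking norms yields
\begin{equation}
    |u_{n+1}|^2+|u_n|^2\le \Bigl(\prod_{k=1}^{n}\tau_k\Bigr)^2\,(|u_0|^2+|u_1|^2).
\end{equation}
For $n=0$ the left-hand side is exactly $|u_1|^2+|u_0|^2$, so that case holds with constant $1$.

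Now set
\begin{equation}
    C_0=\max_{0\le n\le N_0}\Bigl(\prod_{k=1}^{n}\tau_k\Bigr)^2 ,
\end{equation}
with the empty product equal to $1$. This is a maximum of finitely many finite numbers, so $C_0<\infty$, and it gives the claimed bound.

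The only point needing care is uniformity. It comes from two facts: only finitely many indices are involved, since $N_0$ is fixed by \cref{lem:regularity} independently of $\lambda$, and each $\tau_k$ is uniform in $\lambda$ and $z$. There is no genuine obstacle here; the estimate is crude and does not need any of the asymptotics from \cref{hyp:conditions-an-fn}.
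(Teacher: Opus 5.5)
Your proof is correct and follows the same route as the paper: iterate the transfer matrices of \cref{lem:transfer-matrix} over the finitely many indices $1\le k\le N_0$, and bound their norms uniformly in $\lambda\in[0,\Lambda]$ and $z\in\Omega$. The paper instead takes a single uniform bound $c_0$ and sets $C_0=c_0^{2N_0}$, which works because $\norm{T^{\lambda,z}_k}\ge 1$; your explicit $\tau_k$ and the maximum over $n$ give the same argument without needing that observation.
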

\begin{proof}
 By \cref{lem:transfer-matrix}, for all $n\geq1$ we have
    \begin{equation}
        \begin{pmatrix}
        u_{n+1} \\ u_n 
    \end{pmatrix} = T^{\lambda,z}_n \begin{pmatrix}
        u_n \\ u_{n-1}
    \end{pmatrix} \, , \quad T^{\lambda,z}_n = \begin{pmatrix}
        -\lambda f_n/a_n+\frac{z}{a_n} & -\frac{a_{n-1}}{a_n} \\ 1 & 0 
    \end{pmatrix}\, .
    \end{equation}
    Since $\Omega$, $[0,\Lambda]$ and $[0,N_0]$ are compact sets, there exists $c_0 >0$ such that 
    \begin{equation}
        \norm{T_n} \leq c_0
    \end{equation}
    for all $z \in \Omega$, $\lambda \in [0,\Lambda]$ and $n \leq N_0$.
    The claim follows with $C_0 = c_0^{2N_0}$.
\end{proof}

\subsection{First region: Tur\'an determinants}
\label{sec:turan}
Here we focus on the region $n \in [N_0,N_2(\lambda)]$, cf.~\cref{def:asymptotics-regions}, i.e. the region in which $\lambda f_n/a_n\sim x^\delta$ is sufficiently small. In particular, \cref{lem:regularity} applies in this region.

We aim to derive $\lambda$-independent estimates for all solutions of the generalized eigenvalue equation~\eqref{eq:generalized-eigenvalue-lambda} in this region using Tur\'an determinants, cf.~\cref{def:turan}. To this end, we follow the approach of~\cite{swiderski-spectralpropertiesblock-2018}, keeping track of all constants.

Throughout the section, we will denote by $E\in\cnum^{2\times 2}$ the matrix
\begin{equation}
    E = \begin{pmatrix}
        0 &  1\\ -1 & 0
    \end{pmatrix}\, ,
\end{equation}
and write $\Real A = \frac{1}{2}(A+A^\dagger)$ for a matrix $A \in \cnum^{2\times 2}$.
\begin{definition}
    \label{def:turan}
    Given $n \in \nnum$ and $\lambda \geq 0$, let $Q^{\lambda,z}_n$ be the quadratic form on $\cnum^2$ given by
    \begin{equation}
        Q^{\lambda,z}_n(v) = \braket{\Real(E T^{\lambda,z}_n)v,v}\,, \quad v \in \cnum^2\,, 
    \end{equation}
    where $T^{\lambda,z}_n$ is the transfer matrix from \cref{lem:transfer-matrix}.
Besides, given a solution $u=(u_n)_{n\in\nnum}$ of \cref{eq:generalized-eigenvalue-lambda}, the $n$th \textit{Tur\'an determinant} $S^{\lambda,z}_n(\alpha)$ is given by
    \begin{equation}
        S^{\lambda,z}_n(\alpha) = a_{n}Q^{\lambda,z}_n((u_n,u_{n-1})^T)\, , 
    \end{equation}
    where $\alpha=(u_0,u_1)^\intercal$.
\end{definition}
Using the explicit expression of the transfer matrix, we have
\begin{equation}
    Q^{\lambda,z}_n(v) = \braket{A^{\lambda,z}_n v,v}\, , \quad A^{\lambda,z}_n = \begin{pmatrix}
        1 & \frac{\lambda f_n}{2 a_n} - \frac{z^\ast}{2 a_n} \\ \frac{\lambda f_n}{2 a_n} - \frac{z}{2 a_n} & \frac{a_{n-1}}{a_n} \, .
    \end{pmatrix}
\end{equation}
Hence, $Q^{\lambda,z}_n$ is positive definite if and only if $\det A^{\lambda,z}_n > 0$, which holds provided that $\frac{\lambda f_n}{2a_n}$ is sufficiently small.
The following lemma quantifies this:
\begin{lemma}
    \label{lem:uniformly-positive}
    For all $n\in[N_0,N_2(\lambda)]$ and $v\in\cnum^2$,
    \begin{equation}
        \frac{1}{4} \norm{v}^2 \leq Q^{\lambda,z}_n(v) \leq 2 \norm{v}^2\, .
    \end{equation}
\end{lemma}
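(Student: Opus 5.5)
We reduce the statement to elementary bounds on the entries of the Hermitian $2\times 2$ matrix $A^{\lambda,z}_n$. The quantities to control are
\begin{equation}
    \mu_n=\frac{a_{n-1}}{a_n},\qquad w_n=\frac{\lambda f_n}{2a_n}-\frac{z}{2a_n}.
\end{equation}
For $v=(v_1,v_2)^\intercal$ we have
\begin{equation}
    Q^{\lambda,z}_n(v)=|v_1|^2+\mu_n|v_2|^2+2\Real\bigl(w_n\,v_2^\ast v_1\bigr),
\end{equation}
possibly up to conjugation conventions in the off-diagonal term, which do not matter since only $|w_n|$ enters below. The plan is to bound the diagonal part from above and below using $\mu_n$, and to bound the cross term using $|w_n|$ together with $2|v_1||v_2|\le\norm{v}^2$.

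The first step is to bound the coefficients for $n\in[N_0,N_2(\lambda)]$ via \cref{lem:regularity}. Part (i) gives $\nicefrac{7}{8}\le\mu_n\le\nicefrac{9}{8}$. Part (v) gives $0\le g(\lambda,n)=\lambda f_n/a_n\le 1$; nonnegativity holds because $f_n\ge0$ and $a_n>0$. Part (ii) gives $|z/a_n|\le\nicefrac18$. Hence
\begin{equation}
    |w_n|\le \tfrac12\bigl(1+\tfrac18\bigr)=\tfrac{9}{16}.
\end{equation}
For $\lambda=0$ the interval is $[N_0,\infty)$ and (v) holds trivially, since $g(0,n)=0$.

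The second step is the estimate itself. The diagonal part lies between $\tfrac78\norm{v}^2$ and $\tfrac98\norm{v}^2$. The cross term satisfies
\begin{equation}
    \bigl|2\Real(w_n v_2^\ast v_1)\bigr|\le 2|w_n||v_1||v_2|\le |w_n|\,\norm{v}^2\le\tfrac{9}{16}\norm{v}^2 .
\end{equation}
This already yields the upper bound, since $\tfrac98+\tfrac9{16}<2$. For the lower bound the same crude estimate only gives $\tfrac78-\tfrac9{16}=\tfrac5{16}\ge\tfrac14$, which still suffices.

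If the constants turned out tighter than expected, for instance because of a different normalisation of $g$ near $N_2(\lambda)$, I would instead compute the eigenvalues of $A^{\lambda,z}_n$ exactly:
\begin{equation}
    \tfrac12\bigl(1+\mu_n\bigr)\pm\sqrt{\tfrac14(1-\mu_n)^2+|w_n|^2}.
\end{equation}
These lie in $[\,1-\tfrac{\sqrt{1/64+81/256}}{1},\,\dots]$, which is comfortably inside $[\tfrac14,2]$.

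The only real obstacle is the bookkeeping of constants. One has to make sure that \cref{lem:regularity}(v) applies on the whole closed range up to $N_2(\lambda)$, which holds by the floor in \cref{def:asymptotics-regions}, and that the $z$-dependent shift is absorbed by (ii).
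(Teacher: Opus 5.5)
Your proof is correct. It differs from the paper only in the final estimate. The paper takes the route of your fallback paragraph: it writes down the eigenvalues $\rho_\pm=\frac{1+\mu_n}{2}\pm\sqrt{\frac{(1-\mu_n)^2}{4}+|\sigma_n|^2}$ of the Hermitian matrix $A^{\lambda,z}_n$. It bounds them with the same inputs from \cref{lem:regularity} that you use, namely $|1-\mu_n|\le\frac18$ and $|\sigma_n|\le\frac12+\frac1{16}$, and obtains $\rho_+\le 2$ and $\rho_-\ge\frac14$.

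Your main argument instead bounds the form directly, Gershgorin-style: the diagonal part lies in $[\frac78,\frac98]\norm{v}^2$ and the cross term is at most $|w_n|\norm{v}^2$. This is slightly cruder, giving $[\frac5{16},\frac{27}{16}]$ rather than roughly $[0.37,1.63]$, but it avoids the square-root formula. Since both routes rest on the same coefficient bounds, the difference is cosmetic. You are also a little more careful than the paper in noting $g(\lambda,n)\ge0$ and in treating $\lambda=0$ separately.

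There is one slip in your fallback. The lower eigenvalue bound should be $\frac{15}{16}-\sqrt{\frac{1}{256}+\frac{81}{256}}$. The reason is that $\frac{1+\mu_n}{2}$ can be as small as $\frac{15}{16}$, not $1$, and $\frac14(1-\mu_n)^2\le\frac{1}{256}$. The corrected value is still above $\frac14$, and your main argument does not rely on it anyway.
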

\begin{proof}
    We write $A^{\lambda,z}_n$ as 
    \begin{equation}
        A^{\lambda,z}_n = \begin{pmatrix}
            1 & \sigma_n(\lambda,z)^\ast \\ \sigma_n(\lambda,z) & \mu_n
        \end{pmatrix}\, ,
    \end{equation}
    where $\sigma_n(\lambda,z) = \frac{\lambda f_n}{2 a_n}-\frac{z}{2 a_n}$ and $\mu_n = \frac{a_{n-1}}{a_n}$.
    As $A^{\lambda,z}_n$ is symmetric, the lower and upper bound of the quadratic form $Q^{\lambda,z}_n(v) = \braket{A^{\lambda,z}_n v, v}$ are given by the two eigenvalues $\rho^{\lambda,z}_\pm$ of $A^{\lambda,z}_n$:
    \begin{equation}
        \rho^{\lambda,z}_\pm = \frac{1+\mu_n}{2} \pm \sqrt{\frac{(1-\mu_n)^2}{4}+|\sigma_n(\lambda,z)|^2} \, .
    \end{equation}
    For all $N_0 \leq n \leq N_2(\lambda)$, \cref{lem:regularity} implies
    \begin{align}
        |1-\mu_n| &  \leq \frac{1}{8}, \\
        |\sigma_n(\lambda,z)| & \leq \frac{\lambda f_n}{2 a_n} + \frac{|z|}{2 a_n} \leq \frac{1}{2} + \frac{1}{16} \, .
    \end{align}
    Hence, 
    \begin{align}
        \rho^{\lambda,z}_+ & \leq 1+\frac{1}{16} + \sqrt{\nicefrac{1}{16^2}+(\nicefrac{1}{2}+\nicefrac{1}{16})^2} \leq 2 \, , \\
        \rho^{\lambda,z}_- & \geq 1-\frac{1}{16}-\sqrt{\nicefrac{1}{16^2}+(\nicefrac{1}{2}+\nicefrac{1}{16})^2} \geq \frac{1}{4} \, ,
    \end{align}
    whence the claim follows.
\end{proof}
The next estimate gives bounds on the increment of $S^{\lambda,z}_n(\alpha)$ in terms of the variation of the entries of the transfer matrix $T^{\lambda,z}_n$:
\begin{lemma}
    \label{lem:turan-diff}
    Let $u$ be a solution of \cref{eq:generalized-eigenvalue-lambda}, and $S^{\lambda,z}_n(\alpha)$ be the corresponding Tur\'an determinant (\cref{def:turan}). Then 
    \begin{equation}
        |S^{\lambda,z}_{n+1}(\alpha)-S^{\lambda,z}_n(\alpha)|  \leq \norm{T^{\lambda,z}_n} a_{n+1} \Delta_n(\lambda,z)(|u_n|^2+|u_{n-1}|^2)\, ,
    \end{equation}
    where 
    \begin{equation}\label{eq:deltanlz}
        \Delta_n(\lambda,z) = \Delta^{a,1}_n+|z| \Delta^{a,2}_n+ |z-z^\ast| a_n^{-1} + \lambda \Delta^{f}_n 
    \end{equation}
    and
    \begin{align}
        \Delta^{a,1}_n & = |a_{n+1}^{-1}a_n-a_n^{-1} a_{n-1}|\, , \\
        \Delta^{a,2}_n & = |a_{n+1}^{-1}-a_{n}^{-1}| \, , \\
        \Delta^{f}_n & = |a_{n+1}^{-1} f_{n+1}-a_n^{-1} f_n| \, .
    \end{align}
\end{lemma}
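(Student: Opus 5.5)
The bound should come from a direct algebraic computation. I would write $S^{\lambda,z}_{n+1}(\alpha)$ and $S^{\lambda,z}_n(\alpha)$ as Hermitian forms in the same vector $v=(u_n,u_{n-1})^\intercal$, and show that their difference is the form of a matrix which involves only increments of the coefficients. I would follow the computation in \cite{swiderski-spectralpropertiesblock-2018}, adapted to complex $z$.

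First I would make the Turán determinant explicit. Since $E T^{\lambda,z}_n v=(u_n,-u_{n+1})^\intercal$, we have $\braket{E T^{\lambda,z}_n v,v}$ equal to $|u_n|^2-u_{n+1}\bar u_{n-1}$ up to the convention for conjugation. Hence
\begin{equation}
S^{\lambda,z}_n(\alpha)=a_n\,\Real\bigl(|u_n|^2-u_{n+1}\bar u_{n-1}\bigr),
\end{equation}
and, writing $w=T^{\lambda,z}_n v=(u_{n+1},u_n)^\intercal$,
\begin{equation}
S^{\lambda,z}_{n+1}(\alpha)=a_{n+1}\braket{\Real(ET^{\lambda,z}_{n+1})w,w}.
\end{equation}

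The key step is to split $T^{\lambda,z}_{n+1}=\widehat T_n+(T^{\lambda,z}_{n+1}-\widehat T_n)$, where $\widehat T_n$ is chosen so that $a_{n+1}\braket{\Real(E\widehat T_n)T^{\lambda,z}_n v,T^{\lambda,z}_nv}$ reproduces $S^{\lambda,z}_n(\alpha)$ exactly. For this I would use the $2\times2$ identity $X^\intercal E X=(\det X)E$ together with $\det T^{\lambda,z}_n=a_{n-1}/a_n$. The natural choice is the matrix with entries $\bigl(-\lambda f_n/a_n+z/a_n,\ -a_n/a_{n+1}\bigr)$ in the first row and $(1,0)$ in the second. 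One then checks by expanding, or equivalently by a telescoping rewrite in $u$ with the recurrence $a_nu_{n+1}=-(\lambda f_n-z)u_n-a_{n-1}u_{n-1}$, that the cross terms cancel. With this choice the difference is
\begin{equation}
S^{\lambda,z}_{n+1}-S^{\lambda,z}_n=a_{n+1}\bigl\langle \Real\bigl(E(T^{\lambda,z}_{n+1}-\widehat T_n)\bigr)w,w\bigr\rangle+R_n ,
\end{equation}
where $R_n$ is a remainder caused by $z\neq z^\ast$. This remainder appears because $\Real$ involves the adjoint while the determinant identity involves the transpose, and it should be of size $a_{n+1}|z-z^\ast|a_n^{-1}$ times quadratic terms in $u$.

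It then remains to estimate. The matrix $T^{\lambda,z}_{n+1}-\widehat T_n$ has entries $\lambda(f_n/a_n-f_{n+1}/a_{n+1})+z(a_{n+1}^{-1}-a_n^{-1})$ and $a_n/a_{n+1}-a_{n-1}/a_n$, up to the index conventions. Their moduli are bounded by $\lambda\Delta^f_n+|z|\Delta^{a,2}_n+\Delta^{a,1}_n$, so the whole matrix has norm at most $\Delta_n(\lambda,z)$. Using $\norm{w}\le\norm{T^{\lambda,z}_n}\,\norm{v}$ and $\norm{v}^2=|u_n|^2+|u_{n-1}|^2$ then gives the claimed inequality, and this closing step is routine.

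I expect the main obstacle to be the bookkeeping that makes the difference, up to the $|z-z^\ast|$ term, contain only coefficient increments and no $O(1)$ leftover, since the complex $z$ spoils the purely real Świderski identity. I would resolve it by brute-force expansion in the coordinates $u_{n+1},u_n,u_{n-1}$ rather than matrix identities. Absorbing the factor $\norm{T^{\lambda,z}_n}$ consistently, that is, using one power of it together with the other power from $v$, also needs care. If necessary, I would bound the imaginary part of the $z$ contribution separately through $|\Imag z|\,|u_n|\,|u_{n-1}|\le\tfrac12|z-z^\ast|\norm{v}^2$.
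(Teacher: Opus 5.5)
Your key step is false as stated. Write $T_n=T^{\lambda,z}_n$, $p_n=(z-\lambda f_n)/a_n$, $\mu_n=a_{n-1}/a_n$, $v=(u_n,u_{n-1})^T$ and $w=T_nv$. Your $\widehat T_n$ is exactly $T_nD$ with $D=\operatorname{diag}(1,\mu_{n+1}/\mu_n)$. For real $z$ the identity $T_n^TET_n=\mu_nE$ then gives $a_{n+1}T_n^TE\widehat T_nT_n=a_{n+1}\mu_n\,EDT_n$, which is not $a_nET_n$ unless $a_{n-1}a_{n+1}=a_n^2$. Explicitly, for $z\in\rnum$,
\[
a_{n+1}\langle\Real(E\widehat T_n)w,w\rangle-S^{\lambda,z}_n=a_{n+1}(\mu_{n+1}-\mu_n)\,\Real\bigl(u_{n+1}\bar u_{n-1}\bigr),
\]
so your remainder $R_n$ is not a $z\neq z^\ast$ effect. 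Consistently with this, $T_{n+1}$ and $\widehat T_n$ share the $(1,2)$ entry $-a_n/a_{n+1}$, so $T_{n+1}-\widehat T_n$ has the single nonzero entry $p_{n+1}-p_n$. The increment $a_n/a_{n+1}-a_{n-1}/a_n$ you list is therefore not in that matrix at all; it is precisely the term you claimed cancels. ``Up to the index conventions'' hides a genuine inconsistency, not a notational one. (The choice the determinant identity actually dictates is the scalar multiple $\frac{a_n^2}{a_{n-1}a_{n+1}}T_n$.)

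Second, the closing step is not routine. The bound $|\langle Aw,w\rangle|\le\norm{A}\norm{w}^2\le\norm{A}\norm{T_n}^2\norm{v}^2$ yields $\norm{T_n}^2$, and since $\norm{T_n}\ge\norm{T_ne_1}\ge1$ this is strictly weaker than the lemma. Your last paragraph notices this but supplies no mechanism. The missing observation is that the increment is \emph{linear} in $u_{n+1}$:
\[
S^{\lambda,z}_{n+1}-S^{\lambda,z}_n=a_{n+1}\Real\Bigl\{u_{n+1}\bigl[(\bar p_n-p_{n+1})\bar u_n+(\mu_{n+1}-\mu_n)\bar u_{n-1}\bigr]\Bigr\}.
\]
You get this as follows: expand $S_{n+1}$ with the recurrence for $u_{n+2}$, write $a_{n+1}|u_{n+1}|^2=a_{n+1}\Real(u_{n+1}\overline{u_{n+1}})$, insert the recurrence into the conjugated factor only, and use $a_n=a_{n+1}\mu_{n+1}$. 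Eliminating $u_{n-1}$ in favour of $w$, as you propose, destroys this structure. Then $|u_{n+1}|\le\norm{T_n}\norm{v}$, Cauchy--Schwarz, and $\bar p_n-p_{n+1}=(p_n-p_{n+1})+(z^\ast-z)/a_n$ give exactly $\norm{T_n}a_{n+1}\Delta_n(\lambda,z)\norm{v}^2$. Note that the $|z-z^\ast|a_n^{-1}$ term multiplies $u_{n+1}\bar u_n$, not $u_nu_{n-1}$ as in your fallback estimate. The paper encodes the same fact as $C_n=-\tilde C_nET_n$ with $\tilde C_n=a_{n+1}T_n^\dagger ET_{n+1}E+a_n$. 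The first column of $\tilde C_n$ vanishes because $a_{n+1}\mu_{n+1}=a_n$, so $\norm{C_n}\le\norm{\tilde C_n}\norm{T_n}\le a_{n+1}\Delta_n(\lambda,z)\norm{T_n}$.
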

\begin{proof}
    The proof closely follows \cite[Lemma~3]{swiderski-spectralpropertiesblock-2018}.
    By \cref{def:turan,lem:transfer-matrix}, we get
    \begin{align}
        \label{proofeq:turan-difference}
        S^{\lambda,z}_{n+1}(\alpha)-S^{\lambda,z}_n(\alpha) & = a_{n+1} Q^{\lambda,z}_{n+1}((u_{n+1},u_{n})^T) - a_{n}Q^{\lambda,z}_n((u_{n},u_{n-1})^T) \\
        & = a_{n+1} Q^{\lambda,z}_{n+1}(T^{\lambda,z}_n(u_{n},u_{n-1})^T) - a_{n}Q^{\lambda,z}_n((u_{n},u_{n-1})^T) \\
        & = \braket{\Real C^{\lambda,z}_n (u_{n},u_{n-1})^T,(u_{n},u_{n-1})^T}\, ,
    \end{align}
    where $C^{\lambda,z}_n$ is given by
    \begin{equation}
        C^{\lambda,z}_n = (T^{\lambda,z}_n)^\dagger a_{n+1} E T^{\lambda,z}_{n+1} T^{\lambda,z}_n - a_n E T^{\lambda,z}_n \, . 
    \end{equation}
    As $E^2 = -\id$, we get
    \begin{equation}
        C^{\lambda,z}_n = -\left(a_{n+1} (T^{\lambda,z}_n)^\dagger E T^{\lambda,z}_{n+1}E +a_n\right) E T^{\lambda,z}_n = -\tilde{C}^{\lambda,z}_n E T^{\lambda,z}_n \, . 
    \end{equation}
    A direct calculation yields
    \begin{equation}
        \tilde{C}^{\lambda,z}_n = a_{n+1}\begin{pmatrix}
            0 & \frac{\lambda f_{n+1}-z}{a_{n+1}} + \frac{z^\ast -\lambda f_n}{a_n} \\
            0 & \frac{a_n}{a_{n+1}} - \frac{a_{n-1}}{a_{n}}
        \end{pmatrix} = a_{n+1}\begin{pmatrix}
            0 & \frac{\lambda f_{n+1}-z}{a_{n+1}} - \frac{\lambda f_n-z}{a_n}+\frac{z^\ast-z}{a_n} \\
            0 & \frac{a_n}{a_{n+1}} - \frac{a_{n-1}}{a_{n}}
        \end{pmatrix}\, , 
    \end{equation}
    and hence
    \begin{equation}
        \norm{\tilde{C}^{\lambda,z}_n} \leq a_{n+1} \left(\Delta^{a,1}_n + |z|\Delta^{a,2}_n+\lambda \Delta^{f}_n+|z-z^\ast| a_n^{-1} \right) = a_{n+1}\Delta_n(\lambda,z)\,. 
    \end{equation}
    Thus,
    \begin{equation}
        \norm{C^{\lambda,z}_n}  \leq \norm{\tilde{C}^{\lambda,z}_n}\norm{E}\norm{T^{\lambda,z}_n} \leq \norm{T^{\lambda,z}_n} a_{n+1}\Delta_n(\lambda,z)
    \end{equation}
    Using \cref{proofeq:turan-difference}, $\norm{\Real C_n} \leq \norm{C_n}$, and the Cauchy--Schwartz inequality, we obtain
    \begin{equation}
        |S^{\lambda,z}_{n+1}(\alpha)-S^{\lambda,z}_n(\alpha)| = \left|\braket{\Real C^{\lambda,z}_n (u_{n},u_{n-1})^T,(u_{n},u_{n-1})^T}\right| \leq \norm{C^{\lambda,z}_n} (|u_{n-1}|^2+|u_n|^2)\,, 
    \end{equation}
    and the claim follows.
\end{proof}
\begin{lemma}
    \label{lem:turan-upperbound}
   Let $u$ be a solution of \cref{eq:generalized-eigenvalue-lambda}.
    Then, for all $N_0 \leq n \leq N_2(\lambda)$, 
    \begin{equation}
         a_n(|u_{n-1}|^2+|u_n|^2)\leq 8 a_{N_0} \exp\left(\frac{128}{7}\sum_{k = N_0}^n \Delta_k(\lambda,z)\right) (|u_{N_0}|^2+|u_{N_0-1}|^2)\, ,
    \end{equation}
    where  $\Delta_n(\lambda,z)$ is given by \cref{eq:deltanlz}.
\end{lemma}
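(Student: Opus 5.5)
We plan a discrete Grönwall argument on the Turán determinants $S^{\lambda,z}_n(\alpha)$, in the spirit of \cite[Lemma~3]{swiderski-spectralpropertiesblock-2018}. The two-sided comparison in \cref{lem:uniformly-positive} allows us to pass between $S^{\lambda,z}_n(\alpha)$ and the weighted energy $a_n(|u_n|^2+|u_{n-1}|^2)$. Explicitly, for all $N_0\le n\le N_2(\lambda)$ we have
\begin{equation}
    \tfrac14 a_n(|u_n|^2+|u_{n-1}|^2)\le S^{\lambda,z}_n(\alpha)\le 2a_n(|u_n|^2+|u_{n-1}|^2).
\end{equation}
In particular, $S^{\lambda,z}_n(\alpha)$ is real and positive in this range, and it controls the energy up to constants. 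The factor $8=2\cdot 4$ in the claimed bound comes exactly from using the upper comparison at $N_0$ and the lower comparison at $n$.

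The increment estimate is supplied by \cref{lem:turan-diff}. On the range $[N_0,N_2(\lambda)]$ we bound $\norm{T^{\lambda,z}_n}$ uniformly using \cref{lem:regularity}: the entries satisfy $|\lambda f_n/a_n|\le 1$, $|z/a_n|\le 1/8$ and $|a_{n-1}/a_n|\le 9/8$, which gives an absolute constant $c$ (for instance $\norm{T_n}\le 2$ or so). We also need $a_{n+1}/a_n$ bounded, which follows from \cref{lem:regularity}(i) applied at $n+1$, giving $a_{n+1}\le \tfrac87 a_n$. This is presumably where the factor $7$ in $128/7$ originates. Combining these bounds with the lower comparison yields
\begin{equation}
    |S_{n+1}-S_n|\le \tfrac{8}{7}\,c\,\Delta_n(\lambda,z)\,a_n(|u_n|^2+|u_{n-1}|^2)\le \tfrac{32c}{7}\,\Delta_n(\lambda,z)\,S_n .
\end{equation}
With $c=4$, for instance, this gives the constant $128/7$, possibly after adjusting how $\norm{T_n}$ is estimated.

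From this we get $S_{n+1}\le (1+K\Delta_n)S_n\le e^{K\Delta_n}S_n$ with $K=128/7$. Iterating from $N_0$ gives $S_n\le S_{N_0}\exp\bigl(K\sum_{k=N_0}^{n}\Delta_k\bigr)$; we may include the extra term $k=n$ harmlessly, since $\Delta_k\ge0$. Converting both ends back to the energy via the comparison above gives the stated inequality.

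The only real care needed is the bookkeeping of constants. We have to make sure that the bound on $\norm{T^{\lambda,z}_n}$ and the ratio $a_{n+1}/a_n$ combine into $128/7$, or into something smaller, which still implies the claim. A second point is that the step from $n$ to $n+1$ uses the comparison only at $n$, while $n+1$ may equal $N_2(\lambda)+1$. For the claim, however, we only need $S_n$ for $n\le N_2(\lambda)$, so the induction stays inside the admissible range.
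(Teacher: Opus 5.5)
Your proposal is correct and follows essentially the same route as the paper: a discrete Grönwall bound on the Turán determinants via \cref{lem:turan-diff}, using $a_{k+1}/a_k\le 8/7$, the lower comparison $a_k(|u_k|^2+|u_{k-1}|^2)\le 4S^{\lambda,z}_k(\alpha)$ from \cref{lem:uniformly-positive}, and $\|T^{\lambda,z}_k\|\le 4$ (giving $16\cdot\tfrac87=\tfrac{128}{7}$), followed by the upper comparison at $N_0$ for the factor $8$. Your sharper estimate $\|T^{\lambda,z}_k\|\le 2$ is also valid and only yields a smaller constant, which still implies the stated bound because $\Delta_k(\lambda,z)\ge 0$.
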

\begin{proof}
    We begin by proving two estimates for the Tur\'an determinants $S^{\lambda,z}_n(\alpha)$.
    Let 
    \begin{equation}
        F^{\lambda,z}_k = \frac{S^{\lambda,z}_{k+1}(\alpha)-S^{\lambda,z}_k(\alpha)}{S^{\lambda,z}_k(\alpha)} \, .
    \end{equation}
    By \cref{lem:uniformly-positive}, we get 
    \begin{equation}
        \label{proofeq:turan-upperbound-denominator}
        |S^{\lambda,z}_k(\alpha)| = a_k |Q^{\lambda,z}_k((u_k,u_{k-1})^T)| \geq \frac{a_k}{4} (|u_k|^2+|u_{k-1}|^2) \, .
    \end{equation}
    Furthermore, using \cref{lem:transfer-matrix,lem:regularity,def:asymptotics-regions} for all $N_0 \leq k \leq N_2(\lambda)$,
    \begin{equation}
        \norm{T^{\lambda,z}_k} \leq 1+ \frac{\lambda f_k}{a_k} + \frac{|z|}{a_k} + \frac{a_{k-1}}{a_k} \leq 1+1+\nicefrac{1}{8} +1+\nicefrac{1}{8} \leq 4 \, .
    \end{equation}
    Hence, using \cref{lem:turan-diff,proofeq:turan-upperbound-denominator}, we obtain
    \begin{align}
        |F^{\lambda,z}_k| & \leq a_{k+1}\norm{T^{\lambda,z}_k} \Delta_k(\lambda,z) (|u_{k-1}|^2+|u_k|^2)  \times \left(\frac{a_k}{4} (|u_{k-1}|^2+|u_k|^2)\right)^{-1} \\
        & \leq 16\frac{a_{k+1}}{a_k}\Delta_k(\lambda,z) \leq \frac{128}{7} \Delta_k(\lambda,z),
    \end{align}
    where we used $a_{k+1}/a_k \leq \frac{8}{7}$ (\cref{lem:regularity}) for all $k \geq N_0$.
    We thus get an estimate on the Tur\'an determinant $S^{\lambda,z}_n(\alpha)$ in terms of $S^{\lambda,z}_{N_0}(\alpha)$:
    \begin{align}
        |S^{\lambda,z}_n(\alpha)| & = \left| S^{\lambda,z}_{N_0}(\alpha)\prod_{k=N_0}^{n-1}(1+F^{\lambda,z}_k)\right| \leq|S^{\lambda,z}_{N_0}(\alpha)| \prod_{k=N_0}^{n}(1+|F^{\lambda,z}_k|) \leq |S^{\lambda,z}_{N_0}(\alpha)|\exp\left(\sum_{k=N_0}^{n-1}|F^{\lambda,z}_k|\right) \\
        & = |S^{\lambda,z}_{N_0}(\alpha)|\exp\left(\frac{128}{7}\sum_{k = N_0}^n\Delta_k(\lambda,z)\right)\, .
    \end{align}
    By \cref{lem:uniformly-positive},
    \begin{align}
        a_n(|u_n|^2+|u_{n-1}|^2) & \leq 4 |S^{\lambda,z}_n(\alpha)|\, ,  \\
        |S^{\lambda,z}_{N_0}(\alpha)| & \leq 2 a_{N_0} (|u_{N_0}|^2+|u_{N_0-1}|^2) \, , 
    \end{align}
    and the proof is complete.
    \end{proof}
We can finally obtain an estimate for the entries of a solution of the generalized eigenvalue equation with $N_0 \leq n \leq N_2(\lambda)$:
\begin{proposition}
    \label{prop:turan}
  Let $N_0$, $N_2(\lambda)$, and $\Lambda$ be as given in \cref{lem:regularity,def:asymptotics-regions}.
    Let $u$ be a solution of \cref{eq:generalized-eigenvalue-lambda}.
    Then, there exists a constant $C_1 >0$ such that, for all $z \in \Omega$, $0 \leq \lambda \leq \Lambda$ and $N_0 \leq n \leq N_2(\lambda)$, the following holds:
    \begin{equation}
        a_n(|u_n|^2+|u_{n-1}|^2) \leq C_1 (|u_{N_0}|^2+|u_{N_0-1}|^2) \, .
    \end{equation}
\end{proposition}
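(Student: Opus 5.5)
By \cref{lem:turan-upperbound}, for every $N_0 \leq n \leq N_2(\lambda)$ we have
\begin{equation}
    a_n(|u_{n-1}|^2+|u_n|^2)\leq 8 a_{N_0} \exp\left(\frac{128}{7}\sum_{k = N_0}^{n} \Delta_k(\lambda,z)\right) (|u_{N_0}|^2+|u_{N_0-1}|^2).
\end{equation}
It therefore suffices to show that $\sum_{k=N_0}^{N_2(\lambda)}\Delta_k(\lambda,z)$ is bounded by a constant that depends neither on $z\in\Omega$ nor on $0\le\lambda\le\Lambda$. One may then set $C_1 = 8a_{N_0}\exp(\tfrac{128}{7}\,\mathrm{const})$; here $a_{N_0}$ is a fixed number, since $N_0$ depends only on $\Omega$. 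We bound the four contributions in \cref{eq:deltanlz} separately.

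The first three terms are independent of $\lambda$ and can be summed over all $k\geq N_0$. By \cref{lem:an-variation}, $\sum_k\Delta^{a,1}_k\le\mathcal{V}((a_{n-1}/a_n)_n)<\infty$ and $\sum_k\Delta^{a,2}_k\le\mathcal{V}((a_n^{-1})_n)<\infty$. Compactness of $\Omega$ gives $|z|\le R$ and $|z-z^\ast|\le 2R$. Finally, $\sum_k a_k^{-1}<\infty$ because the Carleman condition fails (\cref{lem:an-variation}). Together these contributions are bounded by $\mathcal{V}(a_{n-1}/a_n)+R\,\mathcal{V}(a_n^{-1})+2R\sum_k a_k^{-1}$, uniformly in $z$ and $\lambda$.

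The only delicate term is $\lambda\sum_{k=N_0}^{n}\Delta^f_k$. Its total variation over all $k$ is infinite, because $f_n/a_n\sim n^\delta\to\infty$, so the finite range of summation and the factor $\lambda$ must be used. \Cref{hyp:conditions-an-fn}\labelcref{item:condition-monotonic} is designed for exactly this step. It gives
\begin{equation}
    \lambda\sum_{k=N_0}^{n}\Delta^f_k\le\lambda\,\mathcal{V}_{n+1}\bigl((f_j/a_j)_j\bigr)\le \lambda\frac{f_{n+1}}{a_{n+1}}+\lambda C = g(\lambda,n+1)+\lambda C .
\end{equation}
For $n\le N_2(\lambda)$ we would like to apply \cref{lem:regularity}, which gives $g(\lambda,m)\le 1$ for $N_0\le m\le N_2(\lambda)$. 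The index $n+1=N_2(\lambda)+1$ falls just outside this range and needs a separate check. One option is $g(\lambda,N_2+1)\le g(\lambda,N_4)\le 5$, which holds because $N_2+1\le N_4$ and the same argument as in \cref{lem:regularity} bounds $g$ on $[N_0,N_4(\lambda)]$. Alternatively, this term can be estimated directly from $g(\lambda,m)\le (1+\epsilon)(\lambda^{1/\delta}m)^\delta$. Either way the contribution is at most $5+\Lambda|C|$. If $C<0$, we simply replace $C$ by $|C|$.

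For $\lambda=0$ we have $N_2(0)=\infty$ and the $\Delta^f$ term vanishes, so the first three bounds already give the claim for all $n\ge N_0$. Combining the estimates, $\sum_{k=N_0}^{n}\Delta_k(\lambda,z)\le K$ for some $K$ independent of $z\in\Omega$, $\lambda\in[0,\Lambda]$ and $n\le N_2(\lambda)$, and the proposition follows.

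The main obstacle is the $\Delta^f$ term. It is the one place where the argument relies on the nonstandard condition \labelcref{item:condition-monotonic} together with the precise placement of $N_2(\lambda)$ below the turning point. The remaining work is routine bookkeeping of constants.
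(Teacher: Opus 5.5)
Your proposal is correct and follows the paper's proof. You apply \cref{lem:turan-upperbound}, bound the three $\lambda$-independent parts of $\sum_k\Delta_k(\lambda,z)$ via \cref{lem:an-variation}, and bound $\lambda\sum_k\Delta^f_k$ via \cref{hyp:conditions-an-fn}\labelcref{item:condition-monotonic} together with \cref{lem:regularity}. Your handling of the indices is more careful than the paper's: the paper writes $\sum_{k=N_0}^n\Delta^f_k$ as $\mathcal{V}_{N_0,n}$ instead of $\mathcal{V}_{N_0+1,n+1}$, and so does not address the index $N_2(\lambda)+1$.

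The only slip is the intermediate inequality $g(\lambda,N_2+1)\le g(\lambda,N_4)$. It would require $f_n/a_n$ to be monotone, which is not assumed. You do not need it, because item (vii) of \cref{lem:regularity} gives $g(\lambda,N_2(\lambda)+1)\le 5$ directly, since $N_2(\lambda)+1\le N_4(\lambda)$.
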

\begin{proof}
    Using \cref{lem:turan-upperbound}, it is sufficient to show that there exists $\tilde{C}$, independent of $\lambda \geq 0$ and $z \in \Omega$, such that for all $N_0 \leq n \leq N_2(\lambda)$
    \begin{equation}
        \sum_{k = N_0}^n \Delta_k(\lambda,z) \leq \tilde{C}
    \end{equation}
    holds, where $\Delta_k(\lambda,z)$ is defined as in \cref{eq:deltanlz}.
    Using \cref{def:total-variation,lem:an-variation}, we obtain
    \begin{align}
        \sum_{k = N_0}^n \Delta^{a,1}_k & = \mathcal{V}_{N_0,n}(a_{k-1} a_{k}^{-1}) \leq \mathcal{V}(a_{k-1} a_{k}^{-1}) < \infty\, ,  \\
        \sum_{k = N_0}^n \Delta^{a,2}_k & = \mathcal{V}_{N_0,n}(a_k^{-1}) \leq \mathcal{V}(a_k^{-1}) < \infty\, ,  \\
        \sum_{k = N_0}^n \frac{1}{a_k} & \leq \sum_{k = 0}^{\infty}\frac{1}{a_k} < \infty \, .
    \end{align}
    Moreover, by \cref{hyp:conditions-an-fn},
    \begin{equation}
        \sum_{k = N_0}^n \Delta_k^{f} = \mathcal{V}_{N_0,n}(f_k a_k^{-1}) \leq \mathcal{V}_{n}(f_k a_k^{-1}) \leq \frac{f_n}{a_n} + C
    \end{equation}
    for some $C \in \rnum$.
    In summary, we obtain
    \begin{align}
        \sum_{k = N_0}^n \Delta_k(\lambda,z) & =   \sum_{k = N_0}^n\Delta^{a,1}_k+|z| \Delta^{a,2}_k+ |z-z^\ast| a_k^{-1} + \lambda \Delta^{f}_k \\ 
        & \leq \mathcal{V}(a_{k-1} a_{k}^{-1})  + |z| \mathcal{V}(a_k^{-1}) + |z-z^\ast|  \sum_{k = 0}^{\infty}\frac{1}{a_k} + \lambda \frac{f_n}{a_n} + \lambda C \, .
    \end{align}
    By \cref{lem:regularity} we have $\lambda \frac{f_n}{a_n} \leq 1$ for all $N_0 \leq n \leq N_2(\lambda)$, and as $\Omega$ and $[0,\Lambda]$ are compact, there exists $\tilde{C} > 0$ such that
    \begin{equation}
        \sum_{k = N_0}^n \Delta_k(\lambda,z) \leq \tilde{C}
    \end{equation}
    for all $z \in \Omega$, $0 \leq \lambda \leq \Lambda$ and $N_0 \leq n \leq N_2(\lambda)$.
    Using \cref{lem:turan-upperbound}, we hence obtain
    \begin{equation}
        a_n(|u_{n-1}|^2+|u_n|^2)\leq 8 a_{N_0} \exp\left(\frac{128}{7}\tilde{C}\right) (|u_{N_0}|^2+|u_{N_0-1}|^2)
    \end{equation}
    and the claim follows.
\end{proof}

\subsection{Second region: turning point analysis}
\label{sec:turning-point}

We now turn to the second region $n\in K_N(\lambda)=[N_1(\lambda),N_4(\lambda)]$, corresponding to $x\in K_x=[x_1,x_4]$.
Recalling \cref{eq:differential-turning-point}, we expect \cref{eq:generalized-eigenvalue-lambda} to admit one solution that decays for $x^\delta>2$, the so-called \textit{recessive} solution $u^{\recessSol,h,z}$, and one linearly independent solution that grows for $x^\delta>2$, the so-called \textit{dominant} solution $u^{\domSol,h,z}$.
For $x^\delta<2$, both solutions are expected to oscillate and remain bounded.
Throughout this section, objects associated with the recessive and dominant solutions will be labeled by the indices $\recessSol$ and $\domSol$, respectively.

Recalling the discussion in \cref{sec:asymptotics_overview}, we aim to approximate both the recessive and dominant solutions with solutions of the Airy equation. An overview of relevant properties needed in this work are discussed in \cref{app:airy}.
To this end, we apply results of Geronimo et al.~\cite{geronimo-wkbturningpoint-2004} to obtain explicit error bounds on the approximate solutions.
Our strategy for the analysis in the turning point region is organized as follows.

\begin{enumerate}
    \item We begin by multiplying \cref{eq:generalized-eigenvalue-lambda} by $h^{\alpha}$, which leads to a rescaled difference operator $L^{h,z}$ with coefficient functions $a_1(x,h)$ and $b_1(x,h)$ (\cref{lem:geronimo}). By construction, these coefficients are bounded on $K_x \times [0,h]$, allowing us to apply the results of Geronimo et al.
    
\item Next, we replace $a_1(x,h)$ and $b_1(x,h)$ with simplified coefficients $\tilde a(x,h)$ and $\tilde b(x,h)$, obtained from the asymptotics of $a_n$ and $f_n$. The assumptions in \cref{hyp:conditions-an-fn} imply that the original operator $L^{h,z}$ (\cref{lem:turning-point-op}) can be treated as a perturbation of the simpler operator $\tilde L^h$ (cf.~\cref{def:turning-point-easy,lem:turning-point-diff-bounds}). In particular, the contribution of the parameter $z$ will also be perturbative in the relevant regime.

\item Using the coefficients $\tilde a(x,h)$ and $\tilde b(x,h)$, we then construct the Langer transform (\cref{def:langer-transform}),
\begin{equation}
    x \mapsto \xi(x,h),
\end{equation}
which maps the turning point of the difference equation to $\xi(x,h)=0$. The transformed equation is oscillatory for $\xi(x,h)<0$, while for $\xi(x,h)>0$ it admits a recessive and a dominant solution. This matches the qualitative behavior of the Airy equation for negative and positive arguments, respectively.

\item Motivated by this correspondence, we construct approximate solutions (cf.~\cref{def:approximate-solutions})
\begin{equation}
    \psi^{\recessSol}(x,h),
    \qquad
    \psi^{\domSol}(x,h),
\end{equation}
in terms of Airy functions. These functions solve the approximate equation
\begin{equation}
    (\tilde L^h \psi)(x)=0
\end{equation}
up to an error of order $O(h^2)$ on $K_x$.

\item The main theorem of Geronimo et al.\ then implies that these approximate solutions faithfully describe actual solutions $u^{\recessSol,h,z}$ and $u^{\domSol,h,z}$ of the original difference equation. More precisely,
\begin{align}
    u^{\recessSol,h,z}_n
    &=
    (-1)^n \psi^{\recessSol}(nh,h)
    +
    w^{\recessSol}(nh,h)\sigma^{\recessSol,h,z}_n,
    \\
    u^{\domSol,h,z}_n
    &=
    (-1)^n \psi^{\domSol}(nh,h)
    +
    w^{\domSol}(nh,h)\sigma^{\domSol,h,z}_n,
\end{align}
where $w^{\recessSol}(x,h)$ and $w^{\domSol}(x,h)$ (\cref{def:approximate-solutions}) serve as scaling functions for the relative errors
$\sigma^{\recessSol,h,z}_n$ and $\sigma^{\domSol,h,z}_n$.
Heuristically, the functions $w^{\recessSol}(x,h)$ and $w^{\domSol}(x,h)$ may be viewed as envelopes of the corresponding approximate solutions.

\item The main estimate of \cref{prop:turning-point} shows that the scaled errors satisfy bounds of order $h^{1/3}$. Consequently, the approximations improve as
\begin{equation}
    h=\lambda^{1/\delta}\to0.
\end{equation}
This allows us to derive quantitative bounds on the actual solutions $u^{\recessSol,h,z}$ and $u^{\domSol,h,z}$ in \cref{prop:properties-turningpoint-solutions}.
\end{enumerate}
Throughout these steps, the precise value of $\Lambda>0$ may vary and is implicitly allowed to decrease from one statement to the next, cf. \cref{rem:lambda-decrease}.

For the convenience of the reader, \cref{tab:turning-point-symbols} collects the main symbols used throughout this section. A summary of the results of Geronimo et al.\ relevant to the present setting is given in \cref{app:turning-point}. Related results in a similar setting were obtained by Fedotov and Klopp in \cite{fedotov-complexwkbmethod-2019}.

\begin{table}[t]
    \centering
    \begin{tabular}{l|l}
        $a_n,f_n,\lambda,z$
        & original parameters of \cref{eq:generalized-eigenvalue-lambda}
        \\
        $h=\lambda^{1/\delta}$, $a_1(x,h)$, $b_1(x,h)$
        & rescaled coefficients, cf.\ \cref{lem:geronimo}
        \\
        $L^{h,z}$
        & rescaled difference operator, cf.\ \cref{lem:turning-point-op}
        \\
        $\tilde a(x,h)$, $\tilde b(x,h)$
        & asymptotic approximations of $a_1(x,h)$, $b_1(x,h)$, cf.\ \cref{def:turning-point-easy}
        \\
        $\tilde L^h$
        & approximation of $L^{h,z}$, cf.\ \cref{def:turning-point-easy}
        \\
        $\xi(x,h)$
        & Langer transform, cf.\ \cref{def:langer-transform}
        \\
        $\Ai_j(x)$, $\tilde w_j(x)$
        & Airy and modified Hankel functions, cf.\ \cref{app:airy}
        \\
        $\psi^{\recessSol}(x,h)$, $\psi^{\domSol}(x,h)$
        & approximate recessive and dominant solutions of $\tilde L^h$
        \\
        $w^{\recessSol}(x,h)$, $w^{\domSol}(x,h)$
        & scaling functions associated with the approximation errors
        \\
        $u^{\recessSol,h,z}$, $u^{\domSol,h,z}$
        & actual recessive and dominant solutions of \cref{eq:generalized-eigenvalue-lambda}
        \\
        $\sigma^{\recessSol,h,z}$, $\sigma^{\domSol,h,z}$
        & scaled approximation errors associated with $u^{\recessSol,h,z}$, $u^{\domSol,h,z}$
    \end{tabular}
    \caption{
        List of symbols used in the turning point analysis of
        \cref{eq:generalized-eigenvalue-lambda}.
        The indices $\recessSol$ and $\domSol$ indicate quantities associated with the recessive and dominant solutions, respectively.
    }
    \label{tab:turning-point-symbols}
\end{table}

We follow the strategy outlined above and begin with a preparatory result that will enable us to convert the generalized eigenvalue equation~\eqref{eq:generalized-eigenvalue-lambda} in a form compatible with the setting of~\cite{geronimo-wkbturningpoint-2004}, cf.~\cref{eq:app-difference-equation}:
\begin{lemma}
    \label{lem:geronimo}
    Suppose that \cref{hyp:conditions-an-fn} holds with coefficients $\alpha,\beta,c_a,c_f$, and let $K_x,K_N(\lambda)$ be given as in \cref{def:asymptotics-regions} for $\lambda > 0$.
    Set $\delta = \beta-\alpha$ and $h = \lambda^{1/\delta}$, and let $a,f:\rnum_+ \to \rnum_+$ be the linear interpolations of the sequences $(a_n)_{n\in\nnum},(f_n)_{n\in\nnum}$.
    For $h>0$ and $x\in K_x$, define 
    \begin{equation}
        a_1(x,h) = h^\alpha a(x/h) \, , \quad b_1(x,h) = h^\beta f(x/h) \, .
    \end{equation}
    Then there exists $\Lambda > 0$ such that, for all $x\in K_x$, the limits
    \begin{equation}
        \lim_{h \to 0} a_1(x,h), \quad \lim_{h \to 0}b_1(x,h)
    \end{equation}
    exist and are finite, so that the functions $a_1,b_1$ can be continuously extended  to $K_x\times[0,\Lambda^{1/\delta}]$. Moreover, $a_1(x,h)$ is strictly positive on $K_x \times [0,\Lambda^{1/\delta}]$, and 
    \begin{equation}
        a_1(x,h) = x^\alpha \left(1+\frac{c_a h}{x} +O(h^2)\right) \, , \quad b_1(x,h) = x^\beta \left(1+\frac{c_f h}{x} +O(h^2)\right)
    \end{equation}
    uniformly in $x \in K_x$, i.e. there exists $C>0$ such that 
    \begin{equation}
        \left|x^{-\alpha} a_1(x,h) - 1-\frac{c_a h}{x}\right| \leq C h^2\, , \quad  \left|x^{-\beta} b_1(x,h) - 1-\frac{c_f h}{x}\right| \leq C h^2
    \end{equation}
    for all $(x,h) \in K_x \times [0,\Lambda^{1/\delta}]$.
\end{lemma}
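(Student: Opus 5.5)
The statement is elementary once $x\in K_x$ is fixed: it reduces to the asymptotic hypotheses on $a_n,f_n$ evaluated at $n\approx x/h$. The plan is to write $x/h=n+\theta$ with $n=\lfloor x/h\rfloor$ and $\theta\in[0,1)$. The linear interpolation then reads
\[
a(x/h)=(1-\theta)a_n+\theta a_{n+1},
\]
and similarly for $f$. Since $x\ge x_1>0$, we have $n\ge x_1/h-1\to\infty$ uniformly in $x\in K_x$ as $h\to0$. In particular, for $h\le\Lambda^{1/\delta}$ with $\Lambda$ small we may assume $n\ge N$. Here $N$ is such that the remainders in \cref{hyp:conditions-an-fn}(ii),(iii) are bounded by $C'/n^2$ for all $n\ge N$, and $C'$ is then a fixed constant.

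The key step is a uniform expansion of the interpolant. I first expand
\[
a_n=n^\alpha\bigl(1+c_a/n+O(n^{-2})\bigr),\qquad a_{n+1}=(n+1)^\alpha\bigl(1+c_a/(n+1)+O(n^{-2})\bigr).
\]
I then compare both with the smooth function $s\mapsto s^\alpha(1+c_a/s)$ at $s=x/h$. By Taylor's theorem with $s=n+\theta$, we get $(n+\theta)^\alpha=n^\alpha(1+\alpha\theta/n+O(n^{-2}))$. Also $(1-\theta)n^\alpha+\theta(n+1)^\alpha=n^\alpha(1+\alpha\theta/n+O(n^{-2}))$, because the second-order terms are $O(n^{\alpha-2})$. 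The $c_a/n$ versus $c_a/(n+\theta)$ discrepancy is $O(n^{-2})$. Together these give
\[
a(x/h)=(x/h)^\alpha\bigl(1+c_ah/x+O(h^2)\bigr),
\]
where the $O$-constant depends only on $\alpha,c_a,C'$ and on $x_1,x_4$, via $n^{-1}\le 2h/x_1$ for small $h$. Multiplying by $h^\alpha$ yields
\[
a_1(x,h)=x^\alpha\bigl(1+c_ah/x+O(h^2)\bigr)
\]
uniformly on $K_x$. The identical computation with $\beta,c_f$ gives the claim for $b_1$.

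Continuity and positivity follow from this. The limits as $h\to0$ are $x^\alpha$ and $x^\beta$, so defining $a_1(x,0)=x^\alpha$ and $b_1(x,0)=x^\beta$ gives a continuous extension. Continuity at $h>0$ is clear, since $a,f$ are continuous. Joint continuity at $h=0$ follows from the uniform bound together with continuity of $x^\alpha$. Positivity of $a_1$ holds because $a_n>0$ for all $n$, so the interpolant is positive for $h>0$, and $x^\alpha\ge x_1^\alpha>0$ at $h=0$. Alternatively, we can shrink $\Lambda$ so that $|c_ah/x|+Ch^2<1/2$.

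The only mildly delicate point is ensuring that all $O$-constants are uniform in $x\in K_x$. This is handled by the lower bound $n\gtrsim x_1/h$ and the boundedness of $\theta$ and $x$. It is bookkeeping rather than a real obstacle.
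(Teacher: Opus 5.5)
Your proof is correct and takes the same route as the paper. Both reduce to the asymptotic hypotheses at $n\approx x/h$ and use $n\ge x_1/h-1$ to make all constants uniform on $K_x$, with $\Lambda^{1/\delta}$ chosen small enough that $n$ lies beyond the threshold where the $O(n^{-2})$ remainders are controlled. The one difference is in the interpolation step: the paper proves the bound only at grid points $x=nh$ and asserts that linear interpolation preserves it. Your explicit second-order comparison of $(1-\theta)n^\alpha+\theta(n+1)^\alpha$ with $(n+\theta)^\alpha$, and of the $c_a$-terms, actually justifies that step.
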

\begin{proof}
    By construction, the functions $a$ and $f$ are continuous, and thus so are $K_x\times(0,\Lambda^{1/\delta})\ni(x,h)\mapsto a_1(x,h),b_1(x,h)$ for arbitrary $\Lambda > 0$.
    In order to consider the limit $h \to 0$, we discuss the asymptotics first.
    By definition, for $h > 0$ and $n \in \nnum$,
    \begin{align}
        a_1(nh,h) & = h^{\alpha} n^\alpha \left( 1+ \frac{c_a}{n} + O(1/n^2) \right) \, , \\
        b_1(nh,h) & = h^{\beta} n^\beta \left( 1+ \frac{c_f}{n} + O(1/n^2) \right) \, ,
    \end{align}
    and hence there exist $N \in \nnum$ and $C>0$ independent of $h$ such that 
    \begin{align}
        \left|(nh)^{-\alpha}a_1(nh,h) - 1- \frac{c_a}{n} \right| \leq \frac{C}{n^2}\, ,  \\
        \left|(nh)^{-\beta}b_1(nh,h) - 1- \frac{c_f}{n} \right| \leq \frac{C}{n^2} 
    \end{align}
    for all $n \geq N$.
    Setting $x = nh$, we thus get
    \begin{align}
        \left|x^{-\alpha}a_1(nh,h) - 1- \frac{c_a h}{x} \right| \leq \frac{Ch^2}{x^2}\, ,  \\
        \left|x^{-\beta}b_1(nh,h) - 1- \frac{c_f h}{x} \right| \leq \frac{Ch^2}{x^2} 
    \end{align}
    for all $x/h \in \nnum$ and $x/h \geq N$.
    As the functions $a,b$ have been defined via linear interpolation, the same asymptotics hold for all other values of $x \in K_x$,
    and we obtain 
        \begin{align}
        \left|x^{-\alpha}a_1(x,h) - 1- \frac{c_a h}{x} \right| \leq \frac{C h^2}{x^2} \leq \frac{C}{x_1^2} h^2 \, , \\
        \left|x^{-\beta}b_1(x,h) - 1- \frac{c_f h}{x} \right| \leq \frac{C h^2}{x^2} \leq \frac{C}{x_1^2} h^2 \, , 
    \end{align}
    for all $h\leq x_1/N \eqqcolon\Lambda^{1/\delta}$ and $x \in K_x$, yielding the claimed asymptotics.
    Finally, we obtain 
    \begin{equation}
        \lim_{h \to 0} a_1(x,h) = x^\alpha \,  , \quad \lim_{h \to 0}b_1(x,h) = x^\beta
    \end{equation}
    uniformly in $x \in K_x$, and hence the two functions admit continuous extensions to $h = 0$.
\end{proof}
From now on, we choose $\Lambda>0$ so that, on top of the estimates in \cref{sec:turan}, \cref{lem:geronimo} holds as well.

Using the previous lemma, we now transform the generalized eigenvalue equation~\eqref{eq:generalized-eigenvalue-lambda} into a difference equation of the form~\eqref{eq:app-difference-equation}:
\begin{lemma}
    \label{lem:turning-point-op}
    Let $0\leq \lambda < \Lambda$, $h = \lambda^{1/\delta}$, and  $a_1,b_1,K_N(\lambda)$ be given as above.
    For $z \in \Omega$, define the difference operator $L^{h,z}$ via
    \begin{equation}
        (L^{h,z} v)_n = a_1(nh,h) v_{n+1} - (b_1(nh,h)-h^\alpha z)v_n + a_1(nh-h,h) v_{n-1}\, .
    \end{equation}
    Suppose that $(L^{h,z}v)_n = 0$ holds for all $n \in K_N(\lambda)$.
    Let the vector $u\in\ell(\nnum)$ be given by $u_n = (-1)^n v_n$. Then  \cref{eq:generalized-eigenvalue-lambda} holds for all $z\in\Omega$ and $n \in K_N(\lambda)$.
\end{lemma}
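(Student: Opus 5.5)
The plan is a direct substitution: evaluate the coefficients of $L^{h,z}$ at lattice points and check that the two equations agree up to a nonzero factor. No analysis is needed beyond the definitions in \cref{lem:geronimo}.

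\textbf{Degenerate case.} If $\lambda=0$, then $N_j(0)=\infty$ by \cref{def:asymptotics-regions}, so $K_N(0)$ contains no integers and the statement is vacuous. Hence assume $0<\lambda<\Lambda$, so $h=\lambda^{1/\delta}>0$. By \cref{lem:regularity}, every $n\in K_N(\lambda)$ satisfies $n\geq N_1(\lambda)>N_0\geq 1$, so \cref{eq:generalized-eigenvalue-lambda} is indeed the relevant equation at such $n$.

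\textbf{Coefficients at lattice points.} Since $a$ and $f$ are the linear interpolations of $(a_n)$ and $(f_n)$, they coincide with the sequences at integers. Hence, for every integer $n$ with $nh\in K_x$,
\begin{equation}
a_1(nh,h)=h^\alpha a_n,\qquad a_1(nh-h,h)=h^\alpha a_{n-1},\qquad b_1(nh,h)=h^\beta f_n=h^\alpha\lambda f_n .
\end{equation}
The last identity uses $h^{\beta-\alpha}=h^\delta=\lambda$. The choice of floor and ceiling in \cref{def:asymptotics-regions} guarantees $nh\in K_x$ for $n\in K_N(\lambda)$.

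One small point needs care: the argument $(n-1)h$ may lie slightly outside $K_x$ when $n=N_1(\lambda)$. This is harmless, because $a_1(x,h)=h^\alpha a(x/h)$ is defined for all $x>0$ whenever $h>0$. The extension to $h=0$ plays no role here.

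\textbf{Substitution.} With these identities,
\begin{equation}
(L^{h,z}v)_n=h^\alpha\bigl(a_n v_{n+1}-(\lambda f_n-z)v_n+a_{n-1}v_{n-1}\bigr).
\end{equation}
Insert $v_m=(-1)^m u_m$, so that $v_{n\pm1}=-(-1)^n u_{n\pm1}$ and $v_n=(-1)^n u_n$. This gives
\begin{equation}
(L^{h,z}v)_n=-(-1)^n h^\alpha\bigl(a_n u_{n+1}+(\lambda f_n-z)u_n+a_{n-1}u_{n-1}\bigr).
\end{equation}
Since $h^\alpha\neq 0$, the vanishing of $(L^{h,z}v)_n$ for $n\in K_N(\lambda)$ is equivalent to \cref{eq:generalized-eigenvalue-lambda} holding at those $n$, for every $z\in\Omega$.

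\textbf{Expected obstacle.} There is no genuine difficulty. The only steps that need attention are the exponent bookkeeping $h^\beta=\lambda h^\alpha$ and the domain issue at the left endpoint $n=N_1(\lambda)$ noted above.
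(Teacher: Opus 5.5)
Your proposal is correct and is essentially the paper's proof: the paper substitutes $u_n=(-1)^n v_n$ into the eigenvalue equation, multiplies by $(-1)^{n+1}h^\alpha$, and uses $a_1(nh,h)=h^\alpha a_n$, $a_1(nh-h,h)=h^\alpha a_{n-1}$ and $b_1(nh,h)=h^\beta f_n=\lambda h^\alpha f_n$, which is exactly your computation read in the opposite direction. The paper leaves implicit two points you make explicit, and both are valid: the case $\lambda=0$ is vacuous, and $a_1$ is evaluated at $(N_1(\lambda)-1)h\notin K_x$, which is harmless since $h>0$.
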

\begin{proof}
    For $n \in K_N(\lambda)$, we insert the ansatz $u_n = (-1)^n v_n$ into \cref{eq:generalized-eigenvalue-lambda} and obtain
    \begin{equation}
        (\Jop(\lambda)u)_n = a_n (-1)^{n+1}v_{n+1} + \lambda f_n (-1)^n v_n -z (-1)^n v_n + a_{n-1}(-1)^{n-1}v_{n-1} \, .
    \end{equation}
    Multiplying by $(-1)^{n+1} h^\alpha$, this becomes
    \begin{equation}
        (-1)^{n+1} h^\alpha (\Jop(\lambda) u)_n = h^\alpha a_n v_{n+1} - (\lambda h^\alpha f_n - h^\alpha z)v_n + h^\alpha a_{n-1} v_{n-1} \, , 
    \end{equation}
    and using the definitions of $a_1,b_1$ in \cref{lem:geronimo} and $h^\alpha \lambda = h^\alpha h^\delta = h^\beta$ we get
    \begin{equation}
        (-1)^{n+1} h^\alpha (\Jop(\lambda) u)_n = a_1(nh,h) v_{n+1} - (b_1(nh,h) - h^\alpha z)v_n + a_1(nh-h,h) v_{n-1},  
    \end{equation}
    which is zero by assumption.
\end{proof}
The difference operator $L^{h,z}$ is exactly of the form treated in \cite{geronimo-wkbturningpoint-2004}, see the discussion in \cref{app:turning-point}. As previously discussed, following \cite{geronimo-wkbturningpoint-2004}, we treat the difference equation with coefficients $a_1,b_1$ as a perturbation of a simpler difference equation with coefficients $\tilde{a},\tilde{b}$. Additionally, the term $h^\alpha z$ in $L^{h,z}$ will also be treated as a perturbation in the same manner, significantly simplifying our analysis.
\begin{definition}
    \label{def:turning-point-easy}
    For $0\leq h<\Lambda^{1/\delta}$ define the difference operator $\tilde{L}^{h}$ 
    by 
    \begin{equation}
        (\tilde{L}^{h} \psi)(x) = \tilde{a}(x,h) \psi(x+h) -\tilde{b}(x,h)\psi(x) + \tilde{a}(x-h,h) \psi(x-h) \, , 
    \end{equation}
    where 
    \begin{equation}
        \tilde{a}(x,h) = x^\alpha\left(1+\frac{c_ah}{x}\right) \, , \quad \tilde{b}(x,h) = x^\beta\left(1+\frac{c_fh}{x}\right) \, .
    \end{equation}
\end{definition}
The idea is to treat $L^{h,z}$ as a perturbation of the simpler operator $\tilde{L}^h$ introduced above, and establish error bounds for the difference of the corresponding solutions. The following lemma will be needed:
\begin{lemma}
    \label{lem:turning-point-diff-bounds}
    For $0<h<\Lambda^{1/\delta}$ and $z \in \Omega$, let $a_1,\tilde{a},b_1,\tilde{b}$ be given as in \cref{lem:geronimo,def:turning-point-easy}.
    Moreover, define 
    \begin{align}
        D^{h,z,a}_n & = \frac{\tilde{a}(nh,h)}{\tilde{a}(nh-h,h)}-\frac{a_1(nh,h)}{a_1(nh-h,h)} \, , \\
        D^{h,z,b}_n & = \frac{\tilde{b}(nh,h)}{\tilde{a}(nh-h,h)}-\frac{b_1(nh,h)-h^\alpha z}{a_1(nh-h,h)} \, .    
    \end{align}
    Then there exists $C>0$ such that, for all $z\in \Omega$, $n \in K_N(\lambda)$ and $0<\lambda < \Lambda$, 
    \begin{equation}
        |D^{h,z,a}_n| \leq C h^{4/3} \, , \quad |D^{h,z,b}_n| \leq C h^{4/3} \, , 
    \end{equation}
    where $h = \lambda^{1/\delta}$.
\end{lemma}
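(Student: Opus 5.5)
The plan is a direct perturbative comparison using the uniform expansions of \cref{lem:geronimo}. Throughout, write $x=nh$ with $n\in K_N(\lambda)$, so that $x\in[x_1,x_4]=K_x$ and $x-h\in[x_1-h,x_4]$. First I would observe that the estimates in the proof of \cref{lem:geronimo} hold on the slightly enlarged interval $[x_1/2,x_4]$ after shrinking $\Lambda$ if necessary. That proof gives $|x^{-\alpha}a_1(x,h)-1-c_ah/x|\le Ch^2/x^2$ whenever $x/h\ge N$, and for $h$ small $x-h\ge x_1/2$. Consequently both $a_1(x,h)=\tilde a(x,h)\bigl(1+O(h^2)\bigr)$ and $b_1(x,h)=\tilde b(x,h)\bigl(1+O(h^2)\bigr)$ hold uniformly for $x\in[x_1/2,x_4]$ and $0\le h<\Lambda^{1/\delta}$. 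For the $\tilde b$ relation I use that $\tilde a,\tilde b$ are bounded above and below by positive constants there, since $1+c_ah/x\ge 1/2$ once $h$ is small.

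For $D^{h,z,a}_n$, I would write
\begin{equation}
\frac{a_1(x,h)}{a_1(x-h,h)}=\frac{\tilde a(x,h)}{\tilde a(x-h,h)}\cdot\frac{1+O(h^2)}{1+O(h^2)}=\frac{\tilde a(x,h)}{\tilde a(x-h,h)}\bigl(1+O(h^2)\bigr).
\end{equation}
Since the ratio $\tilde a(x,h)/\tilde a(x-h,h)$ is uniformly bounded, this yields $|D^{h,z,a}_n|\le C'h^2\le C'h^{4/3}$ for $h\le1$.

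For $D^{h,z,b}_n$, I would split it as
\begin{equation}
D^{h,z,b}_n=\Bigl(\frac{\tilde b(x,h)}{\tilde a(x-h,h)}-\frac{b_1(x,h)}{a_1(x-h,h)}\Bigr)+\frac{h^\alpha z}{a_1(x-h,h)}.
\end{equation}
The first bracket is $O(h^2)$, by exactly the same argument as for $D^{h,z,a}_n$. For the second term, $a_1(x-h,h)$ is bounded below by a positive constant, uniformly on the enlarged interval (roughly $(x_1/2)^\alpha/2$), and $\Omega$ is compact. Hence this term is bounded by $C''h^\alpha$. Since $\alpha>4/3$ by \cref{hyp:conditions-an-fn}, we get $h^\alpha\le h^{4/3}$ for $h\le1$, which gives the claim.

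The only delicate points are the following. The first is to ensure that the shifted argument $x-h$ stays in a region where the uniform expansion and the positive lower bound hold; this is handled by enlarging $K_x$ slightly and decreasing $\Lambda$. The second is that the $z$-term is the dominant error. This is precisely where the exponent $4/3$, and hence the assumption $\alpha>4/3$, enters. All other contributions are $O(h^2)$.
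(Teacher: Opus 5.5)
Your proposal is correct and follows essentially the same route as the paper: both show that $D^{h,z,a}_n$ and the $b_1$-ratio part of $D^{h,z,b}_n$ are $O(h^2)$ via the uniform expansions of \cref{lem:geronimo}, and that the $z$-term $h^\alpha z/a_1(x-h,h)$ is $O(h^\alpha)$, with $\alpha>4/3$ producing the $h^{4/3}$ rate. You also enlarge the interval to $[x_1/2,x_4]$ so that the shifted argument $x-h$ (which can drop below $x_1$ when $n=N_1(\lambda)$) is covered; the paper glosses over this point, and your enlargement handles it correctly.
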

\begin{proof}
    Recall that $n \in K_N(\lambda)$ implies $n \geq N_0$ and $x=nh \in K_x =[x_1,x_4]$.
    For the first statement, from \cref{lem:geronimo} and using \cref{hyp:conditions-an-fn} we get the following for all $x\in K_x$:
    \begin{align}
        D^{h,z,a}_n & = \frac{\tilde{a}(x,h)}{\tilde{a}(x-h,h)} - \frac{a_1(nh,h)}{a_1(nh-h,h)} \\
        & = \frac{1+\frac{c_a h}{x}}{1+\frac{c_ah}{x-h}}-\frac{1+\frac{c_ah}{x}+O(h^2)}{1+\frac{c_ah}{x-h}+O(h^2)} = O(h^2)
    \end{align}
    uniformly in $x \in K_x$, whence the first inequality follows as $x=nh \in K_x$ for $ n \in K_N(\lambda)$.
    
    For the second inequality, we start with $(x-h)^\alpha = x^\alpha(1+O(h))$ and
    \begin{equation}
        \left|\frac{h^\alpha z}{a_1(x-h,h)}\right|  = |h^\alpha z|\frac{1}{(x-h)^\alpha\left(1+\frac{c_ah}{x} + O(h^2)\right)} = \frac{h^\alpha |z|}{x^\alpha} O(1) \, .
    \end{equation}
    As $z\in \Omega$ and $x \in K_x$ range in compact sets, there exist $C_2$ such that 
    \begin{equation}
         \left|\frac{h^\alpha z}{a_1(x,h-h)}\right| \leq \frac{C_2}{x^\alpha}h^\alpha \leq  \frac{C_2}{x_1^\alpha}h^\alpha \, 
    \end{equation}
    for all $x \in K_x$ and hence for all $n \in K_N(\lambda)$.
    Since $\alpha > \nicefrac{4}{3}$ per \cref{hyp:conditions-an-fn}, this quantity satisfies the required estimate.
    Finally, consider
    \begin{equation}
        \frac{\tilde{b}(x,h)}{\tilde{a}(x-h,h)} - \frac{b_1(x,h)}{a_1(x-h,h)} = \frac{x^\beta\left(1+\frac{c_fh}{x}\right)}{(x-h)^\alpha\left(1+\frac{c_ah}{x} \right)}-\frac{x^\beta\left(1+\frac{c_fh}{x} + O(h^2)\right)}{(x-h)^\alpha\left(1+\frac{c_ah}{x} + O(h^2)\right)} = x^\delta O(h^2) \, .
    \end{equation}
    Thus, there is $C_3 > 0$ such that 
    \begin{equation}
        \left|\frac{\tilde{b}(x,h)}{\tilde{a}(x-h,h)} - \frac{b_1(x,h)}{a_1(x-h,h)}\right| \leq C_3 x^\delta h^2 \leq C_3 x_4^\delta h^2 \, , 
    \end{equation}
    and the claim follows as $x=nh \in K_x$ for $n \in K_N(\lambda)$.
\end{proof}

We now introduce a variant of the \textit{Langer transform}, which plays a central role in the asymptotic analysis of recurrence relations, cf.~\cite{geronimo-wkbturningpoint-2004,fedotov-complexwkbmethod-2019}. Although the terminology originates from the differential equation literature, where the Langer transform is more commonly used and takes a slightly different form, we adopt here the convention standard in the literature on recurrence relations. We again refer to \cref{app:turning-point} for the related results needed in the remainder of this section.

To this end, we first introduce an auxiliary function, cf.~\cref{eq:app-qfun}:
\begin{lemma}
    \label{lem:qfun}
    There exists $\Lambda>0$ so that the function $q:K_x\times[0,\Lambda^{1/\delta}]\rightarrow\rnum$ defined by 
    \begin{equation}
        q(x,h) = \frac{\tilde{b}(x,h)}{2 \tilde{a}(x-h/2,h)} \, ,
    \end{equation}
    with $\tilde{a},\tilde{b}$ from \cref{def:turning-point-easy}, 
    is analytic and strictly positive on $K_x\times[0,\Lambda^{1/\delta}]$, and the equation $q(x,h)=1$ has a unique solution $x_0(h) \in (x_2,x_3) \subset K_x$ for every $h\leq \Lambda^{1/\delta}$.
    Moreover, denoting by $q'(x,h)$ its partial derivative with respect to $x$, $q'(x,h)$ is strictly positive and uniformly bounded from above on $K_x\times[0,\Lambda^{1/\delta}]$.
\end{lemma}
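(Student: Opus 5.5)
Explicitly,
\begin{equation}
    q(x,h) = \frac{x^\beta\left(1+\frac{c_f h}{x}\right)}{2(x-h/2)^\alpha\left(1+\frac{c_a h}{x-h/2}\right)} ,
\end{equation}
and at $h=0$ this reduces to $q(x,0)=x^\delta/2$ with $\delta=\beta-\alpha>0$. The plan is to prove every claim first at $h=0$ and then extend it to small $h$ by a uniform-continuity and perturbation argument on the compact set $K_x\times[0,\Lambda^{1/\delta}]$.

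\emph{Analyticity and positivity.} The interval $K_x=[x_1,x_4]$ satisfies $x_1>0$, with $x_1^\delta=5/8$. Choose $\Lambda$ so small that $h\le\Lambda^{1/\delta}$ implies $h/2<x_1/2$, $|c_a h|/(x-h/2)<1/2$ and $|c_f h|/x<1/2$ for all $x\in K_x$. Then every factor $x^\beta$, $(x-h/2)^\alpha$, $1+c_fh/x$ and $1+c_ah/(x-h/2)$ is analytic in $(x,h)$ on a complex neighbourhood of $K_x\times[0,\Lambda^{1/\delta}]$. Indeed, the bases $x$ and $x-h/2$ stay away from $0$, so the real powers are analytic, and the correction factors lie in $[1/2,3/2]$. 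Hence $q$ is a quotient of analytic, strictly positive functions, so it is analytic and strictly positive.

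\emph{Derivative bounds.} Differentiating gives
\begin{equation}
    \partial_x\log q = \frac{\beta}{x} - \frac{\alpha}{x-h/2} + O(h),
\end{equation}
uniformly on the compact set. The first two terms tend to $\delta/x\ge\delta/x_4>0$ as $h\to0$. Therefore there is $\Lambda$ with $\partial_x\log q\ge\delta/(2x_4)$, and so $q'=q\,\partial_x\log q$ is bounded below by a positive constant. The upper bound on $q'$ follows because $q'$ is continuous on the compact set, by the analyticity established above.

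\emph{Unique root in $(x_2,x_3)$.} At $h=0$ the equation $q(x,0)=1$ reads $x^\delta=2$. Its unique root lies strictly inside $(x_2,x_3)$, since $x_2^\delta=7/8<2<x_3^\delta=33/8$. By uniform continuity of $q$ in $h$, shrinking $\Lambda$ gives $q(x_2,h)<1<q(x_3,h)$ for all admissible $h$. The intermediate value theorem then produces a root $x_0(h)\in(x_2,x_3)$. Uniqueness on all of $K_x$ follows because $q'>0$ makes $q(\cdot,h)$ strictly increasing.

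The only genuinely delicate point is making all the $O(h)$ estimates uniform, so that a single $\Lambda$ works. Compactness of $K_x$ and the fact that $K_x$ stays bounded away from $0$ handle this.
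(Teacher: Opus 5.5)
Your proof is correct and follows the same route as the paper. Positivity and analyticity come from taking $h$ small. Strict positivity of $q'$ comes from the $h=0$ behaviour plus an $O(h)$ perturbation: your logarithmic derivative $\beta/x-\alpha/(x-h/2)$ is exactly the paper's factor $\beta-\alpha(1-h/(2x))^{-1}$ divided by $x$. The upper bound on $q'$ follows from compactness, and the root follows from the intermediate value theorem at $x_2,x_3$ together with monotonicity. If anything, your write-up is slightly more careful than the paper's in two places. Your smallness condition $h<x_1$ keeps $x-h/2$ away from zero, whereas the paper's $x-h/2\ge 1/2$ can fail at $x=x_1$ when $\delta$ is small. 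You also keep the factor $1/2$ in $q(x,0)=x^\delta/2$, which the paper's displayed rewriting of $q$ drops.
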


\begin{proof}
    Let $\Lambda_1 > 0$ such that, for all $h \in [0,\Lambda_1^{1/\delta}]$ and $x \in K_x = [x_1,x_4]$, 
    \begin{equation} 
         x-h/2 \geq \nicefrac{1}{2} \,, \quad h x^{-1} |c_f| \leq \nicefrac{1}{2}\, , \quad h (x-h/2)^{-1}|c_a| \leq \nicefrac{1}{2} \, .
    \end{equation}    
    It follows that $\tilde{a}(x-h/2,h)$ and $\tilde{b}(x,h)$ are strictly positive on $K_x \times [0,\Lambda_1^{1/\delta}]$, and as both $\tilde{a},\tilde{b}$ are analytic, the first claim follows.
    A direct computation yields
    \begin{align} 
        2q'(x,h)& =-\frac{h c_f x^{\beta -2} \left(x-\frac{h}{2}\right)^{-\alpha }}{\frac{h c_a}{x-\frac{h}{2}}+1}+\frac{\beta  x^{\beta -1} \left(x-\frac{h}{2}\right)^{-\alpha } \left(\frac{h c_f}{x}+1\right)}{\frac{h
   c_a}{x-\frac{h}{2}}+1}\\
   & \quad+\frac{h c_a x^{\beta } \left(x-\frac{h}{2}\right)^{-\alpha -2} \left(\frac{h c_f}{x}+1\right)}{\left(\frac{h c_a}{x-\frac{h}{2}}+1\right)^2}-\frac{\alpha  x^{\beta }
   \left(x-\frac{h}{2}\right)^{-\alpha -1} \left(\frac{h c_f}{x}+1\right)}{\frac{h c_a}{x-\frac{h}{2}}+1} \\
   \label{proofeq:qfun-bounded-below}& = x^{\delta-1} \left(1-\frac{h}{2x}\right)^{-\alpha} \left(\beta - \alpha\left(1-\frac{h}{2x}\right)^{-1}\right)\frac{
   1+\frac{h c_f}{x}}{1+\frac{h c_a}{x-\frac{h}{2}}} \\
   & \quad + h \left( \frac{c_a x^{\beta } \left(x-\frac{h}{2}\right)^{-\alpha -2} \left(\frac{h c_f}{x}+1\right)}{\left(\frac{h c_a}{x-\frac{h}{2}}+1\right)^2} - \frac{c_f x^{\beta -2} \left(x-\frac{h}{2}\right)^{-\alpha }}{\frac{h c_a}{x-\frac{h}{2}}+1}\right) \, .
    \end{align}
    By \cref{hyp:conditions-an-fn}, $\beta > \alpha$ and hence there exists $c>0$ and $0< \Lambda_2 \leq \Lambda_1$ such that 
    \begin{equation}
        1-\frac{h}{2x} > c\, , \quad \beta - \alpha\left(1-\frac{h}{2x}\right)^{-1} > c
    \end{equation}
    for all $(x,h) \in K_x \times [0,\Lambda_2^{1/\delta}]$.
    Hence the first term in \cref{proofeq:qfun-bounded-below} is strictly positive on $K_x \times [0,\Lambda_2^{1/\delta}]$.
    The second term is proportional to $h$, and thus there exists $0 < \Lambda_3 \leq \Lambda_2$ such that $q'(x,h)$ is strictly positive on $K_x \times [0,\Lambda_3^{1/\delta}]$.    
    As $q'(x,h)$ is continuous in both $x$ and $h$, it is uniformly bounded from above on $K_x \times [0,\Lambda^{1/\delta}]$.

    Finally, we write
    \begin{equation}
        q(x,h) = \frac{\tilde{b}(x,h)}{2 \tilde{a}(x-h/2,h)} = x^\delta \frac{1+\frac{c_fh}{x}}{\left(1-\frac{h}{2x}\right)^{\alpha} \left(1+\frac{c_ah}{x-h/2}\right)}\, , 
    \end{equation}
    and thus there exists $0< \Lambda \leq \Lambda_3$ such that $q(x_2,h) < 1$ and $q(x_3,h) > 1$ for all $0 \leq h \leq \Lambda^{1/\delta}$, cf.~\cref{def:asymptotics-regions}.
    Together with the monotonicity of $q(x,h)$ in $x$, this also implies that there exists a unique solution $x_0(h) \in [x_2,x_3] \subset K_x$ of $q(x_0(h),h) = 1$.
\end{proof}
Consequently, the technical assumptions necessary to apply \cite{geronimo-wkbturningpoint-2004}, cf.~\cref{hyp:app-qfun}, are satisfied:
\begin{lemma}
    \label{lem:qfun-hypothesis}
    For $(x,h) \in K_x \times [0,\Lambda^{1/\delta}]$, let $q(x,h)$ be defined as in \cref{lem:qfun}.
    Then $q(x,h)$, $\tilde{a}(x,h)$ and $\tilde{b}(x,h)$ satisfy \cref{hyp:app-qfun}.
\end{lemma}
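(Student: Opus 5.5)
The plan is to go through the conditions of \cref{hyp:app-qfun} one at a time, using \cref{lem:qfun} together with the explicit closed forms of $\tilde a$ and $\tilde b$ from \cref{def:turning-point-easy}. As summarized in \cref{app:turning-point}, that hypothesis asks for four things:
\begin{enumerate}[(i)]
    \item smoothness (in fact analyticity) and strict positivity of the coefficients $\tilde a(x,h)$, $\tilde b(x,h)$ on a neighborhood of $K_x\times[0,h_0]$;
    \item strict positivity of $q$;
    \item existence of a unique turning point $x_0(h)$ in the interior of $K_x$, with $q(x_0(h),h)=1$;
    \item nondegeneracy of the turning point, namely $q'(x,h)>0$ uniformly, so that $q-1$ changes sign transversally.
\end{enumerate}
I would choose $h_0=\Lambda^{1/\delta}$, with $\Lambda$ as in \cref{lem:qfun}, shrinking it if necessary.

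For (i), I would note that $\tilde a(x,h)=x^\alpha+c_a h x^{\alpha-1}$ and $\tilde b(x,h)=x^\beta+c_f h x^{\beta-1}$ are real-analytic in $(x,h)$ wherever $x>0$. Since $x_1>0$, they are analytic on a complex neighborhood of $K_x\times[0,\Lambda^{1/\delta}]$, and likewise at the shifted arguments $x\pm h$ and $x-h/2$ that appear in $\tilde L^h$ and in $q$; this is possible after shrinking $\Lambda$ so that $x-h\ge x_1/2$. Positivity follows from the bounds $h|c_a|/(x-h)\le 1/2$ and $h|c_f|/x\le 1/2$, exactly as in the first step of the proof of \cref{lem:qfun}.

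Conditions (ii)--(iv) are essentially the content of \cref{lem:qfun}. It gives analyticity and positivity of $q$, existence and uniqueness of $x_0(h)\in(x_2,x_3)$, which lies strictly inside $K_x=[x_1,x_4]$, and the uniform bounds $0<c\le q'\le C$. If the hypothesis also asks that $q<1$ on $[x_1,x_0(h))$ and $q>1$ on $(x_0(h),x_4]$, this follows from monotonicity. If it asks that $x_0(h)$ depend continuously or smoothly on $h$, I would obtain this from the implicit function theorem, since $q'(x_0(h),h)>0$. If it asks for a uniform distance between $x_0(h)$ and the endpoints of $K_x$, this is immediate from $x_0(h)\in(x_2,x_3)$, because $x_1<x_2$ and $x_3<x_4$ are fixed.

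The step I expect to need the most care is any additional requirement in \cref{hyp:app-qfun} of the form $|q|$ bounded away from $1$ outside a neighborhood of the turning point, or regularity of the Langer-type function $\int_{x_0}^x \arccosh q$. My plan there is to use the bounds $q(x_2,h)<1<q(x_3,h)$ from the proof of \cref{lem:qfun}, together with the uniform lower bound on $q'$. These give $|q(x,h)-1|\ge c\,|x-x_0(h)|$ uniformly in $h$, which is the standard input needed for the Langer transform to be smooth through the turning point.
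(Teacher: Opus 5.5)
Your proposal is correct and follows the paper's route. Like the paper, you verify each item of \cref{hyp:app-qfun} directly from \cref{lem:qfun} (analyticity and positivity of $q$, a unique simple turning point $x_0(h)$) together with the explicit forms of $\tilde a,\tilde b$, and you obtain $q<1$ left of $x_0(h)$ and $q>1$ right of it from $q'>0$ and $q(x_0(h),h)=1$. The actual hypothesis asks for $-1<q<1$ on the left; the lower bound follows from the strict positivity of $q$ you already record, and your extra remarks (implicit function theorem, the uniform bound $|q-1|\ge c\,|x-x_0(h)|$) go beyond what the paper's three-line proof uses.
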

\begin{proof}
    By \cref{lem:qfun}, conditions (i)--(iii) are fulfilled.
    By construction, $\tilde{a}(x,h) > 0$ and $\tilde{b}(x,h) \in \rnum$.
    Finally, as $q'(x,h)$ is strictly positive and $q(x_0(h),h) = 1$, it follows that $q(x,h) < 1$ for $x < x_0(h)$ and $q(x,h) > 1$ for $x>x_0(h)$.
\end{proof}
\Cref{lem:app-langer-fedotov,lem:app-hypothesis} allow us to apply the results of~\cite{geronimo-wkbturningpoint-2004,fedotov-complexwkbmethod-2019}. We can now introduce the version of the Langer transform used in the recurrence relation literature; following the conventions and notation of Geronimo et al.~\cite{geronimo-wkbturningpoint-2004}, cf.~\cref{def:app-langer}:
\begin{definition}[Langer transform]
    \label{def:langer-transform}
    Let $q(x,h)$ and $x_0(h)$ be given as in \cref{lem:qfun}.
    Then the \emph{Langer transform} $\xi:K_x\times[0,\Lambda^{1/\delta}]\rightarrow\cnum$ is the function defined by
     \begin{align}
        \xi(x,h) & = -\left(\frac{3}{2} \int_{x}^{x_0(h)} \arccos(q(y,h)) \dl y \right)^{2/3} \quad x<x_0(h) \,, \\
        \xi(x,h) & = \left(\frac{3}{2} \int_{x_0(h)}^{x} \arccosh(q(y,h)) \dl y \right)^{2/3} \quad x\geq x_0(h) \,,
    \end{align}
where $\arccos$ is the principal branch of the arccosine function.
\end{definition}

\begin{remark}
    Differently from~\cite{geronimo-wkbturningpoint-2004}, our definition of $\xi(x,h)$ is independent of $z$ as we are treating the additional $z$-dependent term $h^\alpha z$ as a perturbation of $\tilde{L}^{h}$ instead of including it in the difference equation.
    This will enable us to apply some of the results in~\cite{fedotov-complexwkbmethod-2019}, thereby simplifying the proof.
\end{remark}
We refer to \cref{app:turning-point}, cf.~\cref{lem:app-hypothesis}, for some properties of the Langer transform.
Next, following \cite{geronimo-wkbturningpoint-2004,fedotov-complexwkbmethod-2019} (cf.~\cref{eq:app-weightfunction}), we introduce another auxiliary function $g(x,h)$ which will allow us to construct approximate solutions of the recurrence relation of the form $g(x,h) \Ai_j(h^{-2/3}\xi(x,h))$ 
\begin{definition}
    \label{def:afunc}
    Let $\xi$ be the Langer transform from \cref{def:langer-transform}. We define the functions $A,g:K_x\times[0,\Lambda^{1/\delta}]\to\cnum$ by
    \begin{equation}
        A(x,h)
        =
        \xi(x,h)^{1/4}
        \sinh\Bigl(\sqrt{\xi(x,h)}\,\xi'(x,h)\Bigr)^{-1/2},
    \end{equation}
    \begin{equation}
        g(x,h)
        =
        A(x,h)\,\tilde{a}(x-h/2,h)^{-1/2}.
    \end{equation}
    Here, all fractional powers are taken with respect to the principal branch.
\end{definition}
Thus, the function $g(x,h)$ carries the asymptotic factor $x^{-\alpha/2}$ required for the construction of approximate solutions to \cref{eq:generalized-eigenvalue-lambda}. Following Geronimo et al.~\cite[Eqs.~4.8--4.9]{geronimo-wkbturningpoint-2004}, cf.~\cref{eq:app-def-approx-solution}, we now construct approximations to the recessive and dominant solutions of \cref{eq:generalized-eigenvalue-lambda}. As anticipated, these approximations are built from functions of the form
\begin{equation}
g(x,h)\Ai_j(h^{-2/3}\xi(x,h)).
\end{equation}
More precisely, we seek solutions of \cref{eq:generalized-eigenvalue-lambda} of the form
\begin{equation}
    \label{eq:turning-point-solution}
    u^{\recessSol,h,z}_n
    =
    (-1)^n\psi^{\recessSol}(nh,h)
    +
    w^{\recessSol}(nh,h)\sigma^{\recessSol,h,z}_n,
\quad
    u^{\domSol,h,z}_n
    =
    (-1)^n\psi^{\domSol}(nh,h)
    +
    w^{\domSol}(nh,h)\sigma^{\domSol,h,z}_n,
\end{equation}
where $\psi^{\recessSol}$ and $\psi^{\domSol}$ are explicit Airy-type approximations, $\sigma^{\recessSol,h,z}_n$, $\sigma^{\domSol,h,z}_n$ are error terms, and the functions $w^{\recessSol},w^{\domSol}$ provide suitable weights for controlling the error terms. Their precise definition is given below:
\begin{definition}
    \label{def:approximate-solutions}
    Let $\Ai_j$, $j\in\{0,1,2\}$, be the functions defined by
    \begin{equation}
    \label{eq:def-complex-airy0}
    \Ai_j(x) = \Ai\left(\omega^j x\right)\, , \quad \omega = \e^{-\iu 2 \pi/3}\, , \quad j \in \znum_3 \, .
\end{equation}
where $\Ai$ is the Airy function of the first kind (\cref{eq:airy}), $\tilde{w}_1$ be the helper function associated with the Hankel function of the first kind (\cref{def:hankel-helper}), and $g$ the function from \cref{def:afunc}. We define
    \begin{alignat}{2}
        \psi^{\recessSol}(x,h) & = g(x,h) \Ai_0(h^{-2/3}\xi(x,h)) \, , \quad & \psi^{\domSol}(x,h) & = g(x,h) \Ai_1(h^{-2/3}\xi(x,h))\,, \\ 
        w^{\recessSol}(x,h) & = g(x,h) \tilde{w}_1(h^{-2/3}\xi(x,h)) \, , \quad & w^{\domSol}(x,h) & = \psi^{\domSol}(x,h)\,. 
    \end{alignat}
\end{definition}
As we will see in the next proposition, the function $w^{\recessSol}$ is introduced to control the deviation between the approximate solution $\psi^{\recessSol}$ and the corresponding exact solution $u^{\recessSol,h,z}$. In the recessive case, this is necessary because the Airy function $\Ai_0(x)$ vanishes for some $x<0$, preventing one from directly estimating relative errors in terms of $\Ai_0$ itself. The following proposition makes this statement precise by quantifying the deviation between the approximate and exact solutions.
\begin{proposition}
    \label{prop:turning-point}
    Let $z \in \Omega$ and $h>0$. There exists two linearly independent solutions $u^{\recessSol,h,z}$ and $u^{\domSol,h,z}$ of \cref{eq:generalized-eigenvalue-lambda} in the form~\eqref{eq:turning-point-solution},
    with $\psi^{\recessSol},\psi^{\domSol}$ and $w^{\recessSol}$ and $w^{\domSol}$ as from \cref{def:approximate-solutions}, and $\sigma_n^{r,h,z},\sigma_n^{d,h,z}$ satisfy the following: there exist $C_2 > 0$ and $\Lambda > 0$ such that, for all $0 < \lambda \leq \Lambda$, $z \in \Omega$ and $N_1(\lambda) \leq n \leq N_4(\lambda)$ (cf.~\cref{def:asymptotics-regions}) the following estimates hold uniformly:
    \begin{equation}
        |\sigma^{\recessSol,h,z}_n| \leq C_2 h^{1/3} ,\quad |\sigma^{\domSol,h,z}_n| \leq C_2 h^{1/3}.
    \end{equation}
\end{proposition}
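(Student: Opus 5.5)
We plan to reduce \cref{prop:turning-point} to the main theorem of Geronimo et al.~\cite{geronimo-wkbturningpoint-2004}, summarized in \cref{app:turning-point}. The central idea is to treat the actual operator $L^{h,z}$ as a small perturbation of $\tilde L^h$, with the perturbation controlled by \cref{lem:turning-point-diff-bounds}.

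\textbf{Step 1: approximate solutions of $\tilde L^h$.} We first check that $\psi^{\recessSol}$ and $\psi^{\domSol}$ nearly solve the model equation. By \cref{lem:qfun-hypothesis}, $\tilde a,\tilde b,q$ satisfy \cref{hyp:app-qfun}. The results of \cite{geronimo-wkbturningpoint-2004,fedotov-complexwkbmethod-2019} then show that $g(x,h)\Ai_j(h^{-2/3}\xi(x,h))$ solves $\tilde L^h\psi=0$ up to a residual of relative size $O(h^2)$, uniformly on $K_x$. More precisely, the residual divided by $\tilde a(x-h,h)$ is bounded by $Ch^2$ times the corresponding envelope: $w^{\recessSol}$ for $j=0$ and $|\psi^{\domSol}|$ for $j=1$. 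The envelopes are needed because $\Ai_0$ has zeros for negative argument, while $\tilde w_1$ does not vanish.

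\textbf{Step 2: the Volterra equation for the error.} Write the sought solution as $v_n=\psi(nh,h)+w(nh,h)\sigma_n$. The normalized equation $L^{h,z}v=0$ on $K_N(\lambda)$ becomes an inhomogeneous recurrence for $\sigma_n$. Its source terms are of two kinds. The first is the $O(h^2)$ residual from Step 1. The second comes from the coefficient differences $D^{h,z,a}_n$ and $D^{h,z,b}_n$, which are $O(h^{4/3})$ by \cref{lem:turning-point-diff-bounds}; this uses $\alpha>4/3$ to absorb the $h^\alpha z$ term. We then solve this recurrence by discrete variation of constants. The kernel is the Green's function of $\tilde L^h$ built from the pair of approximate solutions, whose Wronskian-type normalization is controlled by the Airy Wronskian identities in \cref{app:airy}. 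For the recessive solution we sum from the right end $N_4(\lambda)$ backwards; for the dominant one we sum from the left end $N_1(\lambda)$ forwards. This direction choice keeps the kernel, measured relative to the weights $w$, uniformly bounded.

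\textbf{Step 3: the key estimate.} The kernel bound is $|K(n,m)|\le C h^{-1/3}\cdot h^{-1}\cdot h$. Here $h^{-1/3}$ comes from the Airy scaling of the derivative of $\Ai_j(h^{-2/3}\xi)$ near the turning point, since the Wronskian in the variable $n$ scales like $h^{1/3}$. A sum over $O(h^{-1})$ indices contributes another factor. Multiplying by the per-step source size $h^{4/3}$ gives the total $h^{-1}\cdot h^{4/3}\cdot O(1)=h^{1/3}$. A discrete Gronwall / Neumann-series argument, valid once $h$ is small (so $\Lambda$ shrinks), closes the fixed point and yields $|\sigma_n|\le C_2h^{1/3}$ uniformly in $z\in\Omega$ and $n\in K_N(\lambda)$. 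Linear independence of $u^{\recessSol,h,z}$ and $u^{\domSol,h,z}$ then follows by computing their Wronskian: the Airy Wronskian of $\Ai_0$ and $\Ai_1$ is nonzero, and the perturbation is $o(1)$.

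\textbf{Extension and main obstacle.} Finally, the solutions constructed on $K_N(\lambda)$ extend uniquely to all $n$ through the recurrence, and \cref{lem:turning-point-op} with the sign $(-1)^n$ returns us to \cref{eq:generalized-eigenvalue-lambda}. I expect the main obstacle to be the uniform kernel bound in Step 3 across the transition region $\xi\approx0$. There the weights must match the Airy growth and decay rates exactly, and the non-$\lambda$-uniform perturbation from $z$ must still be only $O(h^{4/3})$. The first thing I would try is to import verbatim Geronimo et al.'s error-control function (their Theorem on $\mathcal V$-bounds) with $\epsilon(h)=h^{4/3}$ in place of $h^2$.
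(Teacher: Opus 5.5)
Your overall route is the paper's. You treat $L^{h,z}$ as a perturbation of $\tilde L^h$, use \cref{lem:qfun-hypothesis} to enter the Geronimo--Bruno--Van Assche setting, and take $O(h^2)$ Airy residuals and $O(h^{4/3})$ coefficient defects from \cref{lem:turning-point-diff-bounds}. Your fallback of importing Geronimo et al.'s error control is exactly what the paper does. Its proof consists of three estimates, $\sum_i K^{\qSol}(i,h,z)\le 2Cc_1h^{1/3}$, $\sum_i G^{\qSol}(i,h)|\hat\beta^{\qSol}(i,h)|\le c_1^2h$ and the same for $\hat\beta_1^{\qSol}$, inserted into the bound of \cref{thm:app-geronimo-2}. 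Linear independence is part of that theorem's conclusion.

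What does not work is your own derivation of the rate in Step 3. You assert a pointwise bound $|K(n,m)|\le Ch^{-1/3}\cdot h^{-1}\cdot h=Ch^{-1/3}$. You then count $O(h^{-1})$ indices times $h^{4/3}$ times an $O(1)$ kernel, which is inconsistent with that bound. With only a uniform $O(h^{-1/3})$ bound, the total is $h^{-1/3}\cdot h^{-1}\cdot h^{4/3}=O(1)$. The Gronwall/Neumann step then gives only $\sigma=O(1)$, which does not prove the proposition; \cref{prop:properties-turningpoint-solutions} needs $\sigma\to0$.

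The missing idea is that the relative propagator is not uniformly of size $h^{-1/3}$. With your summation directions it is bounded by a multiple of $\min\{h^{-1/3},|\xi(mh,h)|^{-1/2}\}$. Away from $\xi=0$, two Airy envelopes give $h^{1/3}|\xi|^{-1/2}$, and dividing by the discrete Wronskian $\sim h^{1/3}$ leaves $|\xi|^{-1/2}$. The peak $h^{-1/3}$ occurs only on the $O(h^{-1/3})$ lattice points with $|\xi|\lesssim h^{2/3}$, and $|\xi|^{-1/2}$ is integrable. Hence $\sum_m G(m,h)\lesssim h^{-2/3}+h^{-1}\int_{K_x}|\xi(x,h)|^{-1/2}\,\mathrm{d}x=O(h^{-1})$. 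This is precisely \cref{lem:app-error-propagator}, the only propagator input the paper uses, and with it your argument becomes the paper's.

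A smaller point concerns linear independence. The remark that ``the perturbation is $o(1)$'' is not enough, because the discrete Wronskian of $\psi^{\recessSol},\psi^{\domSol}$ is itself only of order $h^{1/3}$. That is the same order as the naive cross terms near $\xi=0$. Either evaluate it in the exponential region $\xi(nh,h)\ge\xi_4>0$, where the cross terms are $O(h^{2/3})$, or take independence from \cref{thm:app-geronimo-2}.
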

\begin{proof}
    Throughout this proof, let $\qSol \in \{\recessSol,\domSol\}$, and let $\Lambda > 0$ be small enough such that the estimates in \cref{lem:geronimo,lem:qfun-hypothesis} hold.
    As in \cref{lem:turning-point-op}, we make the ansatz $u^{\qSol,h,z}_n = (-1)^n v^{\qSol,h,z}_n$.
    Then, $u^{\qSol,h,z}$ is a solution of \cref{eq:generalized-eigenvalue-lambda} if and only if $(L^{h,z}v^{\qSol,h,z})_n = 0$ for all $n \geq 1$.

    We prove the statement by applying the turning-point theorem \cite[Theorem~4.4]{geronimo-wkbturningpoint-2004}, recalled in \cref{thm:app-geronimo-2}, to the operator $L^{h,z}$. Roughly speaking, this result asserts that, under suitable regularity assumptions, Airy-type approximate solutions give rise to genuine solutions of the associated difference equation with explicitly controlled error terms. In our setting, the resulting solutions correspond to the recessive and dominant solutions $v^{\recessSol,h,z}$ and $v^{\domSol,h,z}$ introduced above. The term $h^\alpha z$ is treated as a perturbative contribution in the sense of \cref{thm:app-geronimo-2}. More precisely, the solutions denoted by $U^{(1,h)}$ and $U^{(2,h)}$ in \cref{thm:app-geronimo-2} are identified with $v^{\recessSol,h,z}$ and $v^{\domSol,h,z}$, respectively.

    We therefore consider a family of difference equations depending on the parameter $z$, with coefficients in \cref{eq:app-difference-equation} given by
\begin{equation}
    A_1^z(x,h)=a_1(x,h),
    \qquad
    B_1^z(x,h)=-b_1(x,h)+h^\alpha z,
\end{equation}
where $a_1,b_1$ are defined in \cref{lem:geronimo}. In this formulation, the term $h^\alpha z$ appears as a perturbative contribution to the unperturbed equation associated with the coefficients $\tilde a(x,h)$ and $\tilde b(x,h)$ from \cref{def:turning-point-easy}. 
By \cref{lem:geronimo}, the functions $a_1(x,h)$ and $b_1(x,h)$ are continuous on $K_x\times[0,\Lambda^{1/\delta}]$, and hence so are $A_1^z$ and $B_1^z$. Furthermore, defining $q(x,h)$ as in \cref{lem:qfun}, \cref{lem:qfun-hypothesis,lem:app-hypothesis} ensure that the hypotheses of the turning-point theorem \cref{thm:app-geronimo-2} are satisfied.

We now verify the quantitative assumptions required to apply the turning-point theorem, \cref{thm:app-geronimo-2}.
To this end, we use the auxiliary quantities introduced in \cref{app:turning-point}, namely the positive propagator functions $G^{\qSol}(i,h)$ and the perturbative error terms $\hat\beta^{\qSol}(i,h)$ and $\hat\beta_1^{\qSol}(i,h)$, with $q=\recessSol,\domSol$.
Their precise definition is not important here; what matters is that they satisfy explicit uniform bounds. More precisely, by \cref{eq:app-beta-errors-estimate}, there exists $c_1>0$ such that
\begin{equation}
    \label{proofeq:turning-point-beta}
    |\hat{\beta}^{\qSol}(i,h)| \leq c_1 h^2,
    \qquad
    |\hat{\beta}_1^{\qSol}(i,h)| \leq c_1 h^2
\end{equation}
for all $i \in K_N(\lambda)$, $h>0$, and $\qSol=\recessSol,\domSol$. Furthermore, the propagator functions satisfy
\begin{equation}
    \label{proofeq:turning-point-green}
    \sum_{i=N_1(\lambda)}^{N_2(\lambda)} G^{\qSol}(i,h)
    \leq
    c_1 h^{-1},
    \qquad \qSol=\recessSol,\domSol,
\end{equation}
by \cref{lem:app-error-propagator}.

Following \cref{thm:app-geronimo-2}, we then define the cumulative error terms
\begin{equation}
    \label{proofeq:turningpoint-kbound}
    K^{\qSol}(i,h,z)
    =
    c_2 G^{\qSol}(i,h)\bigl(D^{h,z,a}_i+D^{h,z,b}_i\bigr),
\end{equation}
where $c_2$ is independent of $h,i,z$. Combining \cref{lem:turning-point-diff-bounds} with \cref{proofeq:turning-point-green}, we obtain
\begin{equation}
    \left|
    \sum_{i=N_1(\lambda)}^{N_2(\lambda)}
    K^{\qSol}(i,h,z)
    \right|
    \leq
    2Ch^{4/3}
    \sum_{i=N_1(\lambda)}^{N_2(\lambda)}
    G^{\qSol}(i,h)
    \leq
    2Cc_1 h^{1/3},
    \qquad \qSol=\recessSol,\domSol.
\end{equation}
    Finally, by \cref{proofeq:turning-point-beta,proofeq:turning-point-green}
    \begin{align}
        \label{proofeq:turningpoint-betabound}
        \sum_{i=N_1(\lambda)}^{N_2(\lambda)} G^{\qSol}(i,h) |\hat{\beta}^{\qSol}(i,h)| & \leq c_1^2 h \,  ,\\
        \label{proofeq:turningpoint-beta1bound}
        \sum_{i=N_1(\lambda)}^{N_2(\lambda)} G^{\qSol}(i,h) |\hat{\beta}_1^{\qSol}(i,h)| & \leq c_1^2 h \,  .  
    \end{align}
    
    We are now in a position to apply \cref{thm:app-geronimo-2}. This yields two linearly independent solutions $v^{\qSol,h,z}$, $\qSol=\recessSol,\domSol$, obtained as perturbations of the Airy-type approximations $\psi^{\qSol}$. More precisely,
\begin{equation}
    v^{\qSol,h,z}_n
    =
    \psi^{\qSol}(nh,h)
    +
    w^{\qSol}(nh,h)\tilde{\sigma}^{\qSol,h,z}_n
\end{equation}
for some error sequence $\tilde{\sigma}^{\qSol,h,z}_n$.

Furthermore, by \cref{thm:app-geronimo-2}, there exists $\tilde C>0$, independent of $z\in\Omega$ and $h>0$, such that
\begin{multline}
    |\tilde{\sigma}^{\qSol,h,z}_n|
    \leq
    \tilde C
    \sum_{i=N_1(\lambda)}^{N_2(\lambda)}
    \left(
        K^{\qSol}(i,h,z)
        +
        G^{\qSol}(i,h)|\hat\beta(i,h)|
    \right)
    \\
    \times
    \exp\left(
        \tilde C
        \sum_{i=N_1(\lambda)}^{N_2(\lambda)}
        \left(
            K^{\qSol}(i,h,z)
            +
            G^{\qSol}(i,h)|\hat\beta_1(i,h)|
        \right)
    \right).
\end{multline}
Combining this estimate with
\cref{proofeq:turningpoint-kbound,proofeq:turningpoint-betabound,proofeq:turningpoint-beta1bound}, we obtain
\begin{equation}
    |\tilde{\sigma}^{\qSol,h,z}_n|
    \leq
    \tilde C
    \bigl(
        2Cc_1 h^{1/3}
        +
        c_1^2 h
    \bigr)
    \exp\!\Bigl(
        \tilde C
        \bigl(
            2Cc_1 h^{1/3}
            +
            c_1^2 h
        \bigr)
    \Bigr),
\end{equation}
and therefore
\begin{equation}
    |\tilde{\sigma}^{\qSol,h,z}_n|
    \leq
    C_2 h^{1/3}
\end{equation}
for some constant $C_2>0$ and sufficiently small $h$. 
Finally, using the ansatz $u^{\qSol,h,z}_n = (-1)^n v^{\qSol,h,z}_n$, we obtain the claim.
\end{proof}

This completes the construction of the recessive and dominant solutions $u^{\recessSol,h,z},u^{\domSol,h,z}$ in the turning-point regime. We now establish several properties of these solutions that will be important later. To this end, we first derive a few auxiliary properties of the Langer transform.

\begin{lemma}
    \label{lem:langer-transform-bounds}
     Let $\xi(x,h)$ be the Langer transform introduced in \cref{def:langer-transform}, and let $N_1(\lambda)$, $N_2(\lambda)$ and $N_4(\lambda)$ be as in \cref{def:asymptotics-regions}.
    Then there exists $\Lambda>0$ and $c>0$ such that the following statements hold for all $0\leq h\leq \Lambda^{1/\delta}$:
    \begin{enumerate}[(i)]
        \item \label{item:langer-transform-bounds-xi2} There exists $\xi_2 < 0$ such that $\xi(nh,h) \leq \xi_2$ for all $n \leq N_2(\lambda)$ and $0 \leq h \leq \Lambda^{1/\delta}$;
        \item \label{item:langer-transform-bounds-xi4} There exists $\xi_4 > 0$ such that $\xi(N_4(\lambda) h,h)$, $\xi((N_4(\lambda)-1)h,h) \geq \xi_4$ for all $h \leq \Lambda^{1/\delta}$;
        \item \label{item:langer-transform-bounds-xi1} There exists $\xi_1 < 0$ such that $\xi(nh,h) \geq \xi_1$ for all $n \geq N_1(\lambda)$ and  $h \leq \Lambda^{1/\delta}$;
        \item \label{item:langer-transform-bounds-xi12tilde} There exists $\xi_1 \leq \tilde{\xi}_1 < \tilde{\xi}_2  \leq \xi_2$ such that, for all $0 \leq h \leq \Lambda^{1/\delta}$ and all intervals $I \subset [\tilde{\xi}_1,\tilde{\xi}_2]$ with $|I| \geq ch$, there exists $M \in \nnum$ with $N_1(\lambda) \leq M \leq N_2(\lambda)$ and $\xi(Mh,h) \in I$.
    \end{enumerate}
\end{lemma}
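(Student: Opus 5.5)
The plan is to work at the level of the continuous function $x\mapsto\xi(x,h)$ on $K_x\times[0,\Lambda^{1/\delta}]$. By \cref{lem:qfun} and \cref{lem:app-hypothesis}, $\xi$ is continuous, strictly increasing in $x$, and satisfies $\xi(x_0(h),h)=0$. Its $x$-derivative $\xi'(x,h)$ is moreover bounded above and below by positive constants uniformly on $K_x\times[0,\Lambda^{1/\delta}]$; near the turning point this uses $q'>0$ and the standard expansion $\arccosh q\sim\sqrt{2(q-1)}$.

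The first three statements then follow from monotonicity together with compactness.

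For \labelcref{item:langer-transform-bounds-xi2}: if $n\le N_2(\lambda)$, then $nh\le x_2$. Since $x_0(h)\in(x_2,x_3)$ and $x_0(h)$ depends continuously on $h$ (implicit function theorem with $q'>0$), the quantity $x_0(h)-x_2$ is bounded below by a positive constant. Hence
\[
\xi(nh,h)\le \xi(x_2,h)\le \max_{h}\xi(x_2,h)=:\xi_2<0 .
\]
Strictly, one should restrict to $n\ge N_1(\lambda)$ so that $nh\in K_x$. I would do that, or else define $\xi$ via monotone extension.

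For \labelcref{item:langer-transform-bounds-xi4}: I use $N_4(\lambda)h\ge x_4-h$ and $(N_4(\lambda)-1)h\ge x_4-2h$. For small $h$ both points exceed, say, $x_3+(x_4-x_3)/2>x_0(h)$. Monotonicity and compactness then give a uniform $\xi_4>0$.

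For \labelcref{item:langer-transform-bounds-xi1}: I use $nh\ge x_1$, so $\xi(nh,h)\ge\min_h\xi(x_1,h)=:\xi_1$. This is finite by continuity, and negative since $x_1<x_0(h)$.

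Statement \labelcref{item:langer-transform-bounds-xi12tilde} is the only one requiring care. I take $\tilde\xi_1=\max_h\xi(x_1+\eta,h)$ and $\tilde\xi_2=\min_h\xi(x_2-\eta,h)$, with $\eta>0$ small enough that $\tilde\xi_1<\tilde\xi_2$. Such an $\eta$ should exist because $\xi(x_1,h)<\xi(x_2,h)$ uniformly; if this uniformity is awkward, one can shrink the comparison using the uniform lower bound on $\xi'$ and small $h$.

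Now let $I=[s,s+\ell]\subset[\tilde\xi_1,\tilde\xi_2]$ with $\ell\ge ch$. By monotonicity, the preimage of $I$ under $\xi(\cdot,h)$ is an interval $[y,y']\subset[x_1+\eta,x_2-\eta]$. By the mean value theorem and $\xi'\le c'$, its length satisfies $y'-y\ge \ell/c'\ge (c/c')h$. Choosing $c=2c'$ makes this length at least $2h$, so it contains a grid point $Mh$. Since $Mh\in[x_1,x_2]$, we get $N_1(\lambda)\le M\le N_2(\lambda)$ from the ceiling/floor definitions.

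The main obstacle is thus the uniform two-sided bound on $\xi'$ near $x_0(h)$, where the defining integrals degenerate. I would obtain it from the identity $(\xi')^2\,\xi=\arccosh^2 q$, and its $\arccos$ analogue for $x<x_0(h)$, combined with the analyticity in \cref{lem:qfun} and $q'>0$. This is presumably already contained in \cref{lem:app-hypothesis}.
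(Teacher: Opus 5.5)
Your proposal is correct and follows essentially the same route as the paper. Statements (i)--(iii) come from monotonicity of $\xi(\cdot,h)$ combined with continuity and compactness in $h$. Statement (iv) comes from a uniform upper bound on $\xi'$, which the paper reads off directly from the $C^\infty$ regularity in \cref{lem:app-hypothesis}: it forces the preimage of $I$ to have length of order $h$, so it contains a grid point $Mh\in[x_1,x_2]$. Your version is slightly more careful than the paper's, since it accounts for $(N_4(\lambda)-1)h\geq x_4-2h$ in (ii) and observes that (i) only makes sense for $nh\in K_x$.
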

\begin{proof}
    By \cref{lem:qfun}, the turning point satisfies
    \begin{equation}
        x_2 < x_0(h) < x_4
    \end{equation}
    for all $h \in [0,\Lambda^{1/\delta}]$. Furthermore, \cref{lem:qfun-hypothesis,lem:app-hypothesis} imply that the Langer transform $\xi(x,h)$ is smooth in $(x,h)$, strictly negative for $x<x_0(h)$, strictly positive for $x>x_0(h)$, and monotone in $x$.

    Since $x_2<x_0(0)$ and $x_4>x_0(0)$, we obtain
    \begin{equation}
        \xi(x_2,0)<0,
        \qquad
        \xi(x_4,0)>0.
    \end{equation}
    By continuity of $\xi(x,h)$ in both variables, these inequalities remain uniformly valid for sufficiently small $h$. Recalling the definition of $N_2(\lambda)$ and $N_4(\lambda)$ from \cref{def:asymptotics-regions}, this proves \labelcref{item:langer-transform-bounds-xi2,item:langer-transform-bounds-xi4}. 
    Similarly, continuity of $\xi(x,h)$ together with the definition of $N_1(\lambda)$ yields \labelcref{item:langer-transform-bounds-xi1}.

    It remains to prove \labelcref{item:langer-transform-bounds-xi12tilde}.
    In the following, we write $\xi_h(x) = \xi(x,h)$ for fixed $h \in [0,\Lambda^{1/\delta}]$.
    Because of the strict monotonicity of $\xi(x,0)$ in $x$, every value
    \begin{equation}
        \xi \in [\xi(x_1,0),\xi(x_2,0)]
    \end{equation}
    is attained for some exactly one $x\in[x_1,x_2]$. Using continuity in $(x,h)$, we may therefore choose $\Lambda>0$ sufficiently small and constants
    \begin{equation}
        \xi_1 \leq \tilde\xi_1 < \tilde\xi_2 \leq \xi_2
    \end{equation}
    such that, for every $0\leq h\leq \Lambda^{1/\delta}$,
    \begin{equation}
        (\xi_h)^{-1}(\xi) \in [x_1,x_2] \quad \forall \xi \in [\tilde\xi_1,\tilde\xi_2] \, .
    \end{equation}
    Moreover, by \cref{lem:app-hypothesis}\labelcref{item:app-langer-analytic}, $\xi(x,h) \in C^\infty(K_x \times [0,\Lambda^{1/\delta}])$, and hence $\xi_h'(x)$ is uniformly bounded from above in $K_x \times [0,\Lambda^{1/\delta}]$.
    It follows that 
    \begin{equation}
        (\xi_h^{-1})'(\xi) = \left(\xi_h'(\xi_h^{-1}(\xi))\right)^{-1}
    \end{equation}
    is uniformly bounded from below by some $\tilde{c}>0$ on $[\tilde{\xi}_1,\tilde{\xi}_2]\times [0,\Lambda^{1/\delta}]$.
    Hereafter we set $c = 4 \tilde{c}^{-1}$.
    Then, given an interval $I \subset [\tilde{\xi}_1,\tilde{\xi}_2]$ with $|I| \geq ch$, it follows from the above discussion that $ \xi_h^{-1}(I) \subset [x_1,x_2]$.
    Using the lower bound of $(\xi_h^{-1})'(\xi)$ and the assumption $|I| \geq c h$, we obtain
    \begin{equation}
        \left|\xi_h^{-1}(I)\right| \geq \frac{1}{2} \tilde{c} |I| \geq 2 h 
    \end{equation}
    for small enough $h$.
    Thus there exists $M \in \nnum$ such that $Mh \in \xi_h^{-1}(I)$, which proves the claim.
\end{proof}

The solutions constructed in \cref{prop:turning-point} inherit the qualitative behavior of the corresponding Airy functions. In particular, the recessive solution exhibits the expected decay, while the dominant solution remains sufficiently large throughout the turning-point region. The following proposition quantifies these properties uniformly in $\lambda$, and will play a crucial role in matching the turning-point asymptotics with the oscillatory and exponential regimes considered later on.

\begin{proposition}
    \label{prop:properties-turningpoint-solutions}    
    Let $z \in \Omega$, $\lambda>0$, and set $h = \lambda^{1/\delta}$. Let $u^{\recessSol,h,z}$ and $u^{\domSol,h,z}$ be the recessive and dominant solutions of \cref{eq:generalized-eigenvalue-lambda} constructed in \cref{prop:turning-point}. Then there exist constants $\Lambda>0$ and $c_1,c_2>0$ such that the following statements hold for all $0<\lambda\leq \Lambda$, all $z\in\Omega$, and all $N_1(\lambda)\leq n\leq m:=N_4(\lambda)-1$:
    \begin{enumerate}[(i)]
        \item \label{item:pts-udratio-m}
        The dominant solution satisfies
        \begin{equation}
            \left|\frac{u^{\domSol,h,z}_{m+1}}{u^{\domSol,h,z}_m}\right| \geq \frac{7}{8}.
        \end{equation}

        \item \label{item:pts-urratio-n}
        The recessive solution obeys the uniform bound
        \begin{equation}
            |u^{\recessSol,h,z}_n|
            \leq c_1 (nh)^{-\alpha/2}.
        \end{equation}

        \item \label{item:pts-udratio-nm}
        The dominant solution satisfies
        \begin{equation}
            \left|
                \frac{u^{\domSol,h,z}_n}{u^{\domSol,h,z}_m}
            \right|
            \leq
            c_1 \left(\frac{m}{n}\right)^{\alpha/2}.
        \end{equation}

        \item \label{item:pts-udratio-nm-epsilon}
        For every $\epsilon>0$, there exists $\Lambda(\epsilon)<\Lambda$ such that
        \begin{equation}
            \left|
                \frac{u^{\domSol,h,z}_n}{u^{\domSol,h,z}_m}
            \right|
            \leq
            \epsilon
            \left(\frac{m}{n}\right)^{\alpha/2}
            h^{1/6}
        \end{equation}
        for all $n\leq N_2(\lambda)$ and all $0<\lambda<\Lambda(\epsilon)$.

        \item \label{item:pts-urratio-nm}
        For every $\lambda>0$, there exists $M\in\nnum$ with $N_1(\lambda)\leq M\leq N_2(\lambda)$ such that
        \begin{equation}
            |u^{\recessSol,h,z}_M|
            \geq
            c_2 (Mh)^{-\alpha/2} h^{1/6}\, .
        \end{equation}
    \end{enumerate}
\end{proposition}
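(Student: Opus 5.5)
Throughout the proof we use the representation \eqref{eq:turning-point-solution} together with the error bound $|\sigma^{\qSol,h,z}_n|\le C_2h^{1/3}$ from \cref{prop:turning-point}. Each statement then reduces to a statement about the explicit Airy-type functions $\psi^{\recessSol},\psi^{\domSol},w^{\recessSol}$ evaluated at $x=nh\in K_x$. We will also use two further facts.

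First, the prefactor $g(x,h)=A(x,h)\tilde a(x-h/2,h)^{-1/2}$ is smooth and nonvanishing on $K_x\times[0,\Lambda^{1/\delta}]$. This follows from \cref{lem:app-hypothesis}, since $A$ is analytic and bounded away from $0$ and $\infty$ there. Hence there are constants with
\begin{equation}
  c\,x^{-\alpha/2}\le |g(x,h)|\le C\,x^{-\alpha/2}.
\end{equation}
Second, we need the standard Airy estimates from \cref{app:airy}, writing $s=h^{-2/3}\xi$:
\begin{itemize}
  \item $|\Ai_0(s)|\lesssim|\tilde w_1(s)|$ for all real $s$;
  \item $|\tilde w_1(s)|\lesssim(1+|s|)^{-1/4}$ for $s\le 0$, and $|\tilde w_1(s)|\lesssim e^{-\frac23 s^{3/2}}(1+s)^{-1/4}$ for $s\ge0$;
  \item $|\Ai_1(s)|\asymp (1+|s|)^{-1/4}$ for $s\le0$ (no real zeros), and $|\Ai_1(s)|\asymp e^{\frac23 s^{3/2}}(1+s)^{-1/4}$ for $s\ge0$.
\end{itemize}

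\textbf{Parts (ii) and (iii).} For (ii), we bound
\begin{equation}
  |u^{\recessSol}_n|\le |g|\bigl(|\Ai_0(s)|+C_2h^{1/3}|\tilde w_1(s)|\bigr)\lesssim x^{-\alpha/2}.
\end{equation}
For the dominant solution, $w^{\domSol}=\psi^{\domSol}$, so
\begin{equation}
  u^{\domSol}_n=(-1)^n\psi^{\domSol}(nh,h)\bigl(1+O(h^{1/3})\bigr)
\end{equation}
uniformly. Every ratio of dominant entries is therefore a ratio of $|g\,\Ai_1|$ values up to a factor $1+O(h^{1/3})$. At $m=N_4(\lambda)-1$, \cref{lem:langer-transform-bounds}(ii) gives $\xi\ge\xi_4>0$, so $s_m\ge \xi_4h^{-2/3}$ is large.

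For (iii), write $|u^{\domSol}_n/u^{\domSol}_m|\lesssim (m/n)^{\alpha/2}\,|\Ai_1(s_n)|/|\Ai_1(s_m)|$. Since $|\Ai_1|$ is, up to constants, nondecreasing along $\xi$ (it behaves like $(1+|s|)^{-1/4}$ on the left and grows exponentially on the right), and $\xi$ is increasing in $x$, this ratio is bounded.

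\textbf{Parts (i) and (iv).} For (i), $s_{m+1}-s_m=h^{-2/3}(\xi_{m+1}-\xi_m)\approx h^{1/3}\xi'$, with $\xi'>0$ on the right side. Using the exponential asymptotics,
\begin{equation}
  \left|\frac{\Ai_1(s_{m+1})}{\Ai_1(s_m)}\right|\approx \exp\bigl(s_m^{1/2}(s_{m+1}-s_m)\bigr)=\exp\bigl(\xi_m^{1/2}\xi'h^{1/3}\cdot h^{-1/3}\bigr)\ge1.
\end{equation}
Combined with the $g$-ratio $1+O(h)$ and the error factor $1+O(h^{1/3})$, this gives a ratio $\ge 7/8$ for small $h$. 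For (iv), when $n\le N_2(\lambda)$ we have $\xi_n\le\xi_2<0$, so $|\Ai_1(s_n)|\lesssim 1$, while $|\Ai_1(s_m)|\gtrsim h^{1/6}e^{\frac23\xi_4^{3/2}/h}$. The ratio is therefore exponentially small, in particular $\le\epsilon h^{1/6}$ for $h$ small.

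\textbf{Part (v), the main obstacle.} This requires showing that the recessive solution is not small at some point of the oscillatory window, despite the zeros of $\Ai$. On $s\in[\tilde\xi_1h^{-2/3},\tilde\xi_2h^{-2/3}]$ we use
\begin{equation}
  \Ai(s)=\pi^{-1/2}|s|^{-1/4}\bigl(\sin(\tfrac23|s|^{3/2}+\tfrac\pi4)+O(|s|^{-3/2})\bigr).
\end{equation}
The phase $\tfrac23|\xi|^{3/2}h^{-1}$ changes by $2\pi$ over a $\xi$-interval of length $\asymp h$. So we can choose an interval $I\subset[\tilde\xi_1,\tilde\xi_2]$ with $|I|\ge ch$ on which $|\sin|\ge1/2$, and \cref{lem:langer-transform-bounds}(iv) produces an $M$ with $\xi(Mh,h)\in I$. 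At this $M$, $|\Ai_0(s_M)|\gtrsim|s_M|^{-1/4}\asymp h^{1/6}$, while the error term satisfies
\begin{equation}
  |w^{\recessSol}\sigma|\lesssim x^{-\alpha/2}h^{1/6}\cdot h^{1/3}.
\end{equation}
This yields $|u^{\recessSol}_M|\ge c_2(Mh)^{-\alpha/2}h^{1/6}$. The delicate point is matching the spacing constant $c$ of \cref{lem:langer-transform-bounds}(iv) with the local period of the oscillation. If the period interval is too short relative to $ch$, we enlarge $I$ to a union of good arcs, or pick $\tilde\xi_1,\tilde\xi_2$ closer to $\xi_1$, where the period in $\xi$ is $\asymp h/|\xi|^{1/2}$, comfortably of order $h$.
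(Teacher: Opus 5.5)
Your arguments for (i)--(iv), and your final lower bound in (v) once a suitable $M$ has been found, are sound and follow the paper's route. That route is the representation from \cref{prop:turning-point} with $w^{\domSol}=\psi^{\domSol}$, the boundedness of $A$, $\Ai_0$ and $\tilde w_1$, and the monotonicity and asymptotics of $|\Ai_1|=\tfrac12\sqrt{\Ai^2+\Bi^2}$.

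The gap is the existence of $M$ in (v), exactly where you flag it. Part (iv) of \cref{lem:langer-transform-bounds} only places a lattice value $\xi(Mh,h)$ in intervals of length at least $ch$. The constant $c$ there is fixed by an upper bound on $\partial_x\xi$ (its proof takes $c=4/\tilde c$), so it is not yours to choose. An arc on which $|\sin(\tfrac23|s|^{3/2}+\tfrac\pi4)|\ge\tfrac12$ has $\xi$-length only about $\tfrac{2\pi}{3}h|\xi|^{-1/2}$. Knowing that both quantities are of order $h$ therefore settles nothing; it is a comparison of constants.

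Neither of your remedies works.
\begin{itemize}
\item A union of good arcs is not an interval, so the lemma does not apply to it, and it contains bad gaps where the produced $M$ may land.
\item Moving the window toward $\xi_1$ increases $|\xi|$, which \emph{shrinks} the arcs.
\item Moving it toward $0$ is excluded, because (v) requires $M\le N_2(\lambda)$, i.e.\ $\xi\le\xi_2<0$.
\end{itemize}

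The missing input is the phase advance per lattice step. By \cref{lem:app-langer-derivative}, $|\xi|^{1/2}\partial_x\xi=\arccos q(x,h)$ for $x\le x_0(h)$. Setting $\zeta_n=\tfrac{2}{3h}|\xi(nh,h)|^{3/2}$, this gives exactly $\zeta_n-\zeta_{n+1}=\tfrac1h\int_{nh}^{(n+1)h}\arccos q(y,h)\,\mathrm{d}y$. On $[x_1,x_2]$ one has $q(x,h)=x^\delta/2+O(h)\in[\tfrac5{16},\tfrac7{16}]+O(h)$. So for small $h$ the step lies in approximately $[1.118,1.253]$, strictly inside $(\pi/3,2\pi/3)$.

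Hence $|\sin(\zeta_n+\tfrac\pi4)|$ and $|\sin(\zeta_{n+1}+\tfrac\pi4)|$ cannot both be below $\tfrac12$, since that would put the step within $\pi/3$ of a multiple of $\pi$. So one of $N_1(\lambda)$, $N_1(\lambda)+1$ already serves as $M$, and part (iv) of \cref{lem:langer-transform-bounds} is not needed at all. Your comparison of $|\Ai_0(s_M)|\ge c\,h^{1/6}$ against the $O(h^{1/2})$ error term then completes (v).

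The paper's proof takes the same route as yours. It relies on \cref{lem:airy-hankel-ratio} to supply an interval of length $ch$ for the $c$ coming from \cref{lem:langer-transform-bounds}. As stated for every $c>0$, that lemma cannot hold: consecutive zeros of $\chi(h^{-2/3}\,\cdot\,)$ on $[\tilde\xi_1,\tilde\xi_2]$ are only about $\pi h|\xi|^{-1/2}$ apart. So a quantitative input like the step-size argument above is needed on that route as well.
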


\begin{proof}
    We set $h=\lambda^{1/\delta}$ and use the properties of the Langer transform collected in \cref{lem:app-hypothesis}. In particular, $\xi(x,h)$ is real-valued, smooth in $(x,h)$, and monotonically increasing in $x$, with $\xi(x,h)<0$ for $x<x_0(h)$ and $\xi(x,h)>0$ for $x>x_0(h)$.

    We begin with \labelcref{item:pts-udratio-m}.  
    Set
    \begin{equation}
        x_4(h)=N_4(\lambda)h,
        \qquad
        m=N_4(\lambda)-1.
    \end{equation}
    By definition, $x_4(h)\to x_4$ as $h\to0$. Using the representation from \cref{prop:turning-point}, we obtain
    \begin{equation}
        \left|\frac{u^{d,h,z}_{m+1}}{u^{\domSol,h,z}_m}\right|
        =
        \left|
        \frac{
            g(x_4(h),h)\Ai_1(h^{-2/3}\xi(x_4(h),h))
        }{
            g(x_4(h)-h,h)\Ai_1(h^{-2/3}\xi(x_4(h)-h,h))
        }
        \frac{
            (-1)^{m+1}+\sigma^{\domSol,h,z}_{m+1}
        }{
            (-1)^m+\sigma^{\domSol,h,z}_m
        }
        \right|.
    \end{equation}
    Since $g(x,h)$ is continuous and nonvanishing on $K_x\times[0,\Lambda^{1/\delta}]$,
    \begin{equation}
        \lim_{h\to0}
        \frac{g(x_4(h),h)}{g(x_4(h)-h,h)}
        =1.
    \end{equation}
    Moreover, $\xi(x,h)$ is real and increasing in $x$, and $\Ai_1$ is monotonically increasing on the positive real axis, cf.~\cref{lem:airy-properties}\labelcref{item:airy-monotone}. Hence, for every fixed $\varepsilon>0$,
    \begin{equation}
        \left|
        \frac{
            g(x_4(h),h)\Ai_1(h^{-2/3}\xi(x_4(h),h))
        }{
            g(x_4(h)-h,h)\Ai_1(h^{-2/3}\xi(x_4(h)-h,h))
        }
        \right|
        \geq 1-\varepsilon
    \end{equation}
    for $h$ sufficiently small.
    Furthermore, by \cref{prop:turning-point},
    \begin{equation}
        \left|
        \frac{
            (-1)^{m+1}+\sigma^{\domSol,h,z}_{m+1}
        }{
            (-1)^m+\sigma^{\domSol,h,z}_m
        }
        \right|
        \geq
        \frac{1-C_2 h^{1/3}}{1+C_2 h^{1/3}}.
    \end{equation}
    Combining the previous estimates and taking $h$ small enough yields
    \begin{equation}
        \left|u^{\domSol,h,z}_{m+1}\big/u^{\domSol,h,z}_m\right|
        \geq \frac78,
    \end{equation}
    proving \labelcref{item:pts-udratio-m}.

    We now prove \labelcref{item:pts-urratio-n}.  
    The prefactor $A(x,h)=\tilde{a}(x-h/2,h)^{1/2}g(x,h)$ from \cref{def:afunc} is uniformly bounded on $K_x\times[0,\Lambda^{1/\delta}]$ (cf.~\cref{lem:app-hypothesis}\labelcref{item:app-gfunc-bound}), while $\xi(x,h)$ remains real-valued.
    Consequently, the Airy function $\Ai_0(h^{-2/3}\xi(x,h))$ and the auxiliary function $\tilde w_1(h^{-2/3}\xi(x,h))$ remain uniformly bounded on this region, cf.~\cref{lem:airy-properties}\labelcref{item:airy-bounded} and \cref{lem:hankel-properties}\labelcref{item:hankel-bounded}. Hence there exists $\tilde c_0>0$ such that
    \begin{align}
        |\psi^{\recessSol,h,z}(x,h)|
        &\leq
        \frac{|A(x,h)|}{\tilde a(x-h/2,h)^{1/2}}
        \left|\Ai_0(h^{-2/3}\xi(x,h))\right|
        \leq
        \frac{\tilde c_0}{x^{\alpha/2}},
        \\
        |w^{\recessSol,h,z}(x,h)|
        &\leq
        \frac{|A(x,h)|}{\tilde a(x-h/2,h)^{1/2}}
        \left|\tilde w_1(h^{-2/3}\xi(x,h))\right|
        \leq
        \frac{\tilde c_0}{x^{\alpha/2}},
    \end{align}
    where we used the asymptotic behavior of $\tilde a(x,h)$ from \cref{def:turning-point-easy}.
    Using again \cref{prop:turning-point}, we obtain
    \begin{equation}
        |u^{\recessSol,h,z}_n|
        \leq
        |\psi^{\recessSol,h,z}(nh,h)|
        +
        C h^{1/3}|w^{\recessSol,h,z}(nh,h)|
        \leq
        \frac{\tilde c_0(1+Ch^{1/3})}{(nh)^{\alpha/2}},
    \end{equation}
    proving the claim.

    We next establish \labelcref{item:pts-udratio-nm,item:pts-udratio-nm-epsilon}.  
    As before, set $x_4(h)=N_4(\lambda)h$. Since $A(x,h)$ is uniformly bounded above and below on $K_x\times[0,\Lambda^{1/\delta}]$ (cf.~\cref{lem:app-hypothesis}\labelcref{item:app-gfunc-bound}), there exists $\tilde c_1>0$ such that
    \begin{equation}
        \label{proofeq:turning-point-bounds-aratio-polished}
        \left|
        \frac{A(x,h)}{A(y,h)}
        \right|
        \leq
        \tilde c_1
    \end{equation}
    for all $x,y\in K_x$.
    Moreover, the asymptotics of $\tilde a(x,h)$ imply the existence of $\tilde d_0>0$ such that
    \begin{equation}
        \label{proofeq:turningpoint-bounds-gratio-polished}
        \frac{n^{\alpha/2}}{m^{\alpha/2}}
        \frac{
            \tilde a(mh-h/2,h)^{1/2}
        }{
            \tilde a(nh-h/2,h)^{1/2}
        }
        \leq
        \tilde d_0
    \end{equation}
    for all $N_1(\lambda) \leq n \leq m$ and $h \in [0,\Lambda^{1/\delta}]$.
    Combining these bounds with the representation from \cref{prop:turning-point}, we obtain
    \begin{align}
        \frac{n^{\alpha/2}}{m^{\alpha/2}}
        \left|
        \frac{u^{\domSol,h,z}_n}{u^{\domSol,h,z}_m}
        \right|
        &\leq
        \tilde d_0\tilde c_1
        \left|
        \frac{
            \Ai_1(h^{-2/3}\xi(nh,h))
        }{
            \Ai_1(h^{-2/3}\xi(mh,h))
        }
        \right|
        \frac{1+Ch^{1/3}}{1-Ch^{1/3}}
        \\
        \label{proofeq:turning-point-bounds-dratio-polished} & \leq
        \tilde c_2
        \left|
        \frac{
            \Ai_1(h^{-2/3}\xi(nh,h))
        }{
            \Ai_1(h^{-2/3}\xi(mh,h))
        }
        \right|
    \end{align}
    for some $\tilde c_2>0$ and $h$ sufficiently small.

    Since $\xi(nh,h)\leq \xi(mh,h)$ for $n\leq m$, monotonicity of $\Ai_1$ on the real axis (cf.~\cref{lem:airy-properties}\labelcref{item:airy-monotone}) implies that the quotient of Airy functions is bounded by $1$. Therefore,
    \begin{equation}
        \left|
        \frac{u^{\domSol,h,z}_n}{u^{\domSol,h,z}_m}
        \right|
        \leq
        c_1\left(\frac{m}{n}\right)^{\alpha/2},
    \end{equation}
    establishing \labelcref{item:pts-udratio-nm}.

    Assume now that $n\leq N_2(\lambda)$, and let $c_3 > 0$ be arbitrary. By \cref{lem:langer-transform-bounds},
    \begin{equation}
        \xi(nh,h)\leq \xi_2<0,
        \qquad
        \xi(mh,h)\geq \xi_4>0.
    \end{equation}
    Using the standard asymptotics of $\Ai_1$ on the positive and negative real axes (cf.~\cref{lem:airy-properties}), we obtain
    \begin{align}
        |\Ai_1(h^{-2/3}\xi(nh,h))|
        &\leq
        \frac{\tilde c_6}{(h^{-2/3}|\xi(nh,h)|)^{1/4}}
        \leq
        \frac{\tilde c_6}{|\xi_2|^{1/4}}h^{1/6},
        \\
        |\Ai_1(h^{-2/3}\xi(mh,h))|
        &\geq c_3
    \end{align}
    for some $\tilde{c}_6>0$ and all sufficiently small $h$. 
    Combining this with \cref{proofeq:turning-point-bounds-dratio-polished} and choosing $c_3$ large enough proves \labelcref{item:pts-udratio-nm-epsilon}.

    Finally, we prove \labelcref{item:pts-urratio-nm}.  
    Let $M\in[N_1(\lambda),N_2(\lambda)]$. Using the representation from \cref{prop:turning-point} and the asymptotics of $\tilde{a}(x,h)$ (cf. \cref{def:turning-point-easy}),
    \begin{equation}
        (Mh)^{\alpha/2}|u^{\recessSol,h,z}_M|
        \geq
        \tilde{c}_8|A(Mh,h)|
        \left|
        \tilde w_1(h^{-2/3}\xi(Mh,h))
        \left(
            (-1)^M\chi(h^{-2/3}\xi(Mh,h))
            +
            \sigma^{\recessSol,h,z}_M
        \right)
        \right|,
    \end{equation}
    for some $\tilde{c}_8>0$, where
    \begin{equation}
        \chi(x)=\frac{\Ai_0(x)}{\tilde w_1(x)}.
    \end{equation}

    Let $\tilde\xi_1<\tilde\xi_2<0$ and $\tilde{c}_7>0$ be as in \cref{lem:langer-transform-bounds}. By the oscillatory behavior of $\chi$ on the negative real axis (cf.~\cref{lem:airy-hankel-ratio}), for every sufficiently small $h>0$ there exists an interval $I_{\xi,h} \subset [\tilde\xi_1,\tilde\xi_2]$ with $|I_{\xi,h}| \geq \tilde{c}_7 h$ such that
    \begin{equation}
        |\chi(h^{-2/3}\xi)|
        \geq
        \frac12 \quad \forall \xi \in I_{\xi,h} \, .
    \end{equation}

    By \labelcref{item:langer-transform-bounds-xi12tilde} of \cref{lem:langer-transform-bounds}, there exists $M\in[N_1(\lambda),N_2(\lambda)]$ such that $\xi(Mh,h)\in I_{\xi,h}$. Since $A(x,h)$ is uniformly bounded away from zero, \cref{prop:turning-point} yields
    \begin{equation}
        (Mh)^{\alpha/2}|u^{\recessSol,h,z}_M|
        \geq
        \frac{\tilde c_3}{2}
        |\tilde w_1(h^{-2/3}\xi(Mh,h))|
        (1-Ch^{1/3})
    \end{equation}
    for $h$ sufficiently small, and therefore
    \begin{equation}
        (Mh)^{\alpha/2}|u^{\recessSol,h,z}_M|
        \geq
        \tilde c_4
        |\tilde w_1(h^{-2/3}\xi(Mh,h))|.
    \end{equation}

    Finally, since $M\geq N_1(\lambda)$, \cref{lem:langer-transform-bounds} gives $\xi(Mh,h)\geq \xi_1$. Using the asymptotics of $\tilde w_1$ on the negative real axis, we conclude that
    \begin{equation}
        |\tilde w_1(h^{-2/3}\xi(Mh,h))|
        \geq
        \frac{\tilde c_5}{|h^{-2/3}\xi(Mh,h)|^{1/4}}
        \geq
        \frac{\tilde c_5}{|\xi_1|^{1/4}}h^{1/6},
    \end{equation}
    proving the claim.
\end{proof}

\subsection{Third region: Poincaré's theorem}
\label{sec:poincare}
We now turn to the tail region of the recessive solution, namely $n \geq N_4(\lambda)$.  
In this regime, it is convenient to renormalize the generalized eigenvalue equation so as to isolate its asymptotically constant part. To this end, we introduce the following change of variables.

\begin{lemma}
    \label{lem:asymptotics-an-transform}
    Let
    \begin{equation}
        u_n = (-1)^n a_n^{-1/2} v_n
    \end{equation}
    for all $n \in \nnum$.
    Then $u$ fulfills \cref{eq:generalized-eigenvalue-lambda} if and only if $v$ fulfills
    \begin{equation}
        \label{eq:asymptotics-an-transform-poincare}
        v_{n+1}
        -
        \left(\lambda f_n - z\right)
        \frac{a_{n+1}^{1/2}}{a_n^{3/2}}
        v_n
        +
        \frac{a_{n-1}^{1/2}a_{n+1}^{1/2}}{a_n}
        v_{n-1}
        =
        0
        \qquad \forall n \geq 1 \, .
    \end{equation}
\end{lemma}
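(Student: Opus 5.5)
This is a direct substitution, and I will verify both directions at once by showing that the two equations are equivalent up to multiplication by a nonzero factor. Insert $u_n=(-1)^n a_n^{-1/2}v_n$, together with $u_{n\pm1}=(-1)^{n\pm1}a_{n\pm1}^{-1/2}v_{n\pm1}$, into \cref{eq:generalized-eigenvalue-lambda}. The left-hand side becomes
\begin{equation}
    (-1)^{n+1}\Bigl(a_n a_{n+1}^{-1/2}v_{n+1}-(\lambda f_n-z)a_n^{-1/2}v_n+a_{n-1}a_{n-1}^{-1/2}v_{n-1}\Bigr),
\end{equation}
where I used $(-1)^{n-1}=(-1)^{n+1}$ and $(-1)^n=-(-1)^{n+1}$. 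Since $a_{n-1}a_{n-1}^{-1/2}=a_{n-1}^{1/2}$, the last term is $a_{n-1}^{1/2}v_{n-1}$.

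Next I multiply by the nonzero factor $(-1)^{n+1}a_{n+1}^{1/2}a_n^{-1}$, which is well defined because $a_n>0$ by \cref{hyp:conditions-an-fn}. The three coefficients become $1$, $-(\lambda f_n-z)a_{n+1}^{1/2}a_n^{-3/2}$ and $a_{n-1}^{1/2}a_{n+1}^{1/2}a_n^{-1}$. These are exactly the coefficients in \cref{eq:asymptotics-an-transform-poincare}.

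The map $v\mapsto u$ is a bijection of $\ell(\nnum)$, with inverse $v_n=(-1)^n a_n^{1/2}u_n$. The multiplier is nonzero for every $n\ge1$. Hence the $n$-th equation for $u$ holds if and only if the $n$-th equation for $v$ holds, for each $n\geq1$, and this gives the equivalence.

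There is no real obstacle here. The only point needing care is the bookkeeping of signs and half-powers, in particular checking that the factor $(-1)^{n+1}$ is common to all three terms.
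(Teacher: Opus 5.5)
Your proposal is correct and follows the paper's proof essentially verbatim: substitute the ansatz, factor out $(-1)^{n+1}$, and multiply by the nonzero factor $(-1)^{n+1}a_{n+1}^{1/2}a_n^{-1}$. You also note explicitly that the multiplier never vanishes and that the change of variables is invertible, which spells out the ``if and only if'' more clearly than the paper does.
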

\begin{proof}
    Plugging the ansatz $u_n = (-1)^n a_n^{-1/2} v_n$ into \cref{eq:generalized-eigenvalue-lambda}, we obtain
    \begin{equation}
        a_n a_{n+1}^{-1/2}(-1)^{n+1}v_{n+1} + (\lambda f_n-z) a_n^{-1/2} (-1)^nv_n + a_{n-1}^{1/2} (-1)^{n-1} v_{n-1} = 0 \quad \forall n \geq n \, .
    \end{equation}
    Multiplying by $(-1)^{n+1} a_{n+1}^{1/2} a_n^{-1}$, we obtain the result.
\end{proof}
We now exploit the asymptotic properties stated in \cref{hyp:conditions-an-fn} to derive asymptotic solutions of the transformed equation~\eqref{eq:asymptotics-an-transform-poincare}.
\begin{lemma}
    \label{lem:asymptotic-solutions}
    Set $\delta = \beta-\alpha$.
    There exists two linearly independent solutions $\tilde{v}^{\recessSol}$ and $\tilde{v}^{\domSol}$ of \cref{eq:asymptotics-an-transform-poincare} such that
    \begin{equation}
        \frac{\tilde{v}^{\recessSol}_{n+1}}{\tilde{v}^{\recessSol}_n} \sim \frac{1}{\lambda} n^{-\delta} \, , \qquad  \frac{\tilde{v}^{\domSol}_{n+1}}{\tilde{v}^{\domSol}_n} \sim \lambda n^{\delta} \, . 
    \end{equation}
\end{lemma}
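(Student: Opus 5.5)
This is a direct application of Poincaré--Perron theory for second-order recurrences, in the form that allows unbounded coefficients, i.e. \cite[Theorem~8.35]{elaydi-introductiondifferenceequations-2005} in its Perron/Kooman version. Fix $\lambda>0$ and $z\in\Omega$. Write \cref{eq:asymptotics-an-transform-poincare} as
\begin{equation}
v_{n+1}+p_n v_n+q_n v_{n-1}=0,
\end{equation}
with
\begin{equation}
p_n=-(\lambda f_n-z)\,a_{n+1}^{1/2}a_n^{-3/2},\qquad q_n=a_{n-1}^{1/2}a_{n+1}^{1/2}a_n^{-1}.
\end{equation}

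\textbf{Step 1: asymptotics of the coefficients.} By \cref{hyp:conditions-an-fn}, we have $a_{n+1}/a_n\to1$ and $a_{n-1}/a_n\to1$, so $q_n\to1$. Moreover,
\begin{equation}
p_n=-\lambda\frac{f_n}{a_n}\bigl(1+o(1)\bigr)+\frac{z}{a_n}\bigl(1+o(1)\bigr)=-\lambda n^{\delta}\bigl(1+o(1)\bigr),
\end{equation}
since $f_n/a_n=n^\delta(1+O(1/n))$ and $z/a_n\to0$. Hence $|p_n|\to\infty$ while $q_n$ stays bounded and nonzero, so the recurrence has a dominant diagonal coefficient.

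\textbf{Step 2: rescaling to Poincaré form.} The characteristic roots are expected to behave like $\rho_n^{\domSol}\approx -p_n\sim\lambda n^\delta$ and $\rho_n^{\recessSol}\approx -q_n/p_n\sim 1/(\lambda n^\delta)$. I would either invoke the theorem for $|p_n|\to\infty$, $q_n\to q\ne0$ directly, or reduce to constant-limit Poincaré form and use Poincaré--Perron. For the dominant solution, I would set $v_n=\Bigl(\prod_{k<n}(-p_k)\Bigr)y_n$. The equation for $y$ then has coefficients tending to $(-1,0)$, with roots $1$ and $0$. The root $1$ is simple with distinct modulus, so Perron's theorem gives a solution with $y_{n+1}/y_n\to1$, and therefore $\tilde v^{\domSol}_{n+1}/\tilde v^{\domSol}_n\sim -p_n\sim\lambda n^\delta$.

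\textbf{Step 3: the recessive solution.} The root $0$ is the degenerate one, and this is the main obstacle: Perron's theorem only gives $y_{n+1}/y_n\to0$, without the precise rate. I would handle it through the reduction-of-order / Wronskian formula. Given $\tilde v^{\domSol}$, a second solution is
\begin{equation}
\tilde v^{\recessSol}_n=\tilde v^{\domSol}_n\sum_{k\ge n}\frac{W_k}{\tilde v^{\domSol}_k\tilde v^{\domSol}_{k+1}},
\end{equation}
where $W_k=\prod_{j\le k} q_j$ (up to a constant) is the Casoratian of this monic recurrence. The series converges superexponentially because $|\tilde v^{\domSol}_k|$ grows like $\prod\lambda k^\delta$, and it is dominated by its first term. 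This gives
\begin{equation}
\tilde v^{\recessSol}_n\approx\frac{W_n}{\tilde v^{\domSol}_{n+1}}\bigl(1+o(1)\bigr),
\end{equation}
hence
\begin{equation}
\frac{\tilde v^{\recessSol}_{n+1}}{\tilde v^{\recessSol}_n}\sim\frac{q_{n+1}\tilde v^{\domSol}_{n+1}}{\tilde v^{\domSol}_{n+2}}\sim\frac{1}{\lambda n^\delta}.
\end{equation}
The tail ratio estimate is the delicate point: one has to show that $\sum_{k\ge n}$ equals its first term times $1+O(1/(\lambda n^\delta)^2)$. This follows from $|\tilde v^{\domSol}_{k+1}/\tilde v^{\domSol}_k|\ge c\lambda k^\delta\to\infty$, which gives a geometric bound on the tail. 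Linear independence is then clear from the different growth rates.
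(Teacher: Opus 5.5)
Your proof is correct, and it is more self-contained than the paper's. The paper rewrites \cref{eq:asymptotics-an-transform-poincare} as $v_{n+2}+p_1(n)v_{n+1}+p_2(n)v_n=0$ with $p_1(n)\sim-\lambda n^{\delta}$ and $p_2(n)\sim 1$. It then cites a generalized Poincar\'e--Perron theorem for power-like coefficients (Elaydi, Theorem~8.35), which gives both ratio asymptotics at once. That is the ``invoke the theorem directly'' option you mention in Step~2.

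You instead reduce to the classical Perron theorem. For the dominant solution, the rescaling $v_n=\bigl(\prod_{k<n}(-p_k)\bigr)y_n$ gives $y_{n+1}-y_n+\frac{q_n}{p_np_{n-1}}y_{n-1}=0$, whose limiting roots are $1$ and $0$. You correctly note that the zero root carries no rate information. So you build $\tilde v^{\recessSol}$ by reduction of order from the Casoratian, and control the tail via $t_{k+1}/t_k=q_{k+1}\tilde v^{\domSol}_k/\tilde v^{\domSol}_{k+2}\sim(\lambda k^{\delta})^{-2}$.

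What your route buys:
\begin{itemize}
\item It needs only Perron's theorem with constant limits.
\item It gives an explicit relative error $1+O((\lambda n^{\delta})^{-2})$ for the recessive ratio.
\item It exhibits $\tilde v^{\recessSol}$ explicitly as the minimal solution, $\tilde v^{\recessSol}_n/\tilde v^{\domSol}_n\to0$, which is exactly the property exploited in \cref{lem:which-asymptotic-solution}.
\end{itemize}
You could also note that the Casoratian telescopes, $\prod_{j=1}^k q_j=\sqrt{a_0a_{k+1}/(a_1a_k)}$, so $W_k$ tends to a nonzero constant. The paper's route buys brevity.

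Two small points to tidy:
\begin{itemize}
\item Start the product $\prod(-p_k)$ at an index beyond which $p_k\neq0$, since for real $z$ one may have $\lambda f_k=z$ for finitely many $k$. Then extend the solutions backwards using $q_n\neq0$.
\item Say explicitly that you apply Perron's theorem with a zero limiting root. This is admissible in the version that requires the last coefficient $q_n/(p_np_{n-1})$ to be nonzero for all large $n$, which holds here.
\end{itemize}
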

\begin{proof}
We rewrite \cref{eq:asymptotics-an-transform-poincare} as a standard second-order linear difference equation of the form
\begin{equation}
    v_{n+2} + p_1(n)\, v_{n+1} + p_2(n)\, v_n = 0,
\end{equation}
where, by direct inspection of the coefficients in \cref{eq:asymptotics-an-transform-poincare}, we have the asymptotic behavior
\begin{equation}
    p_1(n) \sim -\lambda n^{\delta},
    \qquad
    p_2(n) \sim 1.
\end{equation}
The existence of two linearly independent solutions of \cref{eq:asymptotics-an-transform-poincare} with the claimed asymptotics follows directly from a generalization of the Poincaré--Perron theorem, cf.~\cite[Theorem~8.35]{elaydi-introductiondifferenceequations-2005}.
\end{proof}
The following lemma shows that among all solutions of \cref{eq:asymptotics-an-transform-poincare}, only the recessive solution $\tilde{v}^{\recessSol}$ is compatible with square-summability, whereas every other linearly independent solution grows too fast as $n \to \infty$.
\begin{lemma}
    \label{lem:which-asymptotic-solution}
    Let $u^{\lambda,z}$ be a square-summable solution of \cref{eq:generalized-eigenvalue-lambda}, and let $\tilde{v}^{\recessSol},\tilde{v}^{\domSol}$ be the two linearly independent solutions from \cref{lem:asymptotic-solutions}.
    Then there exists $\tilde{A} \in \cnum$ such that
    \begin{equation}
        u^{\lambda,z}_n = \tilde{A}(-1)^n a_n^{-1/2} \tilde{v}^{\recessSol}_n \quad \forall n \in \nnum \, .
    \end{equation}
\end{lemma}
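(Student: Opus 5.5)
Since $\tilde v^{\recessSol}$ and $\tilde v^{\domSol}$ are linearly independent solutions of \cref{eq:asymptotics-an-transform-poincare}, \cref{lem:asymptotics-an-transform} shows that the sequences $(-1)^n a_n^{-1/2}\tilde v^{\recessSol}_n$ and $(-1)^n a_n^{-1/2}\tilde v^{\domSol}_n$ are linearly independent solutions of \cref{eq:generalized-eigenvalue-lambda}. Hence they form a basis of the solution space. We may therefore write
\begin{equation}
    u^{\lambda,z}_n = (-1)^n a_n^{-1/2}\bigl(\tilde A\,\tilde v^{\recessSol}_n + \tilde B\,\tilde v^{\domSol}_n\bigr)
\end{equation}
for some $\tilde A,\tilde B\in\cnum$. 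The claim is then equivalent to $\tilde B=0$.

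\emph{Step 1: the dominant solution is not square-summable.} By \cref{lem:asymptotic-solutions}, $|\tilde v^{\domSol}_{n+1}/\tilde v^{\domSol}_n|\sim\lambda n^\delta$, which tends to $\infty$ since $\lambda>0$ and $\delta>0$. So for large $n$ this ratio exceeds $2$, and $|\tilde v^{\domSol}_n|\geq c\,2^n$ eventually. In fact the growth is superexponential, like $\prod_k \lambda k^\delta$. The prefactor $a_n^{-1/2}\sim n^{-\alpha/2}$ only decays polynomially. Therefore $|a_n^{-1/2}\tilde v^{\domSol}_n|\to\infty$, and this solution is not in $\ell^2(\nnum)$. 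Note that $\tilde v^{\domSol}_n\neq 0$ for all large $n$, because the ratio asymptotics are well defined.

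\emph{Step 2: the recessive solution is square-summable.} By the same reasoning, $|\tilde v^{\recessSol}_{n+1}/\tilde v^{\recessSol}_n|\sim \lambda^{-1}n^{-\delta}\to0$. Hence $|\tilde v^{\recessSol}_n|\leq c\,2^{-n}$ eventually, and $a_n^{-1/2}\tilde v^{\recessSol}\in\ell^2(\nnum)$. Alternatively, this follows from the Weyl alternative (\cref{prop:weyl-alternative}), since $J(\lambda)$ is limit point for $\lambda>0$ by \cref{prop:spectrum-jlambda}.

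\emph{Step 3: conclusion.} Suppose $\tilde B\neq0$. Then
\begin{equation}
    (-1)^n a_n^{-1/2}\tilde B\,\tilde v^{\domSol}_n = u^{\lambda,z}_n - (-1)^n a_n^{-1/2}\tilde A\,\tilde v^{\recessSol}_n
\end{equation}
is a difference of two square-summable sequences, hence square-summable. This contradicts Step 1, so $\tilde B=0$.

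Throughout, the case $\lambda>0$ is assumed, because \cref{lem:asymptotic-solutions} requires $\lambda>0$; the tail region is only relevant there. The only mildly delicate point is the passage from ratio asymptotics to growth and decay bounds. This is a standard telescoping-product estimate and poses no real obstacle.
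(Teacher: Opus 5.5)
Your proof is correct and follows essentially the same route as the paper. The paper also writes $v=\tilde A\tilde v^{\recessSol}+\tilde B\tilde v^{\domSol}$, gets geometric decay and growth beyond some $N$ from the ratio asymptotics, and excludes $\tilde B\neq 0$ via the lower bound $|v_n|\geq|\tilde B|\,|\tilde v^{\domSol}_N|\,2^{n-N}-|\tilde A|\,|\tilde v^{\recessSol}_N|\,2^{-(n-N)}$, which is the reverse-triangle-inequality form of your Step~3. You can drop the Weyl-alternative aside in Step~2: limit-point-ness alone does not identify the recessive solution as the square-summable one, and your direct geometric-decay argument already settles it.
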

\begin{proof}
We start from the ansatz
\begin{equation}
    u^{\lambda,z}_n = (-1)^n a_n^{-1/2} v_n,
\end{equation}
for some sequence $v$. By \cref{lem:asymptotics-an-transform}, $v$ satisfies \cref{eq:asymptotics-an-transform-poincare}. Hence, by \cref{lem:asymptotic-solutions}, there exist two linearly independent solutions $\tilde{v}^{\recessSol},\tilde{v}^{\domSol}$ such that
\begin{equation}
    \frac{\tilde{v}^{\recessSol}_{n+1}}{\tilde{v}^{\recessSol}_n} \sim \frac{1}{\lambda} n^{-\delta},
    \qquad
    \frac{\tilde{v}^{\domSol}_{n+1}}{\tilde{v}^{\domSol}_n} \sim \lambda n^{\delta}.
\end{equation}
In particular, since $\delta>0$, there exists $N\in\nnum$ such that for all $n\ge N$,
\begin{equation}
    \left|\frac{\tilde{v}^{\recessSol}_{n+1}}{\tilde{v}^{\recessSol}_n}\right| \le \frac{1}{2},
    \qquad
    \left|\frac{\tilde{v}^{\domSol}_{n+1}}{\tilde{v}^{\domSol}_n}\right| \ge 2.
\end{equation}
By iteration, this implies
\begin{equation}
    |\tilde{v}^{\recessSol}_n| \le |\tilde{v}^{\recessSol}_N|\,2^{-(n-N)},
    \qquad
    |\tilde{v}^{\domSol}_n| \ge |\tilde{v}^{\domSol}_N|\,2^{n-N}.
\end{equation}

Since $v_n = \tilde{A}\tilde{v}^{\recessSol}_n + \tilde{B}\tilde{v}^{\domSol}_n$, we obtain
\begin{equation}
    |v_n| \ge |\tilde{B}|\,|\tilde{v}^{\domSol}_N|\,2^{n-N} - |\tilde{A}|\,|\tilde{v}^{\recessSol}_N|\,2^{-(n-N)}.
\end{equation}
Multiplying by $a_n^{-1/2} \sim n^{-\alpha/2}$ shows that if $\tilde{B}\neq 0$, then $|u^{\lambda,z}_n|$ grows at least exponentially in $n$, modulo polynomial prefactors, and in particular is not square-summable.

Since $u^{\lambda,z} \in \ell^2(\nnum)$ by assumption, we must have $\tilde{B}=0$, which proves the claim.
\end{proof}

To obtain explicit bounds on the square-summable solution $u^{\lambda,z}_n$ for all $n \geq N_3(\lambda)$, we now perform a refined error analysis.

The solution $\tilde{v}^{\recessSol}$ of \cref{eq:asymptotics-an-transform-poincare} is recessive as $n \to \infty$, and therefore the relevant asymptotic information is naturally encoded at infinity. In order to propagate this information back to the finite scale $n = N_3(\lambda)$ in a controlled way, we reformulate the second-order difference equation in terms of a nonlinear first-order recursion for the ratio of successive values. This leads to a discrete Riccati-type equation, cf.~also \cite{teschl-jacobioperatorscompletely-1999}:
\begin{lemma}
    \label{lem:riccati-forward}
    Let $v$ be a solution of the recurrence relation~\eqref{eq:asymptotics-an-transform-poincare}, and define
    \begin{equation}
        r_n = \frac{v_n}{v_{n-1}}.
    \end{equation}
     Then the sequence $(r_n)_{n\in\nnum}$ satisfies a recursion in the form $r_n = \mathcal{R}_n(r_{n+1})$, where the map $\mathcal{R}_n$ is defined by
    \begin{equation}
     \mathcal{R}_n(x) = \frac{a_{n-1}^{1/2}a_{n+1}^{1/2}}{a_n} \frac{1}{\left(\lambda \frac{f_n}{a_n}-\frac{z}{a_n}\right)a_{n+1}^{1/2}a_n^{-1/2}-x} \, .
    \end{equation}
\end{lemma}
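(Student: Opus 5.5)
The identity follows from the recurrence \cref{eq:asymptotics-an-transform-poincare} by purely algebraic manipulation. I would write the equation at index $n$ and divide it by $v_n$, assuming the relevant terms are nonzero. Set
\begin{equation}
    \kappa_n = \frac{a_{n-1}^{1/2}a_{n+1}^{1/2}}{a_n}, \qquad \gamma_n = \left(\lambda \frac{f_n}{a_n}-\frac{z}{a_n}\right)a_{n+1}^{1/2}a_n^{-1/2} .
\end{equation}
Then the recurrence reads $v_{n+1} - \gamma_n v_n + \kappa_n v_{n-1} = 0$. To see this, note that $(\lambda f_n - z)a_{n+1}^{1/2}a_n^{-3/2}$ equals $\gamma_n$, because $(\lambda f_n - z)/a_n \cdot a_{n+1}^{1/2}a_n^{-1/2}$ is the same expression. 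I would check this rewriting explicitly first, since it is the only place a power of $a_n$ could be misplaced.

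Dividing by $v_n$ gives $r_{n+1} - \gamma_n + \kappa_n / r_n = 0$, where $r_n = v_n/v_{n-1}$ and $r_{n+1} = v_{n+1}/v_n$. Solving for $r_n$ yields $r_n = \kappa_n/(\gamma_n - r_{n+1}) = \mathcal{R}_n(r_{n+1})$, which is exactly the claimed formula.

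The only real issue is well-definedness. The quotients make sense only when $v_{n-1}, v_n \neq 0$, and then $\gamma_n - r_{n+1} = \kappa_n v_{n-1}/v_n \neq 0$, because $\kappa_n > 0$ by \cref{hyp:conditions-an-fn}. So the identity holds on every range of indices where $v$ does not vanish. Two successive zeros would force $v \equiv 0$, so for a nontrivial solution the zeros are isolated.

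For the intended application, $v = \tilde v^{\recessSol}$ at large $n$. By \cref{lem:asymptotic-solutions} its ratios behave like $\lambda^{-1} n^{-\delta}$, so they are eventually nonzero and finite. I would record the statement with the implicit proviso that $v_{n-1}v_n \neq 0$, or equivalently with the convention that $\mathcal{R}_n$ is interpreted as a Möbius map on $\bcnum$, in which case the identity holds unconditionally.
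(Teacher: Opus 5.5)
Your proposal is correct and follows the paper's proof: divide the recurrence by $v_n$, rewrite it as $r_{n+1} = \gamma_n - \kappa_n/r_n$, and solve for $r_n$. Your handling of well-definedness is more careful than the paper's. You require $v_{n-1}v_n \neq 0$ and note that $\gamma_n - r_{n+1} = \kappa_n/r_n \neq 0$. The paper, by contrast, just asserts that $v_n \neq 0$ because $v$ is a non-vanishing solution.
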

\begin{proof}
    Dividing~\eqref{eq:asymptotics-an-transform-poincare} by $v_n$ (which is nonzero since $v$ is a non-vanishing solution), we obtain
\begin{equation}
    \frac{v_{n+1}}{v_n}
    =
    \left(\frac{\lambda f_n - z}{a_n}\right)\frac{a_{n+1}^{1/2}}{a_n^{1/2}}
    - \frac{a_{n-1}^{1/2} a_{n+1}^{1/2}}{a_n} \frac{v_{n-1}}{v_n} \, .
\end{equation}
In terms of $r_n = v_n / v_{n-1}$ this becomes
\begin{equation}
    r_{n+1}
    =
    \left(\frac{\lambda f_n - z}{a_n}\right)\frac{a_{n+1}^{1/2}}{a_n^{1/2}}
    - \frac{a_{n-1}^{1/2} a_{n+1}^{1/2}}{a_n} \frac{1}{r_n} \, .
\end{equation}
    Solving for $r_n$, we get the result.
\end{proof}
By \cref{lem:asymptotic-solutions}, we know the leading-order asymptotic behavior of $r_{n+1}$ in the regime $n \to \infty$. To utilize this information, we reformulate the Riccati recursion in terms of deviations from this asymptotic profile and use it to propagate errors backwards in $n$.

More precisely, we factor out the dominant behavior and write
\begin{equation}
    r_n = \frac{a_n}{\lambda f_n}(1+\epsilon_n), \qquad \epsilon_n \in \cnum,
\end{equation}
where $\epsilon_n$ measures the relative deviation from the leading-order asymptotics.
Inserting this ansatz into the Riccati equation $\mathcal{R}_n$ yields an induced recursion for the error terms of the form
\begin{equation}
    \label{eq:def-riccati-error}
    \epsilon_n = \tilde{\mathcal{R}}_n(\epsilon_{n+1}),
\end{equation}
where
\begin{equation}
    \tilde{\mathcal{R}}_n(\epsilon)
    =
    \frac{\lambda f_n}{a_n}
    \mathcal{R}_n\!\left(\frac{a_{n+1}}{\lambda f_{n+1}}(1+\epsilon)\right)
    - 1.
\end{equation}

The key point is that this reformulation turns the asymptotic Riccati dynamics into a perturbative fixed-point problem for the error sequence $(\epsilon_n)_{n\in\nnum}$. In particular, for $n \geq N_3(\lambda)$, we show that the error does not grow under backward iteration:
\begin{lemma}
    Let $\Lambda$ and $N_3(\lambda)$ be given as in \cref{def:asymptotics-regions}, and let $\mathbb{D} \subset \cnum$ denote the open disk of radius $\nicefrac{1}{3}$ centered at $0$. Then, for all $\lambda < \Lambda$, $z \in \Omega$, and all $n \geq N_3(\lambda)$, one has
    \begin{equation}
        \tilde{\mathcal{R}}_n(\mathbb{D}) \subset \mathbb{D}.
    \end{equation}
\end{lemma}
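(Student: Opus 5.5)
Compute $\tilde{\mathcal{R}}_n$ explicitly and then bound it on $\mathbb{D}$. Write $g_n = \lambda f_n/a_n$, $\mu_n = a_{n-1}/a_n$, $\kappa_n=(a_{n+1}/a_n)^{1/2}$, so that
\begin{equation}
    \mathcal{R}_n(x)=\frac{\mu_n^{1/2}\kappa_n}{\bigl(g_n - z/a_n\bigr)\kappa_n - x}.
\end{equation}
Inserting $x = \frac{a_{n+1}}{\lambda f_{n+1}}(1+\epsilon) = (1+\epsilon)/g_{n+1}$ gives
\begin{equation}
    1+\tilde{\mathcal{R}}_n(\epsilon) = \frac{g_n \mu_n^{1/2}}{g_n - z/a_n - (1+\epsilon)/(\kappa_n g_{n+1})}
    = \frac{\mu_n^{1/2}}{1 - \frac{z}{a_n g_n} - \frac{1+\epsilon}{\kappa_n g_n g_{n+1}}}.
\end{equation}
The plan is therefore to show that, uniformly in $n\ge N_3(\lambda)$, $\lambda<\Lambda$ and $z\in\Omega$, the denominator $D_n(\epsilon)=1-\eta_n-\theta_n(1+\epsilon)$ is close to $1$ and $\mu_n^{1/2}$ is close to $1$. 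Then $|\tilde{\mathcal{R}}_n(\epsilon)| = |\mu_n^{1/2}-D_n(\epsilon)|/|D_n(\epsilon)|<1/3$.

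The smallness of these quantities comes from \cref{lem:regularity}. For $n\ge N_3(\lambda)$ we have $g_n\ge 4$ by item (vi), and also $g_{n+1}\ge4$. From (i), $|1-\mu_n|\le 1/8$ and $|1-\mu_{n+1}|\le1/8$, so $\kappa_n=\mu_{n+1}^{-1/2}\ge (9/8)^{-1/2}$ and $\mu_n^{1/2}\in[\sqrt{7/8},\sqrt{9/8}]$. From (ii), $|z/a_n|\le 1/8$, so $|\eta_n|=|z|/(a_ng_n)\le 1/32$. Finally
\begin{equation}
    |\theta_n|=\frac{1}{\kappa_n g_n g_{n+1}}\le \frac{\sqrt{9/8}}{16}\approx 0.0663,
\end{equation}
and with $|1+\epsilon|<4/3$ this gives $|\theta_n(1+\epsilon)|\lesssim 0.0884$. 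Hence $|1-D_n(\epsilon)|\le 1/32+0.0884\approx0.12$, so $|D_n(\epsilon)|\ge 0.88$. Using $|\mu_n^{1/2}-1|\le \sqrt{9/8}-1\approx 0.061$, the numerator satisfies
\begin{equation}
    |\mu_n^{1/2}-D_n(\epsilon)|\le 0.061+0.12\approx 0.181,
\end{equation}
and the ratio is about $0.206<1/3$. This gives $\tilde{\mathcal{R}}_n(\mathbb{D})\subset\mathbb{D}$.

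The main obstacle is purely bookkeeping: checking that the crude constants $1/8$ and $4$ from \cref{lem:regularity} really leave enough margin. The estimate above suggests they do. If any constant turns out too tight, I would shrink $\Lambda$ and enlarge $N_0$, which lets the bounds $|1-\mu_n|$ and $|z/a_n|$ be made arbitrarily small. The only structural input, $g_n\ge4$ beyond $N_3(\lambda)$, is fixed by the choice $x_3^\delta=4+1/8$.

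A minor point is that $g_{n+1}\ge4$ also requires $n+1\ge N_3(\lambda)$, which holds automatically. The principal branches of the square roots cause no trouble, since all of $\mu_n$, $\kappa_n$, $g_n$ are positive reals.
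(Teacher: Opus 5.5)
Your proof is correct and is essentially the paper's argument. You use the same explicit form $\tilde{\mathcal{R}}_n(\epsilon)=\mu_n^{1/2}\bigl(1-\tfrac{z}{\lambda f_n}-\tfrac{1+\epsilon}{\kappa_n g_n g_{n+1}}\bigr)^{-1}-1$ and bound it with $g_ng_{n+1}\ge16$, $|z/(\lambda f_n)|\le 1/32$ and $|1-\mu_n|\le 1/8$ from \cref{lem:regularity}.

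One harmless slip: when $\mu_n<1$ you only get $|\mu_n^{1/2}-1|\le 1-\sqrt{7/8}\approx 0.065$, not $\sqrt{9/8}-1$. Your final ratio is still about $0.21<1/3$.
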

\begin{proof}
    We rewrite the error map by applying \cref{lem:riccati-forward} to \cref{eq:def-riccati-error}. This yields
    \begin{align}
        \tilde{\mathcal{R}}_n(\epsilon)
        &=
        \frac{a_{n-1}^{1/2}}{a_n^{1/2}}
        \left(
        1 - \frac{z}{\lambda f_n}
        - \frac{a_n a_{n+1}}{\lambda^2 f_n f_{n+1}}
        \frac{a_n^{1/2}}{a_{n+1}^{1/2}} (1+\epsilon)
        \right)^{-1}
        - 1.
    \end{align}
    By \cref{def:asymptotics-regions,lem:regularity}, there exists $\Lambda > 0$ such that for all $n \geq N_3(\lambda)$, $\lambda < \Lambda$, and $z \in \Omega$,
    \begin{align}
        \frac{a_n a_{n+1}}{\lambda^2 f_n f_{n+1}} & \leq \nicefrac{1}{16}, \\
        \left| \frac{a_{n-1}}{a_n} - 1 \right| & \leq \nicefrac{1}{8}, \\
        \left| \frac{z}{\lambda f_n} \right| & \leq \nicefrac{1}{32}.
    \end{align}
    From the second estimate we obtain
    \begin{equation}
        \frac{7}{8} \leq \frac{a_{n-1}}{a_n} \leq \frac{9}{8},
        \qquad \text{hence} \qquad
        \frac{7}{8} < \frac{a_{n-1}^{1/2}}{a_n^{1/2}} < \frac{9}{8}.
    \end{equation}
    For $|\epsilon| \leq \nicefrac{1}{3}$, the denominator in the definition of $\tilde{\mathcal{R}}_n(\epsilon)$ is uniformly bounded away from zero, and we obtain
    \begin{align}
        |\tilde{\mathcal{R}}_n(\epsilon)|
        &\leq
        \frac{a_{n-1}^{1/2}}{a_n^{1/2}}
        \left|
        \left(
        1 - \frac{z}{\lambda f_n}
        - \frac{a_n a_{n+1}}{\lambda^2 f_n f_{n+1}}
        \frac{a_n^{1/2}}{a_{n+1}^{1/2}} (1+\epsilon)
        \right)^{-1}
        - 1
        \right|
        +
        \left|
        \frac{a_{n-1}^{1/2}}{a_n^{1/2}} - 1
        \right| \\
        &\leq
        \frac{9}{8}
        \left(
        \frac{1}{1 - \nicefrac{1}{32} - \nicefrac{9}{128}(1+|\epsilon|)} - 1
        \right)
        + \nicefrac{1}{8}.
    \end{align}
    The right-hand side is strictly smaller than $\nicefrac{1}{3}$ for all $|\epsilon| \leq \nicefrac{1}{3}$, which proves $        \tilde{\mathcal{R}}_n(\mathbb{D}) \subset \mathbb{D}$.
\end{proof}

As a consequence, we obtain a controlled asymptotic description of the recessive solution.

\begin{proposition}
    \label{prop:asymptotics-tail}
    Let $\Lambda$ and $N_3(\lambda)$ be as in \cref{def:asymptotics-regions}, and let $\tilde{v}^{\recessSol}$ denote the recessive solution of \cref{eq:asymptotics-an-transform-poincare} from \cref{lem:asymptotic-solutions}. Then, for all $\lambda < \Lambda$, $z \in \Omega$, and $n \geq N_3(\lambda)$, there exists $|\epsilon_n| \leq \nicefrac{1}{3}$ such that
    \begin{equation}
        \frac{\tilde{v}^{\recessSol}_{n+1}}{\tilde{v}^{\recessSol}_n}
        =
        \frac{a_n}{\lambda f_n}(1+\epsilon_n).
    \end{equation}
\end{proposition}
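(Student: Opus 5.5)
The plan is to combine the invariance of the disk $\mathbb{D}$ under the backward error maps $\tilde{\mathcal{R}}_n$ (the preceding lemma) with the known asymptotics of the recessive solution from \cref{lem:asymptotic-solutions}. Define $\epsilon_n$ by
\begin{equation}
\frac{\tilde{v}^{\recessSol}_{n+1}}{\tilde{v}^{\recessSol}_n} = \frac{a_n}{\lambda f_n}(1+\epsilon_n),
\end{equation}
which is possible once $\tilde v^{\recessSol}_n\neq 0$. By construction, and \cref{lem:riccati-forward} applied with the index shifted, these errors satisfy $\epsilon_n=\tilde{\mathcal{R}}_n(\epsilon_{n+1})$ for all $n\ge N_3(\lambda)$. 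Nonvanishing of $\tilde v^{\recessSol}_n$ for large $n$ follows from the ratio asymptotics. For smaller $n\ge N_3(\lambda)$ it will follow inductively: a finite ratio with $|1+\epsilon_n|\ge 2/3$ excludes zeros propagating backward.

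First I show that $\epsilon_n\to0$ as $n\to\infty$ for fixed $\lambda,z$. \Cref{lem:asymptotic-solutions} gives $\tilde v^{\recessSol}_{n+1}/\tilde v^{\recessSol}_n\sim \lambda^{-1}n^{-\delta}$, while \cref{hyp:conditions-an-fn} gives $a_n/(\lambda f_n)\sim \lambda^{-1}n^{-\delta}$. Their quotient therefore tends to $1$, so there is $N(\lambda,z)\ge N_3(\lambda)$ with $|\epsilon_n|<1/3$ for all $n\ge N(\lambda,z)$.

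Next I propagate this bound backwards by finite induction. Suppose $\epsilon_{n+1}\in\mathbb{D}$ for some $n\ge N_3(\lambda)$. Then $\epsilon_n=\tilde{\mathcal{R}}_n(\epsilon_{n+1})\in\mathbb{D}$ by the invariance lemma. Starting from $n=N(\lambda,z)$ and descending to $N_3(\lambda)$, this yields $|\epsilon_n|<1/3$ for every $n\ge N_3(\lambda)$. The bound is uniform in $\lambda<\Lambda$ and $z\in\Omega$, even though the starting index $N(\lambda,z)$ is not.

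The main subtlety is consistency of the recursion direction. The identity $r_n=\mathcal{R}_n(r_{n+1})$ must be matched with the shifted definition of $\epsilon_n$: with $r_{n+1}=\tilde v_{n+1}/\tilde v_n=\frac{a_n}{\lambda f_n}(1+\epsilon_n)$, one checks that $\epsilon_n$ is obtained from $\epsilon_{n+1}$ via the map $\tilde{\mathcal{R}}$ at the appropriate index. It also has to be verified that the denominators in $\mathcal{R}_n$ do not vanish along the way. This is guaranteed by the lower bound on the denominator established in the invariance proof, valid for $|\epsilon|\le1/3$, so the induction never encounters a division by zero.
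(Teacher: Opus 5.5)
Your strategy is the paper's: seed the error near $n=\infty$ using \cref{lem:asymptotic-solutions}, then propagate it down to $N_3(\lambda)$ by invariance of $\mathbb{D}$. Your handling of the non-vanishing of $\tilde v^{\recessSol}_n$ through the denominator bound is more careful than the paper's.

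However, the identity you rely on, $\epsilon_n=\tilde{\mathcal R}_n(\epsilon_{n+1})$, is false for your $\epsilon_n$. The map $\tilde{\mathcal R}_n$ is adapted to the normalization $\tilde v_n/\tilde v_{n-1}=\frac{a_n}{\lambda f_n}(1+\eta_n)$, which is the ansatz $r_n=\frac{a_n}{\lambda f_n}(1+\epsilon_n)$ stated before the lemma. For that normalization $\eta_n=\tilde{\mathcal R}_n(\eta_{n+1})$ holds exactly. The statement instead normalizes $\tilde v_{n+1}/\tilde v_n$ by $a_n/(\lambda f_n)$, and then $r_{n+1}=\mathcal R_{n+1}(r_{n+2})$ gives
\[
1+\epsilon_n=\frac{f_n\,a_{n+1}^{1/2}}{f_{n+1}\,a_n^{1/2}}\Bigl(1-\frac{z}{\lambda f_{n+1}}-\frac{a_{n+1}^2}{\lambda^2 f_{n+1}^2}\,\frac{a_{n+1}^{1/2}}{a_{n+2}^{1/2}}\,(1+\epsilon_{n+1})\Bigr)^{-1}.
\]
This is neither $\tilde{\mathcal R}_n(\epsilon_{n+1})$ nor $\tilde{\mathcal R}_{n+1}(\epsilon_{n+1})$. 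So no shift of index lets you quote the invariance lemma, and the check you defer to would fail. The paper's own proof passes over this shift silently.

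The repair is cheap, because the invariance estimate has slack: for $|\epsilon|\le\frac13$ it actually gives $|\tilde{\mathcal R}_n(\epsilon)|\le\frac98\cdot\frac17+\frac18=\frac27$. You have two options.

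\emph{Redo the estimate for the map above.} The prefactor $(a_{n-1}/a_n)^{1/2}$ is replaced by $\frac{f_n a_{n+1}^{1/2}}{f_{n+1}a_n^{1/2}}=1+O(1/n)$. The denominator is bounded below by $\frac78$ exactly as before, so the bound stays below $\frac13$ after shrinking $\Lambda$.

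\emph{Run your induction for $\eta_n$ and convert at the end.} Seed it and keep it non-vanishing exactly as you describe; this gives $|\eta_n|\le\frac27$ for $n\ge N_3(\lambda)$. Then $1+\epsilon_n=\rho_n(1+\eta_{n+1})$ with $\rho_n=\frac{a_{n+1}f_n}{a_n f_{n+1}}$. The factor $\rho_n$ depends only on $n$ and tends to $1$, and $N_3(\lambda)\to\infty$ as $\lambda\to0$. Shrink $\Lambda$ so that $|\rho_n-1|\le\frac1{27}$ for all $n\ge N_3(\lambda)$ and $\lambda<\Lambda$. This yields $|\epsilon_n|\le\frac1{27}\cdot\frac97+\frac27=\frac13$, uniformly in $\lambda<\Lambda$ and $z\in\Omega$.
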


\begin{proof}
    By \cref{lem:asymptotic-solutions}, there exists $N \geq N_3(\lambda)$ such that the stated representation holds for all $n \geq N$, with errors lying in $\mathbb{D}$.

    The previous lemma shows that the Riccati map $\tilde{\mathcal{R}}_n$ preserves $\mathbb{D}$ for all $n \geq N_3(\lambda)$. Hence, starting from index $N$ and iterating the recursion backward, the error remains uniformly controlled down to $N_3(\lambda)$, which yields the claim for all $n \geq N_3(\lambda)$.
\end{proof}

\subsection{Proof of \texorpdfstring{\cref{prop:asymptotics}}{the proposition}}
\label{sec:proof-asymptotics}
We are finally in a position to prove \cref{prop:asymptotics}. The proof proceeds by combining the estimates obtained in the different asymptotic regimes. In particular, we use the overlap regions defined in \cref{def:asymptotics-regions} to consistently match the corresponding approximations.
In the following, we assume $\Lambda > 0$ to be chosen such that both \cref{lem:regularity,prop:properties-turningpoint-solutions} hold, cf. \cref{rem:lambda-decrease}.

The following lemma is the main matching step between the tail and the turning-point regime:
\begin{lemma}
    \label{lem:connection-turningpoint}
    Let $u$ be a square-summable solution of \cref{eq:generalized-eigenvalue-lambda}, let $h = \lambda^{1/\delta}$, and let $u^{\recessSol,h,z}, u^{\domSol,h,z}$ be the recessive and dominant solutions given in \cref{prop:turning-point}. 
    Then there exist constants $A, B \in \cnum$ such that
    \begin{equation}
        u_n = A u^{\recessSol,h,z}_n + B\, u^{\domSol,h,z}_n \qquad \forall n \in \nnum.
    \end{equation}
    Moreover, there exist $\Lambda > 0$ and $C_3 > 0$ such that, for all $0 < \lambda < \Lambda$ and all $N_1(\lambda)+1 \leq n \leq N_4(\lambda)-1$, one has
    \begin{equation}
        \left|\frac{B}{A} u^{\domSol,h,z}_n\right| \leq C_3 (nh)^{-\alpha/2}.
    \end{equation}
    In particular, for every $\epsilon > 0$ there exists $\Lambda(\epsilon) > 0$ such that, for all $0 < \lambda < \Lambda(\epsilon)$ and all $N_1(\lambda)+1 \leq n \leq N_2(\lambda)$,
    \begin{equation}
        \left|\frac{B}{A} u^{\domSol,h,z}_n\right| \leq \epsilon (nh)^{-\alpha/2} h^{1/6}.
    \end{equation}
\end{lemma}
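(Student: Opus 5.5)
The existence of $A,B$ is immediate: $u^{\recessSol,h,z}$ and $u^{\domSol,h,z}$ are linearly independent solutions of \cref{eq:generalized-eigenvalue-lambda} (\cref{prop:turning-point}), so they span the two-dimensional solution space. I first need $A\neq0$. For this I match at the right end $m=N_4(\lambda)-1$, $m+1=N_4(\lambda)$, which lies in both the turning-point window and the tail region $n\ge N_3(\lambda)$. By \cref{lem:which-asymptotic-solution}, $u_n=\tilde A(-1)^na_n^{-1/2}\tilde v^{\recessSol}_n$. By \cref{prop:asymptotics-tail}, $|u_{m+1}/u_m|=(a_m/a_{m+1})^{1/2}\,\frac{a_m}{\lambda f_m}|1+\epsilon_m|$. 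Using \cref{lem:regularity} ($g(\lambda,m)\ge4$ since $m\ge N_3(\lambda)$, and $a_m/a_{m+1}\le 9/8$ roughly), this gives $|u_{m+1}/u_m|\le \frac{(9/8)^{1/2}}{4}\cdot\frac43<\frac12$. The dominant solution instead satisfies $|u^{\domSol}_{m+1}/u^{\domSol}_m|\ge 7/8$ by \cref{prop:properties-turningpoint-solutions}\labelcref{item:pts-udratio-m}. In particular $u$ cannot be a multiple of $u^{\domSol}$ alone, so $A\ne0$. (Also $u\ne0$ forces $u_m\ne0$ by the Riccati representation.)

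\textbf{Key step: bounding $|B/A|\,|u^{\domSol}_m|$.} Writing $\rho=u_{m+1}/u_m$, the relation $u_{m+1}-\rho u_m=0$ reads
\[
A\bigl(u^{\recessSol}_{m+1}-\rho u^{\recessSol}_m\bigr)=-B\bigl(u^{\domSol}_{m+1}-\rho u^{\domSol}_m\bigr).
\]
Since $|u^{\domSol}_{m+1}-\rho u^{\domSol}_m|\ge(7/8-1/2)|u^{\domSol}_m|$, I obtain
\[
\Bigl|\frac BA\Bigr|\,|u^{\domSol}_m|\le \frac83\bigl(|u^{\recessSol}_{m+1}|+\tfrac12|u^{\recessSol}_m|\bigr)\le C(mh)^{-\alpha/2},
\]
using \cref{prop:properties-turningpoint-solutions}\labelcref{item:pts-urratio-n} at $m$ and $m+1$. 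One point needs care here: item (ii) is stated for $n\le m$, so at $m+1=N_4(\lambda)$ I will use the representation of \cref{prop:turning-point}, which is valid on all of $K_N(\lambda)$, together with the same Airy bound; alternatively I can replace $|u^{\recessSol}_{m+1}|$ by the ratio bound. This matching step is where I expect the main difficulty: the ratio constants ($1/2$ versus $7/8$) must be separated uniformly in $z\in\Omega$ and $\lambda$, and this is exactly why the thresholds $4$ and $5$ and the offsets $1/8$ were chosen.

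Then, for $N_1(\lambda)+1\le n\le m$, \cref{prop:properties-turningpoint-solutions}\labelcref{item:pts-udratio-nm} gives
\[
|u^{\domSol}_n|\le c_1\Bigl(\frac mn\Bigr)^{\alpha/2}|u^{\domSol}_m|,
\]
hence
\[
\Bigl|\frac BA u^{\domSol}_n\Bigr|\le c_1C\Bigl(\frac mn\Bigr)^{\alpha/2}(mh)^{-\alpha/2}=C_3(nh)^{-\alpha/2}.
\]
For $n\le N_2(\lambda)$ I instead use item \labelcref{item:pts-udratio-nm-epsilon} with $\epsilon/C$ in place of $\epsilon$, which yields the bound $\epsilon(nh)^{-\alpha/2}h^{1/6}$ for $\lambda<\Lambda(\epsilon)$.
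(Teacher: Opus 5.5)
Your proposal is correct and is essentially the paper's proof. The paper also matches at $m=N_4(\lambda)-1$ and writes $B/A$ as a ratio of Wronskians, which is exactly your rearranged relation $A(u^{\recessSol}_{m+1}-\rho u^{\recessSol}_m)=-B(u^{\domSol}_{m+1}-\rho u^{\domSol}_m)$; it bounds $|u_{m+1}/u_m|\le 3/8$ from the tail, uses $|u^{\domSol}_{m+1}/u^{\domSol}_m|\ge 7/8$ and item (ii), and then propagates with items (iii) and (iv). Your extra checks that $A\neq0$, and that the bound (ii) extends to $n=N_4(\lambda)$ via the representation in \cref{prop:turning-point}, fill in points the paper uses without comment.
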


\begin{proof}
    We choose $\Lambda > 0$ such that \cref{lem:regularity,prop:properties-turningpoint-solutions} hold throughout, and write $u^{\recessSol} = u^{\recessSol,h,z}$ and $u^{\domSol} = u^{\domSol,h,z}$ for convenience, leaving the dependence on $h,z$ implicit.

    Let $u$ be a solution of \cref{eq:generalized-eigenvalue-lambda}. By construction (cf.~\cref{prop:turning-point}), $u^{\recessSol}$ and $u^{\domSol}$ are linearly independent solutions, hence there exist unique constants $A,B \in \cnum$ such that
    \begin{equation}
        u_n = A u^{\recessSol}_n + B u^{\domSol}_n \qquad \forall n \in \nnum.
    \end{equation}

    We determine $A$ and $B$ using the discrete Wronskian (\cref{def:wronksian}). Fix $m = N_4(\lambda)-1$. By \cref{lem:wronskian},
    \begin{equation}
        A =  \frac{W_m[u,u^{\domSol}]}{W_m[u^{\recessSol},u^{\domSol}]}, 
        \qquad 
        B = -\frac{W_m[u,u^{\recessSol}]}{W_m[u^{\recessSol},u^{\domSol}]}.
    \end{equation}
    Hence,
    \begin{align}
        \frac{B}{A}
        &=-\frac{W_m[u,u^{\recessSol}]}{W_m[u,u^{\domSol}]}
        = -\frac{W_m[u^{\recessSol},u]}{W_m[u^{\domSol},u]} \\
        &=
        -\frac{u^{\recessSol}_m u_{m+1}-u^{\recessSol}_{m+1}u_m}{u^{\domSol}_m u_{m+1}-u^{\domSol}_{m+1}u_m} \\
        &=
        -\frac{\frac{u^{\recessSol}_m}{u^{\domSol}_{m+1}}\frac{u_{m+1}}{u_m}-\frac{u^{\recessSol}_{m+1}}{u^{\domSol}_{m+1}}}
        {\frac{u^{\domSol}_m}{u^{\domSol}_{m+1}}\frac{u_{m+1}}{u_m}-1}.
    \end{align}
    Multiplying by $u^{\domSol}_n$ yields
    \begin{equation}
        \label{proofeq:ratio-ab-v2}
        \frac{B}{A}u^{\domSol}_n
        =
        -\frac{u^{\domSol}_n}{u^{\domSol}_{m+1}}
        \frac{
            u^{\recessSol}_m \frac{u_{m+1}}{u_m} - u^{\recessSol}_{m+1}
        }
        {
            \frac{u^{\domSol}_m}{u^{\domSol}_{m+1}}\frac{u_{m+1}}{u_m}-1
        }.
    \end{equation}
    We now estimate the individual factors.
    Since $m \geq N_3(\lambda)$, we can use \cref{lem:regularity,lem:which-asymptotic-solution,prop:asymptotics-tail} to obtain
    \begin{equation}
        \left|\frac{u_{m+1}}{u_m}\right|
        =
        \frac{a_{m+1}^{-1/2}}{a_m^{-1/2}}
        \left|\frac{\tilde{v}^{\recessSol}_{m+1}}{\tilde{v}^{\recessSol}_m}\right|
        \leq
        \left(1+\nicefrac{1}{8}\right)\frac{a_m}{\lambda f_m}(1+\nicefrac{1}{3})
        \leq \frac{3}{8}.
    \end{equation}
    Moreover, by \cref{prop:properties-turningpoint-solutions}~\labelcref{item:pts-udratio-m,item:pts-urratio-n},
    \begin{align}
        \label{proofeq:umump1bound}
        \left|\frac{u^{\domSol}_{m+1}}{u^{\domSol}_m}\right| &\geq \nicefrac{7}{8}
        \qquad \Rightarrow \qquad
        \left|\frac{u^{\domSol}_m}{u^{\domSol}_{m+1}}\right| \leq \nicefrac{8}{7}, \\
        |u^{\recessSol}_m|, |u^{\recessSol}_{m+1}|
        &\leq c_1 (mh)^{-\alpha/2}
        \leq c_1 x_3^{-\alpha/2}.
    \end{align}
    Finally, by \cref{prop:properties-turningpoint-solutions}~\labelcref{item:pts-udratio-nm} and \cref{proofeq:umump1bound},
    \begin{equation}
        \left|\frac{u^{\domSol}_n}{u^{\domSol}_{m+1}}\right|
        \leq
        \left| \frac{u^{\domSol}_n}{u^{\domSol}_{m}}\right|\left|\frac{u^{\domSol}_m}{u^{\domSol}_{m+1}}\right|
        \leq \frac{8}{7} c_1 \frac{m^{\alpha/2}}{n^{\alpha/2}} 
        \leq \tilde{c}_1 \frac{m^{\alpha/2}}{n^{\alpha/2}}.
    \end{equation}
    Combining these estimates in \cref{proofeq:ratio-ab-v2}, we obtain
    \begin{align}
        \left|\frac{B}{A}u^{\domSol}_n\right|
        &\leq
        \left|\frac{u^{\domSol}_n}{u^{\domSol}_{m+1}}\right|
        \frac{
            |u^{\recessSol}_m|\left|\frac{u_{m+1}}{u_m}\right|
            +
            |u^{\recessSol}_{m+1}|
        }
        {
            \left|\frac{u^{\domSol}_m}{u^{\domSol}_{m+1}}\frac{u_{m+1}}{u_m}-1\right|
        } \\
        &\leq
        \tilde{c}_1 c_1 x_3^{-\alpha/2}
        \frac{\nicefrac{3}{8}+1}{1-\nicefrac{3}{7}}
        \frac{x_4^{\alpha/2}}{(nh)^{\alpha/2}}
        \eqqcolon C_3 (nh)^{-\alpha/2}.
    \end{align}
    This proves the first bound.

    Finally, let $\epsilon > 0$. By \cref{prop:properties-turningpoint-solutions}~\labelcref{item:pts-udratio-nm-epsilon}, there exists $\Lambda(\epsilon) > 0$ such that for all $0 < \lambda < \Lambda(\epsilon)$ and $N_1(\lambda)+1 \leq n \leq N_2(\lambda)$,
    \begin{equation}
        \left|\frac{u^{\domSol}_n}{u^{\domSol}_m}\right|
        \leq
        \epsilon (m/n)^{\alpha/2} h^{1/6}.
    \end{equation}
    Using again $m \leq x_4/h$, the same argument as above yields
    \begin{equation}
        \left|\frac{B}{A}u^{\domSol}_n\right|
        \leq
        \epsilon c_1 x_3^{-\alpha/2}
        \frac{1+\nicefrac{3}{8}}{\nicefrac{7}{3}-1}
        \frac{x_4^{\alpha/2}}{(nh)^{\alpha/2}}
        h^{1/6},
    \end{equation}
    which concludes the proof.
\end{proof}

We are now ready to prove \cref{prop:asymptotics}. The argument is a matching procedure across the three regimes defined in \cref{def:asymptotics-regions}:
\begin{itemize}
    \item We first analyze the tail region $n \geq N_3(\lambda)$, where \cref{lem:asymptotics-an-transform,lem:asymptotic-solutions} imply that any square-summable solution coincides with the recessive one, and the Riccati formulation yields uniform decay estimates. 
\item We then propagate this information into the turning-point region $N_1(\lambda) \leq n \leq N_4(\lambda)$, where \cref{prop:turning-point,lem:connection-turningpoint} allow us to express the solution in terms of recessive and dominant components and to control the contribution of the latter in the overlap. 
\item Finally, we extend the resulting bounds to the interior region $n \leq N_0$ using standard discrete comparison estimates. Combining the three regimes yields a uniform polynomial bound for all $n \in \nnum$, completing the proof.
\end{itemize}

\begin{proof}[{Proof of \cref{prop:asymptotics}}]
Let $N_j(\lambda)$, $j \in \{0,1,2,3,4\}$, and $\Lambda > 0$ be as in \cref{lem:regularity,def:asymptotics-regions}. Moreover, let $\tilde{v}^{\recessSol}$, $u^{\recessSol}$, and $u^{\domSol}$ be the solutions constructed in \cref{prop:turning-point,lem:asymptotic-solutions}, depending implicitly on $h$ and $z$.
We choose $\Lambda > 0$ sufficiently small such that \cref{lem:regularity,prop:properties-turningpoint-solutions} hold, and such that the second part of \cref{lem:connection-turningpoint} holds with $\epsilon = c_2/2$, where $c_2$ is the constant from \cref{prop:properties-turningpoint-solutions}. All constants below can be chosen uniformly in the overlapping regimes after possibly further shrinking $\Lambda$.

Fix $0 < \lambda < \Lambda$, and let $u$ be a square-summable solution of \cref{eq:generalized-eigenvalue-lambda}.
We start with the analysis in the tail region. We use the ansatz $u_n = (-1)^n a_n^{-1/2} v_n$. By \cref{lem:asymptotics-an-transform}, the sequence $v_n$ satisfies \cref{eq:asymptotics-an-transform-poincare}. Since $u \in \ell^2(\mathbb{N})$, \cref{lem:which-asymptotic-solution} implies that
\begin{equation}
v_n = \tilde{A}\,\tilde{v}^{\recessSol}_n
\end{equation}
for some $\tilde{A} \in \mathbb{C}$, where $\tilde{v}^{\recessSol}$ is the recessive solution from \cref{lem:asymptotic-solutions}.

Hence, for $n \geq N_3(\lambda)$,
\begin{equation}
\left|\frac{v_{n+1}}{v_n}\right|
= \left|\frac{\tilde{v}^{\recessSol}_{n+1}}{\tilde{v}^{\recessSol}_n}\right|
\leq \frac{a_n}{\lambda f_n}\left(1+\tfrac{1}{3}\right)
\leq \tfrac{1}{3}.
\end{equation}
Iterating this estimate and using \cref{hyp:conditions-an-fn}, we obtain
\begin{equation}
\label{proofeq:asymptotics-ratio-tail}
\left|\frac{u_n}{u_{N_3(\lambda)}}\right|
\leq \frac{a_{N_3(\lambda)}^{1/2}}{a_n^{1/2}}
\leq \tilde{C}_3 \left(\frac{N_3(\lambda)}{n}\right)^{\alpha/2}.
\end{equation}

We next pass to the turning-point regime. We write
\begin{equation}
u_n = A u^{\recessSol,h,z}_n + B u^{\domSol,h,z}_n,
\end{equation}
where $u^{\recessSol,h,z}$ and $u^{\domSol,h,z}$ are the solutions from \cref{prop:turning-point}.
By \cref{prop:properties-turningpoint-solutions}, there exists $M \in [N_1(\lambda), N_2(\lambda)]$ such that
\begin{equation}
|u^{\recessSol}_M| \geq c_2 (Mh)^{-\alpha/2} h^{1/6}.
\end{equation}
Moreover, by construction of $\Lambda$ and \cref{lem:connection-turningpoint},
\begin{equation}
\left|\frac{B}{A}u^{\domSol}_M\right|
\leq \frac{c_2}{2} (Mh)^{-\alpha/2} h^{1/6}.
\end{equation}
Using \cref{prop:properties-turningpoint-solutions} and \cref{lem:connection-turningpoint}, we obtain for $M \leq n \leq N_4(\lambda)$:
\begin{align}
\left|\frac{u_n}{u_M}\right|
&= \left|\frac{u^{\recessSol}_n + \frac{B}{A}u^{\domSol}_n}{u^{\recessSol}_M + \frac{B}{A}u^{\domSol}_M}\right| \\
&\leq \frac{|u^{\recessSol}_n| + \left|\frac{B}{A}u^{\domSol}_n\right|}{|u^{\recessSol}_M| - \left|\frac{B}{A}u^{\domSol}_M\right|} \\
&\leq \frac{(c_1 + C_3)(nh)^{-\alpha/2}}{(c_2/2)(Mh)^{-\alpha/2}h^{1/6}}
= \tilde{C}_2 \left(\frac{M}{n}\right)^{\alpha/2} h^{-1/6}.
\end{align}
For $n \leq N_0$, \cref{lem:before-N0} yields
\begin{equation}
    \label{proofeq:bound-before-n0}
|u_{n}|^2 + |u_{n-1}|^2 \leq C_0(|u_0|^2 + |u_1|^2),
\end{equation}
and as $a_n$ is strictly positive and $N_0$ finite, there exists $\tilde{C}_0$ such that
\begin{equation}
\label{proofeq:bound-before-n0-2}
|u_{n}|^2 + |u_{n-1}|^2 \leq \tilde{C}_0 a_n^{-1}(|u_0|^2 + |u_1|^2) \qquad (1\leq n \leq N_0)\, .
\end{equation}
By \cref{prop:turan,proofeq:bound-before-n0}, we get
\begin{align}
\label{proofeq:bound-before-n2}
    |u_{n}|^2 + |u_{n-1}|^2 & \leq C_1 a_n^{-1} \left(|u_{N_0}|^2 + |u_{N_0-1}|^2\right) \\
    & \leq C_0 C_1 a_n^{-1} (|u_0|^2 + |u_1|^2) \qquad (N_0 \leq n \leq N_2(\lambda))\, ;
\end{align}
using $a_n^{-1/2} \leq c_4 n^{-\alpha/2}$ and taking square roots, together with \cref{proofeq:bound-before-n0-2}, we obtain
\begin{equation}
|u_n| \leq \tilde{C}_1 n^{-\alpha/2}(|u_0| + |u_1|)
\qquad (1 \leq n \leq N_2(\lambda)).
\end{equation}
Combining the turning-point estimate with the previous bounds and $h = x/n$ gives
\begin{align}
|u_n| & \leq \tilde{C}_1 \tilde{C}_2 n^{-\alpha/2} h^{-1/6}(|u_0| + |u_1|) \\
& \leq \tilde{C}_1 \tilde{C}_2 n^{-\alpha/2} x_1^{-1/6} n^{+1/6} (|u_0| + |u_1|) \\
& = C_4 n^{-\alpha/2+1/6}  (|u_0| + |u_1|)
\qquad (M \leq n \leq N_4(\lambda)).
\end{align}
Since $n^{-\alpha/2} \leq n^{-\alpha/2 + 1/6}$ for $n \geq 1$, by combining this with~\eqref{proofeq:asymptotics-ratio-tail}, we obtain
\begin{equation}
|u_n| \leq C n^{-\alpha/2 + 1/6}(|u_0| + |u_1|) \, .
\qquad (n \geq N_3(\lambda)).
\end{equation}
Together with \cref{proofeq:bound-before-n2}, this yields the claimed bound and completes the proof.
\end{proof}

\subsection{Discussion of the bound}
\label{sec:asymptotics-discussion}

We briefly comment on the tightness of the bound provided by \cref{prop:asymptotics}. At first sight, the additional loss of $\nicefrac{1}{6}$ in the exponent may appear surprising, since in the unperturbed case $\lambda = 0$ one obtains strictly sharper decay estimates for solutions of \cref{eq:generalized-eigenvalue-lambda}. More precisely, while \cref{prop:asymptotics} yields a bound with an additional factor $n^{1/6}$ which is uniform in $\lambda >0$, no such loss occurs for $\lambda =0$. The following proposition shows that in the uncoupled case the optimal exponent $\alpha/2$ is recovered exactly. This is a direct consequence of \cite[Theorem~2]{swiderski-spectralpropertiesblock-2018}; for the convenience of the reader, we nevertheless include a short proof.

\begin{proposition}
    \label{prop:limit-circle-asymptotics}
    Let $J(0)$ be the Jacobi operator with $\lambda=0$ from \cref{def:jacobi-lambda}, let $\Omega \subset \cnum$ be compact, and assume that \cref{hyp:conditions-an-fn} holds with $\alpha > 1$.
    Then there exists a constant $C>0$, independent of $z \in \Omega$, such that every solution $u^z$ of \cref{eq:generalized-eigenvalue-lambda} with $\lambda =0$ satisfies
    \begin{equation}
        |u^z_n| \leq C \frac{|u^z_0|+|u^z_1|}{n^{\alpha/2}}
    \end{equation}
    for all $n \in \nnum$.
\end{proposition}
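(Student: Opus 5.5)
The plan is to rerun the Tur\'an determinant argument of \cref{sec:turan} with $\lambda=0$. In that case there is no turning point, so the argument can be extended over all $n\ge N_0$, and not just up to $N_2(\lambda)$. The only place where $\alpha>\nicefrac{4}{3}$ entered in \cref{sec:turan} was through \cref{lem:regularity}. In the present setting we only need parts (i) and (ii) of that lemma. These use only $a_{n-1}/a_n\to1$ and $a_n\to\infty$, so they hold under the weaker assumption $\alpha>1$.

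First, fix $N_0$ such that $|1-\mu_n|\le\nicefrac18$ and $|z|/a_n\le\nicefrac18$ for all $n\ge N_0$ and all $z\in\Omega$. With $\lambda=0$, the matrix $A^{0,z}_n$ has off-diagonal entries $\sigma_n=-z/(2a_n)$, so $|\sigma_n|\le\nicefrac1{16}$. The eigenvalue computation from \cref{lem:uniformly-positive} then gives
\[
\tfrac14\|v\|^2\le Q^{0,z}_n(v)\le 2\|v\|^2 \quad\text{for all } n\ge N_0 .
\]
Similarly, $\|T^{0,z}_n\|\le 4$ and $a_{n+1}/a_n\le\nicefrac87$ hold for $n\ge N_0$.

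Next, apply \cref{lem:turan-diff} and the product estimate in the proof of \cref{lem:turan-upperbound}. For every $n\ge N_0$ this gives
\[
a_n(|u_n|^2+|u_{n-1}|^2)\le 8a_{N_0}\exp\Bigl(\tfrac{128}{7}\sum_{k\ge N_0}\Delta_k(0,z)\Bigr)\bigl(|u_{N_0}|^2+|u_{N_0-1}|^2\bigr).
\]
With $\lambda=0$ the term $\lambda\Delta^f_k$ is absent. The remaining series are controlled by \cref{lem:an-variation}: $\mathcal V(a_{k-1}/a_k)<\infty$ and $\mathcal V(a_k^{-1})<\infty$. The non-Carleman condition $\sum_k a_k^{-1}<\infty$, which is where $\alpha>1$ is used, handles the term $|z-z^*|a_k^{-1}$. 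Since $\Omega$ is compact, the sum $\sum_k\Delta_k(0,z)$ is bounded uniformly in $z\in\Omega$.

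Finally, \cref{lem:before-N0} applied with $\lambda=0$ gives $|u_{N_0}|^2+|u_{N_0-1}|^2\le C_0(|u_0|^2+|u_1|^2)$, and the same lemma together with $a_n>0$ handles the finitely many indices $n\le N_0$. For large $n$ we use $a_n\ge c\,n^{\alpha}$. Taking square roots and using $(|u_0|^2+|u_1|^2)^{1/2}\le|u_0|+|u_1|$ then yields the claimed bound $|u^z_n|\le C(|u^z_0|+|u^z_1|)n^{-\alpha/2}$.

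The only point needing care is uniformity of the Tur\'an constants in $n$ without the $N_2(\lambda)$ cutoff. This is exactly the simplification that $\lambda=0$ provides, because the lower bound on the quadratic form no longer degenerates, so I do not expect a genuine obstacle.
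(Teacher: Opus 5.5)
Your proposal is correct and follows essentially the paper's proof: since $N_2(0)=\infty$, the Tur\'an-determinant bound of \cref{prop:turan} holds for all $n\ge N_0$ uniformly in $z\in\Omega$, the initial indices are handled by \cref{lem:before-N0}, and $a_n\sim n^\alpha$ turns the weighted bound into $n^{-\alpha/2}$ decay. One small inaccuracy, which does not affect the argument: $\alpha>\nicefrac{4}{3}$ never actually entered \cref{sec:turan}, and the only point where the exponent matters there is $\sum_k a_k^{-1}<\infty$, which needs exactly $\alpha>1$, as you correctly identify.
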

\begin{proof}
    Let $\Omega \subset \cnum$ be compact, and let $N_0$ be given as in \cref{lem:regularity}.
    Furthermore, let $u^z$ be a solution of \cref{eq:generalized-eigenvalue-lambda} with $\lambda =0$. 
    By \cref{lem:before-N0}, there exists a constant $C_0>0$, independent of $z \in \Omega$, such that
    \begin{equation}
        \label{proofeq:limit-circle-asymptotics-n0}
        |u^z_{N_0-1}|^2 + |u^z_{N_0}|^2
        \leq
        C_0 \left(|u^z_0|^2+|u^z_1|^2\right) \, .
    \end{equation}
    Moreover, for $\lambda =0$ we have $N_2(\lambda)=\infty$ by \cref{def:asymptotics-regions}. Hence, applying \cref{prop:turan} for all $n \geq N_0$, we obtain
    \begin{equation}
        a_n\left(|u^z_n|^2+|u^z_{n-1}|^2\right)
        \leq
        C_1 \left(|u^z_{N_0-1}|^2 + |u^z_{N_0}|^2\right)
    \end{equation}
    with some constant $C_1>0$ independent of $z \in \Omega$.
    Since $a_n \sim n^\alpha$ by \cref{hyp:conditions-an-fn}, there exists $C_2>0$ such that
    \begin{equation}
        |u^z_n|
        \leq
        \frac{C_2}{n^{\alpha/2}}
        \left(|u^z_{N_0-1}|^2 + |u^z_{N_0}|^2\right)^{1/2}
    \end{equation}
    for all $n \geq N_0$.
    Combining this with \cref{proofeq:limit-circle-asymptotics-n0}, and using the sub-linearity of the square root the claim follows after adjusting the constant.
\end{proof}
As shown in \cref{lem:pointwise-convergence}, solutions of \cref{eq:generalized-eigenvalue-lambda} with $\lambda >0$ converge pointwise to solutions of the limiting equation corresponding to $\lambda =0$ as $\lambda \to 0$. 
At first sight, one might therefore expect that the uniform bound from \cref{prop:asymptotics} should also exhibit the same decay rate $n^{-\alpha/2}$ as in the uncoupled case, and that the additional factor $n^{1/6}$ is merely an artifact of the proof technique.
However, the additional factor arises precisely in the turning point region in the proof of \cref{prop:asymptotics}. In particular, by \cref{prop:turan}, all solutions of \cref{eq:generalized-eigenvalue-lambda} are bounded by $C n^{-\alpha/2}$ for all $n \leq N_2(\lambda)$ and $\lambda \geq 0$. As $\lambda \to 0$, $N_2(\lambda)$ diverges to $\infty$, cf.~\cref{def:asymptotics-regions}, so that the turning point region disappears in the limit circle case.

The following numerical analysis indicates that the additional factor of $\nicefrac{1}{6}$ is strictly necessary in the turning point region.
More precisely, the solution develops a pronounced transition region near $(nh)^\delta =2$, where a localized ``hump'' forms whose relative size increases as $\lambda \to 0$. The growth of this turning-point contribution is consistent with the additional factor appearing in \cref{prop:asymptotics}, and therefore supports the sharpness of the obtained bound.
To this end, we evaluate the solution
\begin{equation}
    u^{\lambda,z} = (J(\lambda)-z)^{-1} e_0
\end{equation}
of \cref{eq:generalized-eigenvalue-lambda} by computing the resolvent $(J(\lambda)-z)^{-1}$ numerically. We normalize the solutions by their initial values $u^{\lambda,z}_0$, $u^{\lambda,z}_1$ and introduce the rescaled quantity
\begin{equation}
    \label{eq:ratio-weyl-solution-bound}
    r^{\lambda,z}_n
    =
    n^{\alpha/2-1/6}
        \frac{|u^{\lambda,z}_n|}{|u^{\lambda,z}_0|+|u^{\lambda,z}_1|} \, .
\end{equation}
By \cref{prop:asymptotics}, for every fixed $z \in \cnum$ there exists a constant $C_1>0$, independent of $\lambda$, such that
\begin{equation}
    \sup_{n \geq 1} r^{\lambda,z}_n \leq C_1 \, .
\end{equation}
Conversely, if the exponent $-\alpha/2+1/6$ is optimal, one expects that for every $n_0 \in \nnum$ there exists $\Lambda>0$ such that
\begin{equation}
    \sup_{n \geq n_0} r^{\lambda,z}_n
\end{equation}
remains uniformly bounded from below by some constant $C_2>0$, independent of $0<\lambda<\Lambda$.

The following example indicates that this scenario indeed occurs. We consider the family of Jacobi operators $J(\lambda)$, defined as in \cref{def:jacobi-lambda}, with parameters
\begin{equation}
    \label{eq:parameters-asymptotics-plot}
    a_n = 4^{-1/2}\sqrt{(4n+1,4)}\,, \quad f_n = n^4\, , 
\end{equation}
where $(\cdot,\cdot)$ denotes the Pochhammer symbol, cf. \cref{eq:def-pochhammer}.
\cref{hyp:conditions-an-fn} is thus satisfied with $\alpha = 2$ and $\beta = 4$.
We note that these Jacobi operators are equivalent to the operators $A_{4,4}^{(0)}(\lambda)$ introduced in \cref{lem:squeezing-decomposition} in the context of higher-order squeezing operators. 
For fixed spectral parameter $z=\iu$, the resulting quantities $r_n^{\lambda,z}$ are displayed in \cref{fig:ratioplot}. 
As predicted by \cref{prop:asymptotics}, the ratios remain uniformly bounded from above as $\lambda \to 0$. At the same time, the numerical data indicate that the ratios also remain uniformly bounded from below for sufficiently small $\lambda$. Moreover, the maximal values are attained close to the turning point $(nh)^\delta =2$, precisely where the turning-point analysis predicts the additional growth factor. 

This indicates that the correction factor $\nicefrac{1}{6}$, and consequently the assumption $\alpha > \nicefrac{4}{3}$ (cf.~\cref{hyp:conditions-an-fn}), is intrinsic to our proof strategy. It remains open whether \cref{thm:main-result} can be obtained under the weaker condition $\alpha > 1$ by alternative methods.
\begin{figure}[ht]
    \centering
    \includegraphics[width=0.7\linewidth]{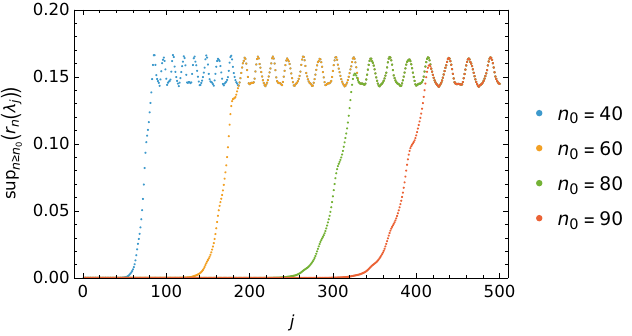}
    \caption{
    Plot of $\sup_{n \geq n_0} r^{\lambda_j,\iu}_n$, with $r^{\lambda,z}_n$ defined in \cref{eq:ratio-weyl-solution-bound}, for the sequence $\lambda_j = (10j)^{-1}$ and different values of $n_0$. The underlying Jacobi operator $J(\lambda)$ is parametrized as in \cref{def:jacobi-lambda} with the sequences $(a_n)_{n\in\nnum}$ and $(f_n)_{n\in\nnum}$ from \cref{eq:parameters-asymptotics-plot}. As predicted by \cref{prop:asymptotics}, the ratios remain uniformly bounded from above by a constant independent of $\lambda_j$. At the same time, for the displayed values of $n_0$, the ratios remain uniformly bounded from below for sufficiently small $\lambda_j$, indicating that the bound from \cref{prop:asymptotics} is saturated and that the correction factor $\nicefrac{1}{6}$ is genuinely necessary.
}
    \label{fig:ratioplot}
\end{figure}

\section{Proofs regarding higher-order squeezing}\label{sec:proof-squeezing}
In this section, we provide the proofs of the results stated in \cref{sec:main_squeezing}.
We start by proving \cref{lem:squeezing-decomposition}, which establishes that higher-order squeezing operators can be realized as a direct sum of Jacobi operators defined by sequences satisfying \cref{hyp:conditions-an-fn}.
\begin{proof}[Proof of \cref{lem:squeezing-decomposition}]
We have
\begin{equation}
    (a^\dagger a)^h \phi_n = n^h \phi_n
\end{equation}
for all $n \in \nnum$, where $\phi_n$ is the $n$th eigenvector of the number operator. 
Via a direct computation (cf.~\cite[Definition~2.4, Lemmas~2.5--2.6]{fischer-selfadjointrealizationshigherorder-2025}, we obtain the action
\begin{equation}
    \label{proofeq:higher-order-action}
    A_{k,h}(K)\phi_n
    = \beta^k_n \phi_{n+k}
    + K \tilde{f}_n \phi_n
    + \beta^k_{n-k} \phi_{n-k},
\end{equation}
for all $n \in \nnum$, where
\begin{align}
    \beta^k_n &= \sqrt{(n+1,k)} =  \sqrt{(n+1)\dots(n+k)} \, , \\
    \tilde{f}_n &= n^h \, ,
\end{align}
and $(\cdot,\cdot)$ denotes the Pochhammer symbol, cf. \cref{eq:def-pochhammer}.

The structure of \cref{proofeq:higher-order-action} suggests decomposing the Hilbert space into $k$ invariant subspaces corresponding to residue classes modulo $k$. Indeed, for $0 \leq m \leq k-1$ and $r \in \nnum$, restricting to vectors of the form $\phi_{m+rk}$ yields
\begin{equation}
     A_{k,h}(K)\phi_{m+rk}
    = \gamma^{(m)}_r \phi_{m+(r+1)k}
    + K \tilde{f}^{(m)}_r \phi_{m+rk}
    + \gamma^{(m)}_{r-1} \phi_{m+(r-1)k},
\end{equation}
where
\begin{equation}
    \gamma^{(m)}_r = \beta^k_{m+rk}, \qquad
    \tilde{f}^{(m)}_r = \tilde{f}_{m+rk}.
\end{equation}
Rescaling by $k^{-k/2}$ and setting
\begin{equation}
    a^{(k,m)}_r = k^{-k/2}\gamma^{(m)}_r,
    \qquad
    f^{(k,h,m)}_r = k^{-h}\tilde{f}^{(m)}_r, \qquad \lambda = k^{h-k/2} K
\end{equation}
yields a Jacobi operator on each invariant subspace, and hence the decomposition on $\mspan(\phi_n)_{n\in\nnum}$.
Since the closure of a direct sum of operators equals the direct sum of the closures of the components \cite[79]{teschl-mathematicalmethodsquantum-2009}, the decomposition extends to the closed operators.

We now verify \cref{hyp:conditions-an-fn}. Condition (i) is immediate since $a^{(k,m)}_r>0$ and $f^{(k,h,m)}_r\geq 0$.
For condition (ii), we expand
\begin{align}
    a^{(k,m)}_r
    &= k^{-k/2} \sqrt{(m+rk+1)\cdots(m+rk+k)} \\
    &= k^{-k/2} (rk)^{k/2}
       \left(\prod_{l=1}^k \left(1 + \frac{m+l}{rk}\right)\right)^{1/2} \\
    &= r^{k/2}\left(1 + \frac{c_a}{r} +O(r^{-2})\right).
\end{align}
Since $k/2 > 4/3$ for $k\geq 3$, condition (ii) follows.
Similarly,
\begin{equation}
    f^{(k,h,m)}_r
    = k^{-h}(m+rk)^h
    = r^h\left(1 + \frac{c_f}{r} +O(r^{-2})\right),
\end{equation}
and since $2h>k$, condition (iii) holds.

Finally, we verify that the quantity
\begin{equation}
    d^{(k,h,m)}_r := \frac{f^{(k,h,m)}_r}{a^{(k,m)}_r}
\end{equation}
is strictly increasing in $r$. Writing the ratio explicitly,
\begin{equation}
    \label{proofeq:squeezing-monotonic-ratio}
    \frac{d^{(k,h,m)}_{r+1}}{d^{(k,h,m)}_r}
    = \left(1+\frac{1}{r}\right)^{h-k/2}
      \left(\frac{1+\frac{m}{(r+1)k}}{1+\frac{m}{rk}}\right)^h
      \prod_{l=1}^k
      \left(\frac{1+\frac{m+l}{(r+1)k}}{1+\frac{m+l}{rk}}\right)^{-1/2}.
\end{equation}
A direct comparison of the factors shows that each term in the product is compensated by the monotonicity of the prefactor, and since $h>k/2$, we conclude
\begin{equation}
    \frac{d^{(k,h,m)}_{r+1}}{d^{(k,h,m)}_r} > 1,
\end{equation}
for all $r \in \nnum$. Hence $d^{(k,h,m)}_r$ is strictly increasing, which establishes condition (iv).
\end{proof}

We finally present the proof of the main result for higher-order squeezing operators, \cref{prop:squeezing-main}. The proof is based on combining the Jacobi decomposition from \cref{lem:squeezing-decomposition} with the general convergence theory developed for Jacobi operators in \cref{thm:main-result}. 
The key point is that each invariant subspace yields a family of Jacobi operators depending on the coupling parameter $\lambda$: the global limit is obtained by constructing compatible subsequences across all blocks and then assembling the resulting strong resolvent limits via the direct-sum decomposition. 
\begin{proof}[Proof of \cref{prop:squeezing-main}]
We begin by proving \labelcref{item:squeezing-main-i}, which establishes the existence of a common subsequence yielding simultaneous strong resolvent convergence of all Jacobi components.
By \cref{lem:squeezing-decomposition}, the higher-order squeezing operator admits the orthogonal decomposition
\begin{equation}
    A_{k,h}(K)
    = k^{k/2} \bigoplus_{m=0}^{k-1} A^{(m)}_{k,h}(\lambda),
    \qquad \lambda = k^{h-k/2}K,
\end{equation}
where each Jacobi operator $A^{(m)}_{k,h}(\lambda)$ satisfies \cref{hyp:conditions-an-fn}.
Let $(K_j)_{j\in\nnum}$ be a sequence with $K_j>0$ and $K_j \to 0$, and define $\lambda_j = k^{h-k/2}K_j$. Then $\lambda_j \to 0$ as well.

We construct a common subsequence by a finite diagonal extraction over the blocks. For $m=0$, applying \cref{thm:main-result} yields a subsequence $(\lambda_{j^{(0)}_l})_l$ and a parameter $t_0 \in \brnum$ such that
\begin{equation}
    A^{(0)}_{k,h}(\lambda_{j^{(0)}_l})
    \xrightarrow[l\to\infty]{\text{s.r.}}
    A^{(0,t_0)}_{k}.
\end{equation}

Next, we apply \cref{thm:main-result} to $A^{(1)}_{k,h}(\lambda_{j^{(0)}_l})$, obtaining a further subsequence $(\lambda_{j^{(1)}_l})_l \subset (\lambda_{j^{(0)}_l})_l$ and $t_1 \in \brnum$ such that
\begin{equation}
    A^{(1)}_{k,h}(\lambda_{j^{(1)}_l})
    \xrightarrow[l\to\infty]{\text{s.r.}}
    A^{(1,t_1)}_{k}.
\end{equation}
Iterating this procedure for $m=0,\dots,k-1$, we obtain a final subsequence $(\lambda_{j_l})_l$ and parameters $(t_0,\dots,t_{k-1})$ such that, for every $m$,
\begin{equation}
    A^{(m)}_{k,h}(\lambda_{j_l})
    \xrightarrow[l\to\infty]{\text{s.r.}}
    A^{(m,t_m)}_{k}.
\end{equation}
We now pass to resolvents. For $\Imag z \neq 0$, the decomposition in \cref{lem:squeezing-decomposition} yields
\begin{equation}
    \left(A_{k,h}(K) - z\right)^{-1}
    = k^{-k/2} \bigoplus_{m=0}^{k-1}
    \left(A^{(m)}_{k,h}(\lambda) - y\right)^{-1},
    \qquad y = k^{-k/2}z.
\end{equation}
Setting $K_{j_l} = k^{k/2-h}\lambda_{j_l}$, we obtain
\begin{equation}
\label{proofeq:resolvent-decomposition-limit}
\lim_{l\to\infty} (A_{k,h}(K_{j_l}) - z)^{-1}
=
k^{-k/2} \bigoplus_{m=0}^{k-1}
\left(A_k^{(m,t_m)} - y\right)^{-1},
\end{equation}
in the strong sense, which proves \labelcref{item:squeezing-main-i}.

We now turn to \labelcref{item:squeezing-main-ii}. Fix $m \in \{0,\dots,k-1\}$ and $t_m \in \brnum$. Applying \cref{thm:main-result} to the $m$th component yields a subsequence $(\lambda_{j^{(m)}_l})_l$ such that
\begin{equation}
    A^{(m)}_{k,h}(\lambda_{j^{(m)}_l})
    \xrightarrow[l\to\infty]{\text{s.r.}}
    A_k^{(m,t_m)}.
\end{equation}
Repeating the previous diagonal extraction on the remaining components $p \neq m$, we obtain a common subsequence $(\lambda_{j_l})_l$ and parameters $(t_p)_{p\neq m}$ such that all limits
\begin{equation}
    A^{(p)}_{k,h}(\lambda_{j_l})
    \xrightarrow[l\to\infty]{\text{s.r.}}
    A_k^{(p,t_p)}
\end{equation}
hold simultaneously.

The same resolvent decomposition argument as above then yields convergence of $A_{k,h}(K_{j_l})$ to $A_k^{(t_0,\dots,t_{k-1})}$ in the strong resolvent sense, completing the proof.
\end{proof}

\appendix
\section{Airy and related functions}
\label{app:airy}

In this appendix, we recall basic facts about the Airy functions and related special functions.
For a comprehensive overview, see \cite{olver-asymptoticsspecialfunctions-2010}, in particular Sections 8.8, 11.1, and 11.8.

The Airy differential equation is given by \cite[392]{olver-asymptoticsspecialfunctions-2010}
\begin{equation}
    \label{eq:airy}
    \diff[2]{y(x)}{x} = x\,y(x)\, .
\end{equation}
A distinguished real solution is the Airy function of the first kind $\Ai(x)$, which admits the integral representation
\begin{equation}
    \Ai(x) = \frac{1}{\pi} \int_0^\infty \cos\!\left(\frac{t^3}{3} + xt\right)\, \dl t\, .
\end{equation}
Its characteristic property is that $\Ai(x) \to 0$ as $x \to \infty$, and it is therefore referred to as the \emph{recessive} solution in this regime.
A second linearly independent solution is the Airy function of the second kind $\Bi(x)$, which satisfies $\Bi(x) \to \infty$ as $x \to \infty$ and is correspondingly called the \emph{dominant} solution. 
Both $\Ai(x)$ and $\Bi(x)$ are oscillatory for $x < 0$.

In this work, we also require solutions of \cref{eq:airy} on the complex plane.
A convenient family of three linearly independent solutions is given by
\begin{equation}
    \label{eq:def-complex-airy}
    \Ai_j(x) = \Ai\!\left(\omega^j x\right)\, , \quad 
    \omega = \mathrm{e}^{-\iu 2\pi/3}\, , \quad j \in \mathbb{Z}_3 \, .
\end{equation}
Here, $\mathbb{Z}_3$ denotes the cyclic group of integers modulo $3$, which we identify with the set $\{0,1,2\}$ equipped with addition modulo $3$.
In particular, we freely identify indices modulo $3$, so that $\Ai_{j+1}$ and $\Ai_{j-1}$ are well-defined for all $j \in \mathbb{Z}_3$ (e.g.\ $\Ai_{2+1} = \Ai_0$).
For convenience of notation, we also occasionally use the equivalent labeling $j \in \{0,\pm 1\}$, where the identification is understood modulo $3$ (for instance, $-1 \equiv 2 \mod 3$). This alternative labeling is often more transparent when discussing the Stokes sectors below.

The corresponding closed sectors $S_j \subset \mathbb{C}$, $j \in \mathbb{Z}_3$, are defined by \cite[Section~11.8]{olver-asymptoticsspecialfunctions-2010}
\begin{align}
    \label{eq:def-complex-airy-regions}
    S_0 & = \left\{x \in \mathbb{C} \,:\, |\arg x| \leq \tfrac{\pi}{3} \right\}\, , \\
    S_{\pm 1} & = \mathrm{e}^{\pm \iu 2\pi/3} S_0 \, .
\end{align}
These sectors are chosen so that the asymptotic behavior of the Airy function is aligned with the Stokes geometry.
Using \cref{eq:def-complex-airy,eq:def-complex-airy-regions}, one verifies that $\Ai_j$ is recessive in $S_j$ and dominant in $S_{j\pm 1}$.
Consequently, the pair $(\Ai_j,\Ai_{j+1})$ forms a numerically satisfactory basis of solutions of \cref{eq:airy} in the sector $S_j \cup S_{j+1}$, but not in $S_{j-1}$.
A geometric illustration of the sectors $S_j$ is provided in \cref{fig:airy-regions}.
\begin{figure}[ht]
    \centering
    \begin{tikzpicture}[scale=2]
    \draw[->] (-1.2,0) -- (1.2,0) node[right]{$\Real(x)$};
    \draw[->] (0,-1.2) -- (0,1.2) node[above]{$\Imag(x)$};
    \foreach \k in {0,...,2} {
        \pgfmathsetmacro{\angle}{\k*120+60}
        \draw[thick, blue] (0,0) -- (\angle:1.2);
    }
    \foreach \k in {-1,...,1} {
        \pgfmathsetmacro{\angle}{\k*120+(1+(-1)^\k)*10}
        \node at (\angle:0.6) {$S_{\k}$};
    }
    \filldraw (0,0) circle (0.5pt) node[below left]{$0$};
    \end{tikzpicture}
    \caption{
    Plot of the three regions $S_{-1}, S_0, S_1 \subset \mathbb{C}$ as defined in \cref{eq:def-complex-airy-regions}, cf.\ also \cite[413]{olver-asymptoticsspecialfunctions-2010}.
    The Stokes rays $\omega^j \mathbb{R}_-$ are indicated in blue.
    The Airy functions $\Ai_j$ from \cref{eq:def-complex-airy} are recessive in $S_j$ and dominant in $S_{j\pm 1}$, in accordance with the Stokes phenomenon for \cref{eq:airy}.
}
    \label{fig:airy-regions}
\end{figure}

Of particular importance are the boundaries of the sectors $S_j$, given by $\omega^{j\pm 1}\mathbb{R}_-$. Along these rays, the asymptotic behavior of $\Ai_j$ changes discontinuously in the sense described above. This is a manifestation of the \emph{Stokes phenomenon}, and the rays $\omega^{j}\mathbb{R}_-$ are therefore called the \emph{Stokes lines} of the Airy equation.

We now collect some well-known identities and asymptotic expansions for the Airy functions that are used in the main text.
First, the Airy function of the second kind $\Bi(x)$ can be expressed in terms of rotated Airy functions by
\begin{equation}
    \label{eq:airy-ai-identity}
    \Ai_{\mp 1}(x) = \frac{1}{2}\mathrm{e}^{\iu \pi/3} \left(\Ai(x) + \iu \Bi(x) \right)\, ,
\end{equation}
cf.\ \cite[Section~11.8, Eq.~8.04]{olver-asymptoticsspecialfunctions-2010}.

Next, from \cite[Section~11.1, Eq.~1.07–1.16]{olver-asymptoticsspecialfunctions-2010}, we obtain the following standard asymptotic expansions for $x>0$, where $\zeta = \frac{2}{3}x^{3/2}$:
\begin{alignat}{2}
    \label{eq:airy-expansion}
    \Ai(-x) & \sim \frac{1}{\sqrt{\pi}\,x^{1/4}} \cos(\zeta-\pi/4)\, , \quad &
    \Bi(-x) & \sim -\frac{1}{\sqrt{\pi}\,x^{1/4}} \sin(\zeta-\pi/4)\, , \\
    \Ai(x) & \sim \frac{1}{2\sqrt{\pi}\,x^{1/4}} e^{-\zeta}\, , \quad &
    \Bi(x) & \sim \frac{1}{\sqrt{\pi}\,x^{1/4}} e^{\zeta}\, .
\end{alignat}
These properties will now be used to derive the bounds required in \cref{prop:properties-turningpoint-solutions}.
\begin{lemma}
    \label{lem:airy-properties}
    Let $\Ai_j(x)$ be defined as in \cref{eq:def-complex-airy}, and let $\Bi(x)$ denote the Airy function of the second kind.
    Then there exist constants $c_0,c_1,c_2>0$ and $x_0>0$ such that the following holds:
    \begin{enumerate}[(i)]
        \item \label{item:airy-bounded}
        $|\Ai_0(x)| \leq c_0$ for all $x \in \mathbb{R}$;

        \item \label{item:airy-monotone}
        $|\Ai_{\pm 1}(x)|$ is strictly increasing for $x \geq 0$;

        \item \label{item:airy-asymptotics-neg}
        $|\Ai_j(x)|, |\Bi(x)| \leq \frac{c_1}{|x|^{1/4}}$ for all $x < -x_0$;

        \item \label{item:airy-asymptotics-pos}
        $|\Ai_{\pm 1}(x)|, |\Bi(x)| \geq c_2 x^{-1/4} \exp\!\left(\frac{2}{3}x^{3/2}\right)$ for all $x > x_0$.
    \end{enumerate}
\end{lemma}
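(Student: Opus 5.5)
We derive all four statements from the facts recalled just before the lemma: the integral representation of $\Ai$, the identity \cref{eq:airy-ai-identity} expressing $\Ai_{\mp1}$ through $\Ai$ and $\Bi$, and the asymptotic expansions \cref{eq:airy-expansion}. The common device is
\begin{equation}
    |\Ai_{\pm1}(x)|=\tfrac12\sqrt{\Ai(x)^2+\Bi(x)^2}\qquad (x\in\rnum),
\end{equation}
which follows from \cref{eq:airy-ai-identity} because $\Ai$ and $\Bi$ are real on the real axis and $|\mathrm{e}^{\iu\pi/3}|=1$.

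For \labelcref{item:airy-bounded}, note that $\Ai_0=\Ai$ is continuous on $\rnum$. By \cref{eq:airy-expansion} it decays as $x\to+\infty$ and is $O(|x|^{-1/4})$ as $x\to-\infty$, so it is bounded on $\rnum$. For \labelcref{item:airy-asymptotics-neg}, on $x<-x_0$ the expansions give $|\Ai(x)|,|\Bi(x)|\le c|x|^{-1/4}$. The displayed identity then transfers this bound to $\Ai_{\pm1}$, and $\Ai_0=\Ai$ is covered directly. For \labelcref{item:airy-asymptotics-pos}, on $x>x_0$ the expansion gives $\Bi(x)\ge \tfrac{1}{2\sqrt\pi}x^{-1/4}\mathrm{e}^{\zeta}$, and hence $|\Ai_{\pm1}(x)|\ge\tfrac12\Bi(x)$. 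Each of these is a choice of $x_0$ large enough that the ratio in each asymptotic relation lies within a factor $2$ of its limit.

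The only non-routine part is \labelcref{item:airy-monotone}, which is a global statement on all of $[0,\infty)$ and not only asymptotic. By the displayed identity, it is equivalent to strict increase of $\Ai^2+\Bi^2$ on $[0,\infty)$. We compute
\begin{equation}
    \tfrac{d}{dx}\bigl(\Ai^2+\Bi^2\bigr)=2(\Ai\Ai'+\Bi\Bi'),
\end{equation}
and it suffices to show $\Bi\Bi'>-\Ai\Ai'$ for $x\ge0$. On $[0,\infty)$ we have $\Ai>0$ and $\Ai'<0$, so $-\Ai\Ai'>0$. Also $\Bi>0$ and $\Bi'>0$ there: both are positive at $0$, and $\Bi''=x\Bi\ge0$ as long as $\Bi>0$, so $\Bi'$ does not decrease and $\Bi$ keeps increasing. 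Thus both terms have the same sign and no ordering between them is needed. We must show $\Ai\Ai'+\Bi\Bi'>0$, and the signs alone do not decide this.

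To settle it, we use the Wronskian $\Ai\Bi'-\Ai'\Bi=1/\pi$ together with the monotonicity of the ratio $\Ai/\Bi$. From $\Bi'=(1/\pi+\Ai'\Bi)/\Ai$ we get
\begin{equation}
    \Ai\Ai'+\Bi\Bi'=\Ai\Ai'+\frac{\Bi}{\pi\Ai}+\frac{\Ai'\Bi^2}{\Ai}=\frac{\Bi}{\pi\Ai}+\frac{\Ai'}{\Ai}\bigl(\Ai^2+\Bi^2\bigr).
\end{equation}
This is positive if and only if $\Bi/\pi>-\Ai'(\Ai^2+\Bi^2)$. We expect this final inequality to be the main obstacle. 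The first approach is to compare the two sides at $x=0$ using the known values of $\Ai(0)$, $\Ai'(0)$, $\Bi(0)$, $\Bi'(0)$. Since $\Bi(0)=\sqrt3\,\Ai(0)$, the claim at $0$ reduces to the single explicit inequality $\Ai(0)\Ai'(0)+3\Ai(0)\Bi'(0)/\sqrt{3}>0$, which holds because $\Bi'(0)=-\sqrt3\,\Ai'(0)$ gives $\Ai(0)|\Ai'(0)|(3-1)>0$. We then show the quantity $\Ai\Ai'+\Bi\Bi'$ stays positive by differentiating once more. The derivative equals $\Ai'^2+\Bi'^2+x(\Ai^2+\Bi^2)$, which is nonnegative for $x\ge0$. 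So the expression is nondecreasing from a positive value at $0$, which yields strict increase of $|\Ai_{\pm1}|$. This derivative argument is clean and makes the Wronskian detour unnecessary, so we would use it as the actual proof.
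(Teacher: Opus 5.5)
Your proposal is correct and follows the paper's route. For (i), (iii) and (iv) you use the asymptotic expansions together with $|\Ai_{\pm1}(x)|=\tfrac12\sqrt{\Ai(x)^2+\Bi(x)^2}$. For (ii) the paper only asserts that this quantity is increasing, whereas your derivative argument supplies the missing justification: $F=\Ai\Ai'+\Bi\Bi'$ has $F(0)=2\Ai(0)|\Ai'(0)|>0$ and $F'=\Ai'^2+\Bi'^2+x(\Ai^2+\Bi^2)\ge0$ on $[0,\infty)$.

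In a write-up, drop the Wronskian detour. Also drop the sentence claiming both terms have the same sign: on $[0,\infty)$, $\Ai\Ai'<0<\Bi\Bi'$, which is exactly why the derivative argument is needed.
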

\begin{proof}
We prove each item separately.

\emph{(i)} The boundedness of $\Ai(x)$ on $\mathbb{R}$ is classical and follows from its oscillatory asymptotics for $x<0$ and exponential decay for $x>0$, cf.\ \cref{eq:airy-expansion}.

\emph{(ii)} Using \cref{eq:airy-ai-identity}, we compute
\begin{equation}
    \label{proofeq:airy-ai-identity}
    |\Ai_{\pm 1}(x)|
    = \frac{1}{2}\sqrt{\Ai(x)^2 + \Bi(x)^2}\, ,
\end{equation}
which is strictly increasing for $x>0$.

\emph{(iii)} For $x>0$, the asymptotic expansions in \cref{eq:airy-expansion} imply that $\Ai(-x)$ and $\Bi(-x)$ are both of order $x^{-1/4}$ up to oscillatory factors. Hence there exist $c_1>0$ and $x_0>0$ such that
\begin{equation}
    |\Ai(-x)| \leq \frac{c_1}{x^{1/4}}, \qquad
    |\Bi(-x)| \leq \frac{c_1}{x^{1/4}}
\end{equation}
for all $x>x_0$, which yields the claim for $j=0$. For $j=\pm 1$, we use \cref{proofeq:airy-ai-identity} to obtain
\begin{equation}
    |\Ai_{\pm 1}(-x)|
    \leq \frac{1}{2}\big(|\Ai(-x)| + |\Bi(-x)|\big)
    \leq \frac{\sqrt{2}c_1}{x^{1/4}}.
\end{equation}

\emph{(iv)} For $x>0$, \cref{eq:airy-expansion} gives $\Bi(x) \sim \pi^{-1/2}x^{-1/4}e^{\frac{2}{3}x^{3/2}}$, hence there exist $c_2>0$ and $x_0>0$ such that
\begin{equation}
    \Bi(x) \geq c_2 x^{-1/4} \exp\!\left(\frac{2}{3}x^{3/2}\right)
\end{equation}
for all $x>x_0$. Since $\Ai_{\pm 1}(x)$ is a linear combination of $\Ai(x)$ and $\Bi(x)$ with nonzero $\Bi$-component (cf.\ \cref{eq:airy-ai-identity}), the same lower bound holds for $|\Ai_{\pm 1}(x)|$ up to adjusting $c_2$.
\end{proof}

We now introduce a Hankel-type auxiliary function $\tilde{w}_1$, following \cite[Eq.~4.4]{geronimo-wkbturningpoint-2004}, up to a harmless normalization factor.
\begin{definition}
    \label{def:hankel-helper}
    Let $H^{(1)}_\nu(x)$ and $H^{(2)}_\nu(x)$ denote the Hankel functions of first and second kind, respectively, cf.\ \cite[Section~7.4]{olver-asymptoticsspecialfunctions-2010}.
    We define the auxiliary functions
    \begin{equation}
        \tilde{w}_j(x)
        = \left(\frac{x}{3}\right)^{1/2}
        \exp\!\left(( -1)^{j+1} \iu \frac{\pi}{6}\right)
        H^{(j)}_{1/3}\!\left(\mathrm{e}^{\iu \pi /2}\zeta\right),
    \end{equation}
    where $\zeta = \frac{2}{3}x^{3/2}$.
\end{definition}
The advantage of $\tilde{w}_1(x)$ compared to $\Ai_0(x)$ is that it shares the same leading asymptotic behavior in the relevant Stokes sector, but does not have real zeros.
For this reason, it is commonly used as an error-weighting function in WKB-type turning point analyses, cf.\ \cite{geronimo-wkbturningpoint-2004}.

The functions $\tilde{w}_j$ are closely related to the Airy functions.
Using Equations 1.05 and 1.14 in \cite[Section~11.1]{olver-asymptoticsspecialfunctions-2010}, together with the identity $(-1)^{3/2} = \mathrm{e}^{\iu \pi/2}$, one obtains for $x<0$:
\begin{align}
    \label{eq:hankel-id-neg}
    \Real \tilde{w}_1(x) &= \Ai(x)\, , \\
    \Imag \tilde{w}_1(x) &= -\Bi(x)\, .
\end{align}
See also \cite[111]{geronimo-wkbturningpoint-2004}.
Furthermore, combining Equation 8.02 in \cite[Section~7.8]{olver-asymptoticsspecialfunctions-2010} with Equation 1.04 in \cite[Section~11.1]{olver-asymptoticsspecialfunctions-2010}, we obtain for $x>0$:
\begin{equation}
    \label{eq:hankel-id-pos}
    \Ai(x) = \frac{\iu}{2}\,\tilde{w}_1(x)\, .
\end{equation}

These identities allow us prove some useful bounds for $\tilde{w}_1(x)$:
\begin{lemma}
    \label{lem:hankel-properties}
    Let $\tilde{w}_1(x)$ be the function from \cref{def:hankel-helper}.
    Then there exist constants $c_3,c_4,c_5>0$ and $x_1<0$ such that the following holds:
    \begin{enumerate}[(i)]
        \item \label{item:hankel-bounded}
        $|\tilde{w}_1(x)| \leq c_3$ for all $x \in \rnum$;
        \item \label{item:hankel-asympotics-neg}
        $\dfrac{c_4}{|x|^{1/4}} \leq |\tilde{w}_1(x)| \leq \dfrac{c_5}{|x|^{1/4}}$ for all $x < x_1$.
    \end{enumerate}
\end{lemma}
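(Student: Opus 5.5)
The plan is to reduce everything to the identities \cref{eq:hankel-id-neg,eq:hankel-id-pos} together with continuity of $\tilde w_1$ on $\rnum$, and then apply the Airy bounds of \cref{lem:airy-properties}.

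First, for $x<0$, \cref{eq:hankel-id-neg} gives
\begin{equation}
    |\tilde w_1(x)| = \sqrt{\Ai(x)^2+\Bi(x)^2}.
\end{equation}
For $x>0$, \cref{eq:hankel-id-pos} gives $|\tilde w_1(x)| = 2|\Ai(x)|$, which is bounded by $2c_0$ by \cref{lem:airy-properties}\labelcref{item:airy-bounded}. This settles the half-line $x>0$.

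For (ii), I would use the expansions \cref{eq:airy-expansion} on the negative axis. With $y=-x>0$ and $\zeta=\tfrac23 y^{3/2}$,
\begin{equation}
    \Ai(-y)^2+\Bi(-y)^2 = \frac{1}{\pi y^{1/2}}\bigl(\cos^2(\zeta-\pi/4)+\sin^2(\zeta-\pi/4)\bigr)\bigl(1+o(1)\bigr) = \frac{1+o(1)}{\pi y^{1/2}}.
\end{equation}
The phase dependence cancels exactly, so the oscillations disappear from the modulus. Choosing $x_1<0$ with $|x_1|$ large enough that the $o(1)$ term is at most $1/2$ yields both inequalities in (ii), for instance with $c_4=(2\pi)^{-1/2}$ and $c_5=(3/(2\pi))^{1/2}$.

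The one point needing care is the meaning of the asymptotic statement. The expansions in \cref{eq:airy-expansion} are joint, i.e.\ $\Ai(-y)=\pi^{-1/2}y^{-1/4}\bigl(\cos(\zeta-\pi/4)+O(\zeta^{-1})\bigr)$ and similarly for $\Bi$, as in Olver §11.1. Read this way, the errors are additive with respect to bounded oscillatory factors, and the computation above is justified. A purely multiplicative reading would fail near the zeros of $\Ai$ and $\Bi$, so I would state the joint form explicitly when citing.

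For (i), it remains to bound $\tilde w_1$ on the compact interval $[x_1,0]$, where (ii) does not apply. There $\Ai$ and $\Bi$ are continuous, hence bounded, so $|\tilde w_1|$ is bounded on $[x_1,0]$. For $x<x_1$, (ii) gives $|\tilde w_1(x)|\le c_5|x_1|^{-1/4}$. Combined with the bound $2c_0$ for $x>0$, taking the maximum of the three constants gives $c_3$. The value at $x=0$ needs no separate treatment, since it is covered by continuity, or by taking the limit in the identity for $x<0$.
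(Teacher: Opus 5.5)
Your proposal is correct and follows essentially the paper's route: the identities \cref{eq:hankel-id-neg,eq:hankel-id-pos}, the Airy asymptotics \cref{eq:airy-expansion} on the negative axis, and \cref{lem:airy-properties}\labelcref{item:airy-bounded} for $x>0$. It is in fact more careful than the paper on the compact interval $[x_1,0]$, which (ii) alone does not cover, and on the additive-error reading of the expansions near the zeros of $\Ai$ and $\Bi$. The one thing to drop is the appeal to continuity of $\tilde w_1$ itself: under the two identities its one-sided limits at $0$ are $2\Ai(0)\e^{-\iu\pi/3}$ and $2\Ai(0)\e^{-\iu\pi/2}$ (same modulus, different phase), but your argument only needs boundedness of $\Ai$ and $\Bi$ on $[x_1,0]$, so nothing breaks.
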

\begin{proof}
For $x<0$, combining \cref{eq:hankel-id-neg} with \cref{lem:airy-properties} yields
\begin{equation}
    |\tilde{w}_j(x)| = \sqrt{\Ai(x)^2 + \Bi(x)^2} \leq \frac{\sqrt{2}c_1}{|x|^{1/4}} ,
\end{equation}
which proves the upper bound in \labelcref{item:hankel-asympotics-neg}.
Moreover, using the standard asymptotic expansions in \cref{eq:airy-expansion}, we obtain
\begin{equation}
    |\tilde{w}_j(x)|^2 = \Ai(x)^2 + \Bi(x)^2 \sim \frac{1}{\pi |x|^{1/2}} \quad (x \to -\infty),
\end{equation}
hence there exist constants $c_4>0$ and $x_1<0$ such that
\begin{equation}
    |\tilde{w}_j(x)| \geq \frac{c_4}{|x|^{1/4}}
\end{equation}
for all $x < x_1$.

Finally, \labelcref{item:hankel-bounded} follows from \cref{eq:hankel-id-pos} together with \cref{lem:airy-properties}\labelcref{item:airy-bounded}, and the fact that boundedness on $x<0$ already follows from \labelcref{item:hankel-asympotics-neg}.
\end{proof}
Finally, we provide a bound on the ratio between the Airy function $\Ai_0(x)$ and the helper function $\tilde{w}_1(x)$:
\begin{lemma}
    \label{lem:airy-hankel-ratio}
    Let $\Ai_0(x)$ be the Airy function of the first kind and $\tilde{w}_1(x)$ as in \cref{def:hankel-helper}.
    For $x<0$, define $\chi(x) = \Ai_0(x)/\tilde{w}_1(x)$.
    Let $I \subset \rnum_-$ be a nonempty interval, and $c>0$.
    Then there exists $H>0$ such that, for all $0<h<H$, there exists a subinterval $X_h \subset I$ with length $|X_h| \geq c h$ such that
    \begin{equation}
        |\chi(h^{-2/3} x)| \geq \nicefrac{1}{2} \quad \forall x \in X_h \, .
    \end{equation}
\end{lemma}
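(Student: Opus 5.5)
The plan is to reduce the claim to an explicit phase computation, using the asymptotics of $\Ai$ and $\Bi$ on the negative axis.

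\textbf{Step 1: reduce to a phase.} By \cref{eq:hankel-id-neg}, $\tilde w_1(y)=\Ai(y)-\iu\Bi(y)$ for $y<0$. This gives
\begin{equation}
|\chi(y)|^2=\frac{\Ai(y)^2}{\Ai(y)^2+\Bi(y)^2},
\end{equation}
which is well defined because $\tilde w_1$ has no real zeros. Insert the expansions \cref{eq:airy-expansion} with $\zeta=\tfrac23|y|^{3/2}$. The common factor $\pi^{-1/2}|y|^{-1/4}$ cancels, so
\begin{equation}
|\chi(y)|^2=\cos^2\!\bigl(\zeta-\tfrac{\pi}{4}\bigr)+o(1),\qquad y\to-\infty .
\end{equation}
Hence for every $\eta>0$ there is $Y(\eta)$ such that $|\chi(y)|\ge\tfrac12$ whenever $y<-Y(\eta)$ and $|\cos(\zeta-\pi/4)|\ge\tfrac12+\eta$.

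\textbf{Step 2: rescale.} Put $y=h^{-2/3}x$ with $x\in I$. Then $\zeta=\theta_h(x):=\tfrac{2}{3h}|x|^{3/2}$, and $\theta_h'(x)=-|x|^{1/2}/h$. Fix a compact subinterval $I'=[x_-,x_+]\subset I$ with $x_+<0$. For $h$ small, every $x\in I'$ satisfies $h^{-2/3}|x|\ge h^{-2/3}|x_+|>Y(\eta)$. So on $I'$ it suffices to locate $x$ where the phase $\theta_h(x)-\pi/4$ lies, modulo $\pi$, in the arc
\begin{equation}
A_\eta=\{\varphi:\ |\cos\varphi|\ge\tfrac12+\eta\},
\end{equation}
which has length $\tfrac{2\pi}{3}-O(\eta)$ in each period.

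Since $\theta_h$ is strictly monotone and its total variation over $I'$ is of order $1/h\to\infty$, the phase runs through many full periods as $x$ traverses $I'$. Choose one such period lying inside $\theta_h(I')$ and let $X_h$ be the preimage of an arc $A_\eta$ inside it. Then
\begin{equation}
|X_h|\ \ge\ \frac{(2\pi/3-O(\eta))\,h}{\sup_{I'}|x|^{1/2}} .
\end{equation}

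\textbf{Step 3: reach an arbitrary $c$ (the main obstacle).} The bound in Step 2 only gives $|X_h|\ge c_*h$ with $c_*\approx\tfrac{2\pi}{3}\bigl(\sup_{I'}|x|^{1/2}\bigr)^{-1}$. To get length $\ge ch$ for arbitrary $c$, I would exploit the freedom in choosing $I'$ and place $X_h$ where the phase velocity $|x|^{1/2}/h$ is as small as the asymptotic regime allows.

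\emph{If $0\in\overline I$.} Take $x$ of size $Kh^{2/3}$ with $K\ge Y(\eta)$ fixed. The asymptotics still apply there, since $h^{-2/3}|x|=K$, and the phase arc has $x$-length of order
\begin{equation}
\frac{h}{K^{1/2}h^{1/3}}=\frac{h^{2/3}}{K^{1/2}}\ \gg\ ch ,
\end{equation}
so every $c>0$ is reached once $h<H(c)$.

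\emph{If $I$ stays a fixed distance from $0$.} The same computation shows that the achievable constant is capped at about $\tfrac{2\pi}{3}\bigl(\inf_I|x|\bigr)^{-1/2}$. In that case I would argue only for $c$ up to this value, which is what \cref{lem:langer-transform-bounds}\labelcref{item:langer-transform-bounds-xi12tilde} actually needs (there $c$ is a fixed constant and $I=[\tilde\xi_1,\tilde\xi_2]$ can be chosen accordingly), and I would flag that the constant's dependence on $I$ must be tracked.

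I expect Step 3 to be the delicate point. It requires making the $o(1)$ in Step 1 uniform down to $y=-K$, and checking carefully which range of $c$ each choice of $I$ allows.
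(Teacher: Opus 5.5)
Your Steps~1--2 follow the paper's argument. The paper also uses $\chi(-x)\sim\cos(\zeta-\pi/4)$, passes to $\zeta=\frac{2}{3h}|x|^{3/2}$, picks a $\zeta$-arc on which $|\cos(\zeta-\pi/4)|$ is bounded below (it uses $[\zeta_1,\zeta_1+\nicefrac12]$ with $|\cos|\ge\nicefrac34$), and pulls that arc back to $x$.

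Your Step~3 diagnosis is correct, and what it exposes is a defect in the statement, not a gap in your argument: if $I$ stays away from $0$, the lemma is false for large $c$. Take $I=[-2,-1]$ and $c=4$. On $I$ you have $|\partial_x\zeta|=|x|^{1/2}/h\ge 1/h$, so any $X\subset I$ with $|X|\ge 4h$ has a $\zeta$-image of length at least $4>\pi$. Consecutive zeros of $\Ai$ are asymptotically $\pi$ apart in $\zeta$, so for small $h$ this image contains a zero of $x\mapsto\Ai(h^{-2/3}x)$, and there $\chi=0$.

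The paper's proof fails at its last inequality. By the mean value theorem, its $X_h$ has length $\frac12\bigl(\frac32\zeta^*\bigr)^{-1/3}h^{2/3}=h/(2|x^*|^{1/2})$ for some $x^*\in X_h\subset I$. This is $\ge ch$ only if $|x^*|\le(2c)^{-2}$. Nothing in the proof arranges this, and it is impossible when $\inf_I|x|>(2c)^{-2}$. The statement should also say ``nondegenerate'' rather than ``nonempty''.

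So your two-case version is the correct lemma, and it is easier than you expect. When $0\in\overline I$ you need no uniformity in Step~1, because $K$ is fixed. Choose once and for all an interval $[y_1,y_2]\subset(-\infty,0)$ on which $|\chi|\ge\nicefrac12$; it exists by continuity of $\chi$ and one point where $|\chi|>\nicefrac12$. Then set $X_h=h^{2/3}[y_1,y_2]$. It lies in $I$ and has length $(y_2-y_1)h^{2/3}\ge ch$ for small $h$.

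You can also drop all $o(1)$ bookkeeping by using the modulus--phase form $\Ai(y)=M(y)\cos\theta(y)$, $\Bi(y)=M(y)\sin\theta(y)$ for $y<0$. By \cref{eq:hankel-id-neg}, $|\tilde w_1|=M$, so $|\chi(y)|=|\cos\theta(y)|$ exactly. Here $\theta'=1/(\pi M^2)>0$ and $\theta(y)=\frac\pi4-\frac23|y|^{3/2}+o(1)$. The admissible set is then a union of $\theta$-arcs of length $2\pi/3$, which gives your cap $\approx\frac{2\pi}{3}(\inf_I|x|)^{-1/2}$ directly.

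Your justification for why the capped version suffices downstream is slightly off. In \cref{lem:langer-transform-bounds}\labelcref{item:langer-transform-bounds-xi12tilde} the interval $[\tilde\xi_1,\tilde\xi_2]$ cannot be moved toward $0$, because it must lie in $\xi_h([x_1,x_2])$, where $\xi\le\xi_2<0$. So ``choosing $I$ accordingly'' does not help.

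What does work is a grid-spacing argument. By \cref{def:langer-transform}, consecutive grid values $\frac{2}{3h}|\xi(Mh,h)|^{3/2}$ differ by $\frac1h\int_{Mh}^{(M+1)h}\arccos q(y,h)\,\mathrm{d}y$. Since $0<q<1$ there, this lies in $(0,\pi/2)$, which is shorter than a good arc (length close to $2\pi/3$). Hence every good arc contains a grid value.

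With the constant $c=4\tilde c^{-1}$ used there, you would instead need good arcs of $\zeta$-length about $4\arccos q$. That exceeds $2\pi/3$, since $q\lesssim\nicefrac{7}{16}$ on $[x_1,x_2]$. So that constant has to be recalibrated along these lines as well.
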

\begin{proof}
For $x>0$, combining \cref{eq:hankel-id-neg} with \cref{eq:airy-expansion} yields
\begin{align}
    \Ai(-x) & \sim \frac{1}{\pi^{1/2}x^{1/4}} \cos(\zeta - \pi/4), \\
    |\tilde{w}_j(-x)|^2 &= \Ai(-x)^2 + \Bi(-x)^2 \sim \frac{1}{\pi x^{1/2}},
\end{align}
where $\zeta = \frac{2}{3}x^{3/2}$.
Hence,
\begin{equation}
    \chi(-x) \sim \cos(\zeta - \pi/4) \quad (x \to \infty).
\end{equation}

Let $h>0$ and define
\begin{equation}
    \zeta_I(h) \coloneqq \frac{2}{3}\left(-h^{-2/3} I\right)^{3/2}.
\end{equation}
Since $I \subset \rnum_-$, the set $-h^{-2/3}I \subset \rnum_+$ is an interval whose length diverges as $h \to 0$. Consequently,
\begin{equation}
    |\zeta_I(h)| \to \infty \quad \text{as } h \to 0.
\end{equation}
Therefore, for sufficiently small $h>0$, there exists $\zeta_1 \in \zeta_I(h)$ such that
\begin{equation}
    |\cos(\zeta_2 - \pi/4)| \geq \frac{3}{4} \quad \forall \zeta_2 \in [\zeta_1,\zeta_1+\nicefrac{1}{2}]
\end{equation}
and $\zeta_1 + \nicefrac{1}{2} \in \zeta_I(h)$.
We define the interval $X_h$ as
\begin{equation}
    X_h = \left[-\left(\frac{3}{2}(\zeta_1+\nicefrac{1}{2})\right)^{2/3} h^{2/3}, -\left(\frac{3}{2}\zeta_1\right)^{2/3} h^{2/3}\right] \subset I \, ,
\end{equation}
and obtain for small enough $h$
\begin{equation}
    |X_h| = \left( \left(\frac{3}{2}(\zeta_1+\nicefrac{1}{2})\right)^{2/3} -\left(\frac{3}{2}\zeta_1\right)^{2/3}\right) h^{2/3} \geq c h \, .
\end{equation}
Using the asymptotic expansion above yields
\begin{equation}
    |\chi(h^{-2/3}x)| \geq \frac{1}{2} \quad \forall x \in X_h
\end{equation}
for all sufficiently small $h$, completing the proof.
\end{proof}

\section{Discrete turning point theory}
\label{app:turning-point}
In this appendix, we summarize the discrete WKB and turning point theory developed by Geronimo, Bruno, and Van Assche~\cite{geronimo-wkbturningpoint-2004}; see also the related work of Fedotov and Klopp~\cite{fedotov-complexwkbmethod-2019}. Since throughout this work we only require the case $y=0$ in the notation of~\cite{geronimo-wkbturningpoint-2004}, we suppress the dependence on the spectral parameter $y$ for simplicity.

To distinguish the notation in this appendix from that used in the main text, we employ capital letters and consider two continuous coefficient functions
$
A_1, B_1 \in C([x_i,x_e]\times[0,H]) \, ,
$
where $0 \leq x_i < x_e < \infty$ and $H>0$. We assume $A_1(x,h)>0$
for all $(x,h)\in [x_i,x_e]\times[0,H]$, while $B_1(x,h)$ may be complex-valued.

Given $U=(U(n))_{n\in\nnum} \in \ell(\nnum)$, we consider the second-order difference equation
\begin{equation}
    \label{eq:app-difference-equation}
    A_1(n h,h) U(n+1)
    + B_1(n h,h) U(n)
    + A_1((n-1)h,h) U(n-1)
    = 0
\end{equation}
for all $0<h\leq H$ and $n\in\nnum$ satisfying $x_i \leq nh \leq x_e$. 
Compared with \cite[Eq.~3.37]{geronimo-wkbturningpoint-2004}, we shift the argument in the second off-diagonal coefficient by $-h$ in order to match the Jacobi operator convention used in \cref{eq:generalized-eigenvalue}.

Our goal is to construct two linearly independent solutions $U^{(1)},U^{(2)}$ of \cref{eq:app-difference-equation} and approximate them in terms of Airy functions near a turning point. Following~\cite{geronimo-wkbturningpoint-2004}, we regard \cref{eq:app-difference-equation} as a perturbation of a reference equation with coefficients $A(x,h)$ and $B(x,h)$. The deviation between the two systems plays a central role in the error analysis, and we therefore introduce
\begin{align}
    \label{eq:app-def-perturbation}
    \Delta\!\left[\frac{A(n h,h)}{A(n h-h,h)}\right]
    &=
    \frac{A(n h,h)}{A(n h-h,h)}
    -
    \frac{A_1(n h,h)}{A_1(n h-h,h)} \, , \\
    \Delta\!\left[\frac{B(n h,h)}{A(n h-h,h)}\right]
    &=
    \frac{B(n h,h)}{A(n h-h,h)}
    -
    \frac{B_1(n h,h)}{A_1(n h-h,h)}
\end{align}
for $0<h\leq H$ and $nh\in[x_i,x_e]$.

Following \cite[Eqs.~2.7--2.8]{geronimo-wkbturningpoint-2004}, we next introduce the associated Langer transform. For convenience, we define
\begin{equation}
    \label{eq:app-qfun}
    q(x,h)
    =
    -\frac{B(x,h)}{2A(x-h/2,h)} \, .
\end{equation}
This quantity plays the role of $v(z)/2$ in the notation of~\cite{fedotov-complexwkbmethod-2019}.

We make the following assumptions:
\begin{hypothesis}
    \label{hyp:app-qfun}
    Let $q(x,h)$ be given as in \cref{eq:app-qfun}. Suppose that there exists an open region $O \subset \cnum$ with $[x_i,x_e] \subset O$ such that the following holds:
    \begin{enumerate}[(i)]
        \item For fixed $h \in [0,H]$, the function $x\mapsto q(x,h)$ admits an analytic extension in $x \in O$;
        \item For every $h \in [0,H]$, there exists a unique solution $x_0(h)$ such that $q(x_0(h),h)=1$;
        \item The turning point is simple, i.e. $q'(x_0(h),h) \neq 0$ for all $h \in [0,H]$.
    \end{enumerate}
    Moreover, for $x \in \rnum$ and $h \in [0,H]$, we additionally assume:
    \begin{enumerate}[(i),resume]
        \item $q(x,h) \in C^\infty([x_i,x_e]\times [0,H])$;
        \item $A(x,h) > 0$ and $B(x,h) \in \rnum$;
        \item $-1 < q(x,h)<1$ for $x< x_0(h)$, while $q(x,h) > 1$ for $x > x_0(h)$.
    \end{enumerate}
\end{hypothesis}
These conditions guarantee the existence of a simple turning point at $x=x_0(h)$, separating an oscillatory regime from an exponentially decaying or growing one. Under these assumptions, one can introduce the associated Langer transform.
\begin{definition}
    \label{def:app-langer}
    Let $q(x,h)$ be defined as in \cref{eq:app-qfun}, and suppose that \cref{hyp:app-qfun} holds.
    The \textit{Langer transform} $\rho(x,h)$ is defined by
    \begin{align}
        \label{eq:app-langer}
        \rho(x,h) & = - \left(\frac{3}{2} \int_x^{x_0(h)} \arccos q(u,h) \dl u \right)^{2/3} \quad x \leq x_0(h) \, , \\
        \rho(x,h) & = \left(\frac{3}{2} \int_{x_0(h)}^x \arccosh q(u,h) \dl u \right)^{2/3} \quad x \geq x_0(h) \, ,
    \end{align}
    where $x_0(h)$ denotes the unique solution of $q(x,h)=1$.
\end{definition}
The following lemma shows that the above definition of the Langer transform agrees, for fixed $h$, with the formulation used by Fedotov and Klopp~\cite[Eqs.~2.1--2.4]{fedotov-complexwkbmethod-2019}.
\begin{lemma}
    \label{lem:app-langer-fedotov}
    Let $q(x,h)$ be given as in \cref{eq:app-qfun}, suppose that \cref{hyp:app-qfun} holds, and let $\rho(x,h)$ be defined as in \cref{def:app-langer}. Then
    \begin{equation}
        \rho(x,h)^{3/2}
        =
        \frac{3}{2\iu}
        \int_{x_0(h)}^x p(u,h)\,\dl u \, ,
    \end{equation}
    where $p(x,h)$ denotes the principal branch of
    \begin{equation}
        p(x,h)=\arccos(q(x,h)) \, ,
    \end{equation}
    and the power $z^{3/2}$ is taken with the branch satisfying $(-1)^{3/2}=\iu$.
\end{lemma}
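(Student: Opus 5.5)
The plan is to differentiate the defining integrals with respect to $x$ and compare the two sides on each side of the turning point. The comparison is done separately for $x\ge x_0(h)$ and for $x\le x_0(h)$. Throughout, $h\in[0,H]$ is fixed and suppressed, and I write $\Phi(x)=\int_{x_0}^x p(u)\,\dl u$ with $p=\arccos q$ on the principal branch. By \cref{hyp:app-qfun}(vi), for real $x$ we have $q\in(-1,1)$ when $x<x_0$ and $q>1$ when $x>x_0$.

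First I treat the region $x\ge x_0$. For $q>1$ the principal branch gives $\arccos q=\iu\,\arccosh q$, and the sign has to be checked against the convention $\arccos(w)=-\iu\log(w+\iu\sqrt{1-w^2})$, approached from the correct side. Granting this, $\Phi(x)=\iu\int_{x_0}^x\arccosh q\,\dl u$, and so
\[
\frac{3}{2\iu}\Phi(x)=\frac32\int_{x_0}^x \arccosh q(u)\,\dl u=\rho(x)^{3/2},
\]
where the last equality uses \cref{def:app-langer}. The right-hand side is nonnegative, so the real branch of the $3/2$ power is consistent with the chosen branch of $z^{3/2}$.

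Next I treat $x\le x_0$. Here $p=\arccos q\in(0,\pi)$ is real and positive, so
\[
\frac{3}{2\iu}\Phi(x)=-\frac{3}{2\iu}\int_x^{x_0}\arccos q\,\dl u=\frac{3\iu}{2}\int_x^{x_0}\arccos q\,\dl u.
\]
On the other side, $\rho=-s$ with $s=\bigl(\tfrac32\int_x^{x_0}\arccos q\bigr)^{2/3}\ge0$. With the branch satisfying $(-1)^{3/2}=\iu$ this gives $\rho^{3/2}=\iu s^{3/2}=\iu\,\tfrac32\int_x^{x_0}\arccos q$. This matches the expression above.

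The main obstacle is the branch bookkeeping. For $q>1$ one must pin down which sign of $\iu\,\arccosh q$ the principal $\arccos$ returns, since this depends on the side of the cut $[1,\infty)$ from which $q$ is reached. If the sign comes out as $-\iu$, I would reconcile it by fixing the boundary value of $p$ consistently with \cite{fedotov-complexwkbmethod-2019}, where $p$ is continued analytically from the oscillatory side through the simple turning point. Near $x_0$ one has $\arccos q\approx\sqrt{2(1-q)}$ and $q'(x_0)>0$, so $p\sim\sqrt{-2q'(x_0)(x-x_0)}$; I would continue this into a small complex neighbourhood using \cref{hyp:app-qfun}(i),(iii) and read off $\iu\,\arccosh q$ for $x>x_0$ on the branch where $\sqrt{-1}=\iu$, consistent with $(-1)^{3/2}=\iu$. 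Since both sides vanish at $x_0$ and are continuous there, the two regions then glue together.
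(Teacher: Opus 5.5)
Your argument is correct and is essentially the paper's proof. Both split at $x_0(h)$, use $p=\arccos q\in(0,\pi)$ together with $(-1)^{3/2}=\iu$ on the oscillatory side, and relate $\arccos q$ to $\arccosh q$ for $q>1$. The only difference is a self-consistent sign convention for $x>x_0(h)$: you take $\arccos q=\iu\arccosh q$ (which your logarithmic formula does give with the principal root $\sqrt{1-w^2}=\iu\sqrt{w^2-1}$) and $\rho^{3/2}\ge 0$, whereas the paper takes $\arccos q=-\iu\arccosh q$ paired with a branch of $z^{3/2}$ that is nonpositive on $\rnum_+$ (still with $(-1)^{3/2}=\iu$), so your analytic-continuation fallback is unnecessary --- with the other boundary value of $\arccos$ the right fix is to switch the branch of $z^{3/2}$ on $\rnum_+$, not to redefine $p$.
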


\begin{proof}
    For ease of notation, we suppress the dependence on $h$ throughout the proof.

    By \cref{hyp:app-qfun}, the function $q(x)$ satisfies Hypotheses~2.2 and~2.3 of \cite{fedotov-complexwkbmethod-2019}. In particular, $q(x)\neq1$ for $x\neq x_0$ and $q'(x_0)\neq0$. 
    Moreover, by assumption, $q(x)<1$ for $x<x_0$, while $q(x)>1$ for $x>x_0$. Hence, $\arccos(q(x))\in\rnum_+$ for $x<x_0$, whereas $\arccosh(q(x))\in\rnum_+$ for $x>x_0$. It follows from \cref{def:app-langer} that $\rho(x)\in\rnum$, with $\rho(x)<0$ for $x<x_0$, $\rho(x)>0$ for $x>x_0$, and $\rho(x_0)=0$.

    Using the branch of $z^{3/2}$ satisfying $(-1)^{3/2}=\iu$, we obtain
    \begin{align}
        \rho(x)^{3/2}
        &=
        \iu \frac{3}{2}
        \int_x^{x_0}
        \arccos(q(u))\,\dl u ,
        \qquad x<x_0 ,
        \\
        \rho(x)^{3/2}
        &=
        -\frac{3}{2}
        \int_{x_0}^x
        \arccosh(q(u))\,\dl u ,
        \qquad x>x_0 .
    \end{align}
    By \cite[\S4.23,\S4.37]{NIST:DLMF}, the principal branches satisfy
    \begin{align}
        \arccos(z)
        &=
        \int_z^1 \frac{\dl t}{(1-t^2)^{1/2}} \, ,
        \\
        \arccosh(z)
        &=
        \int_1^z \frac{\dl t}{(t^2-1)^{1/2}} \, ,
    \end{align}
    where the square root is taken on its principal branch. For $t>1$, we have
    \begin{equation}
        (t^2-1)^{1/2}
        =
        \iu (1-t^2)^{1/2} \, ,
    \end{equation}
    and therefore
    \begin{equation}
        \arccosh(z)
        =
        \iu \arccos(z) \, .
    \end{equation}

    Combining the previous identities yields
    \begin{align}
        \rho(x)^{3/2}
        &=
        \frac{3}{2\iu}
        \int_{x_0}^x
        \arccos(q(u))\,\dl u ,
        \qquad x<x_0 ,
        \\
        \rho(x)^{3/2}
        &=
        \frac{3}{2\iu}
        \int_{x_0}^x
        \arccos(q(u))\,\dl u ,
        \qquad x>x_0 ,
    \end{align}
    where in the first line we exchanged the integration bounds.
\end{proof}

Hence, our definition of $\rho(x,h)$ coincides, for fixed $h$, with the Langer transform introduced by Fedotov and Klopp~\cite{fedotov-complexwkbmethod-2019}. The following lemma explains the role of the Langer transform and shows that it converts the local behavior near the turning point into the Airy-type structure underlying the WKB approximation; cf.~\cite[Eq.~2.5]{fedotov-complexwkbmethod-2019} and \cite[Section~11.3]{olver-asymptoticsspecialfunctions-2010} for the continuous setting.

\begin{lemma}
    \label{lem:app-langer-derivative}
    Let $\rho(x,h)$ be the Langer transform from \cref{def:app-langer}. Then
    \begin{align}
        \label{eq:app-langer-derivative-neg}
        \rho(x,h)^{1/2}\rho'(x,h)
        &= -\iu \arccos(q(x,h))
        \qquad x \leq x_0(h)\, , \\
        \label{eq:app-langer-derivative-pos}
        \rho(x,h)^{1/2}\rho'(x,h)
        &= -\arccosh(q(x,h))
        \qquad x \geq x_0(h)\, ,
    \end{align}
    where the branch of the square root fixed in \cref{lem:app-langer-fedotov} is used.
\end{lemma}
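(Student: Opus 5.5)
The plan is to differentiate the two defining formulas of \cref{def:app-langer} directly on each side of the turning point. The identity at $x=x_0(h)$ itself will then follow by continuity. Throughout, $h$ is fixed and suppressed.

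For $x>x_0$, write $\rho=\Phi^{2/3}$ with $\Phi(x)=\frac32\int_{x_0}^x\arccosh q(u)\,\dl u>0$. The chain rule gives
\[
\rho'=\tfrac23\,\Phi^{-1/3}\,\tfrac32\arccosh q=\Phi^{-1/3}\arccosh q .
\]
Since $\rho^{1/2}=\Phi^{1/3}$ up to the branch convention, we get $\rho^{1/2}\rho'=\pm\arccosh q$. For $x<x_0$, set $\Psi(x)=\frac32\int_x^{x_0}\arccos q(u)\,\dl u>0$, so that $\rho=-\Psi^{2/3}$. Then
\[
\rho'=-\tfrac23\Psi^{-1/3}\cdot\bigl(-\tfrac32\arccos q\bigr)=\Psi^{-1/3}\arccos q .
\]
Here $\rho^{1/2}=(-\Psi^{2/3})^{1/2}=\pm\iu\,\Psi^{1/3}$, which yields $\rho^{1/2}\rho'=\pm\iu\arccos q$.

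The main task is fixing the signs consistently with the branch chosen in \cref{lem:app-langer-fedotov}. That lemma takes the branch with $(-1)^{3/2}=\iu$, so the square root compatible with it must satisfy $(\rho^{1/2})^3=\rho^{3/2}$. I would therefore express $\rho^{1/2}\rho'$ as $\frac23\frac{\dl}{\dl x}\rho^{3/2}$, with $\rho^{3/2}$ understood as in \cref{lem:app-langer-fedotov}; this identity holds on each side since the branch is locally analytic away from $\rho=0$. Differentiating the representation $\rho^{3/2}=\frac{3}{2\iu}\int_{x_0}^x\arccos q\,\dl u$ then gives
\[
\rho^{1/2}\rho'=\tfrac{1}{\iu}\arccos q=-\iu\arccos q .
\]
For $x<x_0$ this is exactly \cref{eq:app-langer-derivative-neg}. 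For $x>x_0$ I would use $\arccosh z=\iu\arccos z$ from the proof of \cref{lem:app-langer-fedotov}, which gives $-\iu\arccos q=-\arccosh q$, i.e.\ \cref{eq:app-langer-derivative-pos}.

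I would cross-check the sign against the explicit formulas in that proof: $\rho^{3/2}=-\frac32\int_{x_0}^x\arccosh q$ for $x>x_0$, whose derivative is indeed $-\frac32\arccosh q$. I would also verify that this branch gives $\rho^{1/2}=-\Phi^{1/3}$ on the positive side, which is consistent with its cube being $-\Phi$.

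At the turning point itself, both sides vanish: $\arccos 1=\arccosh 1=0$, and $\rho(x_0)=0$. Since $\rho$ is smooth (its derivative extends continuously, being $\sim c\,(x-x_0)^{0}$ near $x_0$ because $q'(x_0)\neq0$), the product $\rho^{1/2}\rho'$ tends to $0$ there, so the identities hold at $x_0$ by continuity.
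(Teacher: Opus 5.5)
Your proof is correct and is essentially the paper's. With your $\Phi,\Psi$ (the paper's $F,G$), you apply the chain rule on each side of $x_0(h)$ and get $\rho'=\Phi^{-1/3}\arccosh q$ and $\rho'=\Psi^{-1/3}\arccos q$, exactly as the paper does.

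The main difference is how the signs are fixed. The paper simply asserts that the chosen branch gives $\rho^{1/2}=-F^{1/3}$ for $x\geq x_0$ and $\rho^{1/2}=-\iu G^{1/3}$ for $x<x_0$. You instead obtain them from $\rho^{1/2}\rho'=\tfrac23(\rho^{3/2})'$ and the representation in \cref{lem:app-langer-fedotov}, which derives the same branch values rather than declaring them. You also treat the point $x=x_0(h)$ by continuity, which the paper's formula involving $F^{-1/3}$ does not literally cover.
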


\begin{proof}
    Again, throughout the proof we suppress the dependence on $h$ for notational simplicity.

    We first consider the case $x \geq x_0$ and define
    \begin{equation}
        \label{eq:proof-langer-pos-F}
        F(x)
        \coloneqq
        \frac{3}{2}
        \int_{x_0}^{x}
        \arccosh(q(u))
        \,\dl u .
    \end{equation}
    By definition of the Langer transform,
    \begin{equation}
        \rho(x)=F(x)^{2/3} .
    \end{equation}
    Since $\rho(x)\geq0$ in this regime, the chosen branch gives
    \begin{equation}
        \rho(x)^{1/2}=-F(x)^{1/3} .
    \end{equation}
    Differentiating yields
    \begin{equation}
        \rho'(x)
        =
        \frac{2}{3}F(x)^{-1/3}F'(x)
        =
        F(x)^{-1/3}\arccosh(q(x)),
    \end{equation}
    and therefore
    \begin{equation}
        \rho(x)^{1/2}\rho'(x)
        =
        -\arccosh(q(x)),
    \end{equation}
    proving \cref{eq:app-langer-derivative-pos}.

    Next, let $x<x_0$ and define
    \begin{equation}
        \label{eq:proof-langer-neg-F}
        G(x)
        \coloneqq
        \frac{3}{2}
        \int_x^{x_0}
        \arccos(q(u))
        \,\dl u .
    \end{equation}
    Then
    \begin{equation}
        \rho(x)=-G(x)^{2/3}.
    \end{equation}
    Since $\rho(x)<0$, the branch convention from \cref{lem:app-langer-fedotov} implies
    \begin{equation}
        \rho(x)^{1/2}
        =
        (-1)^{1/2}G(x)^{1/3}
        =
        -\iu G(x)^{1/3}.
    \end{equation}
    Moreover,
    \begin{equation}
        \rho'(x)
        =
        -\frac{2}{3}G(x)^{-1/3}G'(x)
        =
        G(x)^{-1/3}\arccos(q(x)),
    \end{equation}
    where we used
    \begin{equation}
        G'(x)
        =
        -\frac{3}{2}\arccos(q(x)).
    \end{equation}
    Combining the previous identities yields
    \begin{equation}
        \rho(x)^{1/2}\rho'(x)
        =
        -\iu \arccos(q(x)),
    \end{equation}
    proving \cref{eq:app-langer-derivative-neg}.
\end{proof}

We come back to the construction of two linearly independent solutions $U^{(1)}$ and $U^{(2)}$ of the difference equation~\eqref{eq:app-difference-equation} by means of Airy-type approximations near the turning point $x_0(h)$. 
The Langer transform $\rho(x,h)$ introduced above plays the role of a canonical local coordinate in which the difference equation asymptotically reduces to the Airy equation. 
Accordingly, the solutions $U^{(1)}$ and $U^{(2)}$ will be approximated by Airy functions evaluated at $h^{-2/3}\rho(x,h)$.

To formulate the approximation precisely, we introduce the weight function
\begin{equation}
    \label{eq:app-weightfunction}
    g(x,h) =
    \left(
        \frac{
            \rho(x,h)
        }{
            A(x-h/2,h)^2
            \sinh^2\!\left(
                \rho(x,h)^{1/2}\rho'(x,h)
            \right)
        }
    \right)^{1/4},
\end{equation}
where the derivative is taken with respect to $x$.
Using \cref{lem:app-langer-derivative}, this definition is equivalent to \cite[Eq.~2.9]{geronimo-wkbturningpoint-2004}.

Geronimo, Bruno, and Van Assche impose the following additional regularity assumptions on the Langer transform and the associated weight function~\cite[120]{geronimo-wkbturningpoint-2004}:
\begin{hypothesis}
    \label{hyp:geronimo}
    Let $\rho(x,h)$ and $g(x,h)$ be as from \cref{def:app-langer,eq:app-weightfunction}, and suppose that the following holds:
    \begin{enumerate}[(i)]
        \item $\rho,g \in C^\infty([x_i,x_e] \times [0,H])$;
        \item $(x,h)\mapsto \rho(x,h)/(x-x_0(h)) \in C^2([x_i,x_e]) \times C^0([0,H])$ and is uniformly bounded away from $0$;
        \item for fixed $h$, the function $\Real \rho(x,h)^{3/2}$ is non-decreasing in $x$.
    \end{enumerate}
\end{hypothesis}
Under these assumptions, one constructs approximate solutions of the difference equation~\eqref{eq:app-difference-equation} by combining Airy functions with the Langer transform and the weight function $g(x,h)$. More precisely, one considers functions of the form
\begin{equation}
    \label{eq:app-psi}
    \Psi(x,h)
    =
    g(x,h)\,
    \chi\!\left(h^{-2/3}\rho(x,h)\right),
\end{equation}
where $\chi(z)$ is a solution of the Airy equation~\eqref{eq:airy}. 
The scaling $h^{-2/3}\rho(x,h)$ reflects the characteristic Airy behavior near the turning point $x_0(h)$.

The following theorem shows that the functions $\Psi(x,h)$ satisfy the difference equation associated with the coefficients $A,B$ up to a controlled error term.

\begin{theorem}[{\cite[Theorem~4.1]{geronimo-wkbturningpoint-2004}}]
    \label{thm:app-geronimo-1}
    Let $\chi(z)$ be an entire solution of the Airy equation~\eqref{eq:airy}, and define
    \begin{equation}
        \Psi(x,h)
        =
        g(x,h)\chi\!\left(h^{-2/3}\rho(x,h)\right).
    \end{equation}
    Suppose that Hypotheses \labelcref{hyp:app-qfun,hyp:geronimo} hold.
    Then there exists a constant $C>0$, independent of $(x,h)\in [x_i,x_e]\times[0,H]$, such that
    \begin{multline}
        \label{eq:app-geronimo-equation}
        A(x,h)\Psi(x+h,h)
        -2A(x-h/2,h)\cosh\!\left(\rho(x,h)^{1/2}\rho'(x,h)\right)\Psi(x,h)
        \\
        +A(x-h,h)\Psi(x-h,h)
        =
        \beta^{(\chi)}(x,h),
    \end{multline}
    where
    \begin{equation}
        \label{eq:app-geronimo-error}
        |\beta^{(\chi)}(x,h)|
        \leq
        Ch^2
    \end{equation}
    for all $h>0$ and $x_i+h\leq x\leq x_e-h$.
\end{theorem}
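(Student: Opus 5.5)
The plan is to reduce the statement to a uniform Taylor-type estimate in the Airy variable. The estimate uses the integral representation of $\chi$, and the coefficient $2A(x-h/2,h)\cosh(\rho^{1/2}\rho')$ is exactly the one that cancels the leading terms. Throughout, suppress $h$ in $\rho,g,x_0$ where harmless and set $\zeta=h^{-2/3}\rho(x)$. Since $\chi$ is an entire Airy solution, it is a linear combination of $\Ai_0,\Ai_{\pm1}$. Each $\Ai_j$ has the contour representation
\begin{equation}
    \Ai_j(\zeta)=\frac{1}{2\pi\iu}\int_{\gamma_j}\e^{\zeta t-t^3/3}\,\dl t .
\end{equation}
So it suffices to prove the estimate for $\chi(\zeta)=\e^{\zeta t-t^3/3}$, uniformly in $t$ along the contours, and then integrate. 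A cleaner alternative is to treat the two local branches $\zeta^{-1/4}\e^{\pm\frac23\zeta^{3/2}}$ of the WKB form. I would use the contour form near the turning point, where $|\zeta|\lesssim 1$, and the WKB branches away from it.

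\textbf{Step 1 (exponent expansion).} By \cref{hyp:geronimo}, $\rho$ is smooth, so Taylor's theorem gives
\begin{equation}
    \rho(x\pm h)^{3/2}=\rho(x)^{3/2}\pm\tfrac32 h\,\rho^{1/2}\rho'+\tfrac{h^2}{2}\bigl(\rho^{3/2}\bigr)''\pm\tfrac{h^3}{6}\bigl(\rho^{3/2}\bigr)'''+O(h^4).
\end{equation}
Dividing by $h$ and using $\frac23 h^{-1}\rho^{3/2}=\frac23\zeta^{3/2}$, each WKB branch evaluated at $x\pm h$ equals its value at $x$ times
\begin{equation}
    \e^{\pm\rho^{1/2}\rho'}\bigl(1+h\,c_1(x)\pm h^2c_2(x)+O(h^3)\bigr)\times(\text{amplitude ratio}).
\end{equation}
The symmetric part of the first-order correction is absorbed by the choice of the amplitude $g$ in \cref{eq:app-weightfunction}. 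This is the discrete analogue of the Liouville–Green amplitude equation, and \cref{lem:app-langer-derivative} identifies $\rho^{1/2}\rho'$ with $-\arccosh q$ (respectively $-\iu\arccos q$). Adding the $x+h$ and $x-h$ terms produces $2\cosh(\rho^{1/2}\rho')$ times $A(x-h/2)\Psi(x)$ plus a remainder. The $O(h)$ parts cancel, because $g^2A(x-h/2)\sinh(\rho^{1/2}\rho')\rho^{-1/2}$ is independent of $x$ by construction; this is the discrete Wronskian conservation. Hence the remainder is $O(h^2)$ relative to the branch.

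\textbf{Step 2 (near the turning point).} For $|x-x_0|\le Kh^{2/3}$, the WKB branches are singular, so I would work directly with the contour integrand. Write $h^{-2/3}\rho(x\pm h)=\zeta\pm h^{1/3}\rho'+\tfrac12 h^{4/3}\rho''+\dots$ and expand $\e^{(\zeta\pm\eta)t}$. The condition that $\chi$ satisfies Airy ($\partial_t$ integration by parts converts $t^2$ into $\zeta$) together with the smoothness of $\rho/(x-x_0)$ in \cref{hyp:geronimo}(ii) lets the $h^{1/3}$ and $h^{2/3}$ orders cancel against the expansion of $\cosh(\rho^{1/2}\rho')$, which is even in $\rho^{1/2}$ and hence smooth. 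This leaves $O(h^2)$ times bounded moments of $\chi$, since $\zeta$ is bounded here.

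\textbf{Step 3 (gluing and the main obstacle).} Combining the two regions gives $|\beta^{(\chi)}|\le Ch^2\,M_\chi(x,h)$. Here $M_\chi$ is the local size of $|\chi|+|\chi'|h^{1/3}$ at $\zeta$, which is uniformly bounded on the oscillatory side by \cref{lem:airy-properties}. The main obstacle is the side $x>x_0$, where $\chi$ may contain a dominant $\Ai_{\pm1}$ component that grows like $\e^{\frac23\zeta^{3/2}}$. The stated absolute bound $Ch^2$ then needs the growth factor to be controlled. I would first try to absorb it by showing that the cancellation in Step 1 is exact up to a factor bounded by the branch ratio, which stays bounded because $\Real\rho^{3/2}$ is non-decreasing (\cref{hyp:geronimo}(iii)). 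Failing that, I would state the estimate for $\chi$ bounded on $[h^{-2/3}\rho(x_i),h^{-2/3}\rho(x_e)]$, that is, for the recessive $\chi=\Ai_0$, and normalise the dominant case by its value at $x_e$. I expect this uniformity issue, rather than the Taylor cancellation itself, to require the most care.
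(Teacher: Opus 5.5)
The paper gives no proof here; it quotes \cite[Theorem~4.1]{geronimo-wkbturningpoint-2004}. Measured against what a proof has to deliver, your proposal has two genuine gaps.

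\textbf{The absolute bound fails for dominant $\chi$.} This is the problem you flag in Step~3, and it is decisive rather than technical. With $C$ independent of $h$, the bound $|\beta^{(\chi)}|\le Ch^2$ cannot hold once $\chi$ has a nonzero $\Ai_{\pm1}$ component and $x_e>x_0(h)$. Near $x_e$ one has $|\Psi(x,h)|\asymp|g|\,h^{1/6}|\rho|^{-1/4}\exp\bigl(\tfrac{2}{3h}|\rho(x,h)|^{3/2}\bigr)$, which is exponentially large. The residual, however, is in general of size $h^2|\Psi|$, just as Olver's continuous Langer approximant satisfies $w''=(u^2f+\psi)w$ with large parameter $u$ and a generally nonvanishing $\psi$. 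Your first remedy cannot help. Monotonicity of $\Real\rho^{3/2}$ bounds the ratios $\Psi(x\pm h,h)/\Psi(x,h)$, but any estimate of the form $Ch^2\cdot(\text{bounded ratio})\cdot|\Psi(x,h)|$ still carries the exponential factor. Only a relative estimate can hold uniformly over all entire $\chi$: $|\beta^{(\chi)}(x,h)|\le Ch^2$ times an envelope of $\Psi$, e.g.\ $|W^{(j)}(x,h)|$, i.e.\ $|g|$ times an Airy modulus function at $h^{-2/3}\rho$. Normalising the dominant solution by its value at $x_e$ gives a true but much weaker statement, since it discards all pointwise information where $\Psi$ is small.

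\textbf{The two regimes do not glue at order $h^2$.} The local computations you sketch are right as far as they go. The first-order WKB cancellation is exactly the discrete Wronskian condition built into $g$, and the $h^{1/3}$ and $h^{2/3}$ cancellations near $x_0$ do occur. But even the relative (or recessive) version does not follow.
\begin{itemize}
\item Step~1 computes the residual of the WKB branches, not of $\Psi$. Their amplitude $g\,h^{1/6}\rho^{-1/4}$ behaves like $|x-x_0|^{-1/4}$, so the $h^2$-coefficient $c_2(x)$ blows up like $|x-x_0|^{-2}$ (the discrete counterpart of $f^{1/4}(f^{-1/4})''$). At the seam $|x-x_0|\sim Kh^{2/3}$ the branch residual therefore has absolute size $\sim K^{-9/4}h^{2/3}$, not $h^2$. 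Replacing $\chi$ by its branch costs a further relative error $O(\zeta^{-3/2})=O(K^{-3/2})$, independent of $h$.
\item Step~2 is a finite Taylor expansion in $\eta_\pm=h^{-2/3}(\rho(x\pm h)-\rho(x))\approx\pm h^{1/3}\rho'$. It controls the remainder only while $\zeta$ is bounded: for large $\zeta$ the terms $\eta^k\chi^{(k)}$ have size $|\rho^{1/2}\rho'|^k|\chi|$ and must be resummed into $\cosh(\rho^{1/2}\rho')$.
\end{itemize}
So the zone $K\le|\zeta|\le\varepsilon h^{-2/3}$ is covered by neither estimate. What is missing is a mechanism uniform in $\zeta\in\rnum$. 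One option is the exact relation $\chi(\zeta+\eta)=\mathcal{C}(\zeta,\eta)\chi(\zeta)+\mathcal{S}(\zeta,\eta)\chi'(\zeta)$, where $\mathcal{C},\mathcal{S}$ are the canonical solutions of $y''(\eta)=(\zeta+\eta)y(\eta)$. This turns $\beta^{(\chi)}$ into a combination of $\chi(\zeta)$ and $\chi'(\zeta)$ whose coefficients must then be estimated uniformly, using the smoothness of $\rho/(x-x_0)$ and $g$ from \cref{hyp:geronimo}.

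\textbf{Two smaller points.}
\begin{itemize}
\item Inside Step~2, cancelling the orders $h^{1/3}$ and $h^{2/3}$ is not enough. There is also the term $h^{4/3}Ag\rho'\bigl[(\log(g^2A\rho'))'+\tfrac16\rho'^3\bigr]\chi'(\zeta)$, with $A=A(x-h/2,h)$. It is $O(\rho)$ only because $g^2A\rho'$ equals, up to a constant, $\rho^{1/2}\rho'/\sinh(\rho^{1/2}\rho')=1-\tfrac16\rho\rho'^2+O(\rho^2)$. This is where the $\sinh$ in $g$ is genuinely used.
\item The opening reduction to $\e^{\zeta t-t^3/3}$ ``uniformly in $t$'' is invalid. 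That integrand is not an Airy solution, and the $h^{2/3}$ cancellation, proportional to $\rho'^2(t^2-\zeta)$, occurs only after integrating by parts in $t$.
\end{itemize}
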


Using \cref{lem:app-langer-derivative}, the coefficient involving the hyperbolic cosine can be rewritten in terms of $q(x,h)$. Indeed, for $x\leq x_0(h)$ we have
\begin{equation}
    \rho(x,h)^{1/2}\rho'(x,h)
    =
    -\iu \arccos(q(x,h)),
\end{equation}
while for $x\geq x_0(h)$,
\begin{equation}
    \rho(x,h)^{1/2}\rho'(x,h)
    =
    -\arccosh(q(x,h)).
\end{equation}
Thus, using the identities $\cosh(-\iu z)=\cos(-z)=\cos(z)$, $\cos(\arccos z)=z$, and $\cosh(-\arccosh z)=z$, we obtain
\begin{equation}
    \cosh\!\left(\rho(x,h)^{1/2}\rho'(x,h)\right)
    =
    q(x,h)
\end{equation}
for all $x\in[x_i,x_e]$.
Recalling the definition
\begin{equation}
    q(x,h)
    =
    -\frac{B(x,h)}{2A(x-h/2,h)},
\end{equation}
it follows that \cref{eq:app-geronimo-equation} can equivalently be written as
\begin{equation}
    \label{eq:app-geronimo-equation-b}
    A(x,h)\Psi(x+h,h)
    +
    B(x,h)\Psi(x,h)
    +
    A(x-h,h)\Psi(x-h,h)
    =
    \beta^{(\chi)}(x,h).
\end{equation}
Thus, the functions $\Psi(x,h)$ satisfy the difference equation associated with the coefficients $A,B$ up to an error of order $h^2$.

However, the existence of an approximate solution $\Psi(x,h)$ satisfying
\cref{eq:app-geronimo-equation-b} does not by itself imply the existence of an exact solution of the corresponding difference equation which remains close to $\Psi(x,h)$. A fortiori, it does not imply that the original difference equation~\eqref{eq:app-difference-equation} with coefficients $A_1,B_1$ admits such a solution. Establishing this requires a detailed error analysis. Such an analysis was carried out independently by Geronimo, Bruno and Van Assche~\cite{geronimo-wkbturningpoint-2004} and by Fedotov and Klopp~\cite{fedotov-complexwkbmethod-2019}. Here we summarize the approach of Geronimo et al.~\cite{geronimo-wkbturningpoint-2004}.

To this end, we introduce two distinguished approximate solutions constructed from the Airy functions $\Ai_j$ defined in \cref{eq:def-complex-airy}, cf.~\cite[Eqs.~4.8--4.9]{geronimo-wkbturningpoint-2004}:
\begin{equation}
    \label{eq:app-def-approx-solution}
    \Psi^{(1)}(x,h)
    =
    g(x,h)\Ai_0\!\left(h^{-2/3}\rho(x,h)\right),
    \qquad
    \Psi^{(2)}(x,h)
    =
    g(x,h)\Ai_1\!\left(h^{-2/3}\rho(x,h)\right).
\end{equation}
Since $\Ai_0$ possesses zeros on $\rnum_-$, it is not suitable for measuring relative errors of the form
\begin{equation}
    U_n^{(1)}
    =
    \Psi^{(1)}(nh,h)
    +
    \Psi^{(1)}(nh,h)\sigma_n^{(1,h)},
\end{equation}
as the denominator may vanish. Following \cite{geronimo-wkbturningpoint-2004}, one therefore introduces the auxiliary functions
\begin{equation}
    \label{eq:app-def-error-helper}
    W^{(1)}(x,h)
    =
    g(x,h)\tilde{w}_1\!\left(h^{-2/3}\rho(x,h)\right),
    \qquad
    W^{(2)}(x,h)
    =
    g(x,h)\Ai_1\!\left(h^{-2/3}\rho(x,h)\right),
\end{equation}
where $\tilde{w}_1$ is the helper function introduced in \cref{def:hankel-helper}. In contrast to $\Ai_0$, the function $\tilde{w}_1$ has no zeros on the real line while retaining the same asymptotic behavior on $\rnum_-$.

We denote the corresponding residual errors from \cref{thm:app-geronimo-1} by
\begin{equation}
    \label{eq:app-beta-errors}
    \beta^{(1)}
    =
    \beta^{\Ai_0},
    \qquad
    \beta^{(2)}
    =
    \beta^{\Ai_1},
    \qquad
    \beta_1^{(1)}
    =
    \beta^{\tilde{w}_1},
    \qquad
    \beta_1^{(2)}
    =
    \beta^{\Ai_1}.
\end{equation}
Since both Airy functions and the auxiliary function $\tilde{w}_1$ satisfy the Airy equation, \cref{thm:app-geronimo-1} applies to all four cases above. Consequently, it follows from \cite[119]{geronimo-wkbturningpoint-2004} that there exists a constant $C>0$ such that
\begin{equation}
    \label{eq:app-beta-errors-estimate}
    |\beta^{(j)}|
    \leq
    Ch^2,
    \qquad
    |\beta_1^{(j)}|
    \leq
    Ch^2,
    \qquad j=1,2,
\end{equation}
for all $h>0$ and $x_i+h\leq x\leq x_e-h$.

Finally, using the functions $W^{(j)}$, Geronimo et al.~define propagation operators $G^{(j)}(n,m,h)$, $j=1,2$, cf.~\cite[Eqs.~3.39 and 3.42]{geronimo-wkbturningpoint-2004}, which are used to control the accumulated error between approximate and exact solutions. Their explicit form is not needed for our purposes. The following estimate will be sufficient.

\begin{lemma}[{\cite[Lemmas~4.2--4.3]{geronimo-wkbturningpoint-2004}}]
    \label{lem:app-error-propagator}
    Suppose that Hypothesis~\ref{hyp:geronimo} holds and that
    $\rho(x,h) \in S_0 \cup S_1$ for all
    $(x,h) \in [x_i,x_e]\times[0,H]$, where the sectors
    $S_j \subset \mathbb{C}$ are defined in
    \cref{eq:def-complex-airy-regions}.
    Furthermore, assume that
    $A(x,h) \in C^\infty([x_i,x_e]\times[0,H])$.    
    Then there exist positive functions
    $G^{(j)}(n,h)$, $j=1,2$, and a constant
    $c_1>0$, independent of $A_1,B_1$, such that the following estimates hold for all
    $h \in (0,H)$, $N_1<N_2 \in \nnum$ with
    $N_1 h, N_2 h \in [x_i,x_e]$, and
    $N_1 \leq n,m \leq N_2$:
    \begin{align}
        |G^{(j)}(n,m,h)|
        & \leq G^{(j)}(n,h),
        \qquad j=1,2 \, ,
        \\
        \sum_{n=N_1}^{N_2} G^{(j)}(n,h)
        & \leq c_1 h^{-1},
        \qquad j=1,2 \, .
    \end{align}
\end{lemma}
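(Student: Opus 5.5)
The plan is to make the propagators explicit through discrete variation of constants and then bound them by Airy modulus functions. Following \cite[Eqs.~3.39, 3.42]{geronimo-wkbturningpoint-2004}, I would write the perturbed equation~\eqref{eq:app-difference-equation} as the reference equation with coefficients $A,B$ plus an inhomogeneity built from the differences in \cref{eq:app-def-perturbation} and the residuals $\beta^{(j)}$ from \cref{thm:app-geronimo-1}. The kernel $G^{(j)}(n,m,h)$ then has the form
\begin{equation*}
\frac{\Psi^{(1)}(nh,h)\Psi^{(2)}(mh,h)-\Psi^{(2)}(nh,h)\Psi^{(1)}(mh,h)}{\mathcal{W}(m,h)\,W^{(j)}(nh,h)}\,A(mh-h,h)\,W^{(j)}(mh,h).
\end{equation*}
Here $\mathcal{W}(m,h)=A(mh,h)\bigl(\Psi^{(1)}(mh)\Psi^{(2)}(mh+h)-\Psi^{(1)}(mh+h)\Psi^{(2)}(mh)\bigr)$ is the discrete Wronskian of the two approximate solutions, and the division by $W^{(j)}(nh,h)$ is the relative-error normalization.

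\textbf{Step 1: the Wronskian.} Since $\Psi^{(1)},\Psi^{(2)}$ solve the reference equation up to $O(h^2)$, $\mathcal{W}$ is almost constant in $m$. I would Taylor-expand $\Psi^{(j)}(x+h)$, using $\rho\in C^\infty$ and the fact that $\rho/(x-x_0)$ is bounded away from $0$ (\cref{hyp:geronimo}). This should give
\begin{equation*}
\mathcal{W}(m,h)=c\,A\,g^2\,h^{1/3}\rho'\,\bigl(\mathrm{Wr}[\Ai_0,\Ai_1]+O(h^{2/3})\bigr)
\end{equation*}
uniformly on $[x_i,x_e]$. By the definition \cref{eq:app-weightfunction} of $g$, the product $A g^2\rho'$ is bounded above and below. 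Hence $|\mathcal{W}|\asymp h^{1/3}$, with constants depending only on the reference coefficients and not on $A_1,B_1$.

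\textbf{Step 2: pointwise bound.} I would introduce Airy modulus and phase functions adapted to the pair $(\tilde w_1,\Ai_1)$ on $S_0\cup S_1$. By the asymptotics \cref{eq:airy-expansion} and \cref{lem:hankel-properties}, these satisfy
\begin{equation*}
|\Ai_j(s)|\le C\,M(s)E^{\pm1}(s),\qquad M(s)=(1+|s|)^{-1/4},\qquad E(s)=\e^{\Real\frac23 s^{3/2}}.
\end{equation*}
The exponential factors coming from $n$ and $m$ cancel or are bounded by $1$ because $\Real\rho^{3/2}$ is non-decreasing in $x$ (\cref{hyp:geronimo}(iii)). This requires the summation to run in the direction in which the recessive solution decays, which is where the sector condition $\rho\in S_0\cup S_1$ enters. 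I would then obtain
\begin{equation*}
|G^{(j)}(n,m,h)|\le C h^{-1/3}\bigl(1+h^{-2/3}|\rho(mh,h)|\bigr)^{-1/2}\eqqcolon G^{(j)}(m,h)
\end{equation*}
(after possibly relabelling which index is summed). This defines the positive majorant required in the lemma.

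\textbf{Step 3: summation.} Convert the sum to a Riemann sum, $\sum_n\approx h^{-1}\int dx$. Then substitute $s=h^{-2/3}\rho(x,h)$, using $dx\asymp h^{2/3}ds$, which holds because $\rho'$ is bounded above and below. This gives
\begin{equation*}
\sum_{N_1}^{N_2}G^{(j)}(n,h)\le C h^{-1}h^{-1/3}h^{2/3}\int_{|s|\le Ch^{-2/3}}(1+|s|)^{-1/2}\,ds\le C h^{-2/3}\cdot h^{-1/3}=c_1h^{-1}.
\end{equation*}

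\textbf{Main obstacle.} I expect Step 2 to be the hardest part. It needs uniform control of the exponential factors across the turning point, with neither index allowed to wander into $S_{-1}$ where $\Ai_1$ ceases to be dominant. It also needs the fact that $\tilde w_1$ has no real zeros, so that division by $W^{(1)}$ is harmless. I would handle this by splitting into the cases $n,m\le x_0/h$ and $n,m\ge x_0/h$, plus the mixed case, each treated with \cref{lem:airy-properties}.
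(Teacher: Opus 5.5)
The paper gives no argument for this lemma; it is quoted from \cite[Lemmas~4.2--4.3]{geronimo-wkbturningpoint-2004}, so your proposal is a reconstruction rather than an alternative route. Its architecture is the natural one, and Steps~2--3 are essentially sound. Restrict the kernel to $m\ge n$ for $j=1$ and to $m\le n$ for $j=2$; your remark on the direction of summation only covers $j=1$. Then the exponential factors are controlled by the monotonicity of $\Real\rho^{3/2}$. With quantitative lower bounds $|\tilde w_1(s)|\ge c\,M(s)E(s)^{-1}$ and $|\Ai_1(s)|\ge c\,M(s)E(s)$, rather than merely the absence of real zeros, you obtain the majorant $Ch^{-1/3}(1+h^{-2/3}|\rho(mh,h)|)^{-1/2}$. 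Its sum is $O(h^{-1})$ exactly as you compute. That majorant, however, rests on the lower bound $|\mathcal{W}(m,h)|\ge c\,h^{1/3}$, and Step~1 does not prove it.

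\textbf{Why Step~1 fails.} In the discrete problem, one lattice step is not a small change away from the turning point. By \cref{lem:app-langer-derivative}, the WKB components of $\Psi^{(j)}$ change per step by $|\rho^{1/2}\rho'|$: in phase for $x<x_0$ (by $\arccos q$) and in log-modulus for $x>x_0$ (by $\arccosh q$). Both are of order one. Equivalently, $h^2\partial_x^2\Psi^{(j)}\approx\rho\,\rho'^2\,\Psi^{(j)}$, which is $O(h^{2/3})\Psi^{(j)}$ only where $|\rho|\le Ch^{2/3}$. Hence the first-order Taylor expansion fails on most of $[x_i,x_e]$, and so does your formula $\mathcal{W}=c\,A\,g^2h^{1/3}\rho'\bigl(\mathrm{Wr}[\Ai_0,\Ai_1]+O(h^{2/3})\bigr)$.

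Inserting the large-argument Airy asymptotics directly (say for $x<x_0$), one finds that $\mathcal{W}$ keeps, to leading order, its turning-point value $h^{1/3}\mathrm{Wr}[\Ai_0,\Ai_1]$. Your expression instead carries the spurious factor $\rho^{1/2}\rho'/\sinh(\rho^{1/2}\rho')$, i.e.\ $\arccos q/\sqrt{1-q^2}$ for $x<x_0$ and $\arccosh q/\sqrt{q^2-1}$ for $x>x_0$. The $\sinh$ in $g$, rather than its linearization, is there precisely to make the \emph{discrete} Wronskian constant.

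\textbf{How to repair it.} The bound $|\mathcal{W}|\asymp h^{1/3}$ is true but needs another proof. For the approximate solutions of the reference equation one has exactly
\begin{equation*}
    \mathcal{W}(m,h)-\mathcal{W}(m-1,h)=\Psi^{(1)}(mh,h)\,\beta^{(2)}(mh,h)-\Psi^{(2)}(mh,h)\,\beta^{(1)}(mh,h).
\end{equation*}
So you may Taylor-expand at a single lattice point with $h^{-2/3}|\rho|\le1$, where it is legitimate, and then sum the increments. This requires the residuals of \cref{thm:app-geronimo-1} relative to the Airy envelopes: $|\beta^{(1)}|\le Ch^2|g|ME^{-1}$ and $|\beta^{(2)}|\le Ch^2|g|ME$. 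An absolute bound $|\beta^{(1)}|\le Ch^2$ would not suffice, because $\Psi^{(2)}$ is exponentially large in $h^{-1}$ where $\rho>0$, so $\Psi^{(2)}\beta^{(1)}$ would not be small. With the relative bounds, each increment is $O\bigl(h^2(1+h^{-2/3}|\rho(mh,h)|)^{-1/2}\bigr)$. By your Step~3, the increments sum to $O(h^{4/3})=o(h^{1/3})$. This is the quantitative version of your own remark that $\mathcal{W}$ is almost constant.
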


These error estimates are crucial for the following theorem, which shows that the approximate solutions $\Psi^{(j)}$ can indeed be promoted to genuine solutions of the difference equation~\eqref{eq:app-difference-equation}.

\begin{theorem}[{\cite[Theorem~4.4]{geronimo-wkbturningpoint-2004}}]
    \label{thm:app-geronimo-2}
    Let $\Psi^{(j)},W^{(j)}$, $j=1,2$ be defined as in \cref{eq:app-def-approx-solution,eq:app-def-error-helper}.
    Assume that \cref{hyp:geronimo} holds and that $\rho(x,h) \in S_0 \cup S_1$, as defined in \cref{eq:def-complex-airy-regions} for all $(x,h) \in [x_i,x_e] \times [0,H]$.
    Suppose further that $A(x,h) \in C^\infty([x_i,x_e] \times [0,H])$ is strictly positive, that $A_1(x,h),B_1(x,h) \in C^0([x_i,x_e] \times [0,H])$ and that $A_1(x,h)$ is strictly positive.
    For $h \in (0,H]$ and $N_1<N_2 \in \nnum$, let $[(N_1+1)h,(N_2-1)h] \subset [x_i,x_e]$.
    For $j=1,2$, let $G^{(j)}(n,h)$, $\beta^{(j)}(x,h)$ and $\beta^{(j)}_1(x,h)$ be given as in \cref{lem:app-error-propagator,eq:app-beta-errors}.
    Furthermore, let
    \begin{equation}
        K^{(j)}(i,h) = c_2 G^{(j)}(i) \left(\Delta \left[\frac{A(i h,h)}{A(i h-h,h)}\right] +\Delta \left[\frac{B(i h,h)}{A(i h-h,h)}\right]\right) \quad j =1,2
    \end{equation}
    with the definitions from \cref{eq:app-def-perturbation} and some constant $c_2>0$ independent of $A_1,B_1$ given by \cite{geronimo-wkbturningpoint-2004}.
    Then there exists two linearly independent solutions $U^{(j,h)}$, $j=1,2$ of \cref{eq:app-difference-equation} and a constant $C>0$ independent of $A_1,B_1$ such that
    \begin{equation}
        U^{(j,h)}_n = \Psi^{(j)}(nh,h)+W^{(j)}(nh,h) \sigma^{(j,h)}_n \quad j =1,2\,\quad N_1 \leq n \leq N_2 \, ,
    \end{equation}
    where the relative error $\sigma^{(j,h)}_n$ is bounded by
    \begin{equation}
        \sigma^{(j,h)}_n \leq C \sum_{i = N_1}^{N_2} \left(\left|K^{(j)}(i,h)\right| + G^{(j)}(i,h) \left| \beta^{(j)}(ih,h)\right|\right) \exp\left(\sum_{i = N_1}^{N_2} \left(\left|K^{(j)}(i,h)\right| + G^{(j)}(i,h) \left| \beta^{(j)}_1(ih,h)\right|\right)\right) \, .
    \end{equation}
\end{theorem}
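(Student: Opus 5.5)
Since the statement is \cite[Theorem~4.4]{geronimo-wkbturningpoint-2004}, the plan is to reconstruct the standard Volterra-type argument: an approximate fundamental system is promoted to an exact one by discrete variation of constants and a Gronwall estimate. First I would divide \cref{eq:app-difference-equation} by $A_1(nh-h,h)$ and compare with the reference equation also divided by $A(nh-h,h)$. A solution $U$ of the $(A_1,B_1)$-equation then satisfies the $(A,B)$-equation with inhomogeneity
\begin{equation*}
\Delta\!\left[\tfrac{A(nh,h)}{A(nh-h,h)}\right]U(n+1)+\Delta\!\left[\tfrac{B(nh,h)}{A(nh-h,h)}\right]U(n).
\end{equation*}
Next, $\Psi^{(1)},\Psi^{(2)}$ solve the $(A,B)$-equation only up to the residuals $\beta^{(j)}$ of \cref{thm:app-geronimo-1}, so these residuals also enter as inhomogeneities. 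This yields a perturbed equation for the unknown $U^{(j,h)}=\Psi^{(j)}+W^{(j)}\sigma^{(j,h)}$.

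The second step is to build a discrete Green kernel from the pair $\Psi^{(1)},\Psi^{(2)}$. Their discrete Wronskian $A(x-h/2,h)\bigl(\Psi^{(1)}(x)\Psi^{(2)}(x+h)-\Psi^{(1)}(x+h)\Psi^{(2)}(x)\bigr)$ should, by Taylor expansion and the continuous Airy Wronskian $\mathcal W[\Ai_0,\Ai_1]=\mathrm{const}$, equal a nonzero constant times $h^{1/3}\cdot h^{-2/3}\rho'\,g^2A$. This is bounded away from zero up to an $O(h)$ correction, which is absorbed into the $\beta$ terms. With the kernel $G^{(j)}(n,m,h)=\bigl(\Psi^{(1)}(n)\Psi^{(2)}(m)-\Psi^{(2)}(n)\Psi^{(1)}(m)\bigr)/\bigl(\text{Wronskian}\cdot W^{(j)}(n)\bigr)\times W^{(j)}(m)$ (normalised by the weight $W^{(j)}$), I would write a summation equation for $\sigma^{(j,h)}_n$. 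The direction of summation is chosen according to the solution. For the recessive solution $j=1$, I sum from $N_2$ down to $n$, so that the growing factor $\Psi^{(2)}(n)/\Psi^{(2)}(m)$ with $m\ge n$ is bounded. For the dominant solution $j=2$, I sum from $N_1$ up to $n$. The sector condition $\rho\in S_0\cup S_1$ together with the monotonicity of $\Real\rho^{3/2}$ (\cref{hyp:geronimo}) is exactly what makes these ratios bounded.

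The third step is to solve the Volterra equation by Picard iteration. The resulting $\sigma_n$ satisfies
\begin{equation*}
|\sigma_n|\le \sum_i\bigl(|K^{(j)}(i)|+G^{(j)}(i)|\beta^{(j)}(ih)|\bigr)+\sum_i\bigl(|K^{(j)}(i)|+G^{(j)}(i)|\beta^{(j)}_1(ih)|\bigr)|\sigma_i|.
\end{equation*}
The $\beta_1$ term appears because $W^{(j)}\sigma$ itself is fed back through the approximate equation for $W^{(j)}$. Discrete Gronwall then gives the claimed product of a sum and an exponential. Linear independence for small $h$ follows because the Wronskian of $U^{(1)},U^{(2)}$ is a small perturbation of the approximate one, which is nonzero.

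I expect the main obstacle to be the uniform kernel bound $|G^{(j)}(n,m,h)|\le G^{(j)}(n,h)$ with $\sum_n G^{(j)}(n,h)\le c_1h^{-1}$, i.e.\ \cref{lem:app-error-propagator}. This must hold uniformly across the turning point, where the Airy functions pass from oscillatory to exponential behaviour and $\Ai_0$ has zeros. I would handle it by using $\tilde w_1$ as the weight, whose modulus envelopes $\Ai_0$ on $\rnum_-$ without vanishing (\cref{lem:hankel-properties}), and by splitting the sum into $|h^{-2/3}\rho|\le 1$ and its complement. On the complement I would use the WKB asymptotics of the Airy functions, and near the turning point the boundedness of $g$ and of the Airy functions. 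Each summand should then be $O(1)$, giving $O(h^{-1})$ in total.
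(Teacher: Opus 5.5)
The paper gives no proof of this statement. It is imported from \cite{geronimo-wkbturningpoint-2004}, with only the remark that the independence of $c_2$ and $C$ from $A_1,B_1$ can be read off from that proof. Your architecture is the Olver-type argument behind that source: rewrite the $(A_1,B_1)$-equation as the $(A,B)$-equation with the $\Delta$-inhomogeneity, vary constants against the Airy approximants, use the kernel bounds of \cref{lem:app-error-propagator}, and close with Volterra iteration and discrete Gronwall, summing from $N_2$ for the recessive and from $N_1$ for the dominant solution. It has the merit of making the independence from $A_1,B_1$ visible, since these enter only through the $\Delta$'s.

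\textbf{The step that fails is your discrete Wronskian.} Truncating the Taylor expansion of $\Psi^{(k)}(x+h)$ after the first-order term is accurate only close to $x_0(h)$. A step moves the Airy argument $X=h^{-2/3}\rho$ by $\delta\approx h^{1/3}\rho'$, while for large $|X|$ the $\Ai_k$ vary on the scale $|X|^{-1/2}$. The relevant parameter is therefore $s=\rho^{1/2}\rho'$, with $|s|=\delta|X|^{1/2}$. By \cref{lem:app-langer-derivative} it equals $-\iu\arccos q$ resp.\ $-\arccosh q$, which is of order one away from $x_0(h)$, independently of $h$. Summing all orders (the $\delta^2$-term cancels by the Airy equation, the $\delta^3$-term is $\tfrac16\delta^3X\,\mathcal{W}[\Ai_0,\Ai_1]$ with $\delta^2X=s^2$, and the leading terms continue as the series of $\sinh s/s$) gives, to leading order in $h$,
\[
\mathcal{W}(x):=A(x,h)\bigl(\Psi^{(1)}(x,h)\Psi^{(2)}(x+h,h)-\Psi^{(1)}(x+h,h)\Psi^{(2)}(x,h)\bigr)\approx h^{1/3}\rho'g^2A\,\frac{\sinh s}{s}\,\mathcal{W}[\Ai_0,\Ai_1]=h^{1/3}\,\mathcal{W}[\Ai_0,\Ai_1].
\]
The last equality follows from the definition of $g$ in \cref{eq:app-weightfunction}.

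Your expression therefore misses the factor $\sinh(s)/s$, which equals $\sqrt{1-q^2}/\arccos q$ on the oscillatory side. This is an $O(1)$ discrepancy, not an $O(h)$ one, so it cannot be absorbed into the $\beta$'s. A kernel normalised by your expression would not invert the reference operator even to leading order. Cancelling exactly this factor is why $g$ contains $\sinh(\rho^{1/2}\rho')$ rather than $\rho^{1/2}\rho'$.

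\textbf{The repair is to normalise the kernel by the exact $\mathcal{W}$ at the summation index.} Applying the $(A,B)$-operator to the variation-of-constants sum, the diagonal term then reproduces the inhomogeneity $f$ exactly. Writing $\Psi^{(k)}(m)$ for $\Psi^{(k)}(mh,h)$, the only defect is a term of the form $\beta^{(2)}(n)\sum_m\Psi^{(1)}(m)f(m)/\mathcal{W}(m)-\beta^{(1)}(n)\sum_m\Psi^{(2)}(m)f(m)/\mathcal{W}(m)$. Nothing has to be absorbed; you only need $|\mathcal{W}|\geq c\,h^{1/3}$. This follows either from the formula above, or from $\mathcal{W}(x)-\mathcal{W}(x-h)=\Psi^{(1)}(x,h)\beta^{(2)}(x,h)-\Psi^{(2)}(x,h)\beta^{(1)}(x,h)$ together with the value at $x_0(h)$, where your first-order expansion is legitimate and $\rho'g^2A\to1$.

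This defect involves the residuals of both kernel functions, $\beta^{\Ai_0}$ and $\beta^{\Ai_1}$, rather than $\beta^{\tilde w_1}$. Your Gronwall bound thus has the stated shape but a different $\beta$-bookkeeping. That is harmless for \cref{prop:turning-point}, where all residuals are $O(h^2)$.

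Two smaller points:
\begin{itemize}
\item Near $x_0$ your kernel is $O(h^{-1/3})$ rather than $O(1)$, because $\mathcal{W}\sim h^{1/3}$. For $h^{2/3}\lesssim|x-x_0|\lesssim1$ it behaves like $|x-x_0|^{-1/2}$. The sum in \cref{lem:app-error-propagator}, which the theorem takes as input anyway, is nevertheless $O(h^{-1})$.
\item Your Wronskian perturbation argument yields linear independence only when the error bounds are small, i.e.\ for small $h$. This is all the paper uses, although the statement as written asserts it for every $h\in(0,H]$.
\end{itemize}
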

\begin{remark}
    Under additional assumptions, \cite[Theorem~4.4]{geronimo-wkbturningpoint-2004} yields considerably simpler error bounds.
    While the independence of the constants $c_1$, $c_2$, and $C$ from the perturbative coefficients $A_1,B_1$ is not stated explicitly in \cite[Theorem~4.4]{geronimo-wkbturningpoint-2004}, it follows from the proof.
    In particular, the argument provides explicit procedures for estimating these constants from above.
\end{remark}

Geronimo et al.'s framework thus provides a method for constructing approximate solutions of \cref{eq:app-difference-equation} in terms of Airy functions and the Langer transform.
To apply these results, one must verify Hypothesis~\labelcref{hyp:app-qfun,hyp:geronimo}, together with the remaining assumptions of \cref{thm:app-geronimo-2}.
The following lemma shows that many of these properties already follow from Hypothesis~\ref{hyp:app-qfun}.

\begin{lemma}
    \label{lem:app-hypothesis}
    Assume that Hypothesis~\ref{hyp:app-qfun} holds with open region $O \subset \cnum$.
    Then the following statements hold for all
    $(x,h) \in [x_i,x_e]\times[0,H]$:
    \begin{enumerate}[(i)]
        \item \label{item:app-langer-sj}
        $\rho(x,h) \in \rnum$.
        In particular,
        $\rho(x,h) \in S_0 \subset S_0 \cup S_1$,
        where the sectors $S_j$ are defined in
        \cref{eq:def-complex-airy-regions};

        \item \label{item:app-langer-posneg}
        $\rho(x,h) < 0$ for $x < x_0(h)$, and
        $\rho(x,h) > 0$ for $x > x_0(h)$;

        \item \label{item:app-langer-increasing}
        $\mathrm{Re}\,\rho(x,h)^{3/2}$ is strictly increasing in $x$;

        \item \label{item:app-langer-analytic}
        $\rho(x,h) \in C^\infty([x_i,x_e]\times[0,H])$.
        Moreover, for fixed $h \in [0,H]$,
        the map $x \mapsto \rho(x,h)$ is analytic in $O$;

        \item \label{item:app-langer-lowerbound}
        $\rho(x,h)/(x-x_0(h))
        \in C^2([x_i,x_e]\times[0,H])$
        and is uniformly bounded away from $0$;

        \item \label{item:app-gfunc-analytic}
        $g(x,h) \in C^\infty([x_i,x_e]\times[0,H])$,
        and for fixed $h \in [0,H]$, $x \mapsto g(x,h)$ is analytic in $O$;

        \item \label{item:app-gfunc-bound}
        $A(x-h/2,h)^{1/2} g(x,h)$
        is uniformly bounded above and below away from $0$
        on $[x_i,x_e]\times[0,H]$.
    \end{enumerate}
    In particular, Hypothesis~\ref{hyp:geronimo} is satisfied.
\end{lemma}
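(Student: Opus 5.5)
We verify items (i)--(vii) directly from Hypothesis~\ref{hyp:app-qfun} and Definition~\ref{def:app-langer}, working at fixed $h$ first and then addressing joint regularity in $(x,h)$. The backbone is the local structure of $q$ near the simple turning point $x_0(h)$.

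\textbf{Items (i)--(iii): sign, reality and monotonicity.} By (vi) of Hypothesis~\ref{hyp:app-qfun}, $\arccos q(u,h)\in(0,\pi)$ for $u<x_0(h)$ and $\arccosh q(u,h)>0$ for $u>x_0(h)$. Hence both integrals in \eqref{eq:app-langer} are nonnegative reals, which gives (i) and (ii) immediately, since $\rnum\subset S_0$ up to the Stokes boundary convention. For (iii), Lemma~\ref{lem:app-langer-fedotov} identifies $\rho^{3/2}$ with the chosen branch. It is purely imaginary for $x<x_0$ and real and increasing for $x>x_0$ by Lemma~\ref{lem:app-langer-derivative}. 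So $\Real\rho^{3/2}$ is constant, equal to $0$, on the left and strictly monotone on the right. With the branch fixed as in the paper, I would check that the sign conventions give the non-decreasing property that Hypothesis~\ref{hyp:geronimo}(iii) actually requires; if needed, I would state (iii) in that weaker form.

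\textbf{Items (iv)--(v): analyticity across the turning point.} This is the main obstacle. Near $x_0$, write $q(x)-1=(x-x_0)\,r(x)$ with $r$ analytic and $r(x_0)=q'(x_0)>0$. Then
\[
\arccos q=\sqrt{2(1-q)}\,\phi(1-q),\qquad \phi \text{ analytic},\ \phi(0)=1.
\]
It follows that $\int_{x_0}^x\arccos q$ equals $(x_0-x)^{3/2}$ times an analytic nonvanishing function. Taking the $2/3$ power, $\rho(x)=(x-x_0)\,\kappa(x)$ with $\kappa$ analytic and $\kappa(x_0)=(2q'(x_0))^{1/3}>0$. The same computation on the right yields the same analytic continuation, because $\arccosh=\iu\arccos$ as in Lemma~\ref{lem:app-langer-fedotov}. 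This is the classical Langer argument, and it gives analyticity in $x$ on $O$ after possibly shrinking $O$.

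Joint $C^\infty$ dependence on $h$ follows by running this argument with parameters. By the implicit function theorem, $x_0(h)$ is smooth, since $q'(x_0(h),h)\neq0$ and $q\in C^\infty$. The representation $\rho=(x-x_0(h))\kappa(x,h)$ then holds with $\kappa$ smooth, being built from integrals of smooth functions through the substitution $u=x_0+s(x-x_0)$. Away from $x_0$, where $q\neq1$, smoothness is clear. Item (v) follows because $\kappa$ is continuous and positive on the compact set $[x_i,x_e]\times[0,H]$: it is positive near the turning point by the above, and positive elsewhere by (ii).

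\textbf{Items (vi)--(vii): the weight $g$.} By Lemma~\ref{lem:app-langer-derivative},
\[
\sinh^2\bigl(\rho^{1/2}\rho'\bigr)=\cosh^2\bigl(\rho^{1/2}\rho'\bigr)-1=q^2-1.
\]
Therefore
\[
A(x-h/2,h)^{1/2}g=\bigl(\rho/(q^2-1)\bigr)^{1/4}.
\]
Both $\rho$ and $q^2-1$ vanish simply at $x_0$ and nowhere else, with matching signs. On the left, $\rho<0$ and $q^2-1<0$ (using $-1<q<1$); on the right, both are positive. Hence their ratio is smooth (resp.\ analytic), positive, and at $x_0$ equals $\kappa(x_0)/(2q'(x_0))>0$. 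Continuity and positivity on the compact set give two-sided bounds, which is (vii). Item (vi) follows by composing with the smooth positive factor $A^{-1/2}$, since $A>0$ is smooth by assumption.

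Together, items (i)--(vii) give Hypothesis~\ref{hyp:geronimo}.
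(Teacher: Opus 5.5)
Your proposal is correct, apart from two defects that come from the statement itself (see the second paragraph). For (iv)--(vii) it takes a more self-contained route than the paper.

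\textbf{Comparison with the paper.} For (iv)--(v), the paper writes $\rho$ through \cref{lem:app-langer-fedotov}, Taylor-expands at $x_0(h)$, and imports $\partial_x\rho(x_0(h),h)\neq0$ from Fedotov--Klopp. You instead carry out the Langer factorization explicitly. With $u=x_0+s(x-x_0)$, both one-sided integrals become $(x-x_0)^{3/2}$ times one and the same function $\Psi(x,h)$, which is smooth in $(x,h)$ and analytic in $x$ near $x_0$. This gives $\rho=(x-x_0)\kappa$ with $\kappa=(\tfrac32\Psi)^{2/3}$ and $\kappa(x_0)=(2q'(x_0))^{1/3}$, so joint smoothness in $h$ comes out directly. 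Your use of (ii) plus compactness also supplies the \emph{global} lower bound in (v); the paper only controls $\partial_x\rho$ at $x_0$ and leaves this implicit.

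For (vi)--(vii), the paper introduces $\tilde g=\sinh(\rho^{1/2}\rho')/\rho^{1/2}$ and cites Fedotov--Klopp for analyticity and non-vanishing. You instead use $\cosh(\rho^{1/2}\rho')=q$, so $\sinh^2(\rho^{1/2}\rho')=q^2-1$ and $A(x-h/2,h)^{1/2}g=(\rho/(q^2-1))^{1/4}$. This replaces the citation with an elementary argument: a ratio of two functions with a common simple zero and matching signs. It is independent of the square-root branch and gives the explicit value $(2q'(x_0))^{-1/6}$ at the turning point.

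What the Fedotov--Klopp route buys is analyticity on the full region $O$. Yours gives it only on a neighbourhood of $[x_i,x_e]$ after shrinking $O$, which is all that \cref{hyp:geronimo} and the main text use. Both routes tacitly need $A$ smooth (resp.\ analytic) for (vi). That is an assumption of \cref{thm:app-geronimo-2}, not of \cref{hyp:app-qfun}.

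\textbf{Items (i) and (iii).} In (i), negative reals are not in $S_0$ under any convention. They lie on the ray $\arg=\pi$, which belongs to the closed sector $S_1$. So the correct conclusion is $\rho\in S_0\cup S_1$, which is all \cref{thm:app-geronimo-2} needs.

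In (iii), you are right that strict monotonicity is impossible, since $\rho^{3/2}$ is purely imaginary for $x<x_0$. However, your claim that $\rho^{3/2}$ increases for $x>x_0$ ``by \cref{lem:app-langer-derivative}'' is backwards for the paper's branch. There $\rho^{1/2}=-\sqrt{\rho}$ for $\rho>0$, so the lemma gives $\partial_x\rho^{3/2}=\tfrac32\rho^{1/2}\rho'=-\tfrac32\arccosh q<0$. With that branch, $\Real\rho^{3/2}$ is non-increasing. To get the non-decreasing property required by \cref{hyp:geronimo}, you must use the principal branch: $\rho^{3/2}\geq0$ for $\rho\geq0$ and $(-1)^{3/2}=-\iu$.
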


\begin{proof}
By assumption, $q(x,h)\in[-1,1]$ for $x\leq x_0(h)$, hence
$\arccos(q(x,h))\in \rnum_{\ge 0}$ in this region.
Similarly, $q(x,h)\ge 1$ for $x\ge x_0(h)$, so
$\arccosh(q(x,h))\in \rnum_{\ge 0}$.
Statements \ref{item:app-langer-sj}--\ref{item:app-langer-increasing}
follow directly from \cref{def:app-langer} and \cref{eq:def-complex-airy-regions}.

We prove \ref{item:app-langer-analytic}.
The function $q(x,h)$ is $C^\infty$ in $(x,h)$ and analytic in $x$ for fixed $h$.
Moreover, for each $h$, the equation~$q(x,h)=1$ has a unique solution $x_0(h)$ with $q'(x_0(h),h)\neq 0$.
By the implicit function theorem, $x_0(h)$ is locally $C^\infty$ in $h$; global $C^\infty$-regularity on $[0,H]$ follows from uniqueness and compactness.

By Lemma~\ref{lem:app-langer-fedotov},
\begin{equation}
\rho(x,h)=\left(\frac{3}{2\iu}\int_{x_0(h)}^x \arccos q(u,h)\,\mathrm{d}u\right)^{2/3},
\end{equation}
with the branch choice as specified there.
This representation implies analyticity in $x$, and together with smooth dependence of $x_0(h)$ yields
$\rho \in C^\infty([x_i,x_e]\times[0,H])$.

Next we prove \ref{item:app-langer-lowerbound}.
For fixed $h$, $\rho(x,h)$ is analytic in $x$, hence
\begin{equation}
\rho(x,h)=\sum_{n=0}^\infty \frac{\partial_x^n\rho(x_0(h),h)}{n!}(x-x_0(h))^n.
\end{equation}
Since $\rho(x_0(h),h)=0$ and $\partial_x\rho(x_0(h),h)\neq 0$ (cf. Fedotov--Klopp~\cite[4416]{fedotov-complexwkbmethod-2019}),
we obtain a non-degenerate linear term.
Thus
\begin{equation}
\frac{\rho(x,h)}{x-x_0(h)}
=\sum_{n=1}^\infty \frac{\partial_x^n\rho(x_0(h),h)}{n!}(x-x_0(h))^{n-1}
\end{equation}
extends continuously (and in fact $C^2$) to $x=x_0(h)$.
Moreover, $\partial_x\rho(x_0(h),h)$ depends continuously on $h$ and is nowhere vanishing on the compact interval $[0,H]$, hence it is uniformly bounded away from zero. 
This yields the claim.

For the remaining statements we follow Fedotov and Klopp~\cite{fedotov-complexwkbmethod-2019}.
Define
\begin{equation}
\tilde g(x,h)=\frac{\sinh(\rho(x,h)^{1/2}\rho'(x,h))}{\rho(x,h)^{1/2}}.
\end{equation}
Then $g(x,h)=(A(x-h/2,h)\tilde g(x,h))^{-1/2}$.
By Fedotov--Klopp, $\tilde g$ is analytic and nonvanishing in $x\in O$ for fixed $h \geq 0$~\cite[4417]{fedotov-complexwkbmethod-2019}.
Moreover, near $x_0(h)$,
$\sinh z \sim z$ implies
\begin{equation}
\tilde g(x,h) \sim \rho'(x_0(h),h)\neq 0.
\end{equation}
Thus $\tilde g$ is uniformly bounded away from zero on $[x_i,x_e]\times[0,H]$.
This implies the boundedness properties of $g$ and completes the proof.
\end{proof}
To apply the results of Geronimo et al., it is therefore sufficient to verify the assumptions on $q(x,h)$ stated in Hypothesis~\ref{hyp:app-qfun}. The corresponding analysis is carried out in the main text in \cref{sec:turning-point}, where we follow the framework developed in this appendix.

\printbibliography

@article{fischer2025wrong,
  title = {Quantum Particle in the Wrong Box (or: The Perils of Finite-Dimensional Approximations)},
  shorttitle = {Quantum Particle in the Wrong Box (Or},
  author = {Fischer, Felix and Burgarth, Daniel and Lonigro, Davide},
  date = {2026-01-27},
  journaltitle = {Quantum},
  volume = {10},
  pages = {1985},
  publisher = {Verein zur Förderung des Open Access Publizierens in den Quantenwissenschaften},
  doi = {10.22331/q-2026-01-27-1985}
}

@book{akhiezer-classicalmomentproblem-2020,
  title = {The {{Classical Moment Problem}} and {{Some Related Questions}} in {{Analysis}}},
  author = {Akhiezer, N. I.},
  date = {2020-01},
  publisher = {{Society for Industrial and Applied Mathematics}},
  location = {Philadelphia, PA},
  doi = {10.1137/1.9781611976397},
  isbn = {978-1-61197-638-0 978-1-61197-639-7}
}

@book{albeverio-singularperturbationsdifferential-2000,
  title = {Singular Perturbations of Differential Operators: Solvable {{Schrödinger}} Type Operators},
  shorttitle = {Singular Perturbations of Differential Operators},
  author = {Albeverio, Sergio},
  namea = {Kurasov, P.},
  nameatype = {collaborator},
  date = {2000},
  publisher = {Cambridge University Press},
  location = {Cambridge},
  isbn = {978-0-521-77912-8 978-1-107-10128-9},
  pagetotal = {429}
}

@article{allahverdiev-extensionsdilationsfunctional-2005,
  title = {Extensions, {{Dilations}} and {{Functional Models}} of {{Infinite Jacobi Matrix}}},
  author = {Allahverdiev, B. P.},
  date = {2005-09},
  journaltitle = {Czechoslovak Mathematical Journal},
  shortjournal = {Czech Math J},
  volume = {55},
  number = {3},
  pages = {593--609},
  issn = {0011-4642, 1572-9141},
  doi = {10.1007/s10587-005-0048-3}
}

@article{ammann-relativeoscillationtheory-2012,
  title = {Relative {{Oscillation Theory}} for {{Jacobi Matrices Extended}}},
  author = {Ammann, Kerstin},
  date = {2014},
 journal = {Operators and Matrices},
 volume = {8},
 pages = {99–115},
  doi = {10.7153/oam-08-04},
  pubstate = {prepublished}
}

@article{aptekarev-measuresorthogonalpolynomials-2016,
  title = {Measures for Orthogonal Polynomials with Unbounded Recurrence Coefficients},
  author = {Aptekarev, A. I. and Geronimo, J. S.},
  date = {2016-07-01},
  journaltitle = {Journal of Approximation Theory},
  shortjournal = {Journal of Approximation Theory},
  volume = {207},
  pages = {339--347},
  issn = {0021-9045},
  doi = {10.1016/j.jat.2016.02.009}
}

@article{ashhab-finitedimensionalapproximationsgeneralized-2026,
  title = {Finite-Dimensional Approximations of Generalized Squeezing},
  author = {Ashhab, Sahel and Fischer, Felix and Lonigro, Davide and Braak, Daniel and Burgarth, Daniel},
  date = {2026-01-02},
  journaltitle = {Physical Review A},
  shortjournal = {Phys. Rev. A},
  volume = {113},
  number = {1},
  pages = {013703},
  issn = {2469-9926, 2469-9934},
  doi = {10.1103/9vwp-f35c}
}

@online{ashhab-fractionalsqueezingspectra-2026,
  title = {Fractional Squeezing: Spectra and Dynamics from Generalized Squeezing {{Hamiltonian}} with Fractional Orders},
  shorttitle = {Fractional Squeezing},
  author = {Ashhab, Sahel},
  date = {2026-01-22},
  eprint = {2601.15693},
  eprinttype = {arXiv},
  eprintclass = {quant-ph},
  doi = {10.48550/arXiv.2601.15693},
  pubstate = {prepublished}
}

@article{ashhab-propertiesdynamicsgeneralized-2025,
  title = {Properties and Dynamics of Generalized Squeezed States},
  author = {Ashhab, Sahel and Ayyash, Mohammad},
  date = {2025-05-01},
  journaltitle = {New Journal of Physics},
  shortjournal = {New J. Phys.},
  volume = {27},
  number = {5},
  pages = {054104},
  publisher = {IOP Publishing},
  issn = {1367-2630},
  doi = {10.1088/1367-2630/add7fc}
}

@article{ashhab-replycommentproperties-2026,
  title = {Reply to {{Comment}} on ‘{{Properties}} and Dynamics of Generalized Squeezed States’},
  author = {Ashhab, Sahel and Ayyash, Mohammad},
  date = {2026-02},
  journaltitle = {New Journal of Physics},
  shortjournal = {New J. Phys.},
  volume = {28},
  number = {2},
  pages = {028001},
  publisher = {IOP Publishing},
  issn = {1367-2630},
  doi = {10.1088/1367-2630/ae437b}
}

@article{ayyash-dispersiveregimemultiphoton-2025,
  title = {Dispersive regime of multiphoton qubit-oscillator interactions},
  author = {Ayyash, Mohammad and Ashhab, Sahel},
  journal = {Phys. Rev. A},
  volume = {112},
  issue = {2},
  pages = {023713},
  numpages = {19},
  year = {2025},
  month = {8},
  publisher = {American Physical Society},
  doi = {10.1103/hrc2-7nqg},
  url = {https://link.aps.org/doi/10.1103/hrc2-7nqg}
}

@article{bawin-singularinversesquare-2003a,
  title = {Singular Inverse Square Potential, Limit Cycles, and Self-Adjoint Extensions},
  author = {Bawin, M. and Coon, S. A.},
  date = {2003-04-23},
  journaltitle = {Physical Review A},
  shortjournal = {Phys. Rev. A},
  volume = {67},
  number = {4},
  pages = {042712},
  issn = {1050-2947, 1094-1622},
  doi = {10.1103/PhysRevA.67.042712}
}

@article{beane-singularpotentialslimit-2001,
  title = {Singular Potentials and Limit Cycles},
  author = {Beane, S. R. and Bedaque, P. F. and Childress, L. and Kryjevski, A. and McGuire, J. and Van Kolck, U.},
  date = {2001-09-10},
  journaltitle = {Physical Review A},
  shortjournal = {Phys. Rev. A},
  volume = {64},
  number = {4},
  pages = {042103},
  issn = {1050-2947, 1094-1622},
  doi = {10.1103/PhysRevA.64.042103}
}

@book{behrndt-boundaryvalueproblems-2020,
  title = {Boundary {{Value Problems}}, {{Weyl Functions}}, and {{Differential Operators}}},
  author = {Behrndt, Jussi and Hassi, Seppo and De Snoo, Henk},
  date = {2020},
  series = {Monographs in {{Mathematics}}},
  volume = {108},
  publisher = {Springer International Publishing},
  location = {Cham},
  doi = {10.1007/978-3-030-36714-5},
  isbn = {978-3-030-36713-8 978-3-030-36714-5}
}

@article{bencheikh-demonstratingquantumproperties-2022,
  title = {Demonstrating Quantum Properties of Triple Photons Generated by $\chi^3$ processes},
  author = {Bencheikh, Kamel and Cenni, Marina F. B. and Oudot, Enky and Boutou, Véronique and Félix, Corinne and Prades, Joel Compte and Vernay, Augustin and Bertrand, Julien and Bassignot, Florent and Chauvet, Mathieu and Bussières, Félix and Zbinden, Hugo and Levenson, Ariel and Boulanger, Benoît},
  date = {2022-10-10},
  journaltitle = {The European Physical Journal D},
  shortjournal = {Eur. Phys. J. D},
  volume = {76},
  number = {10},
  pages = {186},
  issn = {1434-6079},
  doi = {10.1140/epjd/s10053-022-00514-3}
}

@article{berg-indeterminatejacobioperators-2025,
  title = {Indeterminate {{Jacobi}} Operators},
  author = {Berg, Christian and Szwarc, Ryszard},
  date = {2025-04-30},
  journaltitle = {Journal of Operator Theory},
  shortjournal = {J. Operator Theory},
  volume = {93},
  number = {1},
  eprint = {2301.00586},
  eprinttype = {arXiv},
  eprintclass = {math},
  pages = {123--145},
  issn = {03794024, 18417744},
  doi = {10.7900/jot.2023ian02.2404}
}

@article{bernhardbeckermann-complexjacobimatrices-2001,
  title = {Complex {{Jacobi}} Matrices},
  author = {{Bernhard Beckermann}},
  date = {2001-01},
  journaltitle = {Journal of Computational and Applied Mathematics},
  shortjournal = {Journal of Computational and Applied Mathematics},
  volume = {127},
  number = {1--2},
  pages = {17--65},
  issn = {03770427},
  doi = {10.1016/S0377-0427(00)00492-1}
}

@article{birkhoff-analytictheorysingular-1933,
  title = {Analytic Theory of Singular Difference Equations},
  author = {Birkhoff, George D. and Trjitzinsky, W. J.},
  date = {1933},
  journaltitle = {Acta Mathematica},
  shortjournal = {Acta Math.},
  volume = {60},
  number = {0},
  pages = {1--89},
  issn = {0001-5962},
  doi = {10.1007/BF02398269}
}

@article{birkhoff-formaltheoryirregular-1930,
  title = {Formal Theory of Irregular Linear Difference Equations},
  author = {Birkhoff, George D.},
  date = {1930},
  journaltitle = {Acta Mathematica},
  shortjournal = {Acta Math.},
  volume = {54},
  number = {0},
  pages = {205--246},
  issn = {0001-5962},
  doi = {10.1007/BF02547522}
}

@article{borg-pauliapproximationsselfadjoint-2003,
  title = {Pauli Approximations to the Self-Adjoint Extensions of the {{Aharonov}}–{{Bohm Hamiltonian}}},
  author = {Borg, J. L. and Pulé, J. V.},
  date = {2003-10-01},
  journaltitle = {Journal of Mathematical Physics},
  volume = {44},
  number = {10},
  pages = {4385--4410},
  issn = {0022-2488, 1089-7658},
  doi = {10.1063/1.1601298}
}

@article{boutin-effecthigherordernonlinearities-2017,
  title = {Effect of {{Higher-Order Nonlinearities}} on {{Amplification}} and {{Squeezing}} in {{Josephson Parametric Amplifiers}}},
  author = {Boutin, Samuel and Toyli, David M. and Venkatramani, Aditya V. and Eddins, Andrew W. and Siddiqi, Irfan and Blais, Alexandre},
  date = {2017-11-15},
  journaltitle = {Physical Review Applied},
  shortjournal = {Phys. Rev. Applied},
  volume = {8},
  number = {5},
  pages = {054030},
  issn = {2331-7019},
  doi = {10.1103/PhysRevApplied.8.054030}
}

@Inbook{braak-$k$photonquantumrabi-2025,
author="Braak, Daniel",
editor="Takagi, Tsuyoshi
and Wakayama, Masato
and Kunihiro, Noboru
and Tanaka, Keisuke
and Kimoto, Kazufumi
and Kudo, Momonari",
title="The k-Photon Quantum Rabi Model",
bookTitle="Mathematical Foundations for Post-Quantum Cryptography: Crypto-Math CREST",
year="2026",
publisher="Springer Nature Singapore",
address="Singapore",
pages="75--87",
isbn="978-981-96-1218-5",
doi="10.1007/978-981-96-1218-5_5",
url="https://doi.org/10.1007/978-981-96-1218-5_5"
}

@article{braeutigam-deficiencynumbersoperators-2016,
  title = {Deficiency Numbers of Operators Generated by Infinite {{Jacobi}} Matrices},
  author = {Braeutigam, I. N. and Mirzoev, K. A.},
  date = {2016-03},
  journaltitle = {Doklady Mathematics},
  shortjournal = {Dokl. Math.},
  volume = {93},
  number = {2},
  pages = {170--174},
  issn = {1064-5624, 1531-8362},
  doi = {10.1134/S1064562416020137}
}

@article{braunstein-generalizedsqueezing-1987,
  title = {Generalized Squeezing},
  author = {Braunstein, Samuel L. and McLachlan, Robert I.},
  date = {1987-02-01},
  journaltitle = {Physical Review A},
  shortjournal = {Phys. Rev. A},
  volume = {35},
  number = {4},
  pages = {1659--1667},
  issn = {0556-2791},
  doi = {10.1103/PhysRevA.35.1659}
}

@article{braunstein-squeezingirreducibleresource-2005,
  title = {Squeezing as an Irreducible Resource},
  author = {Braunstein, Samuel L.},
  date = {2005-05-31},
  journaltitle = {Physical Review A},
  shortjournal = {Phys. Rev. A},
  volume = {71},
  number = {5},
  pages={055801},
  publisher = {American Physical Society (APS)},
  issn = {1050-2947, 1094-1622},
  doi = {10.1103/physreva.71.055801}
}

@article{budyka-deficiencyindicesblock-2024,
  title = {Deficiency {{Indices}} of {{Block Jacobi Matrices}}: {{Survey}}},
  shorttitle = {Deficiency {{Indices}} of {{Block Jacobi Matrices}}},
  author = {Budyka, V. and Malamud, M. and Mirzoev, K.},
  date = {2024-01},
  journaltitle = {Journal of Mathematical Sciences},
  shortjournal = {J Math Sci},
  volume = {278},
  number = {1},
  pages = {39--54},
  issn = {1072-3374, 1573-8795},
  doi = {10.1007/s10958-024-06904-9}
}

@article{bush-singularboundaryconditions-2023,
  title = {Singular Boundary Conditions for {{Sturm}}–{{Liouville}} Operators via Perturbation Theory},
  author = {Bush, Michael and Frymark, Dale and Liaw, Constanze},
  date = {2023-08},
  journaltitle = {Canadian Journal of Mathematics},
  shortjournal = {Can. J. Math.-J. Can. Math.},
  volume = {75},
  number = {4},
  pages = {1110--1146},
  issn = {0008-414X, 1496-4279},
  doi = {10.4153/S0008414X22000293}
}

@article{cai-bosonicquantumerror-2021,
  title = {Bosonic Quantum Error Correction Codes in Superconducting Quantum Circuits},
  author = {Cai, Weizhou and Ma, Yuwei and Wang, Weiting and Zou, Chang-Ling and Sun, Luyan},
  date = {2021-01},
  journaltitle = {Fundamental Research},
  volume = {1},
  number = {1},
  pages = {50--67},
  publisher = {Elsevier BV},
  issn = {2667-3258},
  doi = {10.1016/j.fmre.2020.12.006}
}

@article{case-singularpotentials-1950,
  title = {Singular {{Potentials}}},
  author = {Case, K. M.},
  date = {1950-12-01},
  journaltitle = {Physical Review},
  shortjournal = {Phys. Rev.},
  volume = {80},
  number = {5},
  pages = {797--806},
  issn = {0031-899X},
  doi = {10.1103/PhysRev.80.797}
}

@article{chang-observationthreephotonspontaneous-2020,
  title = {Observation of {{Three-Photon Spontaneous Parametric Down-Conversion}} in a {{Superconducting Parametric Cavity}}},
  author = {Chang, C. W. Sandbo and Sabín, Carlos and Forn-Díaz, P. and Quijandría, Fernando and Vadiraj, A. M. and Nsanzineza, I. and Johansson, G. and Wilson, C. M.},
  date = {2020-01-16},
  journaltitle = {Physical Review X},
  shortjournal = {Phys. Rev. X},
  volume = {10},
  number = {1},
  pages={011011},
  publisher = {American Physical Society (APS)},
  issn = {2160-3308},
  doi = {10.1103/physrevx.10.011011}
}

@article{charif-perturbationseriesjacobi-2021,
  title = {Perturbation Series for {{Jacobi}} Matrices and the Quantum {{Rabi}} Model},
  author = {Charif, Mirna and Zielinski, Lech},
  date = {2021},
  journaltitle = {Opuscula Mathematica},
  shortjournal = {Opuscula Math.},
  volume = {41},
  number = {3},
  pages = {301--333},
  issn = {1232-9274},
  doi = {10.7494/OpMath.2021.41.3.301}
}

@article{clark-spectralanalysisselfadjoint-1996,
  title = {A {{Spectral Analysis}} for {{Self-Adjoint Operators Generated}} by a {{Class}} of {{Second Order Difference Equations}}},
  author = {Clark, S. L.},
  date = {1996-01-01},
  journaltitle = {Journal of Mathematical Analysis and Applications},
  shortjournal = {Journal of Mathematical Analysis and Applications},
  volume = {197},
  number = {1},
  pages = {267--285},
  issn = {0022-247X},
  doi = {10.1006/jmaa.1996.0020}
}

@article{corona-thirdorderspontaneousparametric-2011,
  title = {Third-Order Spontaneous Parametric down-Conversion in Thin Optical Fibers as a Photon-Triplet Source},
  author = {Corona, María and Garay-Palmett, Karina and U’Ren, Alfred B.},
  date = {2011-09-15},
  journaltitle = {Physical Review A},
  shortjournal = {Phys. Rev. A},
  volume = {84},
  number = {3},
  pages={033823},
  publisher = {American Physical Society (APS)},
  issn = {1050-2947, 1094-1622},
  doi = {10.1103/physreva.84.033823}
}

@article{damanik-analytictheorymatrix-2008,
  title = {The {{Analytic Theory}} of {{Matrix Orthogonal Polynomials}}},
  author = {Damanik, David and Pushnitski, Alexander and Simon, Barry},
  date = {2008},
  journal = {Surveys in Approximation Theory},
  volume = {4},
  pages = {1–85},
  doi = {10.48550/arXiv.0711.2703}
}

@article{damanik-perturbationsorthogonalpolynomials-2010,
  title = {Perturbations of Orthogonal Polynomials with Periodic Recursion Coefficients},
  author = {Damanik, David and Killip, Rowan and Simon, Barry},
  date = {2010-05-01},
  journaltitle = {Annals of Mathematics},
  shortjournal = {Ann. Math.},
  volume = {171},
  number = {3},
  pages = {1931--2010},
  issn = {0003-486X},
  doi = {10.4007/annals.2010.171.1931}
}

@article{damanik-uniformspectralproperties-2000,
  title = {Uniform {{Spectral Properties}} of {{One-Dimensional Quasicrystals}}, {{III}}. $\alpha$-{{Continuity}}},
  author = {Damanik, David and Killip, Rowan and Lenz, Daniel},
  date = {2000-06-01},
  journaltitle = {Communications in Mathematical Physics},
  shortjournal = {Comm Math Phys},
  volume = {212},
  number = {1},
  pages = {191--204},
  issn = {1432-0916},
  doi = {10.1007/s002200000203}
}

@article{delgrosso-controlledsqueezegatesuperconducting-2025,
  title = {Controlled-Squeeze Gate in Superconducting Quantum Circuits},
  author = {Del Grosso, Nicolás F. and Cortiñas, Rodrigo G. and Villar, Paula I. and Lombardo, Fernando C. and Paz, Juan Pablo},
  date = {2025-04-07},
  journaltitle = {Physical Review A},
  shortjournal = {Phys. Rev. A},
  volume = {111},
  number = {4},
  pages = {042606},
  publisher = {American Physical Society},
  doi = {10.1103/PhysRevA.111.042606}
}

@article{dombrowski-resultsabsolutecontinuity-2017,
  title = {Some Results on Absolute Continuity for Unbounded {{Jacobi}} Matrices},
  author = {Dombrowski, Joanne},
  date = {2017},
  journaltitle = {Operators and Matrices},
  volume = {11},
  number = {2},
  pages = {355--362},
  issn = {1846-3886},
  doi = {10.7153/oam-11-23}
}

@article{eckhardt-singularweyltitchmarshkodairatheory-2013,
  title = {Singular {{Weyl-Titchmarsh-Kodaira Theory}} for {{Jacobi Operators}}},
  author = {Eckhardt, Jonathan and Teschl, Gerald},
  date = {2013},
  journaltitle = {Operators and Matrices},
  volume ={7},
  number = {3},
  eprint = {1203.5856},
  eprinttype = {arXiv},
  eprintclass = {math},
  pages = {695--712},
  issn = {1846-3886},
  doi = {10.7153/oam-07-38}
}

@article{ehrhardt-solutionsdiscreteairy-2004,
  title = {Solutions to the Discrete {{Airy}} Equation: Application to Parabolic Equation Calculations},
  shorttitle = {Solutions to the Discrete {{Airy}} Equation},
  author = {Ehrhardt, Matthias and Mickens, Ronald E.},
  date = {2004-11-01},
  journaltitle = {Journal of Computational and Applied Mathematics},
  shortjournal = {Journal of Computational and Applied Mathematics},
  volume = {172},
  number = {1},
  pages = {183--206},
  issn = {0377-0427},
  doi = {10.1016/j.cam.2004.02.011}
}

@article{eichinger-weylmatrixperspective-2025,
  title = {A {{Weyl Matrix Perspective}} on {{Unbounded Non-Self-Adjoint Jacobi Matrices}}},
  author = {Eichinger, Benjamin and Lukić, Milivoje and Young, Giorgio},
  date = {2025-10},
  journaltitle = {Complex Analysis and Operator Theory},
  shortjournal = {Complex Anal. Oper. Theory},
  volume = {19},
  number = {7},
  pages = {194},
  issn = {1661-8254, 1661-8262},
  doi = {10.1007/s11785-025-01804-5}
}

@article{eichler-observationtwomodesqueezing-2011,
  title = {Observation of {{Two-Mode Squeezing}} in the {{Microwave Frequency Domain}}},
  author = {Eichler, C. and Bozyigit, D. and Lang, C. and Baur, M. and Steffen, L. and Fink, J. M. and Filipp, S. and Wallraff, A.},
  date = {2011-09-06},
  journaltitle = {Physical Review Letters},
  shortjournal = {Phys. Rev. Lett.},
  volume = {107},
  number = {11},
  pages = {113601},
  issn = {0031-9007, 1079-7114},
  doi = {10.1103/PhysRevLett.107.113601}
}

@article{eisert-supersonicquantumcommunication-2009,
  title = {Supersonic {{Quantum Communication}}},
  author = {Eisert, J. and Gross, D.},
  date = {2009-06-17},
  journaltitle = {Physical Review Letters},
  shortjournal = {Phys. Rev. Lett.},
  volume = {102},
  number = {24},
  pages = {240501},
  issn = {0031-9007, 1079-7114},
  doi = {10.1103/PhysRevLett.102.240501}
}

@book{elaydi-introductiondifferenceequations-2005,
  title = {An {{Introduction}} to {{Difference Equations}}},
  editor = {Elaydi, Saber N.},
  date = {2005},
  series = {Undergraduate {{Texts}} in {{Mathematics}}},
  edition = {Third Edition},
  publisher = {Springer Science+Business Media, Inc},
  location = {New York, NY},
  doi = {10.1007/0-387-27602-5},
  isbn = {978-0-387-23059-7 978-0-387-27602-1},
  pagetotal = {539}
}

@article{esteve-generalizationhellmannfeynman-2010,
  title = {Generalization of the {{Hellmann}}–{{Feynman}} Theorem},
  author = {Esteve, J.G. and Falceto, Fernando and García Canal, C.},
  date = {2010-01},
  journaltitle = {Physics Letters A},
  shortjournal = {Physics Letters A},
  volume = {374},
  number = {6},
  pages = {819--822},
  issn = {03759601},
  doi = {10.1016/j.physleta.2009.12.005}
}

@incollection{everitt-weylsworksingular-2008,
  title = {Weyl's Work on Singular {{Sturm}}–{{Liouville}} Operators},
  booktitle = {Groups and {{Analysis}}},
  author = {Everitt, W.N. and Kalf, H.},
  editor = {Tent, Katrin},
  date = {2008-10-16},
  edition = {1},
  pages = {63--83},
  publisher = {Cambridge University Press},
  doi = {10.1017/CBO9780511721410.004},
  isbn = {978-0-521-71788-5 978-0-511-72141-0}
}

@article{fedotov-complexwkbmethod-2019,
  title = {The {{Complex WKB Method}} for {{Difference Equations}} and {{Airy Functions}}},
  author = {Fedotov, Alexander and Klopp, Frédéric},
  date = {2019-01},
  journaltitle = {SIAM Journal on Mathematical Analysis},
  shortjournal = {SIAM J. Math. Anal.},
  volume = {51},
  number = {6},
  pages = {4413--4447},
  issn = {0036-1410, 1095-7154},
  doi = {10.1137/18M1228694}
}

@online{fischer-selfadjointrealizationshigherorder-2025,
  title = {Self-Adjoint Realizations of Higher-Order Squeezing Operators},
  author = {Fischer, Felix and Burgarth, Daniel and Lonigro, Davide},
  date = {2025-08-13},
  eprint = {2508.09044},
  eprinttype = {arXiv},
  eprintclass = {math-ph},
  doi = {10.48550/arXiv.2508.09044},
  pubstate = {prepublished}
}

@article{fisher-impossibilitynaivelygeneralizing-1984,
  title = {Impossibility of Naively Generalizing Squeezed Coherent States},
  author = {Fisher, Robert A. and Nieto, Michael Martin and Sandberg, Vernon D.},
  date = {1984-03-15},
  journaltitle = {Physical Review D},
  shortjournal = {Phys. Rev. D},
  volume = {29},
  number = {6},
  pages = {1107--1110},
  publisher = {American Physical Society (APS)},
  issn = {0556-2821},
  doi = {10.1103/physrevd.29.1107}
}

@article{frymark-spectralpropertiessingular-2023,
  title = {Spectral Properties of Singular {{Sturm}}–{{Liouville}} Operators via Boundary Triples and Perturbation Theory},
  author = {Frymark, Dale and Liaw, Constanze},
  date = {2023-08-05},
  journaltitle = {Journal of Differential Equations},
  shortjournal = {Journal of Differential Equations},
  volume = {363},
  pages = {391--421},
  issn = {0022-0396},
  doi = {10.1016/j.jde.2023.03.022}
}

@article{ganapathy-broadbandquantumenhancement-2023,
  title = {Broadband {{Quantum Enhancement}} of the {{LIGO Detectors}} with {{Frequency-Dependent Squeezing}}},
  author = {Ganapathy, D. and Jia, W. and Nakano, M. and Xu, V. and Aritomi, N. and Cullen, T. and Kijbunchoo, N. and Dwyer, S. E. and Mullavey, A. and McCuller, L. and Abbott, R. and Abouelfettouh, I. and Adhikari, R. X. and Ananyeva, A. and Appert, S. and Arai, K. and Aston, S. M. and Ball, M. and Ballmer, S. W. and Barker, D. and Barsotti, L. and Berger, B. K. and Betzwieser, J. and Bhattacharjee, D. and Billingsley, G. and Biscans, S. and Bode, N. and Bonilla, E. and Bossilkov, V. and Branch, A. and Brooks, A. F. and Brown, D. D. and Bryant, J. and Cahillane, C. and Cao, H. and Capote, E. and Clara, F. and Collins, J. and Compton, C. M. and Cottingham, R. and Coyne, D. C. and Crouch, R. and Csizmazia, J. and Dartez, L. P. and Demos, N. and Dohmen, E. and Driggers, J. C. and Effler, A. and Ejlli, A. and Etzel, T. and Evans, M. and Feicht, J. and Frey, R. and Frischhertz, W. and Fritschel, P. and Frolov, V. V. and Fulda, P. and Fyffe, M. and Gateley, B. and Giaime, J. A. and Giardina, K. D. and Glanzer, J. and Goetz, E. and Goetz, R. and Goodwin-Jones, A. W. and Gras, S. and Gray, C. and Griffith, D. and Grote, H. and Guidry, T. and Hall, E. D. and Hanks, J. and Hanson, J. and Heintze, M. C. and Helmling-Cornell, A. F. and Holland, N. A. and Hoyland, D. and Huang, H. Y. and Inoue, Y. and James, A. L. and Jennings, A. and Karat, S. and Karki, S. and Kasprzack, M. and Kawabe, K. and King, P. J. and Kissel, J. S. and Komori, K. and Kontos, A. and Kumar, R. and Kuns, K. and Landry, M. and Lantz, B. and Laxen, M. and Lee, K. and Lesovsky, M. and Llamas, F. and Lormand, M. and Loughlin, H. A. and Macas, R. and MacInnis, M. and Makarem, C. N. and Mannix, B. and Mansell, G. L. and Martin, R. M. and Mason, K. and Matichard, F. and Mavalvala, N. and Maxwell, N. and McCarrol, G. and McCarthy, R. and McClelland, D. E. and McCormick, S. and McRae, T. and Mera, F. and Merilh, E. L. and Meylahn, F. and Mittleman, R. and Moraru, D. and Moreno, G. and Nelson, T. J. N. and Neunzert, A. and Notte, J. and Oberling, J. and O’Hanlon, T. and Osthelder, C. and Ottaway, D. J. and Overmier, H. and Parker, W. and Pele, A. and Pham, H. and Pirello, M. and Quetschke, V. and Ramirez, K. E. and Reyes, J. and Richardson, J. W. and Robinson, M. and Rollins, J. G. and Romel, C. L. and Romie, J. H. and Ross, M. P. and Ryan, K. and Sadecki, T. and Sanchez, A. and Sanchez, E. J. and Sanchez, L. E. and Savage, R. L. and Schaetzl, D. and Schiworski, M. G. and Schnabel, R. and Schofield, R. M. S. and Schwartz, E. and Sellers, D. and Shaffer, T. and Short, R. W. and Sigg, D. and Slagmolen, B. J. J. and Soike, C. and Soni, S. and Srivastava, V. and Sun, L. and Tanner, D. B. and Thomas, M. and Thomas, P. and Thorne, K. A. and Torrie, C. I. and Traylor, G. and Ubhi, A. S. and Vajente, G. and Vanosky, J. and Vecchio, A. and Veitch, P. J. and Vibhute, A. M. and Von Reis, E. R. G. and Warner, J. and Weaver, B. and Weiss, R. and Whittle, C. and Willke, B. and Wipf, C. C. and Yamamoto, H. and Zhang, L. and Zucker, M. E. and {LIGO O4 Detector Collaboration}},
  date = {2023-10-30},
  journaltitle = {Physical Review X},
  shortjournal = {Phys. Rev. X},
  volume = {13},
  number = {4},
  pages={041021},
  publisher = {American Physical Society (APS)},
  issn = {2160-3308},
  doi = {10.1103/physrevx.13.041021}
}

@article{gardas-initialstatesqubit-2013,
  title = {Initial States of Qubit–Environment Models Leading to Conserved Quantities},
  author = {Gardas, Bartłomiej and Dajka, Jerzy},
  date = {2013-06-14},
  journaltitle = {Journal of Physics A: Mathematical and Theoretical},
  shortjournal = {J. Phys. A: Math. Theor.},
  volume = {46},
  number = {23},
  pages = {235301},
  publisher = {IOP Publishing},
  issn = {1751-8113, 1751-8121},
  doi = {10.1088/1751-8113/46/23/235301}
}

@article{gardas-multiphotonrabimodel-2013,
  title = {Multi-Photon {{Rabi}} Model: {{Generalized}} Parity and Its Applications},
  shorttitle = {Multi-Photon {{Rabi}} Model},
  author = {Gardas, Bartłomiej and Dajka, Jerzy},
  date = {2013-12},
  journaltitle = {Physics Letters A},
  volume = {377},
  number = {44},
  pages = {3205--3208},
  publisher = {Elsevier BV},
  issn = {0375-9601},
  doi = {10.1016/j.physleta.2013.10.011}
}

@article{gardas-replycommentmultiphoton-2014,
  title = {Reply to “{{Comment}} on: ‘{{Multi-photon Rabi}} Model: {{Generalized}} Parity and Its Applications’ [{{Phys}}. {{Lett}}. {{A}} 377 (2013) 3205]” [{{Phys}}. {{Lett}}. {{A}} 378 (2014) 1969]},
  shorttitle = {Reply to “{{Comment}} On},
  author = {Gardas, Bartłomiej and Dajka, Jerzy},
  date = {2014-05},
  journaltitle = {Physics Letters A},
  volume = {378},
  number = {28--29},
  pages = {1970},
  publisher = {Elsevier BV},
  issn = {0375-9601},
  doi = {10.1016/j.physleta.2014.04.053}
}

@article{geronimo-spectrainfinitedimensionaljacobi-1988,
  title = {On the Spectra of Infinite-Dimensional {{Jacobi}} Matrices},
  author = {Geronimo, J.S},
  date = {1988-06},
  journaltitle = {Journal of Approximation Theory},
  shortjournal = {Journal of Approximation Theory},
  volume = {53},
  number = {3},
  pages = {251--265},
  issn = {00219045},
  doi = {10.1016/0021-9045(88)90022-6}
}

@article{geronimo-wkbliouvillegreenanalysis-1992,
  title = {{{WKB}} ({{Liouville-Green}}) Analysis of Second Order Difference Equations and Applications},
  author = {Geronimo, Jeffrey S and Smith, Dale T},
  date = {1992-06},
  journaltitle = {Journal of Approximation Theory},
  shortjournal = {Journal of Approximation Theory},
  volume = {69},
  number = {3},
  pages = {269--301},
  issn = {00219045},
  doi = {10.1016/0021-9045(92)90003-7}
}

@incollection{geronimo-wkbturningpoint-2004,
  title = {{{WKB}} and {{Turning Point Theory}} for {{Second-order Difference Equations}}},
  booktitle = {Spectral {{Methods}} for {{Operators}} of {{Mathematical Physics}}},
  author = {Geronimo, Jeffrey S. and Bruno, Oscar and Van Assche, Walter},
  editor = {Janas, Jan and Kurasov, Pavel and Naboko, Sergei},
  date = {2004},
  pages = {101--138},
  publisher = {Birkhäuser Basel},
  location = {Basel},
  doi = {10.1007/978-3-0348-7947-7_7},
  isbn = {978-3-0348-9632-0 978-3-0348-7947-7}
}

@article{geronimo-wkbturningpoint-2009,
  title = {{{WKB}} and {{Turning Point Theory}} for {{Second Order Difference Equations}}: {{External Fields}} and {{Strong Asymptotics}} for {{Orthogonal Polynomials}}},
  shorttitle = {{{WKB}} and {{Turning Point Theory}} for {{Second Order Difference Equations}}},
  author = {Geronimo, Jeffrey S.},
  date = {2004},
  journal = {Operator Theory: Advances and Applications},
  volume = {154},
  pages = {101--138},
  doi = {10.48550/arXiv.0905.1684}
}

@article{gesztesy-commutationmethodsjacobi-1996,
  title = {Commutation {{Methods}} for {{Jacobi Operators}}},
  author = {Gesztesy, F. and Teschl, G.},
  date = {1996-06},
  journaltitle = {Journal of Differential Equations},
  shortjournal = {Journal of Differential Equations},
  volume = {128},
  number = {1},
  pages = {252--299},
  issn = {00220396},
  doi = {10.1006/jdeq.1996.0095}
}

@article{gesztesy-jacobioperator$$11$$-2024,
  title = {The {{Jacobi Operator}} on $(-1,1)$ and {{Its Various}} m-{{Functions}}},
  author = {Gesztesy, Fritz and Littlejohn, Lance L. and Piorkowski, Mateusz and Stanfill, Jonathan},
  date = {2024-10},
  journaltitle = {Complex Analysis and Operator Theory},
  shortjournal = {Complex Anal. Oper. Theory},
  volume = {18},
  number = {7},
  pages = {155},
  issn = {1661-8254, 1661-8262},
  doi = {10.1007/s11785-024-01576-4}
}

@article{gesztesy-mfunctionsinversespectral-1997,
  title = {M-{{Functions}} and Inverse Spectral Analysis for Finite and Semi-Infinite {{Jacobi}} Matrices},
  author = {Gesztesy, Fritz and Simon, Barry},
  date = {1997-12},
  journaltitle = {Journal d'Analyse Mathématique},
  shortjournal = {J. Anal. Math.},
  volume = {73},
  number = {1},
  pages = {267--297},
  issn = {0021-7670, 1565-8538},
  doi = {10.1007/BF02788147}
}

@article{gordillo-hachuel-commentpropertiesdynamics-2026,
  title = {Comment on ‘{{Properties}} and Dynamics of Generalized Squeezed States’},
  author = {Gordillo-Hachuel, Rubén and Puebla, Ricardo},
  date = {2026-02},
  journaltitle = {New Journal of Physics},
  shortjournal = {New J. Phys.},
  volume = {28},
  number = {2},
  pages = {028002},
  publisher = {IOP Publishing},
  issn = {1367-2630},
  doi = {10.1088/1367-2630/ae4379}
}

@online{gordillo-hachuel-quantummetrologicaladvantage-2026,
  title = {Quantum Metrological Advantage of High-Order Squeezed States},
  author = {Gordillo-Hachuel, Rubén and Torrontegui, Erik and Dios, Cristina and Puebla, Ricardo},
  date = {2026-04-10},
  eprint = {2604.09958},
  eprinttype = {arXiv},
  eprintclass = {quant-ph},
  doi = {10.48550/arXiv.2604.09958},
  pubstate = {prepublished}
}

@article{gorska-squeezingarbitraryorder-2014,
  title = {Squeezing of Arbitrary Order: The Ups and Downs},
  shorttitle = {Squeezing of Arbitrary Order},
  author = {Górska, Katarzyna and Horzela, Andrzej and Szafraniec, Franciszek Hugon},
  date = {2014-12-08},
  journaltitle = {Proceedings of the Royal Society A: Mathematical, Physical and Engineering Sciences},
  shortjournal = {Proc. R. Soc. A.},
  volume = {470},
  number = {2172},
  pages = {20140205},
  issn = {1364-5021, 1471-2946},
  doi = {10.1098/rspa.2014.0205}
}

@article{gottesman-encodingqubitoscillator-2001,
  title = {Encoding a Qubit in an Oscillator},
  author = {Gottesman, Daniel and Kitaev, Alexei and Preskill, John},
  date = {2001-06-11},
  journaltitle = {Physical Review A},
  shortjournal = {Phys. Rev. A},
  volume = {64},
  number = {1},
  pages={012310},
  publisher = {American Physical Society (APS)},
  issn = {1050-2947, 1094-1622},
  doi = {10.1103/physreva.64.012310}
}

@online{gregory-foursixphotonstimulated-2026,
  title = {Four- and Six-Photon Stimulated {{Raman}} Transitions for Coherent Qubit and Qudit Operations},
  author = {Gregory, Gabriel J. and Ritchie, Evan R. and Quinn, Alex and Brudney, Sean and Wineland, David J. and Allcock, David T. C. and O'Reilly, Jameson},
  date = {2026-02-20},
  eprint = {2602.18567},
  eprinttype = {arXiv},
  eprintclass = {quant-ph},
  doi = {10.48550/arXiv.2602.18567},
  pubstate = {prepublished}
}

@article{harrat-asymptoticexpansionlarge-2020,
  title = {Asymptotic Expansion of Large Eigenvalues for a Class of Unbounded {{Jacobi}} Matrices},
  author = {Harrat, Ayoub and Zerouali, El Hassan and Zielinski, Lech},
  date = {2020},
  journaltitle = {Opuscula Mathematica},
  shortjournal = {Opuscula Math.},
  volume = {40},
  number = {2},
  pages = {241--270},
  issn = {1232-9274},
  doi = {10.7494/OpMath.2020.40.2.241}
}

@article{hinton-spectralanalysissecond-1978,
  title = {Spectral Analysis of Second Order Difference Equations},
  author = {Hinton, Don B and Lewis, Roger T},
  date = {1978-04-01},
  journaltitle = {Journal of Mathematical Analysis and Applications},
  shortjournal = {Journal of Mathematical Analysis and Applications},
  volume = {63},
  number = {2},
  pages = {421--438},
  issn = {0022-247X},
  doi = {10.1016/0022-247X(78)90088-4}
}

@article{iantchenko-periodicjacobioperator-2012,
  title = {Periodic {{Jacobi}} Operator with Finitely Supported Perturbations},
  author = {Iantchenko, Alexei and Korotyaev, Evgeny},
  date = {2012-04},
  journaltitle = {Journal of Mathematical Analysis and Applications},
  shortjournal = {Journal of Mathematical Analysis and Applications},
  volume = {388},
  number = {2},
  eprint = {1006.1538},
  eprinttype = {arXiv},
  eprintclass = {math},
  pages = {1239--1253},
  issn = {0022247X},
  doi = {10.1016/j.jmaa.2011.11.016}
}

@article{ibort-selfadjointextensionssymmetries-2015,
  title = {On {{Self-adjoint}} Extensions and Symmetries in {{Quantum Mechanics}}},
  author = {Ibort, Alberto and Lledó, Fernando and Pérez-Pardo, Juan Manuel},
  date = {2015-10},
  journaltitle = {Annales Henri Poincaré},
  shortjournal = {Ann. Henri Poincaré},
  volume = {16},
  number = {10},
  eprint = {1402.5537},
  eprinttype = {arXiv},
  eprintclass = {math-ph},
  pages = {2367--2397},
  issn = {1424-0637, 1424-0661},
  doi = {10.1007/s00023-014-0379-4}
}

@book{izaac-computationalquantummechanics-2018,
  title = {Computational {{Quantum Mechanics}}},
  author = {Izaac, Joshua and Wang, Jingbo},
  date = {2018},
  series = {Undergraduate {{Lecture Notes}} in {{Physics}}},
  publisher = {Springer International Publishing},
  location = {Cham},
  doi = {10.1007/978-3-319-99930-2},
  isbn = {978-3-319-99929-6 978-3-319-99930-2}
}

@article{janas-asymptoticbehaviorgeneralized-2009,
  title = {Asymptotic {{Behavior}} of {{Generalized Eigenvectors}} of {{Jacobi Matrices}} in the {{Critical}} (“{{Double Root}}”) {{Case}}},
  author = {Janas, J. and Naboko, Sergey and Sheronova, E.},
  date = {2009-12-23},
  journaltitle = {Zeitschrift für Analysis und ihre Anwendungen},
  shortjournal = {Z. Anal. Anwend.},
  volume = {28},
  number = {4},
  pages = {411--430},
  issn = {0232-2064, 1661-4534},
  doi = {10.4171/zaa/1391}
}

@article{janas-decayboundseigenfunctions-2009,
  title = {Decay {{Bounds}} on {{Eigenfunctions}} and the {{Singular Spectrum}} of {{Unbounded Jacobi Matrices}}},
  author = {Janas, J. and Naboko, S. and Stolz, G.},
  date = {2009-01-02},
  journaltitle = {International Mathematics Research Notices},
  shortjournal = {International Mathematics Research Notices},
  volume ={2009},
  pages = {736–764},
  doi = {10.1093/imrn/rnn144}
}

@article{janas-jacobimatricespowerlike-1999,
  title = {Jacobi {{Matrices}} with {{Power-like Weights}}—{{Grouping}} in {{Blocks Approach}}},
  author = {Janas, Jan and Naboko, Serguei},
  date = {1999-08},
  journaltitle = {Journal of Functional Analysis},
  shortjournal = {Journal of Functional Analysis},
  volume = {166},
  number = {2},
  pages = {218--243},
  issn = {00221236},
  doi = {10.1006/jfan.1999.3434}
}

@article{janas-spectralpropertiesjacobi-2003,
  title = {Spectral Properties of {{Jacobi}} Matrices by Asymptotic Analysis},
  author = {Janas, Jan and Moszyński, Marcin},
  date = {2003-02},
  journaltitle = {Journal of Approximation Theory},
  shortjournal = {Journal of Approximation Theory},
  volume = {120},
  number = {2},
  pages = {309--336},
  issn = {00219045},
  doi = {10.1016/S0021-9045(02)00038-2}
}

@article{judge-eigenvaluesperturbedperiodic-2016,
  title = {Eigenvalues for {{Perturbed Periodic Jacobi Matrices}} by the {{Wigner-von Neumann Approach}}},
  author = {Judge, Edmund and Naboko, Sergey and Wood, Ian},
  date = {2016-07-01},
  journaltitle = {Integral Equations and Operator Theory},
  shortjournal = {Integr. Equ. Oper. Theory},
  volume = {85},
  number = {3},
  pages = {427--450},
  issn = {1420-8989},
  doi = {10.1007/s00020-016-2302-5}
}

@article{judge-spectralresultsperturbed-2018,
  title = {Spectral Results for Perturbed Periodic {{Jacobi}} Matrices Using the Discrete {{Levinson}} Technique},
  author = {Judge, Edmund and Naboko, Sergey and Wood, Ian},
  date = {2018},
  journaltitle = {Studia Mathematica},
  shortjournal = {Studia Math.},
  volume = {242},
  number = {2},
  eprint = {1703.10223},
  eprinttype = {arXiv},
  eprintclass = {math},
  pages = {179--215},
  issn = {0039-3223, 1730-6337},
  doi = {10.4064/sm170325-23-8}
}

@article{kaluzhny-preservationabsolutelycontinuous-2010,
  title = {Preservation of Absolutely Continuous Spectrum of Periodic {{Jacobi}} Operators under Perturbations of Square--Summable Variation},
  author = {Kaluzhny, U. and Shamis, M.},
  date = {2012},
  journal = {Constructive Approximation},
  volume={35},
  pages={89–105},
  doi = {10.1007/s00365-011-9126-y},
  pubstate = {prepublished}
}

@book{kato-perturbationtheorylinear-1995,
  title = {Perturbation Theory for Linear Operators},
  author = {Kato, Tosio},
  date = {1995},
  series = {Classics in Mathematics},
  publisher = {Springer},
  location = {Berlin},
  doi = {10.1007/978-3-642-66282-9},
  isbn = {978-3-540-58661-6},
  pagetotal = {619}
}

@online{lakaev-thresholdvirtualstates-2026,
  title = {Threshold {{Virtual States}} of a {{Jacobi}} Operator},
  author = {Lakaev, Saidakhmat N. and Makarov, Konstantin A.},
  date = {2026-04-05},
  eprint = {2604.04019},
  eprinttype = {arXiv},
  eprintclass = {math},
  doi = {10.48550/arXiv.2604.04019},
  pubstate = {prepublished}
}

@article{last-eigenfunctionstransfermatrices-1999,
  title = {Eigenfunctions, Transfer Matrices, and Absolutely Continuous Spectrum of One-Dimensional {{Schrodinger}} Operators},
  author = {Last, Yoram and Simon, Barry},
  date = {1999-01-15},
  journaltitle = {Inventiones Mathematicae},
  shortjournal = {Inventiones Mathematicae},
  volume = {135},
  number = {2},
  eprint = {math-ph/9907023},
  eprinttype = {arXiv},
  pages = {329--367},
  issn = {0020-9910, 1432-1297},
  doi = {10.1007/s002220050288}
}

@article{lee-exactsolutionsfamily-2011,
  title = {Exact Solutions for a Family of Spin-Boson Systems},
  author = {Lee, Yuan-Harng and Links, Jon and Zhang, Yao-Zhong},
  date = {2011-07-01},
  journaltitle = {Nonlinearity},
  volume = {24},
  number = {7},
  eprint = {1008.3738},
  eprinttype = {arXiv},
  eprintclass = {math-ph},
  pages = {1975--1986},
  issn = {0951-7715, 1361-6544},
  doi = {10.1088/0951-7715/24/7/004}
}

@book{levitan-sturmliouvilledirac-1991,
  title = {Sturm—{{Liouville}} and {{Dirac Operators}}},
  author = {Levitan, B. M. and Sargsjan, I. S.},
  date = {1991},
  publisher = {Springer Netherlands},
  location = {Dordrecht},
  doi = {10.1007/978-94-011-3748-5},
  isbn = {978-94-010-5667-0 978-94-011-3748-5}
}

@article{lo-commentinitialstates-2014,
  title = {Comment on ‘{{Initial}} States of Qubit–Environment Models Leading to Conserved Quantities’},
  author = {Lo, C F},
  date = {2014-04-23},
  journaltitle = {Journal of Physics A: Mathematical and Theoretical},
  shortjournal = {J. Phys. A: Math. Theor.},
  volume = {47},
  number = {16},
  pages = {168001},
  publisher = {IOP Publishing},
  issn = {1751-8113, 1751-8121},
  doi = {10.1088/1751-8113/47/16/168001}
}

@article{lo-commentmultiphotonrabi-2014,
  title = {Comment on: “{{Multi-photon Rabi}} Model: {{Generalized}} Parity and Its Applications” by {{B}}. {{Gardas}} and {{J}}. {{Dajka}} [{{Phys}}. {{Lett}}. {{A}} 377 (2013) 3205]},
  shorttitle = {Comment On},
  author = {Lo, C.F.},
  date = {2014-05},
  journaltitle = {Physics Letters A},
  volume = {378},
  number = {28--29},
  pages = {1969},
  publisher = {Elsevier BV},
  issn = {0375-9601},
  doi = {10.1016/j.physleta.2014.04.044}
}

@article{lo-commentsolvingtwomode-2014,
  title = {Comment on ‘{{Solving}} the Two-Mode Squeezed Harmonic Oscillator and The{\mkbibemph{k}}th-Order Harmonic Generation in {{Bargmann}}–{{Hilbert}} Spaces’},
  author = {Lo, C F},
  date = {2014-02-21},
  journaltitle = {Journal of Physics A: Mathematical and Theoretical},
  shortjournal = {J. Phys. A: Math. Theor.},
  volume = {47},
  number = {7},
  pages = {078001},
  publisher = {IOP Publishing},
  issn = {1751-8113, 1751-8121},
  doi = {10.1088/1751-8113/47/7/078001}
}

@article{lo-multiquantumjaynescummingsmodel-1998,
  title = {The Multiquantum {{Jaynes-Cummings}} Model with the Counter-Rotating Terms},
  author = {Lo, C. F and Liu, K. L and Ng, K. M},
  date = {1998-04-01},
  journaltitle = {Europhysics Letters (EPL)},
  shortjournal = {Europhys. Lett.},
  volume = {42},
  number = {1},
  pages = {1--6},
  publisher = {IOP Publishing},
  issn = {0295-5075, 1286-4854},
  doi = {10.1209/epl/i1998-00544-3}
}

@article{lukic-generalizedprufervariables-2016,
  title = {Generalized {{Prüfer}} Variables for Perturbations of {{Jacobi}} and {{CMV}} Matrices},
  author = {Lukic, Milivoje and Ong, Darren C.},
  date = {2016-12-15},
  journaltitle = {Journal of Mathematical Analysis and Applications},
  shortjournal = {Journal of Mathematical Analysis and Applications},
  volume = {444},
  number = {2},
  pages = {1490--1514},
  issn = {0022-247X},
  doi = {10.1016/j.jmaa.2016.07.036}
}

@article{malcolmbrown-kreinfriedrichsextensions-2005,
  title = {On the {{Krein}} and {{Friedrichs}} Extensions of a Positive {{Jacobi}} Operator},
  author = {Malcolm Brown, B. and Christiansen, Jacob S.},
  date = {2005-06},
  journaltitle = {Expositiones Mathematicae},
  shortjournal = {Expositiones Mathematicae},
  volume = {23},
  number = {2},
  pages = {179--186},
  issn = {07230869},
  doi = {10.1016/j.exmath.2005.01.020}
}

@article{mcculler-frequencydependentsqueezingadvanced-2020,
  title = {Frequency-{{Dependent Squeezing}} for {{Advanced LIGO}}},
  author = {McCuller, L. and Whittle, C. and Ganapathy, D. and Komori, K. and Tse, M. and Fernandez-Galiana, A. and Barsotti, L. and Fritschel, P. and MacInnis, M. and Matichard, F. and Mason, K. and Mavalvala, N. and Mittleman, R. and Yu, Haocun and Zucker, M. E. and Evans, M.},
  date = {2020-04-28},
  journaltitle = {Physical Review Letters},
  shortjournal = {Phys. Rev. Lett.},
  volume = {124},
  number = {17},
  pages ={171102},
  publisher = {American Physical Society (APS)},
  issn = {0031-9007, 1079-7114},
  doi = {10.1103/physrevlett.124.171102}
}

@article{menard-emissionphotonmultiplets-2022,
  title = {Emission of {{Photon Multiplets}} by a Dc-{{Biased Superconducting Circuit}}},
  author = {Ménard, G. C. and Peugeot, A. and Padurariu, C. and Rolland, C. and Kubala, B. and Mukharsky, Y. and Iftikhar, Z. and Altimiras, C. and Roche, P. and Le Sueur, H. and Joyez, P. and Vion, D. and Esteve, D. and Ankerhold, J. and Portier, F.},
  date = {2022-04-08},
  journaltitle = {Physical Review X},
  shortjournal = {Phys. Rev. X},
  volume = {12},
  number = {2},
  pages={021006},
  publisher = {American Physical Society (APS)},
  issn = {2160-3308},
  doi = {10.1103/physrevx.12.021006}
}

@online{miguel-matrixvaluedschrodinger-2026,
  title = {On {{Matrix Valued Schrödinger Operators}} on the {{Discrete Real Line}}: {{Resolvent Boundary Values}}, {{Limiting Absorption Principle}}, {{Hölder Regularity}} and {{Dispersive Estimates}}},
  shorttitle = {On {{Matrix Valued Schrödinger Operators}} on the {{Discrete Real Line}}},
  author = {Miguel, Ballesteros and Gerardo, Franco Córdova and Jonathan, Gil and Naumkin, Ivan},
  date = {2026-04-01},
  eprint = {2604.01391},
  eprinttype = {arXiv},
  eprintclass = {math},
  doi = {10.48550/arXiv.2604.01391},
  pubstate = {prepublished}
}

@article{milburn-quantumteleportationsqueezed-1999,
  title = {Quantum Teleportation with Squeezed Vacuum States},
  author = {Milburn, G. J. and Braunstein, Samuel L.},
  date = {1999-08-01},
  journaltitle = {Physical Review A},
  shortjournal = {Phys. Rev. A},
  volume = {60},
  number = {2},
  pages = {937--942},
  publisher = {American Physical Society (APS)},
  issn = {1050-2947, 1094-1622},
  doi = {10.1103/physreva.60.937}
}

@article{mirrahimi-dynamicallyprotectedcatqubits-2014,
  title = {Dynamically Protected Cat-Qubits: A New Paradigm for Universal Quantum Computation},
  shorttitle = {Dynamically Protected Cat-Qubits},
  author = {Mirrahimi, Mazyar and Leghtas, Zaki and Albert, Victor V and Touzard, Steven and Schoelkopf, Robert J and Jiang, Liang and Devoret, Michel H},
  date = {2014-04-22},
  journaltitle = {New Journal of Physics},
  shortjournal = {New J. Phys.},
  volume = {16},
  number = {4},
  pages = {045014},
  publisher = {IOP Publishing},
  issn = {1367-2630},
  doi = {10.1088/1367-2630/16/4/045014}
}

@online{monvel-oscillatorybehaviorlarge-2018,
  title = {Oscillatory Behavior of Large Eigenvalues in Quantum {{Rabi}} Models},
  author = {Monvel, Anne Boutet and Zielinski, Lech},
  date = {2018-10-12},
  eprint = {1711.03366},
  eprinttype = {arXiv},
  eprintclass = {math-ph},
  doi = {10.48550/arXiv.1711.03366},
  pubstate = {prepublished}
}

@inproceedings{nagel-higherpowersqueezed-1997,
    author = {Nagel, Bengt},
    title = {Higher Power Squeezed States, {{Jacobi}} Matrices, and the {{Hamburger}} Moment Problem},
    booktitle = {Proceedings of the Fifth International Conference on Squeezed States and Uncertainty Relations,},
    year = 1998,
    editors = {D. Han, J. Janaszky, Y. S. Kim, and V. I. Man’ko},
    publisher = {NASA, Washington, DC},
    pages = {43--48}
}

@book{oliveira-intermediatespectraltheory-2009,
  title = {Intermediate Spectral Theory and Quantum Dynamics},
  author = {Oliveira, César R.},
  date = {2009},
  series = {Progress in Mathematical Physics},
  number = {v. 54},
  publisher = {Birkhäuser},
  location = {Basel; Boston},
  isbn = {978-3-7643-8794-5 978-3-7643-8795-2},
  pagetotal = {410},
}

@book{olver-asymptoticsspecialfunctions-2010,
  title = {Asymptotics and Special Functions},
  author = {Olver, Frank W. J.},
  date = {2010},
  series = {{{AKP}} Classics},
  edition = {Reprinted},
  publisher = {CRC Press},
  location = {Boca Raton, Fla.},
  isbn = {978-1-56881-069-0},
  pagetotal = {572}
}

@article{pruckner-densityspectrumjacobi-2018,
  title = {Density of the Spectrum of {{Jacobi}} Matrices with Power Asymptotics},
  author = {Pruckner, Raphael},
  date = {2019},
  journal ={Asymptotic Analysis},
  volume ={117},
  pages = {199-213},
  doi = {10.48550/arXiv.1704.06789},
  pubstate = {prepublished}
}

@article{puri-engineeringquantumstates-2017,
  title = {Engineering the Quantum States of Light in a {{Kerr-nonlinear}} Resonator by Two-Photon Driving},
  author = {Puri, Shruti and Boutin, Samuel and Blais, Alexandre},
  date = {2017-04-19},
  journaltitle = {npj Quantum Information},
  shortjournal = {npj Quantum Inf},
  volume = {3},
  number = {1},
  pages={18},
  publisher = {{Springer Science and Business Media LLC}},
  issn = {2056-6387},
  doi = {10.1038/s41534-017-0019-1}
}

@book{reed-mmmp1-funkana-1980,
  title = {Methods of Modern Mathematical Physics. 1: {{Functional}} Analysis},
  shorttitle = {Methods of Modern Mathematical Physics. 1},
  author = {Reed, Michael and Simon, Barry},
  date = {1980},
  edition = {Rev. and enlarged ed., [Nachdr.]},
  publisher = {Academic Press},
  location = {San Diego, Calif.},
  isbn = {978-0-12-585050-6},
  pagetotal = {400}
}

@book{reed-mmmp2-fourier-1975,
  title = {Methods of Modern Mathematical Physics. 2: {{Fourier}} Analysis, Self-Adjointness},
  shorttitle = {Methods of Modern Mathematical Physics. 2},
  author = {Reed, Michael and Simon, Barry},
  date = {1975},
  edition = {Nachdr.},
  publisher = {Acad. Pr},
  location = {San Diego},
  isbn = {978-0-12-585002-5},
  pagetotal = {361}
}

@book{reed-mmmp4-operators-2005,
  title = {Methods of Modern Mathematical Physics. 4: {{Analysis}} of Operators},
  shorttitle = {Methods of Modern Mathematical Physics. 4},
  author = {Reed, Michael and Simon, Barry},
  date = {2005},
  edition = {Nachdr.},
  publisher = {Acad. Pr},
  location = {New York},
  isbn = {978-0-12-585004-9},
  pagetotal = {396}
}

@article{remling-absolutelycontinuousspectrum-2011,
  title = {The Absolutely Continuous Spectrum of {{Jacobi}} Matrices},
  author = {Remling, Christian},
  date = {2011-07-01},
  journaltitle = {Annals of Mathematics},
  shortjournal = {Ann. Math.},
  volume = {174},
  number = {1},
  pages = {125--171},
  issn = {0003-486X},
  doi = {10.4007/annals.2011.174.1.4}
}

@article{rio-inverseproblemsjacobi-2013,
  title = {Inverse Problems for {{Jacobi}} Operators {{II}}: {{Mass}} Perturbations of Semi-Infinite Mass-Spring Systems},
  shorttitle = {Inverse Problems for {{Jacobi}} Operators {{II}}},
  author = {del Rio, Rafael and Kudryavtsev, Mikhail and Silva, Luis O.},
  date = {2013},
  journal = {Journal of Mathematical Physics, Analysis, Geometry},
  volume ={9},
  pages={165--190},
  doi = {10.48550/arXiv.1106.4598}
}

@book{schmudgen-momentproblem-2017,
  title = {The {{Moment Problem}}},
  author = {Schmüdgen, Konrad},
  date = {2017},
  series = {Graduate {{Texts}} in {{Mathematics}}},
  volume = {277},
  publisher = {Springer International Publishing},
  location = {Cham},
  doi = {10.1007/978-3-319-64546-9},
  isbn = {978-3-319-64545-2 978-3-319-64546-9}
}

@book{schmudgen-unboundedselfadjointoperators-2012,
  title = {Unbounded {{Self-adjoint Operators}} on {{Hilbert Space}}},
  author = {Schmüdgen, Konrad},
  date = {2012},
  series = {Graduate {{Texts}} in {{Mathematics}}},
  volume = {265},
  publisher = {Springer Netherlands},
  location = {Dordrecht},
  doi = {10.1007/978-94-007-4753-1},
  isbn = {978-94-007-4752-4 978-94-007-4753-1}
}

@book{simon-basiccomplexanalysis-2015,
  title = {Basic {{Complex Analysis}}},
  author = {Simon, Barry},
  date = {2015},
  publisher = {American Mathematical Society},
  location = {Providence, Rhode Island},
  doi = {10.1090/simon/002.1},
  isbn = {978-1-4704-1100-8 978-1-4704-2757-3}
}

@article{simon-classicalmomentproblem-1998,
  title = {The {{Classical Moment Problem}} as a {{Self-Adjoint Finite Difference Operator}}},
  author = {Simon, Barry},
  date = {1998-07},
  journaltitle = {Advances in Mathematics},
  volume = {137},
  number = {1},
  pages = {82--203},
  publisher = {Elsevier BV},
  issn = {0001-8708},
  doi = {10.1006/aima.1998.1728}
}

@article{simon-operatorssingularcontinuous-1995,
  title = {Operators with {{Singular Continuous Spectrum}}: {{I}}. {{General Operators}}},
  shorttitle = {Operators with {{Singular Continuous Spectrum}}},
  author = {Simon, Barry},
  date = {1995-01},
  journaltitle = {The Annals of Mathematics},
  shortjournal = {The Annals of Mathematics},
  volume = {141},
  number = {1},
  eprint = {2118629},
  eprinttype = {jstor},
  pages = {131--145},
  issn = {0003486X},
  doi = {10.2307/2118629}
}

@book{simon-szegostheoremits-2011,
  title = {Szego's {{Theorem}} and {{Its Descendants}}: {{Spectral Theory}} for {{L2 Perturbations}} of {{Orthogonal Polynomials}}},
  shorttitle = {Szego's {{Theorem}} and {{Its Descendants}}},
  author = {Simon, Barry},
  date = {2011},
  series = {Porter {{Lectures}}},
  publisher = {Princeton University Press},
  location = {Princeton, N.J},
  doi = {10.1515/9781400837052},
  isbn = {978-0-691-14704-8 978-1-4008-3705-2 978-1-282-82115-6},
  pagetotal = {650}
}

@article{stovicek-infinitejacobimatrices-2019,
  title = {On Infinite {{Jacobi}} Matrices with a Trace Class Resolvent},
  author = {Stovicek, Pavel},
  date = {2020-01},
  journal = {Journal of Approximation Theory},
  volume = {249},
  pages = {105306},
  eprint = {1904.13199},
  eprinttype = {arXiv},
  eprintclass = {math},
  doi = {10.48550/arXiv.1904.13199},
  pubstate = {prepublished}
}

@article{swiderski-asymptoticzerosdistribution-2025,
  title = {Asymptotic Zeros' Distribution of Orthogonal Polynomials with Unbounded Recurrence Coefficients},
  author = {Świderski, Grzegorz and Trojan, Bartosz},
  date = {2025-12-15},
  journaltitle = {Journal of Functional Analysis},
  shortjournal = {Journal of Functional Analysis},
  volume = {289},
  number = {12},
  pages = {111162},
  issn = {0022-1236},
  doi = {10.1016/j.jfa.2025.111162}
}

@article{swiderski-periodicperturbationsunbounded-2017,
  title = {Periodic Perturbations of Unbounded {{Jacobi}} Matrices {{I}}: {{Asymptotics}} of Generalized Eigenvectors},
  shorttitle = {Periodic Perturbations of Unbounded {{Jacobi}} Matrices {{I}}},
  author = {Świderski, Grzegorz and Trojan, Bartosz},
  date = {2017-04},
  journaltitle = {Journal of Approximation Theory},
  shortjournal = {Journal of Approximation Theory},
  volume = {216},
  pages = {38--66},
  issn = {00219045},
  doi = {10.1016/j.jat.2017.01.003}
}

@article{swiderski-periodicperturbationsunbounded-2017a,
  title = {Periodic Perturbations of Unbounded {{Jacobi}} Matrices {{II}}: {{Formulas}} for Density},
  shorttitle = {Periodic Perturbations of Unbounded {{Jacobi}} Matrices {{II}}},
  author = {Świderski, Grzegorz},
  date = {2017-04},
  journaltitle = {Journal of Approximation Theory},
  shortjournal = {Journal of Approximation Theory},
  volume = {216},
  pages = {67--85},
  issn = {00219045},
  doi = {10.1016/j.jat.2017.01.004}
}

@article{swiderski-periodicperturbationsunbounded-2018,
  title = {Periodic Perturbations of Unbounded {{Jacobi}} Matrices {{III}}: {{The}} Soft Edge Regime},
  shorttitle = {Periodic Perturbations of Unbounded {{Jacobi}} Matrices {{III}}},
  author = {Świderski, Grzegorz},
  date = {2018-09},
  journaltitle = {Journal of Approximation Theory},
  shortjournal = {Journal of Approximation Theory},
  volume = {233},
  pages = {1--36},
  issn = {00219045},
  doi = {10.1016/j.jat.2018.04.006}
}

@article{swiderski-spectralpropertiesblock-2018,
  title = {Spectral {{Properties}} of {{Block Jacobi Matrices}}},
  author = {Świderski, Grzegorz},
  date = {2018-10},
  journaltitle = {Constructive Approximation},
  shortjournal = {Constr Approx},
  volume = {48},
  number = {2},
  pages = {301--335},
  issn = {0176-4276, 1432-0940},
  doi = {10.1007/s00365-018-9420-z}
}

@article{swiderski-spectralpropertiesunbounded-2016,
  title = {Spectral {{Properties}} of {{Unbounded Jacobi Matrices}} with {{Almost Monotonic Weights}}},
  author = {Świderski, Grzegorz},
  date = {2016-08},
  journaltitle = {Constructive Approximation},
  shortjournal = {Constr Approx},
  volume = {44},
  number = {1},
  pages = {141--157},
  issn = {0176-4276, 1432-0940},
  doi = {10.1007/s00365-015-9308-0}
}

@article{tamura-resolventconvergencenorm-2003,
  title = {Resolvent Convergence in Norm for {{Dirac}} Operator with {{Aharonov}}–{{Bohm}} Field},
  author = {Tamura, Hideo},
  date = {2003-07-01},
  journaltitle = {Journal of Mathematical Physics},
  volume = {44},
  number = {7},
  pages = {2967--2993},
  issn = {0022-2488, 1089-7658},
  doi = {10.1063/1.1580200}
}

@book{teschl-jacobioperatorscompletely-1999,
  title = {Jacobi {{Operators}} and {{Completely Integrable Nonlinear Lattices}}},
  author = {Teschl, Gerald},
  date = {1999-10-05},
  series = {Mathematical {{Surveys}} and {{Monographs}}},
  volume = {72},
  publisher = {American Mathematical Society},
  location = {Providence, Rhode Island},
  doi = {10.1090/surv/072},
  isbn = {978-0-8218-1940-1 978-1-4704-1299-9}
}

@book{teschl-mathematicalmethodsquantum-2009,
  title = {Mathematical {{Methods}} in {{Quantum Mechanics}}},
  author = {Teschl, Gerald},
  date = {2009},
  series = {Graduate {{Studies}} in {{Mathematics}}},
  volume = {99},
  location = {Providence, Rhode Island},
  doi = {10.1090/gsm/157}
}

@article{teschl-oscillationtheoryrenormalized-1996,
  title = {Oscillation {{Theory}} and {{Renormalized Oscillation Theory}} for {{Jacobi Operators}}},
  author = {Teschl, Gerald},
  date = {1996-08},
  journaltitle = {Journal of Differential Equations},
  shortjournal = {Journal of Differential Equations},
  volume = {129},
  number = {2},
  pages = {532--558},
  issn = {00220396},
  doi = {10.1006/jdeq.1996.0126}
}

@article{teschl-traceformulasinverse-1998,
  title = {Trace {{Formulas}} and {{Inverse Spectral Theory}} for {{Jacobi Operators}}},
  author = {Teschl, Gerald},
  date = {1998-08-01},
  journaltitle = {Communications in Mathematical Physics},
  shortjournal = {Communications in Mathematical Physics},
  volume = {196},
  number = {1},
  pages = {175--202},
  issn = {0010-3616, 1432-0916},
  doi = {10.1007/s002200050419}
}

@article{tse-quantumenhancedadvancedligo-2019,
  title = {Quantum-{{Enhanced Advanced LIGO Detectors}} in the {{Era}} of {{Gravitational-Wave Astronomy}}},
  author = {Tse, M. and Yu, Haocun and Kijbunchoo, N. and Fernandez-Galiana, A. and Dupej, P. and Barsotti, L. and Blair, C. D. and Brown, D. D. and Dwyer, S. E. and Effler, A. and Evans, M. and Fritschel, P. and Frolov, V. V. and Green, A. C. and Mansell, G. L. and Matichard, F. and Mavalvala, N. and McClelland, D. E. and McCuller, L. and McRae, T. and Miller, J. and Mullavey, A. and Oelker, E. and Phinney, I. Y. and Sigg, D. and Slagmolen, B. J. J. and Vo, T. and Ward, R. L. and Whittle, C. and Abbott, R. and Adams, C. and Adhikari, R. X. and Ananyeva, A. and Appert, S. and Arai, K. and Areeda, J. S. and Asali, Y. and Aston, S. M. and Austin, C. and Baer, A. M. and Ball, M. and Ballmer, S. W. and Banagiri, S. and Barker, D. and Bartlett, J. and Berger, B. K. and Betzwieser, J. and Bhattacharjee, D. and Billingsley, G. and Biscans, S. and Blair, R. M. and Bode, N. and Booker, P. and Bork, R. and Bramley, A. and Brooks, A. F. and Buikema, A. and Cahillane, C. and Cannon, K. C. and Chen, X. and Ciobanu, A. A. and Clara, F. and Cooper, S. J. and Corley, K. R. and Countryman, S. T. and Covas, P. B. and Coyne, D. C. and Datrier, L. E. H. and Davis, D. and Di Fronzo, C. and Driggers, J. C. and Etzel, T. and Evans, T. M. and Feicht, J. and Fulda, P. and Fyffe, M. and Giaime, J. A. and Giardina, K. D. and Godwin, P. and Goetz, E. and Gras, S. and Gray, C. and Gray, R. and Gupta, Anchal and Gustafson, E. K. and Gustafson, R. and Hanks, J. and Hanson, J. and Hardwick, T. and Hasskew, R. K. and Heintze, M. C. and Helmling-Cornell, A. F. and Holland, N. A. and Jones, J. D. and Kandhasamy, S. and Karki, S. and Kasprzack, M. and Kawabe, K. and King, P. J. and Kissel, J. S. and Kumar, Rahul and Landry, M. and Lane, B. B. and Lantz, B. and Laxen, M. and Lecoeuche, Y. K. and Leviton, J. and Liu, J. and Lormand, M. and Lundgren, A. P. and Macas, R. and MacInnis, M. and Macleod, D. M. and Márka, S. and Márka, Z. and Martynov, D. V. and Mason, K. and Massinger, T. J. and McCarthy, R. and McCormick, S. and McIver, J. and Mendell, G. and Merfeld, K. and Merilh, E. L. and Meylahn, F. and Mistry, T. and Mittleman, R. and Moreno, G. and Mow-Lowry, C. M. and Mozzon, S. and Nelson, T. J. N. and Nguyen, P. and Nuttall, L. K. and Oberling, J. and Oram, R. J. and O’Reilly, B. and Osthelder, C. and Ottaway, D. J. and Overmier, H. and Palamos, J. R. and Parker, W. and Payne, E. and Pele, A. and Perez, C. J. and Pirello, M. and Radkins, H. and Ramirez, K. E. and Richardson, J. W. and Riles, K. and Robertson, N. A. and Rollins, J. G. and Romel, C. L. and Romie, J. H. and Ross, M. P. and Ryan, K. and Sadecki, T. and Sanchez, E. J. and Sanchez, L. E. and Saravanan, T. R. and Savage, R. L. and Schaetzl, D. and Schnabel, R. and Schofield, R. M. S. and Schwartz, E. and Sellers, D. and Shaffer, T. J. and Smith, J. R. and Soni, S. and Sorazu, B. and Spencer, A. P. and Strain, K. A. and Sun, L. and Szczepańczyk, M. J. and Thomas, M. and Thomas, P. and Thorne, K. A. and Toland, K. and Torrie, C. I. and Traylor, G. and Urban, A. L. and Vajente, G. and Valdes, G. and Vander-Hyde, D. C. and Veitch, P. J. and Venkateswara, K. and Venugopalan, G. and Viets, A. D. and Vorvick, C. and Wade, M. and Warner, J. and Weaver, B. and Weiss, R. and Willke, B. and Wipf, C. C. and Xiao, L. and Yamamoto, H. and Yap, M. J. and Yu, Hang and Zhang, L. and Zucker, M. E. and Zweizig, J.},
  date = {2019-12-05},
  journaltitle = {Physical Review Letters},
  shortjournal = {Phys. Rev. Lett.},
  volume = {123},
  number = {23},
  pages = {231107},
  publisher = {American Physical Society (APS)},
  issn = {0031-9007, 1079-7114},
  doi = {10.1103/physrevlett.123.231107}
}

@article{tur-jaynescummingsmodel-2000,
  title = {Jaynes-{{Cummings Model}}: {{Solutions}} without {{Rotating-Wave Approximation}}},
  author = {Tur, Eduard},
  journal={Optics and Spectroscopy},
  date = {2016},
  volume = {89},
  pages = {574–588}
}

@online{tur-jaynescummingsmodelrotating-2002,
  title = {Jaynes-{{Cummings}} Model without Rotating Wave Approximation. {{Asymptotics}} of Eigenvalues},
  author = {Tur, Eduard},
  date = {2002-11-22},
  eprint = {math-ph/0211055},
  eprinttype = {arXiv},
  doi = {10.48550/arXiv.math-ph/0211055},
  pubstate = {prepublished}
}

@article{vaartjes-strongmicrowavesqueezing-2024,
  title = {Strong Microwave Squeezing above 1 {{Tesla}} and 1 {{Kelvin}}},
  author = {Vaartjes, Arjen and Kringhøj, Anders and Vine, Wyatt and Day, Tom and Morello, Andrea and Pla, Jarryd J.},
  date = {2024-05-18},
  journaltitle = {Nature Communications},
  shortjournal = {Nat Commun},
  volume = {15},
  number = {1},
  pages = {4229},
  issn = {2041-1723},
  doi = {10.1038/s41467-024-48519-3}
}

@article{walls-squeezedstateslight-1983,
  title = {Squeezed States of Light},
  author = {Walls, D. F.},
  date = {1983-11},
  journaltitle = {Nature},
  volume = {306},
  number = {5939},
  pages = {141--146},
  publisher = {{Springer Science and Business Media LLC}},
  issn = {0028-0836, 1476-4687},
  doi = {10.1038/306141a0}
}

@article{wang-asymptoticexpansionssecondorder-2003,
  title = {Asymptotic Expansions for Second-Order Linear Difference Equations with a Turning Point},
  author = {Wang, Z. and Wong, R.},
  date = {2003-03-01},
  journaltitle = {Numerische Mathematik},
  shortjournal = {Numerische Mathematik},
  volume = {94},
  number = {1},
  pages = {147--194},
  issn = {0029-599X, 0945-3245},
  doi = {10.1007/s00211-002-0416-y}
}

@inbook{wang-lineardifferenceequations-,
  title = {{{Linear difference equations with transition points}}},
  author = {Wang, Z and Wong, R},
  booktitle = {The Selected Works of Roderick S. C. Wong},
chapter = {},
pages = {1054-1078},
doi = {10.1142/9789814656054_0049},
URL = {https://www.worldscientific.com/doi/abs/10.1142/9789814656054_0049},
eprint = {https://www.worldscientific.com/doi/pdf/10.1142/9789814656054_0049},
}

@article{wang-uniformasymptoticexpansion-2002,
  title = {Uniform Asymptotic Expansion of $J_\nu(\nu a)$ via a Difference Equation},
  author = {Wang, Z. and Wong, R.},
  date = {2002-03-01},
  journaltitle = {Numerische Mathematik},
  shortjournal = {Numer. Math.},
  volume = {91},
  number = {1},
  pages = {147--193},
  issn = {0945-3245},
  doi = {10.1007/s002110100316}
}

@book{wasow-asymptoticexpansionsordinary-2002,
  title = {Asymptotic Expansions for Ordinary Differential Equations},
  author = {Wasow, Wolfgang Richard},
  date = {2002},
  series = {Dover {{Phoenix}} Editions},
  edition = {Republ. of the Dover ed. first publ. in 1987},
  publisher = {Dover Publ},
  location = {Mineola, NY},
  isbn = {978-0-486-49518-7},
  pagetotal = {374}
}

@article{webb-spectrajacobioperators-2020,
  title = {Spectra of {{Jacobi}} Operators via Connection Coefficient Matrices},
  author = {Webb, Marcus and Olver, Sheehan},
  date = {2021},
  journal = {Communications in Mathematical Physics},
  volume = {382},
  pages = {657–707},
  doi = {10.48550/arXiv.1702.03095},
  pubstate = {prepublished}
}

@article{welstead-boundaryconditionsinfinity-1982,
  title = {Boundary Conditions at Infinity for Difference Equations of Limit-Circle Type},
  author = {Welstead, Stephen T},
  date = {1982-10},
  journaltitle = {Journal of Mathematical Analysis and Applications},
  shortjournal = {Journal of Mathematical Analysis and Applications},
  volume = {89},
  number = {2},
  pages = {442--461},
  issn = {0022247X},
  doi = {10.1016/0022-247X(82)90112-3}
}

@article{wimp-resurrectingasymptoticslinear-1985,
  title = {Resurrecting the Asymptotics of Linear Recurrences},
  author = {Wimp, Jet and Zeilberger, Doron},
  date = {1985-10},
  journaltitle = {Journal of Mathematical Analysis and Applications},
  shortjournal = {Journal of Mathematical Analysis and Applications},
  volume = {111},
  number = {1},
  pages = {162--176},
  issn = {0022247X},
  doi = {10.1016/0022-247X(85)90209-4}
}

@article{wong-asymptoticexpansionssecondorder-1992,
  title = {Asymptotic Expansions for Second-Order Linear Difference Equations},
  author = {Wong, R and Li, H},
  date = {1992},
  journaltitle = {Journal of Computational and Applied Mathematics},
  volume = {41},
  pages={65-94}
}

@article{wong-asymptoticexpansionssecondorder-1992a,
  title = {Asymptotic {{Expansions}} for {{Second}}‐{{Order Linear Difference Equations}}, {{II}}},
  author = {Wong, R. and Li, H.},
  date = {1992-11},
  journaltitle = {Studies in Applied Mathematics},
  shortjournal = {Stud Appl Math},
  volume = {87},
  number = {4},
  pages = {289--324},
  issn = {0022-2526, 1467-9590},
  doi = {10.1002/sapm1992874289}
}

@article{wong-recentadvancesasymptotic-2022,
  title = {Recent {{Advances}} in {{Asymptotic Analysis}}},
  author = {Wong, R. and Zhao, Yu-Qiu},
  date = {2022-11},
  journaltitle = {Analysis and Applications},
  shortjournal = {Anal. Appl.},
  volume = {20},
  number = {06},
  eprint = {2204.09305},
  eprinttype = {arXiv},
  eprintclass = {math},
  pages = {1103--1146},
  issn = {0219-5305, 1793-6861},
  doi = {10.1142/S0219530522400012}
}

@article{yafaev-analyticscatteringtheory-2018,
  title = {Analytic Scattering Theory for {{Jacobi}} Operators and {{Bernstein-Szegö}} Asymptotics of Orthogonal Polynomials},
  author = {Yafaev, D. R.},
  date = {2018-09},
  journaltitle = {Reviews in Mathematical Physics},
  shortjournal = {Rev. Math. Phys.},
  volume = {30},
  number = {08},
  eprint = {1711.05029},
  eprinttype = {arXiv},
  eprintclass = {math},
  pages = {1840019},
  issn = {0129-055X, 1793-6659},
  doi = {10.1142/S0129055X18400196}
}

@article{yafaev-asymptoticbehaviororthogonal-2020,
  title = {Asymptotic Behavior of Orthogonal Polynomials without the {{Carleman}} Condition},
  author = {Yafaev, D.R.},
  date = {2020-10},
  journaltitle = {Journal of Functional Analysis},
  shortjournal = {Journal of Functional Analysis},
  volume = {279},
  number = {7},
  pages = {108648},
  issn = {00221236},
  doi = {10.1016/j.jfa.2020.108648}
}

@article{yafaev-asymptoticbehaviororthogonal-2021,
  title = {Asymptotic Behavior of Orthogonal Polynomials. {{Singular}} Critical Case},
  author = {Yafaev, D.R.},
  date = {2021-02},
  journaltitle = {Journal of Approximation Theory},
  shortjournal = {Journal of Approximation Theory},
  volume = {262},
  pages = {105506},
  issn = {00219045},
  doi = {10.1016/j.jat.2020.105506}
}

@article{yafaev-selfadjointjacobioperators-2021,
  title = {Self-Adjoint {{Jacobi}} Operators in the Limit Circle Case},
  author = {Yafaev, D. R.},
  date = {2023},
  journal = {Journal of Operator Theory},
  volume ={89},
  pages ={87-103},
  doi = {10.48550/arXiv.2104.13609},
  pubstate = {prepublished}
}

@article{yafaev-spectralanalysisjacobi-2022,
  title = {Spectral Analysis of {{Jacobi}} Operators and Asymptotic Behavior of Orthogonal Polynomials},
  author = {Yafaev, D. R.},
  date = {2022-12},
  journaltitle = {Bulletin of Mathematical Sciences},
  shortjournal = {Bull. Math. Sci.},
  volume = {12},
  number = {03},
  pages = {2250002},
  issn = {1664-3607, 1664-3615},
  doi = {10.1142/S1664360722500023}
}

@article{yafaev-spectraltheoryjacobi-2024,
  title = {Spectral Theory of {{Jacobi}} Operators with Increasing Coefficients. {{The}} Critical Case},
  author = {Yafaev, Dimitri},
  date = {2024-02-22},
  journaltitle = {Journal of Spectral Theory},
  shortjournal = {J. Spectr. Theory},
  volume = {13},
  number = {4},
  pages = {1393--1444},
  issn = {1664-039X, 1664-0403},
  doi = {10.4171/jst/481}
}

@book{zettl-sturmliouvilletheory-2005,
  title = {Sturm-{{Liouville}} Theory},
  author = {Zettl, Anton},
  date = {2005},
  series = {Mathematical Surveys and Monographs},
  number = {v. 121},
  publisher = {American Mathematical Society},
  location = {Providence, R.I},
  isbn = {978-0-8218-3905-8},
  pagetotal = {328}
}

@article{zhang-2modekphotonquantum-2017,
  title = {On the 2-Mode and k-Photon Quantum {{Rabi}} Models},
  author = {Zhang, Yao-Zhong},
  date = {2017-05},
  journaltitle = {Reviews in Mathematical Physics},
  shortjournal = {Rev. Math. Phys.},
  volume = {29},
  number = {04},
  pages = {1750013},
  publisher = {World Scientific Pub Co Pte Lt},
  issn = {0129-055X, 1793-6659},
  doi = {10.1142/s0129055x17500131}
}

@article{zhang-solvingtwomodesqueezed-2013,
  title = {Solving the Two-Mode Squeezed Harmonic Oscillator and The{\mkbibemph{k}}th-Order Harmonic Generation in {{Bargmann}}–{{Hilbert}} Spaces},
  author = {Zhang, Yao-Zhong},
  date = {2013-11-15},
  journaltitle = {Journal of Physics A: Mathematical and Theoretical},
  shortjournal = {J. Phys. A: Math. Theor.},
  volume = {46},
  number = {45},
  pages = {455302},
  publisher = {IOP Publishing},
  issn = {1751-8113, 1751-8121},
  doi = {10.1088/1751-8113/46/45/455302}
}

@misc{NIST:DLMF,
             key = "{\relax DLMF}",
           title = "{\it NIST Digital Library of Mathematical Functions}",
    howpublished = "\url{https://dlmf.nist.gov/}, Release 1.2.6 of 2026-03-15",
             url = "https://dlmf.nist.gov/",
            note = "F.~W.~J. Olver, A.~B. {Olde Daalhuis}, D.~W. Lozier, B.~I. Schneider,
                    R.~F. Boisvert, C.~W. Clark, B.~R. Miller, B.~V. Saunders,
                    H.~S. Cohl, and M.~A. McClain, eds."
}

@article{eberle-quantumenhancementzeroarea-2010,
  title = {Quantum {{Enhancement}} of the {{Zero-Area Sagnac Interferometer Topology}} for {{Gravitational Wave Detection}}},
  author = {Eberle, Tobias and Steinlechner, Sebastian and Bauchrowitz, Jöran and Händchen, Vitus and Vahlbruch, Henning and Mehmet, Moritz and Müller-Ebhardt, Helge and Schnabel, Roman},
  date = {2010-06-22},
  journaltitle = {Physical Review Letters},
  shortjournal = {Phys. Rev. Lett.},
  volume = {104},
  number = {25},
  pages = {251102},
  issn = {0031-9007, 1079-7114},
  doi = {10.1103/PhysRevLett.104.251102}
}

@article{vahlbruch-observationsqueezedlight-2008,
  title = {Observation of {{Squeezed Light}} with 10-{{dB Quantum-Noise Reduction}}},
  author = {Vahlbruch, Henning and Mehmet, Moritz and Chelkowski, Simon and Hage, Boris and Franzen, Alexander and Lastzka, Nico and Goßler, Stefan and Danzmann, Karsten and Schnabel, Roman},
  date = {2008-01-23},
  journaltitle = {Physical Review Letters},
  shortjournal = {Phys. Rev. Lett.},
  volume = {100},
  number = {3},
  pages = {033602},
  issn = {0031-9007, 1079-7114},
  doi = {10.1103/PhysRevLett.100.033602}
}

@article{wu-generationsqueezedstates-1986,
  title = {Generation of {{Squeezed States}} by {{Parametric Down Conversion}}},
  author = {Wu, Ling-An and Kimble, H. J. and Hall, J. L. and Wu, Huifa},
  date = {1986-11-17},
  journaltitle = {Physical Review Letters},
  shortjournal = {Phys. Rev. Lett.},
  volume = {57},
  number = {20},
  pages = {2520--2523},
  issn = {0031-9007},
  doi = {10.1103/PhysRevLett.57.2520}
}

@article{andersen-continuousvariablequantuminformation-2010,
  title = {Continuous‐variable Quantum Information Processing},
  author = {Andersen, U.L. and Leuchs, G. and Silberhorn, C.},
  date = {2010-04-28},
  journaltitle = {Laser \& Photonics Reviews},
  shortjournal = {Laser \&amp; Photonics Reviews},
  volume = {4},
  number = {3},
  pages = {337--354},
  issn = {1863-8880, 1863-8899},
  doi = {10.1002/lpor.200910010}
}

@article{lu-recentprogresscoherent-2023,
  title = {Recent Progress on Coherent Computation Based on Quantum Squeezing},
  author = {Lu, Bo and Liu, Lu and Song, Jun-Yang and Wen, Kai and Wang, Chuan},
  date = {2023-03-01},
  journaltitle = {AAPPS Bulletin},
  shortjournal = {AAPPS Bull.},
  volume = {33},
  number = {1},
  pages = {7},
  issn = {2309-4710},
  doi = {10.1007/s43673-023-00077-4}
}

@article{schlegel-quantumerrorcorrection-2022,
  title = {Quantum Error Correction Using Squeezed {{Schrödinger}} Cat States},
  author = {Schlegel, David S. and Minganti, Fabrizio and Savona, Vincenzo},
  date = {2022-08-25},
  journaltitle = {Physical Review A},
  shortjournal = {Phys. Rev. A},
  volume = {106},
  number = {2},
  pages = {022431},
  issn = {2469-9926, 2469-9934},
  doi = {10.1103/PhysRevA.106.022431}
}

@online{zeng-quantumerrorcorrection-2025,
  title = {Quantum {{Error Correction}} with {{Superpositions}} of {{Squeezed Fock States}}},
  author = {Zeng, Yexiong and Quijandría, Fernando and Gneiting, Clemens and Nori, Franco},
  date = {2025-10-05},
  eprint = {2510.04209},
  eprinttype = {arXiv},
  eprintclass = {quant-ph},
  doi = {10.48550/arXiv.2510.04209},
  pubstate = {prepublished}
}

@article{braunstein-quantuminformationcontinuous-2005,
  title = {Quantum Information with Continuous Variables},
  author = {Braunstein, Samuel L and van Loock, Peter},
  date = {2005},
  journaltitle = {Rev. Mod. Phys.},
  volume = {77},
  number = {2},
  pages = {513--577}
}

@article{jitomirskaya-metalinsulatortransitionalmost-1999,
  title = {Metal-{{Insulator Transition}} for the {{Almost Mathieu Operator}}},
  author = {Jitomirskaya, Svetlana Ya.},
  date = {1999-11},
  journaltitle = {The Annals of Mathematics},
  shortjournal = {The Annals of Mathematics},
  volume = {150},
  number = {3},
  eprint = {121066},
  eprinttype = {jstor},
  pages = {1159--1175},
  issn = {0003486X},
  doi = {10.2307/121066}
}

@article{thouless-bandwidthsquasiperiodictightbinding-1983,
  title = {Bandwidths for a Quasiperiodic Tight-Binding Model},
  author = {Thouless, D. J.},
  date = {1983-10-15},
  journaltitle = {Physical Review B},
  shortjournal = {Phys. Rev. B},
  volume = {28},
  number = {8},
  pages = {4272--4276},
  issn = {0163-1829},
  doi = {10.1103/PhysRevB.28.4272}
}

@article{zhang-andersonlocalizationblock-2024,
  title = {Anderson Localization for Block {{Jacobi}} Operators with Quasi‐periodic Meromorphic Potential},
  author = {Zhang, Xiaojian},
  date = {2024-11-15},
  journaltitle = {Mathematical Methods in the Applied Sciences},
  shortjournal = {Math Methods in App Sciences},
  volume = {47},
  number = {16},
  pages = {12816--12832},
  issn = {0170-4214, 1099-1476},
  doi = {10.1002/mma.10182}
}

@inproceedings{bellisard-spectralproperties-1990,
  title = {Spectral Properties of Schrödinger's Operator with a Thue-Morse Potential},
  booktitle = {Number Theory and Physics},
  author = {Bellissard, J.},
  editor = {Luck, Jean-Marc and Moussa, Pierre and Waldschmidt, Michel},
  date = {1990},
  pages = {140--150},
  publisher = {Springer Berlin Heidelberg},
  location = {Berlin, Heidelberg},
  isbn = {978-3-642-75405-0}
}

@book{berezanskii-expansioneigenfunctionsselfadjoint-1968,
  title = {Expansion in {{Eigenfunctions}} of {{Self-adjoint Operators}}},
  author = {Berezanskii, Ju M},
  date = {1968},
  series = {Translations of {{Mathematical}}                         {{Monographs}}},
  volume = {17},
  publisher = {American Mathematical Society}
}

@article{allakhverdiev-spectraltheorydissipative-1990,
  title = {{{On the spectral theory of dissipative difference operators of second order}}},
  author = {Allakhverdiev, B P and Guseĭnov, G Sh},
  date = {1990-02-28},
  journaltitle = {Mathematics of the USSR-Sbornik},
  shortjournal = {Math. USSR Sb.},
  volume = {66},
  number = {1},
  pages = {107--125},
  issn = {0025-5734},
  doi = {10.1070/SM1990v066n01ABEH002081}
}

@article{allahverdiev-spectralproblemsjacobi-,
  title={Spectral problems of Jacobi operators in limit-circle case},
  author={Bilender Pasaoğlu Allahverdiev},
  year={2015},
  journaltitle={Mathematical Reports},
  volume={17},
  pages={81--89}
}

@article{bazavan-squeezingtrisqueezingquadsqueezing-2026,
  title = {Squeezing, Trisqueezing and Quadsqueezing in a Hybrid Oscillator–Spin System},
  author = {Băzăvan, O. and Saner, S. and Webb, D. J. and Ainley, E. M. and Drmota, P. and Nadlinger, D. P. and Araneda, G. and Lucas, D. M. and Ballance, C. J. and Srinivas, R.},
  date = {2026-05-01},
  journaltitle = {Nature Physics},
  shortjournal = {Nat. Phys.},
  issn = {1745-2473, 1745-2481},
  doi = {10.1038/s41567-026-03222-6},
  pages = {757--762}
}

@article{eriksson-universalcontrolbosonic-2024,
  title = {Universal Control of a Bosonic Mode via Drive-Activated Native Cubic Interactions},
  author = {Eriksson, Axel M. and Sépulcre, Théo and Kervinen, Mikael and Hillmann, Timo and Kudra, Marina and Dupouy, Simon and Lu, Yong and Khanahmadi, Maryam and Yang, Jiaying and Castillo-Moreno, Claudia and Delsing, Per and Gasparinetti, Simone},
  date = {2024-03-21},
  journaltitle = {Nature Communications},
  shortjournal = {Nat Commun},
  volume = {15},
  number = {1},
  pages = {2512},
  issn = {2041-1723},
  doi = {10.1038/s41467-024-46507-1}
}

@article{lloyd-quantumcomputationcontinuous-1999,
  title = {Quantum {{Computation}} over {{Continuous Variables}}},
  author = {Lloyd, Seth and Braunstein, Samuel L},
  date = {1999},
  journaltitle = {Physical Review Letters},
  volume = {82},
  number = {8},
  pages = {1784--1787}
}

@book{mandel-opticalcoherencequantum-1995,
  title = {Optical {{Coherence}} and {{Quantum Optics}}},
  author = {Mandel, Leonard and Wolf, Emil},
  date = {1995-09-29},
  edition = {1},
  publisher = {Cambridge University Press},
  doi = {10.1017/CBO9781139644105},
  isbn = {978-0-521-41711-2 978-1-139-64410-5}
}

@article{blanes-magnusexpansionits-2009,
  title = {The {{Magnus}} Expansion and Some of Its Applications},
  author = {Blanes, S. and Casas, F. and Oteo, J.A. and Ros, J.},
  date = {2009-01},
  journaltitle = {Physics Reports},
  shortjournal = {Physics Reports},
  volume = {470},
  number = {5--6},
  pages = {151--238},
  issn = {03701573},
  doi = {10.1016/j.physrep.2008.11.001}
}

@article{bukov-universalhighfrequencybehavior-2015,
  title = {Universal High-Frequency Behavior of Periodically Driven Systems: From Dynamical Stabilization to {{Floquet}} Engineering},
  shorttitle = {Universal High-Frequency Behavior of Periodically Driven Systems},
  author = {Bukov, Marin and D'Alessio, Luca and Polkovnikov, Anatoli},
  date = {2015-03-04},
  journaltitle = {Advances in Physics},
  shortjournal = {Advances in Physics},
  volume = {64},
  number = {2},
  pages = {139--226},
  issn = {0001-8732, 1460-6976},
  doi = {10.1080/00018732.2015.1055918}
}
\end{document}